\documentclass[11pt,letterpaper]{article}  
\usepackage[margin=1in]{geometry}
\usepackage[utf8]{inputenc}
\usepackage{xcolor}
\usepackage{hyperref}
\usepackage{amsmath,amsthm,amssymb}
\usepackage{physics}
\usepackage{mathtools}
\usepackage{enumitem}
\usepackage{comment}
\usepackage{graphicx}
\usepackage{authblk}
\usepackage{subcaption}
\usepackage{float}
\usepackage{thm-restate}
\usepackage{bm}    \usepackage[normalem]{ulem}    \usepackage{calc}
\usepackage{tocloft}
\setlength{\cftbeforesecskip}{2.1ex} 

\usepackage[backend=biber, eprint=true, url=false, style=alphabetic, maxnames=5, minnames=3]{biblatex}
\ExecuteBibliographyOptions{maxalphanames=5, minalphanames=3}

\addbibresource{ref.bib}

\numberwithin{equation}{section}

\usepackage{cleveref}
\Crefformat{equation}{(#2#1#3)}

\usepackage{tikz}
\usetikzlibrary{positioning,calc,decorations.pathreplacing,arrows.meta}

\definecolor{darkgreen}{rgb}{0,0.5,0}
\hypersetup{
	unicode=false,          colorlinks=true,        linkcolor=red,          citecolor=darkgreen,    filecolor=magenta,      urlcolor=cyan           }

\theoremstyle{plain}

\newtheorem{theorem}{Theorem}[section]
\newtheorem{proposition}[theorem]{Proposition}
\newtheorem{lemma}[theorem]{Lemma}

\newtheorem{claim}[theorem]{Claim}
\newtheorem{fact}[theorem]{Fact}
\theoremstyle{definition}
\newtheorem{definition}[theorem]{Definition}
\newtheorem*{remark}{Remark} 

\newcommand{\dee}{\mathrm{d}}
\newcommand{\rmextra}{\mathrm{extra}}
\newcommand{\rmmain}{\mathrm{main}}
\newcommand{\rmtarget}{\mathrm{target}}
\newcommand{\rmsim}{\mathrm{sim}}
\newcommand{\rmmax}{\mathrm{max}}

\newcommand{\calE}{\mathcal{E}}
\newcommand{\calK}{\mathcal{K}}
\newcommand{\calM}{\mathcal{M}}

\newcommand{\calH}{\mathcal{H}}

\newcommand{\calX}{\mathcal{X}}
\newcommand{\binary}{\{0,1\}}
\newcommand{\tensor}{\otimes}
\newcommand{\pathto}{\longrightarrow}
\newcommand{\dwfunc}{f_{\rm dw}}
\newcommand{\nGHV}{\mathfrak{n}}  \newcommand{\mGHV}{\mathfrak{m}}  \newcommand{\nA}{{n_{\sf A}}}
\newcommand{\nB}{{n_{\sf B}}}
\newcommand{\DeltaB}{{\Delta_{\sf B}}}
\newcommand{\XB}{{X_{\sf B}}}
\newcommand{\HL}{{H_{\sf L}}}
\newcommand{\QL}{{Q_{\sf L}}}
\newcommand{\DeltaL}{{\Delta_{\sf L}}}
\newcommand{\XL}{{X_{\sf L}}}
\newcommand{\ZL}{{Z_{\sf L}}}
\newcommand{\epsilonL}{{\epsilon_{\sf L}}}
\newcommand{\DeltaC}{{\Delta_{\sf C}}}
\newcommand{\HC}{{H_{\sf C}}}
\newcommand{\DC}{{D_{\sf C}}}
\newcommand{\HD}{{H_{\sf C'}}}
\newcommand{\epsilonD}{\eta}
\newcommand{\hatH}{\hat{H}}
\newcommand{\hatX}{\hat{X}}
\newcommand{\hatZ}{\hat{Z}}
\newcommand{\hatP}{\hat{P}}
\newcommand{\hatx}{\hat{x}}
\newcommand{\haty}{\hat{y}}
\newcommand{\hatz}{\hat{z}}
\newcommand{\hatD}{\hat{D}}
\newcommand{\hatDp}{\hat{D}_{\rm p}}
\newcommand{\hatDr}{\hat{D}_{\rm r}}
\newcommand{\calW}{\mathcal{W}}
\newcommand{\eigen}{\mu}
\newcommand{\hatzero}{\hat{0}}
\newcommand{\hatone}{\hat{1}}
\newcommand{\tildezero}{\tilde{0}}
\newcommand{\tildeone}{\tilde{1}}
\newcommand{\remzero}{\mathring{0}}
\newcommand{\remone}{\mathring{1}}
\newcommand{\tildex}{\tilde{x}}
\newcommand{\remx}{\mathring{x}}
\newcommand{\calI}{\mathcal{I}}
\newcommand{\calJ}{\mathcal{J}}
\newcommand{\mineng}{\eigen_0}
\newcommand{\baseeng}{\eta}
\newcommand{\frakS}{\mathfrak{S}}
\newcommand{\tend}{t_{\rm f}}
\renewcommand{\tilde}{\widetilde}
\renewcommand{\hat}{\widehat}
\renewcommand{\bar}{\overline}
\renewcommand{\epsilon}{\varepsilon}
\newcommand{\R}{\mathbb{R}}

\newlength{\equalwidth}
\newcommand{\xlongequal}[1]{\mathrel{\setlength{\equalwidth}{\widthof{\ensuremath{\scriptstyle #1}}}\overset{\scriptstyle #1}{\resizebox{\equalwidth}{\height}{$=$}}}}

\newcommand{\bs}[1]{\bm{#1}}

\DeclareMathOperator{\im}{Im}
\DeclareMathOperator{\dist}{dist}
\DeclareMathOperator{\poly}{poly}
\DeclareMathOperator{\sgn}{sgn}
\DeclareMathOperator{\spn}{span}
\DeclareMathOperator*{\argmin}{arg\,min}

\newcommand{\Crefitem}[2]{\Cref{#1}{~\ref{#2}}}
\newcommand{\email}[1]{\href{mailto:#1}{\texttt{\detokenize{#1}}}}

\makeatletter
\newcommand{\sharedfootnotemark}{\footnotemark
\protected@xdef\@savedfn{\number\numexpr\value{footnote}\relax}}
\newcommand{\sharedfootnoteref}{\hyperlink{fn\@savedfn}{\textsuperscript{\@savedfn}}}
\makeatother

\newif\ifblind

\newcommand{\apath}{Hamiltonian path}
\newcommand{\apaths}{\apath s}
\newcommand{\aPath}{Hamiltonian Path}
\newcommand{\sdg}{Schr\"odinger}

\allowdisplaybreaks

 \newcommand{\figdiamond}{
\begin{subfigure}[c]{0.45\textwidth}
        \centering
        \raisebox{1.2em}{
        \begin{tikzpicture}[scale=1.0]
\tikzstyle{vertex} = [circle, fill=black, inner sep=2pt, draw=black]

\node[vertex] (A) at (-1.5,0) {};
            \node[vertex] (B) at (1.5,0) {};

\draw (A) -- node[above] {$-J$} (B);
        \end{tikzpicture}
        }
\end{subfigure}
\begin{subfigure}[c]{0.45\textwidth}
        \centering
        \begin{tikzpicture}[scale=1.0]
\tikzstyle{vertex} = [circle, fill=black, inner sep=2pt, draw=black]
\tikzstyle{vertexEmpty} = [circle, fill=none, inner sep=2pt, draw=black]

\node[vertexEmpty] (top)    at ( 0,  0.5) {};
            \node[vertex]      (left)   at (-2,  0.0) {};
            \node[vertex]      (right)  at ( 2,  0.0) {};
            \node[vertexEmpty] (bottom) at ( 0, -0.5) {};

\draw[color=orange] (top)    -- node[above, xshift=-1em]  {\footnotesize $-\sqrt{\Delta/2}$} (left);
            \draw[color=orange] (top)    -- node[above] {\footnotesize $-\sqrt{\Delta/2}$} (right);
            \draw[color=orange] (bottom) -- node[below, xshift=-1em]  {\footnotesize $-\sqrt{\Delta/2}$} (left);
            \draw[color=orange] (bottom) -- node[below] {\footnotesize $-\sqrt{\Delta/2}$} (right);

\draw[color=blue] (top) to[out=130,in=50,looseness=16] 
                node[above] {\footnotesize $\Delta J^{-1}$} (top);
            \draw[color=blue] (bottom) to[out=-130,in=-50,looseness=16] 
                node[below] {\footnotesize $\Delta J^{-1}$} (bottom);

        \end{tikzpicture}
\end{subfigure}
}

\newcommand{\aloop}[1]{
    \draw[color=blue] (#1) to[out=50,   in=130,   looseness=16] node[above] {\tiny $2\Delta J^{-1}$} (#1);
}

\newcommand{\figtriangle}{
\begin{subfigure}[c]{0.35\textwidth}
        \centering
        \raisebox{1.5em}{
        \begin{tikzpicture}[scale=1.0]
\tikzstyle{vertex} = [circle, fill=black, inner sep=2pt, draw=black]

\node[vertex] (A) at (-1.5,0) {};
            \node[vertex] (B) at (1.5,0) {};

\draw (A) -- node[above] {$-J$} (B);
        \end{tikzpicture}
        }
\end{subfigure}
    \hfill
\begin{subfigure}[c]{0.55\textwidth}
        \centering
        \begin{tikzpicture}[scale=1.0]
\tikzstyle{blackVertex} = [circle, fill=black, draw=black, inner sep=2pt]
            \tikzstyle{whiteVertex} = [circle, fill=white, draw=black, inner sep=2pt]

\node[blackVertex] (L) at (-4, 0) {};
            \node[blackVertex] (R) at ( 4, 0) {};

            \node[whiteVertex] (B1) at ($(L)!2/7.5!(R)$) {};
            \node[whiteVertex] (B2) at ($(L)!3/7.5!(R)$) {};
            \node[whiteVertex] (B3) at ($(L)!4.5/7.5!(R)$) {};
            \node[whiteVertex] (B4) at ($(L)!5.5/7.5!(R)$) {};

            \draw[thick, orange] (L) -- (B1) node[midway, below] {\tiny $- \sqrt{\Delta (k +  1)}$};
            \draw[thick, blue]  (B1) -- (B2) node[midway, below] {\tiny $-\Delta J^{-1}$};
            \draw[thick, blue, dashed] (B2) -- (B3) node[midway, below] {\(\cdots\)};
            \draw[thick, blue]  (B3) -- (B4) node[midway, below] {\tiny $-\Delta J^{-1}$}; 
            \draw[thick, orange] (B4) -- (R) node[midway, below] {\tiny $- \sqrt{\Delta (k +  1)}$};

\foreach \n in {B1, B2, B3, B4} {
                \aloop{\n}
            }
            \draw [decorate,decoration={brace,amplitude=12pt,mirror}] 
            ([yshift=-6pt]B1.south west) -- ([yshift=-6pt]B4.south east) 
            node[midway, below=10pt] {\small $k$ nodes};

        \end{tikzpicture}
\end{subfigure}
}

\newcommand{\figspec}{
    \begin{subfigure}{0.45\textwidth}
        \includegraphics[trim=0 1.2em 0 0, clip, width=0.9\textwidth]{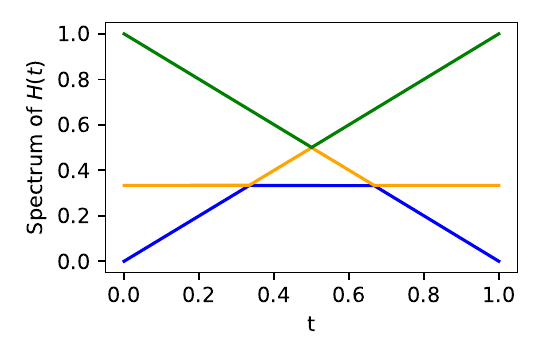}
        \caption{$\lambda = 0$}
        \label{fig:spec_0}
    \end{subfigure}
    \begin{subfigure}{0.45\textwidth}
        \includegraphics[trim=0 1.2em 0 0, clip, width=0.9\textwidth]{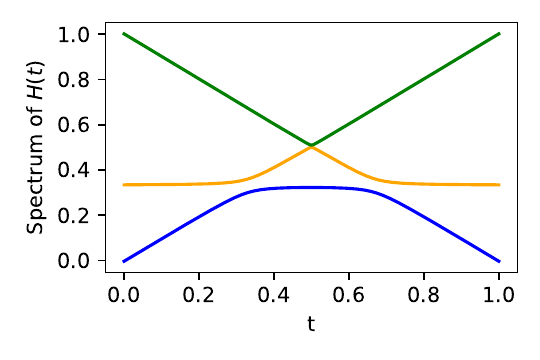}
        \caption{$\lambda = 0.03$}
        \label{fig:spec_1}
    \end{subfigure}
}

\newcommand{\figdwspec}{
    \begin{subfigure}{0.45\textwidth}
        \includegraphics[trim=0 1.2em 0 0, clip, width=0.9\textwidth]{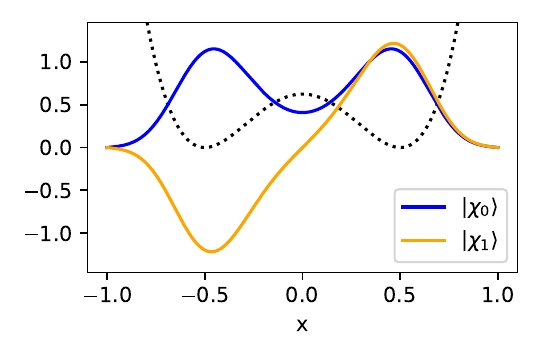}
        \caption{Low-energy eigenstates of $\hatX$}
        \label{fig:dwspec_0}
    \end{subfigure}
    \begin{subfigure}{0.45\textwidth}
        \hspace{0.002\textwidth}
        \includegraphics[trim=0 1.2em 0 0, clip, width=0.9\textwidth]{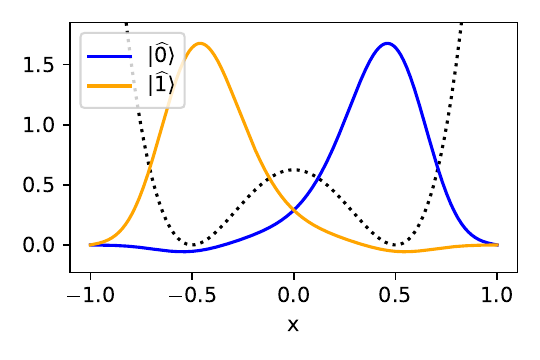}
        \caption{States $\ket{\hatzero}$ and $\ket{\hatone}$}
        \label{fig:dwspec_1}
    \end{subfigure}
}

\newcommand{\figheatmap}{
    \includegraphics[trim=0 1.2em 0 0, clip, width=0.5\textwidth]{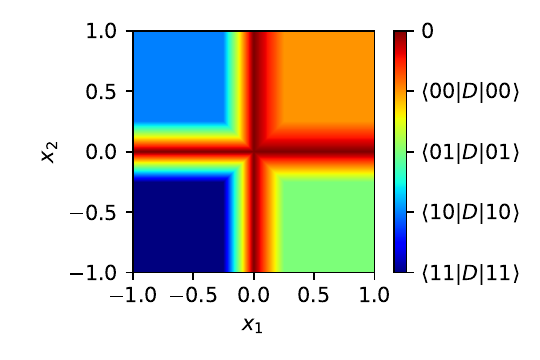}
}

\newcommand{\figdiagram}{
    \begin{tikzpicture}[stdnode/.style={draw, rectangle, fill=white, minimum height=1.8em, align=center},
        arrows={[scale=1]}]
    
\pgfdeclarelayer{background}
    \pgfdeclarelayer{main}
    \pgfsetlayers{background,main} 

\newlength{\nodehdist}
    \setlength{\nodehdist}{10em}
    \newlength{\nodevdist}
    \setlength{\nodevdist}{2em}
    \newlength{\dashwdist}
    \setlength{\dashwdist}{25em}
    \newlength{\dashedist}
    \setlength{\dashedist}{24em}

    \node[stdnode] (GHV) at (0,0) {\Cref{thm:GHV21}~\cite{gilyen2021sub}};
    \node[stdnode] (PA) at (-\nodehdist,-\nodevdist) {\Cref{lem:pathA}};
    \node[stdnode] (PB) at (0,-2\nodevdist) {\Cref{lem:pathB}};
    \node[stdnode] (PB2) at (-\nodehdist,-3\nodevdist) {\Cref{lem:pathB2}};
    \node[stdnode] (TOY) at (\nodehdist,-3\nodevdist) {\Cref{lem:linearTFI}};
    \node[stdnode] (PC) at (0,-4\nodevdist) {\Cref{lem:pathC}};
    \node[stdnode] (PD) at (-\nodehdist,-5\nodevdist) {\Cref{lem:pathD}};
    \node[stdnode] (T1) at (0,-6\nodevdist) {\Cref{thm:main_informal} (\ref{thm:main})};
    \node[stdnode] (SDG) at (\nodehdist,-7\nodevdist) {\Cref{lem:tosdg}};
    \node[stdnode] (ZLLW) at (2\nodehdist,-7\nodevdist) {\Cref{clm:properties-of-X}~\cite{zheng2024computational}};
    \node[stdnode] (T2) at (0,-8\nodevdist) {\Cref{thm:qhd_informal} (\ref{thm:qhd})};

\draw[-Latex] (GHV) -- (PA);
    \draw[-Latex] (PA) -- (PB);
    \draw[-Latex] (PB) -- (PB2);
    \draw[-Latex] (PB2) -- (PC);
    \draw[-Latex] (TOY) -- (PC);
    \draw[-Latex] (PC) -- (PD);
    \draw[-Latex] (PD) -- (T1);
    \draw[-Latex] (PB2) -- (T1);
    \draw[-Latex] (T1) -- (T2);
    \draw[-Latex] (SDG) -- (T2);
    \draw[-Latex] (ZLLW) -- (SDG);

\begin{pgfonlayer}{background}
        \draw[dashed] ($(PB)+(-\dashwdist,0)$) -- ($(PB)+(\dashedist,0)$);
        \draw[dashed] ($(PC)+(-\dashwdist,0)$) -- ($(PC)+(\dashedist,0)$);
        \draw[dashed] ($(T1)+(-\dashwdist,0)$) -- ($(T1)+(\dashedist,0)$);
    \end{pgfonlayer}
    
\node[align=center] (S1) at (-2\nodehdist,-\nodevdist) {\Cref{sec:overview-pert} \\ Reducing to TFD};
    \node[align=center] (S2) at (-2\nodehdist,-3\nodevdist) {\Cref{sec:overview-linearize} \\ Path Linearization};
    \node[align=center] (S3) at (-2\nodehdist,-5\nodevdist) {\Cref{sec:overview-tilt} and \ref{sec:pet} \\ Path Tilt and Extension};
    \node[align=center] (S4) at (-2\nodehdist,-7\nodevdist) {\Cref{sec:overview-qhd} \\ Generalizing to QHD};
    
    \end{tikzpicture}
} 

\ifblind
    \title{(Sub)Exponential Quantum Speedup for Optimization}
\else
    \title{(Sub)Exponential Quantum Speedup for Optimization}
\fi

\makeatletter\begin{document}

\ifblind
	\author{Anonymous Authors}
\else
	\author[1,2]{Jiaqi Leng\thanks{\email{jiaqil@berkeley.edu}}}
	\author[3]{Kewen Wu\thanks{\email{shlw_kevin@hotmail.com}}}
	\author[4,5]{Xiaodi Wu\thanks{\email{xiaodiwu@umd.edu}}}
	\author[4,5]{Yufan Zheng\thanks{\email{yfzheng@umd.edu}}}

	\affil[1]{Simons Institute for the Theory of Computing, University of California, Berkeley}
	\affil[2]{Department of Mathematics, University of California, Berkeley}
	\affil[3]{Department of EECS, University of California, Berkeley}
	\affil[4]{Department of Computer Science, University of Maryland, College Park}
	\affil[5]{Joint Center for Quantum Information and Computer Science, University of Maryland}
\fi

\date{}
\maketitle

\thispagestyle{empty}

\begin{abstract}
	We demonstrate provable (sub)exponential quantum speedups in both discrete and continuous optimization, achieved through simple and natural quantum optimization algorithms, namely the quantum adiabatic algorithm for discrete optimization and quantum Hamiltonian descent for continuous optimization.

	Our result builds on the Gilyén--Hastings--Vazirani (sub)exponential oracle separation for adiabatic quantum computing.
	With a sequence of perturbative reductions, we compile their construction into two standalone objective functions, whose oracles can be directly leveraged by the plain adiabatic evolution and Schrödinger operator evolution for discrete and continuous optimization, respectively.

\end{abstract}

\newpage
\tableofcontents
\thispagestyle{empty}

\newpage
\setcounter{page}{1}
\section{Introduction}\label{sec:intro}

Identifying exponential quantum speedup has been a central focus in quantum computing. Prominent early examples, such as Shor’s factorization algorithm~\cite{shor-factorization-97} and simulation of quantum dynamics~\cite{Feynman-82,lloyd1996universal}, which demonstrate exponential speedups over the best-known classical methods, have played a crucial role in advancing and shaping the field.

Provable quantum and classical separations are known in more restricted settings, typically within the query model, where the problem input is provided to both quantum and classical algorithms through oracle access.
Exponential separations between quantum and classical are known for promise problems over Boolean functions such as Simon's problem~\cite{Simon1997}, Forrelation~\cite{Aaronson2018,Raz2019}, and the recent Yamakawa--Zhandry problem~\cite{Yamakawa2024}
against a random oracle,
as well as for graph problems such as Glued Trees~\cite{Childs2003} and for adiabatic quantum simulation tasks with no sign problem~\cite{gilyen2021sub}. We refer readers to~\cite{aaronson2022much} for a more comprehensive survey.

This paper focuses on examining the exponential separation between quantum and classical optimization, which has been a major open question\footnote{It is worth noting, as also pointed out by~\cite{jordan2025optimizationdecodedquantuminterferometry}, that one can convert quantum-classical separations over Boolean functions or search problems into trivial optimization problems in a generic way.
	It is also possible to design optimization problems that encode factoring and exhibit super-polynomial quantum advantages based on Shor's algorithm and the hardness of factorization~\cite{optimization-shor, szegedy2022quantumadvantagecombinatorialoptimization}. } in quantum algorithms and a promising venue for real-world quantum applications.
A seminal work in this pursuit is the \emph{adiabatic quantum optimization} introduced by
\cite{farhi2000quantum},
which has emerged as a distinct paradigm for solving \emph{discrete} and \emph{combinatorial} optimization problems using quantum mechanics.

Adiabatic algorithms operate by gradually evolving a quantum system from an initial Hamiltonian, whose ground state is easy to prepare, to a final Hamiltonian, whose ground state encodes the solution to the optimization problem. By ensuring that this evolution occurs slowly—following a continuous-time adiabatic process—the system remains in its ground state and eventually converges to the optimal solution by construction. This simple, yet natural, physics-inspired quantum algorithm has led to extensive empirical and experimental studies, as well as several important theoretical insights (notably~\cite{PhysRevA.65.042308} and \cite{Aharonov2007}); see \cite{Albash2018} for a comprehensive review. However, precisely characterizing the power of adiabatic quantum optimization remains a significant challenge~\cite{vanDam2001}. In particular, it is still unclear whether adiabatic quantum optimization can achieve exponential speedups over all classical methods for certain optimization problems.

The Quantum Approximate Optimization Algorithm (QAOA) \cite{farhi2014quantumapproximateoptimizationalgorithm} is a discrete-time, gate-based alternative to the adiabatic quantum optimization, offering the same near-term feasibility for experimental implementation (e.g.,~\cite{Google-QAOA-Experiment}) while being more accessible for in-depth analysis as first shown in~\cite{farhi2014quantumapproximateoptimizationalgorithm,farhi2015quantumapproximateoptimizationalgorithm}. Subsequent works (e.g.,~\cite{barak_et_al:LIPIcs.APPROX-RANDOM.2015.110,BGMZ-22,farhi2025lowerboundingmaxcuthigh}) have alternately advanced the frontiers of both QAOA and classical algorithms for certain constraint satisfaction problems, where the true potential of QAOA for these problems remains an active area of research. For more details, we refer interested readers to~\cite{BLEKOS20241, PhysRevX.10.021067}.

Very recently, an algorithm called Decoded Quantum Interferometry (DQI)~\cite{jordan2025optimizationdecodedquantuminterferometry} has introduced a novel approach to quantum optimization by providing a generic reduction from optimization problems to classical decoding tasks. Through both theoretical and empirical analyses, researchers have identified discrete optimization problems where the proposed DQI algorithm achieves a certain approximation ratio that no known efficient classical algorithm can match.
This leads to an apparent exponential quantum speedup while proving classical lower bounds for these optimization tasks still remains challenging.

Research on quantum algorithms for \emph{continuous} optimization has gained momentum over the past decade, with many known quantum speedups arising from replacing computational steps in classical algorithms with quantum-accelerated counterparts.
Successful examples include quantum speedups for generic convex optimization (e.g.,~\cite{vanApeldoorn2020convex,Chakrabarti2020convex}), semidefinite programming (e.g.,~\cite{brandao2017quantum,brandao2017quantum2,van2019improvements,van2019quantum}), interior-point methods (e.g.,~\cite{kerenidis2020quantum2,mohammadisiahroudi2022efficient,augustino2023quantum,wu2023inexact,apers2023quantum}), and so on,  where the speedup stems from faster quantum subroutines for, e.g., estimating the gradient~\cite{Jordan2005gradient}, sampling from a Gibbs distribution~\cite{PhysRevLett.103.220502}, or solving a linear equation system~\cite{PhysRevLett.103.150502}.
To our best knowledge, these developments have only achieved polynomial quantum speedups over their classical counterparts.

Quantum Hamiltonian Descent (QHD)~\cite{leng2023quantum} adopts a fundamentally different approach to continuous optimization by encoding the entire optimization process into continuous-time quantum dynamics. Derived through path-integral quantization of a (classical) Hamiltonian formulation of Nesterov’s accelerated gradient descent~\cite{su2016differential,wibisono2016variational}, QHD translates the optimization process into the evolution of a Schrödinger operator.
Like adiabatic algorithms, QHD is inherently simple and natural, allows near-term experimental implementation~\cite{leng2023quantum,qhd-opt}. However, unlike adiabatic methods—which rely on remaining in the ground state all the time—empirical studies suggest that QHD allows the quantum system to explore higher energy states and hence enables faster convergence.
Follow-up work~\cite{leng2023separation} proves that QHD can efficiently solve a class of high-dimensional, unconstrained non-convex optimization problems—with exponentially many local minima—in polynomial time, where empirical studies further reveal that leading classical optimizers (including Gurobi) require super-polynomial time to solve such problems.
Precise theoretical understanding of convergence rates of QHD and subsequent proposals (e.g.,~\cite{chen2023quantum,augustino2024quantumcentralpathalgorithm,catli2025exponentiallybetterboundsquantum,leng2025qhd,chakrabarti2025speedupsconvexoptimizationquantum}) that encode various optimization tasks into quantum dynamics remains limited to convex problems, where quantum speedup is limited.
No provable super-polynomial quantum speedup, as a result, has yet been established for continuous optimization.

Despite extensive research (see~\cite{abbas2024challenges} for a comprehensive survey) and considerable practical interest, the existence of \emph{provable exponential separation} between quantum and classical optimization remains open, to the best of our knowledge.

\paragraph{Contribution.} We demonstrate the possibility of (sub)exponential quantum-classical separation in optimization. Specifically, we construct oracle-based families of objective functions for both discrete and continuous optimization that require exponentially many queries to solve classically. Conversely, we demonstrate that optimal solutions to these problems can be efficiently found by the adiabatic quantum optimization and Quantum Hamiltonian Descent, respectively.

Without loss of generality, we study the \emph{minimization} problem over $\{0,1\}^n$ for an objective function $f:\{0,1\}^n \rightarrow [0,1]$ for discrete optimization, and over a bounded subset
$\calX \subset \mathbb{R}^n$ for an objective function $f:\calX \to [0,1]$ for continuous optimization.
We do not attempt to optimize any specific polynomial bound either given the focus is on exponential separation.

We consider the following \emph{plain} version of adiabatic quantum optimization for discrete optimization that follows a linear Hamiltonian path $H(t)$ from $H_{\rm M}$ to $D$:\begin{equation} \label{eqn:plain_path_adiabtic}
	H(t) := (1-t) H_{\rm M} + t D,\quad \forall t \in [0,1],
\end{equation}
where $H(t)$ is an $n$-qubit Hamiltonian for discrete optimization over $\{0,1\}^n$.
$H_{\rm M}$ is conventionally called the \emph{mixer} Hamiltonian and $D$ is a diagonal Hamiltonian that encodes the information of the objective function $f:\{0,1\}^n \rightarrow [0,1]$, namely, $D=\sum_{x\in\{0,1\}^n} f(x) \ketbra{x}{x} $.

The original proposal of adiabatic optimization~\cite{Farhi2001} makes use of
$H_{\rm M} \propto \sum_i X_i$ and $D$ is additionally $k$-local which refers to constraint satisfaction problems with clauses containing $k$ Boolean variables.
General forms of mixer Hamiltonians have since been introduced (e.g., see~\cite{QAOA-mixer-nasa}) and a general form of the objective function can be implemented with quantum oracles to $f(\cdot)$.
In particular, in our construction, we will use a \emph{transverse field Ising} (TFI) Hamiltonian as the mixer (see Section~\ref{sec:prelim} for a precise definition).

Adiabatic quantum optimization evolves from the ground state of $H_{\rm M}$, which needs to be efficiently preparable, along the Hamiltonian path $H(t)$ in \cref{eqn:plain_path_adiabtic} slowly enough according to the quantum adiabatic theorem (see Appendix~\ref{sec:qat}) so that the quantum system remains in the ground state along the Hamiltonian path and eventually arrives at the ground state of $D$ which refers to the optimal solution.
The evolution speed can be captured by the total evolution time $T$ which uniformly scale $H(t)$ over $[0,1]$ to $[0,T]$. (Or effectively, evolving with Hamiltonian $TH(t)$ over $[0,1]$.)
The evolution can be efficiently implemented by standard digital quantum simulation with oracle access to $D$, which leads to its complexity bound.

\begin{theorem}[Informal version of \Cref{thm:main}] \label{thm:main_informal}
	There exists a family of discrete optimization problems with objective functions $f:\binary^n \to [0,1]$ such that, for any $n$,  the following holds:
	\begin{itemize}
		\item Any classical algorithm requires $\exp(n^{\Omega(1)})$ queries to $f(\cdot)$ to find an $x$ such that $f(x) - f(x^*) \leq 1/\poly(n)$, where $x^*$ is the global minimizer of $f$.
		\item There exists a uniformly constructible TFI Hamiltonian $H_{\rm TFI}$ only depending on $n$, which has an efficiently preparable ground state.
		      The plain adiabatic evolution, i.e.,
		      \begin{equation}
			      i \frac{\dee}{\dee t} \ket{\psi(t)} = T H(t) \ket{\psi(t)}, \quad H(t) \coloneqq (1-t)H_{\rm TFI} + t\sum_{x \in \binary^n} f(x) \ketbra{x}{x}, \quad t \in [0,1],
		      \end{equation}
		      finds the global minimizer $x^*$ with $T = \poly(n)$. Moreover, its digital implementation takes $\poly(n)$ queries to $f(\cdot)$ and has $\poly(n)$ gate complexity.
	\end{itemize}
\end{theorem}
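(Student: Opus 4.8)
The plan is to take the Gilyén--Hastings--Vazirani oracle separation (\Cref{thm:GHV21},~\cite{gilyen2021sub}) as the source of hardness and then to \emph{compile} their stoquastic adiabatic construction into the rigid ``plain'' form $H(t)=(1-t)H_{\rm TFI}+tD$ through the chain of perturbative reductions summarized by the path lemmas. The classical lower bound needs only that the objective function $f$ we build embeds the same information-theoretically hidden instance as GHV's oracle: any classical algorithm that produces an $x$ with $f(x)-f(x^*)\le 1/\poly(n)$ can be post-processed to recover the hidden data, contradicting GHV's $\exp(n^{\Omega(1)})$ query lower bound; thus the classical lower bound is essentially inherited. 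The substance is the quantum upper bound — showing that the straight-line interpolation between a \emph{fixed}, oracle-independent transverse-field Ising mixer and the diagonal Hamiltonian $D=\sum_{x}f(x)\ketbra{x}{x}$ keeps a $1/\poly(n)$ spectral gap along the entire schedule and terminates at $\ket{x^*}$.

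For the upper bound I would assemble the path lemmas in the order dictated by the dependency diagram. \Cref{lem:pathA} first rewrites the GHV Hamiltonian path as an interpolation built from transverse-field tunneling on double-well ``logical qubits'' (Section~\ref{sec:overview-pert}), so that each oracle bit is carried by the symmetric/antisymmetric pair of a double-well potential and a transverse-field structure matchable to a TFI mixer is exposed. \Cref{lem:pathB} and \Cref{lem:pathB2} then perform \emph{path linearization} (Section~\ref{sec:overview-linearize}): the original schedule is replaced by the straight line $t\mapsto(1-t)H_{\rm M}+tD$ by inserting perturbative gadgets — the diamond and triangle gadgets, built only from TFI-admissible $2$-local couplings with large penalty weights — that push the unwanted schedule dependence into a high-energy subspace while leaving an effective low-energy Hamiltonian equal to the desired linear one. \Cref{lem:pathC}, using the standalone analysis of a linear transverse-field Ising chain in \Cref{lem:linearTFI}, certifies the gap and ground-state structure of the chain segment produced by the gadgets. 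Finally \Cref{lem:pathD} ``tilts and extends'' the path (Sections~\ref{sec:overview-tilt} and~\ref{sec:pet}) to land exactly in the plain form, with $D$ diagonal and its entries defining an $f:\binary^{n}\to[0,1]$ after an overall rescaling, and with $H_{\rm TFI}$ independent of the oracle.

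Chaining \Cref{lem:pathD} with \Cref{lem:pathB2} then yields, for every $n$, a uniformly constructible TFI Hamiltonian $H_{\rm TFI}$ with an efficiently preparable ground state together with a diagonal $D$ encoding a computable $f$, such that $H(t)=(1-t)H_{\rm TFI}+tD$ has minimum gap at least $1/\poly(n)$ and ground state $\ket{x^*}$ at $t=1$. Feeding this into the quantum adiabatic theorem (Appendix~\ref{sec:qat}) shows that evolving under $TH(t)$ for $T=\poly(n)$ produces a state with overlap bounded below by a constant on $\ket{x^*}$; a computational-basis measurement, repeated $\poly(n)$ times, returns $x^*$ with high probability (and a larger polynomial $T$ gives success probability $1-1/\poly(n)$ directly). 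For the digital statement, $H_{\rm TFI}$ is $O(1)$-local with $\poly(n)$ terms and $D$ is a diagonal operator implementable with $O(1)$ oracle queries to $f$, so standard time-dependent Hamiltonian simulation realizes $e^{-iTH(t)}$ with $\poly(n)$ queries and $\poly(n)$ gates.

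The main obstacle I expect is keeping the spectral gap polynomially large through linearization. Perturbative gadgets separate the logical subspace from the gadget degrees of freedom by a penalty scale $\Delta$, and the effective Hamiltonian — hence the surviving gap — appears only at a fixed order in $1/\Delta$, so a careless construction would collapse the $1/\poly(n)$ gap of the GHV path to something exponentially small. Controlling this requires designing the gadget weights so that all relevant effective couplings arise at the \emph{same} perturbative order with matched magnitudes, together with Schrieffer--Wolff-type error bounds uniform along the schedule; the double-well encoding of \Cref{lem:pathA} and the explicit chain analysis of \Cref{lem:linearTFI} are precisely what make these bounds tight enough. A secondary but essential point is that after all reductions the mixer must be genuinely oracle-independent and $D$ genuinely diagonal, so the resulting problem is a bona fide instance of plain adiabatic optimization with oracle access to $f$ alone.
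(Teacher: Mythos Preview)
Your high-level architecture is right — start from GHV, run the chain of path lemmas, invoke the adiabatic theorem — and that matches the paper. But you have scrambled what the individual lemmas actually do, and in one place you have imported machinery from the wrong theorem.

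First, the ``double-well logical qubits'' do not belong here. That is the construction for \Cref{thm:qhd_informal} (the continuous/QHD result), where $\hatX=-\partial_\xi^2+\lambda^2\dwfunc(\xi)$ encodes a qubit in the two lowest eigenstates of a double-well potential. \Cref{lem:pathA} does something entirely different: it uses the edge-subdivision (``triangle'') gadget to turn GHV's stoquastic \emph{sparse} Hamiltonian into a stoquastic \emph{hypercube} Hamiltonian, i.e.\ one whose interaction graph sits inside the $n$-cube. No double wells appear anywhere in the proof of \Cref{thm:main_informal}.

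Second, the diamond and triangle gadgets are not the linearization mechanism. They implement the structural reductions sparse $\to$ hypercube (\Cref{lem:pathA}) and hypercube $\to$ TFD (\Cref{lem:pathB}); after them you have a path of TFD Hamiltonians $K(t)=-a\sum_iX_i+D(t)$ that is still a \emph{curve}, not a line, because the perturbative reduction is nonlinear in $K$. \Cref{lem:pathB2} is then just a piecewise-linear sampling of that curve, not a gadget step.

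The actual linearization is \Cref{lem:pathC}, and its content is a \emph{clock-register} construction: \Cref{lem:linearTFI} builds a toy TFI Hamiltonian $\HL(t)$, linear in $t$, whose ground state sweeps through $\ket{\bs0},\ket{\bs1},\ldots,\ket{\bs\ell}$; tensoring $I\otimes\HL(t)$ as a large penalty with $\sum_j K_{j/\ell}\otimes\ketbra{\bs j}{\bs j}$ makes the low-energy sector of the resulting \emph{linear} $\HC(t)$ effectively walk along the piecewise-linear path $K_{\bs0}\pathto\cdots\pathto K_{\bs\ell}$ via first-order perturbation. So the gap survives not because ``all effective couplings arise at the same perturbative order'' in a gadget, but because the clock ground state is supported on at most two adjacent $\ket{\bs a},\ket{\bs b}$, so $\HC(t)$ simulates a convex combination of $K_{\bs a}$ and $K_{\bs b}$, which is a point on the piecewise-linear path and inherits its gap. \Cref{lem:pathD} then tilts and extends to kill $\DC$ at the start and $\XB,\XL$ at the end. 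If you rewrite your proposal with this corrected attribution — gadgets for interaction structure, clock register for linearization — it will line up with the paper.
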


For continuous optimization, given any $n$-dimensional objective function $f:\calX \to [0,1]$
on a bounded subset $\calX \subset \mathbb{R}^n$,
with an appropriate time rescaling, Quantum Hamiltonian Descent solves the minimization of $f(\cdot)$ with the following quantum dynamics $H(t)$ for some finite time $T$,  where
\begin{equation} \label{eqn:qhd_plain}
	H(t) = - \Delta + s(t) f(x),\quad  \forall t\geq 0.
\end{equation}
Here $\Delta$ is the Laplace operator, $s(t)$ is an increasing function over $t$ where $s(0) \ll 1$ and $s(T) \gg 1$, and $f(x)$ is a diagonal operator induced by the objective function $f(\cdot)$.
Indeed, compared with \cref{eqn:plain_path_adiabtic}, $H(t)$ in \cref{eqn:qhd_plain} is an unbounded operator in $n$-dimensional real space that takes the shape of the \sdg ~operator.
Intuitively speaking,  QHD's dynamics starts with dominant kinetic energy (referring to $-\Delta$), and gradually shifts to dominant potential energy (referring to $f(x)$). This transition helps QHD explore the entire solution space in the early stage, and enables its quick convergence in the later stage.
(See an explanatory theory for QHD in~\cite{leng2023quantum}.)

QHD evolves from an efficiently preparable state according to the Hamiltonian dynamics in \cref{eqn:qhd_plain} for a specific $s(\cdot)$, an efficient digital simulation of which has only been explored for a few domains $\calX$~\cite{Childs2022} (and is fortunately known for our case).
It is worth mentioning that, due to a technique called Hamiltonian embedding~\cite{leng2024expanding}, a low-precision spatially discretized QHD dynamics has been experimentally implemented on an analog quantum Ising simulator with over 5000 spins~\cite{leng2023quantum}.
Empirical studies of QHD from both real-machine experiments and classical simulation suggest that QHD could leverage higher energy levels for faster convergence,
although the adiabatic theorem remains the only known tool in the theoretical analysis of QHD for the nonconvex setting, which also applies to this work.

To understand the computational power of the \sdg\ operator, a recent work~\cite{zheng2024computational} establishes a generic correspondence between transverse field Ising Hamiltonians and \sdg\ operators.
By generalizing their technique to our setting, we construct from \Cref{thm:main_informal} a family of continuous functions that exhibit (sub)exponential quantum speedup with dynamics in \cref{eqn:qhd_plain}.

\begin{theorem}[Informal version of \Cref{thm:qhd}] \label{thm:qhd_informal}
	There exists a family of continuous functions $f:\calX \to [0,1]$ with a bounded box domain $\calX \subset \mathbb{R}^n$ such that, for any $n$, the following holds:
	\begin{itemize}
		\item Any classical algorithm requires $\exp(n^{\Omega(1)})$ queries to $f(\cdot)$ to find an $x$ such that $f(x) - f(x^*) \leq 1/\poly(n)$, where $x^*$ is the global minimizer of $f$.
		\item There exists an auxiliary function $g(x)$, a time-dependent function $\nu(t) = e^{-\Theta(\sqrt{t})}$, both of which only depend on $n$. The QHD evolution, i.e.,
		      \begin{equation} \label{eqn:qhd_real}
			      i \frac{\dee}{\dee t} \ket{\psi(t)} = T H(t) \ket{\psi(t)}, \quad H(t) \coloneqq -\Delta + tf(x) + \nu(t)g(x),\quad t \in [0,\tend],
		      \end{equation}
		      finds the global minimizer $x^*$ with $T,\tend = \poly(n)$ from an efficiently preparable ground state of $H(0)$.
		      Its digital implementation takes $\poly(n)$ queries to $f(\cdot)$ and has $\poly(n)$ gate complexity.
	\end{itemize}
\end{theorem}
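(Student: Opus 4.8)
The plan is to derive \Cref{thm:qhd_informal} from the discrete construction of \Cref{thm:main_informal} by pushing it through the transverse-field-Ising-to-\sdg\ correspondence of \cite{zheng2024computational}, packaged here as \Cref{lem:tosdg} (which rests on \Cref{clm:properties-of-X}). Fix $n$, let $(f^{\rm d}, H_{\rm TFI})$ be the $n$-qubit objective and mixer provided by \Cref{thm:main_informal}, and lift each qubit to one real coordinate, so that $\calX = [-L,L]^n$ with $L = \poly(n)$; for the digital part $\calX$ is discretized into $\poly(n)$ points per axis, i.e.\ $O(n\log n)$ qubits.

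\emph{Classical lower bound.} Let the continuous objective be a continuous grid-aligned extension of $f^{\rm d}$: split $\calX$ along each axis into two halves, obtaining $2^n$ congruent sub-boxes indexed by $\binary^n$, set $f(x) = f^{\rm d}(v)$ on the bulk of the sub-box indexed by $v$, and interpolate smoothly across the thin boundary layers between adjacent sub-boxes. Then one query to $f(x)$ costs $O(1)$ queries to $f^{\rm d}$ (at most the vertices adjacent to the boundary layer containing $x$); moreover $\min_{\calX}f = \min_{\binary^n}f^{\rm d}$ and any $x$ with $f(x) - \min f \le 1/\poly(n)$ rounds to a vertex $v$ with $f^{\rm d}(v) - \min f^{\rm d} \le 1/\poly(n)$. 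Hence a classical algorithm for the continuous problem yields one for the discrete problem of the same query order, and the $\exp(n^{\Omega(1)})$ bound of \Cref{thm:main_informal} transfers verbatim ($g$ is explicit and reveals nothing about $f^{\rm d}$).

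\emph{Quantum upper bound.} Apply \Cref{lem:tosdg}: replace each qubit of $H_{\rm TFI}$ by a one-dimensional \sdg\ operator on $[-L,L]$ whose two lowest eigenstates $\ket{\hatzero},\ket{\hatone}$ are localized in the left/right halves and reproduce, up to $e^{-\poly(n)}$ error, the action of a single-qubit transverse field of the prescribed strength — this is the content of \Cref{clm:properties-of-X}. The auxiliary potential $g(x)$ encodes the fixed part of this landscape (the single-site shaping potentials together with the Ising couplings of $H_{\rm TFI}$), while the grid-aligned $f(x)$ above plays the role of $D = \sum_x f^{\rm d}(x)\ketbra{x}{x}$, since $f$ restricted to sub-box centers equals $f^{\rm d}$. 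The schedule $\nu(t) = e^{-\Theta(\sqrt t)}$ is calibrated through the WKB/tunneling relation between barrier height and level splitting so that the Hamiltonian induced by $H(t) = -\Delta + t f(x) + \nu(t) g(x)$ on the low-energy subspace $\spn\{\ket{\hatzero},\ket{\hatone}\}^{\tensor n}$ equals, after a monotone $\poly(n)$-bounded reparametrization of time, a Hamiltonian path from $H_{\rm TFI}$ to $\mathrm{diag}(f^{\rm d})$ of the linear form analyzed in \Cref{thm:main}, with $e^{-\poly(n)}$ leakage to higher bands. The spectral gap along the continuous path therefore inherits the $1/\poly(n)$ lower bound from the proof of \Cref{thm:main}; the low-energy adiabatic theorem then gives $T,\tend = \poly(n)$, and $\ket{\psi(0)}$ — the lift of the efficiently preparable ground state of $H_{\rm TFI}$ under the fixed single-site isometry — is efficiently preparable. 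Finally, $H(t)$ is $-\Delta$ plus a bounded, efficiently computable potential on a box, so the pseudospectral/split-operator simulation of \cite{Childs2022} implements the evolution with $\poly(n)$ Trotter steps, $O(1)$ queries to $f$ per step, and $\poly(n)$ gates.

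\emph{Main obstacle.} The crux is \Cref{lem:tosdg}: transporting the correspondence of \cite{zheng2024computational} along the \emph{entire time-dependent} path rather than at a single operator. Two points need care. First, robustness of the band structure — one must show that switching on $t f(x)$ and fading out $\nu(t) g(x)$ neither populates the spurious higher bands of the individual wells nor couples distinct wells except through the intended tunneling terms; this controls how gently $f$ may vary across a sub-box and how the barriers are shaped, and thereby fixes $L$, the boundary-layer width, and the grid spacing. Second, calibration of $\nu(\cdot)$: since the induced transverse-field strength depends exponentially on the barrier, keeping the effective path uniformly close to the linear path — close enough that the $1/\poly(n)$ gap of \Cref{thm:main} survives — forces the $e^{-\Theta(\sqrt t)}$ schedule and requires tight control of the implied constants, which is the delicate part of the argument. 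Controlling the adiabatic and simulation errors for a nominally unbounded operator, by working throughout in a low-energy, spatially bandlimited subspace, is a further but more routine technicality.
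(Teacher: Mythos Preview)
Your overall plan is the paper's plan: lift \Cref{thm:main_informal} through the TFD-to-\sdg\ embedding of \Cref{lem:tosdg}, take $f$ to be a continuous-orthant extension of the discrete objective, let $g$ carry the mixer data, obtain $\nu(t)=e^{-\Theta(\sqrt t)}$ from the tunneling--splitting relation $\Lambda(\lambda)\sim e^{\Theta(\lambda)}$ under $t\sim\lambda^2$, transfer the classical lower bound by query reduction, and simulate with \cite{Childs2022}.

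There is, however, one concrete mis-allocation that would break the argument as written. You put the ``single-site shaping potentials'' (the double wells $\dwfunc$) into $g$ and let $f$ be \emph{only} the grid extension of $f^{\rm d}$. With that choice the barrier height scales as $\nu(t)\to 0$, so the effective $\hatX(\lambda)$ has $\lambda\to 0$: the tunneling splitting (i.e.\ the induced transverse field $\Lambda^{-1}$) \emph{grows} rather than decays, and for large $t$ the operator $-\Delta+tf(x)$ has no confining wells at all, so the effective computational basis $\{\ket{\hatzero},\ket{\hatone}\}^{\otimes n}$ dissolves and \Cref{lem:tosdg} no longer applies. In the paper the double-well term sits in \emph{both} $f$ and $g$, calibrated so that the total coefficient is exactly $\lambda^2$: concretely $f(\xi)=\hatH_{\rm prob}(\xi)+\sum_i a_i\dwfunc(\xi_i)$ and $g(\xi)=\Lambda_0^{-1}\hatD_{\rm TFI}(\xi)+\lambda_0^2\sum_i a_i\dwfunc(\xi_i)$, with $\nu(\lambda)\lambda_0^2+\varphi(\lambda)=\lambda^2$. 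Thus the confinement \emph{strengthens} with $t$, the transverse field fades as $e^{-\Theta(\sqrt t)}$, and the induced qubit Hamiltonian is $\Lambda(\lambda)^{-1}H_{\rm TFI}+\varphi(\lambda)H_{\rm prob}$ --- a positive combination along the discrete path, so the $1/\poly(n)$ gap transfers. A minor second point: the leakage to higher bands is $1/\poly(n)$ (since $\lambda_0=\Theta(\log n)$), not $e^{-\poly(n)}$; this is still enough, but your stated error budget is optimistic.
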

We note that the specific dynamics in \cref{eqn:qhd_real} deviates slightly from \cref{eqn:qhd_plain} due to a quick-vanishing term $\nu(t)g(x)$.
It can be interpreted as some auxiliary information to boost the performance of QHD's dynamics in the early stage.

\paragraph{Technical Contribution.} At a high level, our construction stems from the Gilyén--Hastings--Vazirani (GHV) (sub)exponential oracle separation for adiabatic quantum computing with no sign problem~\cite{gilyen2021sub}.
Despite of being an adiabatic evolution, GHV's construction concerns with a graph theoretical problem similar to the Glued Tree problem~\cite{Childs2003}, which is neither a standalone optimization problem nor solved by the plain adiabatic evolution for optimization.

Our main technical contribution is a series of \emph{efficient} reductions that map adiabatic paths (i.e., time-dependent Hamiltonians) from one to another where the latter adiabatic path can faithfully simulate the former one while being closer to the desired shape (i.e., \cref{eqn:plain_path_adiabtic} or \cref{eqn:qhd_plain}).
In particular, our reductions preserve the spectral gap of adiabatic paths, thereby guaranteeing the easiness for quantum adiabatic evolution.
On the other hand, the \emph{efficiency} of our reductions preserves the computational hardness for classical algorithms; that is, any algorithm that efficiently solves the reduced problem (induced by the reduced adiabatic path) would naturally give rise to an efficient algorithm to the original problem (induced by the original adiabatic path).

Similar Hamiltonian reductions have been seen in the literature of Hamiltonian complexity~\cite{Kempe2006,Oliveira2008,bravyi2011schrieffer,bravyi2017complexity} with perturbative gadgets.
Even though we also make extensive use of perturbation theory, we believe our reduction need is unique in the literature and our solutions are novel, and can be of independent interest.
Our final reduction to continuous optimization is a tour de force adaption and customization of the established correspondence between TFI Hamiltonians and \sdg\ operators in~\cite{zheng2024computational} to our setting.
We will elaborate on these technical ideas in Section~\ref{sec:overview}.

\paragraph{Organization.}
In \Cref{sec:overview}, we give a technical overview of our ideas.
In \Cref{sec:prelim}, we define formal notation and quote well-known results.
In \Cref{sec:main}, we prove \Cref{thm:main_informal} formally, by delegating proofs of technical lemmas to  \Cref{sec:reduction_I}, \ref{sec:reduction_II}, \ref{sec:toy_shift}, \ref{sec:reduction_III}, and \ref{sec:reduction_IV}.
In \Cref{sec:qhd}, we prove \Cref{thm:qhd_informal}, which extends our result to continuous optimization.
A short exposition on adiabatic quantum computing is provided in \Cref{sec:qat}.
In \Cref{sec:intro_merit}, we present clean statements of three technical building blocks implicitly used in our proof, which  may be of independent interest.

\paragraph{Acknowledgments.}
We thank Andrew Childs, David Gosset, and Tongyang Li for insightful discussions related to this work and valuable feedback on an early draft of the paper.
JL is partially supported by the Simons Quantum Postdoctoral Fellowship, DOE QSA grant \#FP00010905, and a Simons Investigator award through Grant No. 825053. KW is supported by a Sloan Research Fellowship and NSF CAREER Award CCF-2145474. XW and YZ are partially supported by NSF CAREER Award CCF-1942837, a Sloan Research Fellowship, and the U.S. Department of Energy, Office of Science, Accelerated Research in Quantum Computing, Fundamental Algorithmic Research toward Quantum Utility (FAR-Qu). Part of the research was conducted while XW was visiting Computer Science and Artificial Intelligence Laboratory, Massachusetts Institute of Technology and the Simons Institute for the Theory of Computing, UC Berkeley.
\section{Technical Overview}\label{sec:overview}

We give a high-level technical overview of the proofs of \Cref{thm:main_informal} and \ref{thm:qhd_informal} in this section.
\Cref{fig:diagram} summarizes our overall proof structure.

\begin{figure}[!htbp]
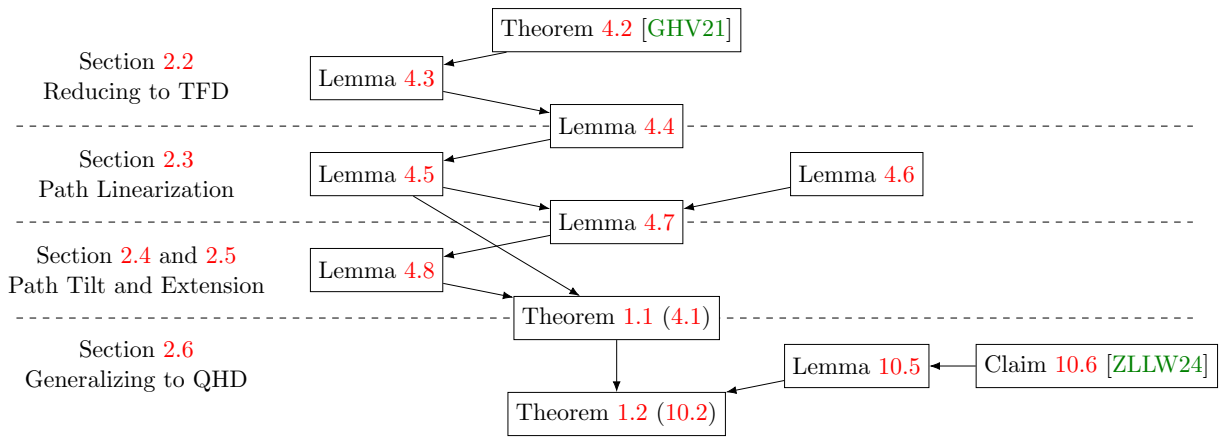

	\centering
	\resizebox{\textwidth}{!}{
		\figdiagram
	}
	\caption{Overview diagram of the proof structure illustrating the logical dependencies among the main lemmas and theorems. The dashed lines separate major steps, which are covered in the corresponding subsections of \Cref{sec:overview}.}
	\label{fig:diagram}
\end{figure}

\subsection{GHV Adiabatic Path}

GHV's construction considers an entrance-to-exit problem over oracularized graphs where the entrance is given and the exit information is encoded by the graph.
It is similar to the Glued Tree problem~\cite{Childs2003} where exponential speedup is achievable by quantum walk, which is although a non-adiabatic evolution.
With additional modifications of the oracularized graphs, GHV managed to solve the entrance-to-exit problem with the following adiabatic path:
\begin{equation}
	H_{\rm init} \pathto H_{\rm graph} \pathto H_{\rm final}.
\end{equation}
The path satisfies that (i) $H_{\rm init} \propto -\ketbra{0}{0}$ (the given entrance) and $H_{\rm final} \propto -\ketbra{u}{u}$ for some computational basis state $\ket{u}$ (the desired exit), and (ii) $H_{\rm graph}$ is sparse and stoquastic, referring to a graph whose vertexes are labeled by $n$-bit strings (i.e., the graph has size $2^n$) and which can be accessed via
an adjacency list oracle.
The minimum spectral gap along the adiabatic path remains at least $1/\poly(n)$ relative to the norms of $H_{\rm init}$, $H_{\rm graph}$, and $H_{\rm final}$, enabling an efficient quantum adiabatic evolution (i.e., $T=\poly(n)$) that locates \(\ket{u}\) with $\poly(n)$ queries.
However, any classical algorithm finding \(\ket{u}\) must make (sub)exponentially many queries to the adjacency list.

\subsection{Perturbative Reductions to TFD Hamiltonians (Lemma \ref{lem:pathA} and \ref{lem:pathB})} \label{sec:overview-pert}

Note that the entire GHV adiabatic path \(H_{\rm init} \pathto H_{\rm graph} \pathto H_{\rm final}\) consists solely of stoquastic sparse Hamiltonians, since any intermediate Hamiltonian is a linear combination of \(H_{\rm init}\), \(H_{\rm graph}\), and \(H_{\rm final}\).
In contrast, for any $t \in [0,1]$, our final target $H(t)$ in \Cref{thm:main_informal} is always a \emph{transverse field diagonal (TFD)} Hamiltonian  of the form
\begin{equation}\label{eq:overview_tfd}
	- \sum_i a_i X_i+D,
\end{equation}
where each \(X_i\) is a Pauli-\(X\) operator on the \(i\)th qubit and \(D\) is diagonal.

Therefore, a natural next step is to consider some \emph{reduction} from a stoquastic sparse Hamiltonian to a TFD Hamiltonian, which will later be applied \emph{pointwise} to the GHV adiabatic path.
The concrete meaning of reduction here is formalized in terms of the \emph{Hamiltonian simulation} framework~\cite{bravyi2017complexity,cubitt2018universal}, where perturbative reduction is usually the primary tool.

Our reduction is performed by introducing an intermediate stoquastic \emph{hypercube} Hamiltonian.
An $n$-qubit hypercube Hamiltonian is defined such that its \emph{interaction graph} is a subgraph of the $n$-dimensional hypercube (including self-loops).
Here, the interaction graph of a Hamiltonian $H$ is obtained by interpreting $H$ as a weighted adjacency matrix.

\begin{itemize}
	\item
	      By introducing a novel edge-subdivision perturbative gadget that allows the amplitude to efficiently tunnel through a \emph{polynomially} long chain, we can alter the interaction graph of a stoquastic sparse Hamiltonian so that it embeds into a hypercube by carefully (re)labeling original nodes and gadget nodes.
	      This yields a second-order perturbative reduction from stoquastic sparse Hamiltonians to stoquastic hypercube Hamiltonians (\Cref{lem:pathA}); see \Cref{fig:to-sparse} for an illustration.

	      \begin{figure}[!htbp]
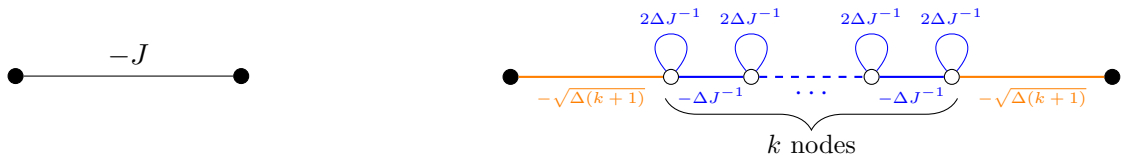

		      \centering
		      \figtriangle
		      \caption{An illustration of how we reduce an edge of weight \(-J\) (\(0 < J \ll 1\)) in a stoquastic sparse Hamiltonian to a gadget in a stoquastic hypercube Hamiltonian, where \(\Delta\) is a parameter being sufficiently large. Nodes represent computational basis states.
			      Edges with different colors correspond to different components\protect\sharedfootnotemark{} from the second-order perturbative reduction; see \Cref{lem:pathA} for details.
			      Note that $k$ can be \emph{polynomially} large since it only introduces a $\sqrt{k+1}$ overhead on weights.
		      }
		      \label{fig:to-sparse}
	      \end{figure}

	\item
	      Drawing inspiration from the reduction mapping transverse field Ising (TFI) Hamiltonians to hard-core dimer Hamiltonians in~\cite{bravyi2017complexity}, we reduce stoquastic hypercube Hamiltonians to TFD Hamiltonians, via another second-order perturbative reduction (\Cref{lem:pathB}); see \Cref{fig:to-hypercube} for an illustration.

	      \begin{figure}[!htbp]
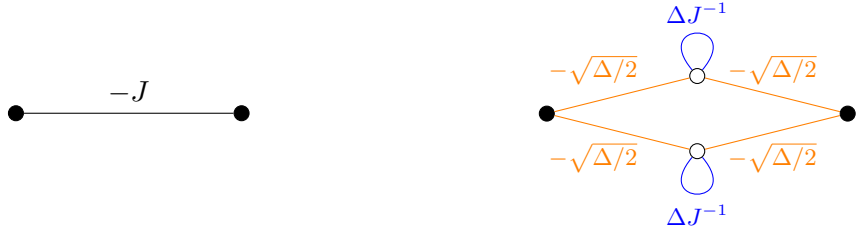

		      \centering
		      \figdiamond
		      \caption{An illustration of how we reduce an edge of weight \(-J\) (\(0 < J \ll 1\)) in a stoquastic hypercube Hamiltonian to a gadget in a TFD Hamiltonian, where \(\Delta\) is a parameter that is sufficiently large. Nodes represent computational basis states.
			      Edges with different colors correspond to different components\protect\sharedfootnoteref\ from the second-order perturbative reduction; see \Cref{lem:pathB} for details.
		      }
		      \label{fig:to-hypercube}
	      \end{figure}
\end{itemize}

\footnotetext{\hypertarget{fn\@savedfn}{}Orange edges come from \(\Delta^{1/2} V_{\rmmain}\), and blue edges come from \(\Delta H_0\) in the proof of \Cref{lem:pathA} and \Cref{lem:pathB}; the precise edge weights there differ slightly for technical reasons.}

Our reduction from stoquastic sparse Hamiltonians to TFD Hamiltonians is general: it applies to any stoquastic sparse Hamiltonian, not solely those from the GHV adiabatic path.
See \Cref{prop:merit_sparse} for the formal statement.

\subsection{Linearization of Adiabatic Path (Lemma \ref{lem:pathB2}, \ref{lem:linearTFI}, and \ref{lem:pathC})} \label{sec:overview-linearize}

Now, suppose that we directly apply the one-to-one reduction---from a stoquastic sparse Hamiltonian to a TFD Hamiltonian as described in \Cref{sec:overview-pert}---to each Hamiltonian along the piecewise-linear GHV adiabatic path \(H_{\rm init} \pathto H_{\rm graph} \pathto H_{\rm final}\).
The resulting path can be written as a time-dependent TFD Hamiltonian
\begin{equation}
	K(t) = -a \sum_i X_i + D(t),\footnote{Note that we can guarantee that the coefficients for every $X_i$ are the same, compared to the most general form in \Cref{eq:overview_tfd}, although the proof idea remains valid even if the coefficients \emph{were} different.}
\end{equation}
where $t \in [0,1]$, and $D(t)$ is diagonal.
However, $K(t)$ is no longer piecewise-linear as the reduction above is not a linear mapping.
Therefore, we introduce a \emph{linearization} technique to further reduce $K(t)$ to a linear adiabatic path $H(t)$ (\Cref{lem:pathC}).

While not piecewise-linear, \(K(t)\) is still continuous in \(t\).
Hence it can be well approximated by a piecewise-linear path $K_0 \pathto K_1 \pathto \cdots \pathto K_\ell$ for some sufficiently large $\ell$ (\Cref{lem:pathB2}).
To better illustrate the idea, we set \(\ell = 2\) and discuss how to linearize \(K_0 \pathto K_1 \pathto K_2\) below.

Consider a toy system with Hilbert space \(\mathbb{C}^3\) governed by a time-dependent Hamiltonian \(A(t)\) that is linear in \(t \in [0,1]\).
We construct $A(t)$ such that it gradually shifts the ground state from \(\ket{0}\) to \(\ket{1}\), and finally to \(\ket{2}\):
\begin{equation}
	A(t)=
	\begin{bmatrix}
		t        & -\lambda & 0        \\
		-\lambda & 1/3      & -\lambda \\
		0        & -\lambda & 1-t
	\end{bmatrix},
\end{equation}
where \(\lambda > 0\) is sufficiently small.
The intuition behind this construction is that if \(\lambda\) \emph{were} \(0\), \(A(t)\) would be diagonal and its ground state would change abruptly from \(\ket{0}\) to \(\ket{1}\) and then to \(\ket{2}\) (see \Cref{fig:spec_0}).
Instead, a small but nonzero \(\lambda\) guarantees that the transition is smooth and that the minimum spectral gap of \(A(t)\) is \(\Theta(\lambda)\) (see \Cref{fig:spec_1}).

\begin{figure}[!htbp]
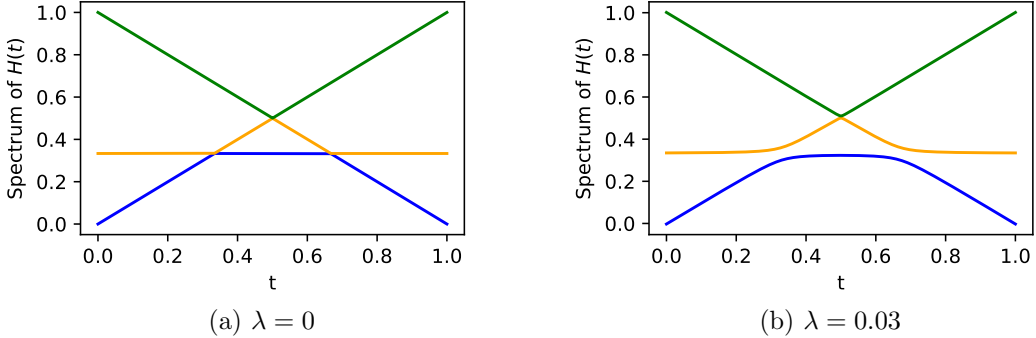

	\centering
	\figspec
	\caption{Spectrum of \(A(t)\) for different \(\lambda\) values. A small but nonzero \(\lambda\) perturbs the spectrum, ensuring an \(\Theta(\lambda)\) spectral gap throughout and making the instantaneous ground state transition of \(A(t)\) smooth.}
	\label{fig:spec}
\end{figure}

Now, construct the linear adiabatic path
\begin{equation} \label{eq:idea-linearization}
	H(t) \coloneqq \Delta \cdot (I \otimes A(t)) + \sum_{\tau =0}^2 K_\tau \otimes \ketbra{\tau}{\tau}
\end{equation}
for a sufficiently large $\Delta$.
By first-order perturbation theory, the degenerate ground space of $I\tensor A(t)$ is perturbed by the minor term $\sum_{\tau} K_\tau \otimes \ketbra{\tau}{\tau}$, and $H(t)$ turns out to be \emph{effectively} a linear combination of $K_0$ and $K_1$ (or $K_1$ and $K_2$) when restricted to the perturbed subspace, because the ground state of $A(t)$ is a linear combination of $\ket{0}$ and $\ket{1}$ (or $\ket{1}$ and $\ket{2}$).
Thus a reduction from $K_0 \pathto K_1 \pathto K_2$ to $H(t)$ is established.

A remaining issue is that the \(H(t)\) obtained in \Cref{eq:idea-linearization} is not a TFD Hamiltonian.
To resolve this, we adapt the unary embedding method from~\cite{leng2024expanding} to transform our toy system \(A(t)\) into a transverse field Ising (TFI) model (\Cref{lem:linearTFI}). With this modification, along with a slight adjustment to \Cref{eq:idea-linearization}, $H(t)$ becomes a time-dependent TFD Hamiltonian that is linear in \(t\).

It is worth noting that the second tensor product component in $H(t)$ can be interpreted as acting on a \emph{clock register}. Clock registers were first introduced in~\cite{Kitaev2002} to prove $\mathsf{QMA}$-hardness of the local Hamiltonian problem. In that work, the authors also used unary representation to reduce the interaction order (i.e., from $\Theta(\log n)$-local to 5-local) of the target Hamiltonian. Note that, however, our construction is entirely distinct from theirs, aside from sharing a similar concept of clock register.

Our linearization technique is general, and thus can be applied to any time-dependent Hamiltonian $K(t)$, not only to $K_0 \pathto K_1 \pathto K_2$ discussed above.
Furthermore, $H(t)$ will be stoquastic if $K(t)$ is stoquastic.
See \Cref{prop:merit_linearize} for the formal statement.

\subsection{Path Tilt and Extension (Lemma \ref{lem:pathD})} \label{sec:overview-tilt}

\Cref{sec:overview-linearize} provides a linear adiabatic path
\begin{equation}
	H(t):= -\sum_i a_i X_i + D(t),
\end{equation}
for \(t\in[-1,1]\) (with \(a_i\) and \(D(t)\) redefined for notational clarity) that encodes the GHV adiabatic path.
This does not match the form required in \Cref{thm:main_informal}, where \(H(t)\) is expected to begin with a TFI Hamiltonian and end at a diagonal Hamiltonian.

Our final step is to introduce a small parameter \(\eta\) and tilt the path by defining
\begin{equation}
	H'(t) := -(1-t\eta) \sum_i a_i X_i + (1+t\eta) D(t).
\end{equation}
It is obvious that for sufficiently small \(\eta\), \(H'(t)\) retains the essential spectral properties of \(H(t)\). We then extend \(H'(t)\) to the larger domain \(t\in[-1/\eta,1/\eta]\) by linearly extending \(D(t)\).
This way, the Hamiltonian begins with \(H'(-1/\eta) = -2 \sum_i a_i X_i\) and ends at \(H'(1/\eta)=2D(t)\), matching the structure presented in \Cref{thm:main_informal}.

This construction, however, introduces technical challenges. Although the spectral behavior of \(H'(t)\) is controlled for \(t\in[-1,1]\), it is nontrivial to guarantee the spectral gap for \(t\in[-1/\eta,-1)\cup(1,1/\eta]\), a requirement for efficient adiabatic evolution. To resolve this, we have to explicitly analyze the components of \(D(t)\) and refine the reductions in \Cref{sec:overview-pert} and \Cref{sec:overview-linearize} so that the resulting paths exhibit additional structural properties.
Moreover, the final form of \(H'(t)\) is adjusted so that \(H'(-1/\eta)\) no longer consists solely of single-qubit Pauli-\(X\) but also $1$- and $2$-qubit Pauli-$Z$.

\subsection{Putting Everything Together (Theorem~\ref{thm:main_informal})} \label{sec:pet}

To summarize, we begin with the \emph{stoquastic sparse} adiabatic path in \cite{gilyen2021sub}; then reduce it to certain \emph{transverse field diagonal} path which is not linear; then we linearize this path to be of the form $-\sum_ia_iX_i+D(t)$ where $D(t)$ is diagonal; finally we slightly tilt this path and extend both its endpoints to arrive at a linear path that starts with a \emph{transverse field Ising} Hamiltonian and ends at some \emph{diagonal} Hamiltonian.

During these reductions, the adiabatic path not only gains more desirable structures, but also preserves the inverse polynomial spectral gap.
This enables efficient simulation via adiabatic quantum computing to locate the ground state of the final diagonal Hamiltonian (i.e., minimizer of the final optimization problem).

On the other hand, our reductions are black-box based on the original GHV adiabatic path. Hence any classical query to our new adiabatic path can be converted efficiently into classical queries to the original path, and the ground state of our final Hamiltonian corresponds directly to the ground state of the final Hamiltonian in the GHV adiabatic path.
As a result, an efficient classical query algorithm for our adiabatic path can also solve the original problem, which is therefore ruled out by the classical hardness of finding the ground state of the ending Hamiltonian established in \cite{gilyen2021sub}.

\subsection{Generalization to Continuous Space (Theorem~\ref{thm:qhd_informal})}\label{sec:overview-qhd}

{ \renewcommand{\ket}[1]{|#1\rangle}
	\renewcommand{\bra}[1]{\langle #1|}
	\renewcommand{\ketbra}[2]{|#1\rangle\langle #2|}
	\renewcommand{\braket}[2]{\langle#1|#2\rangle}
	\renewcommand{\mel}[3]{\langle #1|#2|#3\rangle}

	\Cref{thm:main_informal} establishes an exponential quantum advantage for unconstrained binary optimization problems in an oracular setting. We now extend this construction and demonstrate that an exponential quantum advantage persists in continuous domain $\mathbb{R}^n$~(\Cref{thm:qhd_informal}).

	Our construction is inspired by an intricate connection between two seemingly disparate objects: \textit{TFI Hamiltonians} and \textit{Schr\"odinger operators}, i.e., an unbounded operator of the form $H = -\Delta + V$ with $\Delta$ the Laplace operator and $V$ a continuous function in $\R^n$.
	Despite their structural differences, a recent work~\cite{zheng2024computational} reveals that \sdg\ operators are at least as computationally powerful as TFI Hamiltonians.
	Their key observation is that a TFI Hamiltonian (consisting of 1-qubit Pauli $X$ and up to 2-qubit Pauli $Z$ operators) can be approximately simulated by a Schr\"odinger operator by the following replacement rule:
	\begin{equation} \label{eq:replacement-rule}
		Z \mapsto \hatZ = \sgn(x),\footnote{Actually, \cite{zheng2024computational} uses a smoothed version of $\sgn(x)$, but we stick to $\sgn(x)$ here for simplicity.} \quad -\frac{1}{\Lambda} \cdot X \mapsto \hatX = - \frac{\dee^2 }{\dee x^2}  + \lambda^2 \dwfunc(x),\quad  f_{\rm dw}(x) = \left(x^2-\frac14\right)^2,
	\end{equation}
	for carefully chosen parameters $\lambda = \Theta(\log n)$ and $\Lambda = \Lambda(\lambda) = \poly(n)$.

	In physics, the operator $\hatX$ represents an anharmonic oscillator with a symmetric double-well potential $f_{\rm dw}(x)$. Due to the symmetry, the first two eigenstates (denoted by $\ket{\chi_0}$ and $\ket{\chi_1}$, respectively) of $\hatX$ exhibit symmetric and anti-symmetric ordering, as illustrated in \Cref{fig:dwspec_0}.
	It is easy to see that $\ket{\chi_0}$ and $\ket{\chi_1}$ are analogous to $\ket{+}$ and $\ket{-}$ for the Pauli-$X$ operator, as they respectively form the ground and first excited states of $\hatX$.
	Now, consider the linear combinations of the states $\ket{\chi_0}$ and $\ket{\chi_1}$:
	\begin{equation} \label{eq:hatzeroone-def-ov}
		\ket{\hatzero} \coloneqq \frac{1}{\sqrt{2}} \left( \ket{\chi_0} + \ket{\chi_1} \right), \quad \ket{\hatone} \coloneqq \frac{1}{\sqrt{2}} \left( \ket{\chi_0} - \ket{\chi_1} \right).
	\end{equation}
	A crucial property of $\ket{\hatzero}$ is that its wave function is concentrated in the region $x \ge 0$ with error $\Lambda^{-\Omega(1)}$, as shown in \Cref{fig:dwspec_1}. Therefore, we have $\hatZ \ket{\hatzero} \approx \ket{\hatzero}$ and similarly $\hatZ \ket{\hatone} \approx -\ket{\hatone}$, which justifies the notations $\ket{\hatzero}$ and $\ket{\hatone}$ as they serve as analogues of the standard qubit states $\ket{0}$ and $\ket{1}$ in the context of $\hatZ$ and $\hatX$.

	\begin{figure}[!htbp]
		\centering
		\figdwspec \caption{
			Visualizations of $\ket{\chi_0},\ket{\chi_1}$ regarding $\hatX$ and their linear combination $\ket{\hatzero},\ket{\hatone}$ defined in \Cref{eq:hatzeroone-def-ov}. The parameter for $\hatX$ is chosen to be $\lambda = 30$. The potential field $f_{\rm dw}$ (amplified 10 times) is plotted in a dotted line for reference.
		}
		\label{fig:dwspec}
	\end{figure}

	Building on the established mapping between TFI Hamiltonians and \sdg\ operators, we develop a new technical result (\Cref{lem:tosdg}) that allows us to embed a TFD Hamiltonian $H= \sum_u h_u X_u + D$ into a continuous-space Schr\"odinger operator, where $D$ represents an abstract $n$-qubit diagonal Hamiltonian.
	The idea is to keep the replacement rule $-X \mapsto \Lambda \hatX$ from \Cref{eq:replacement-rule} while introducing a new mapping
	\begin{equation}
		D \mapsto \hatD,\quad \hatD \colon [-1,1]^n \to \mathbb{R}.
	\end{equation}
	In~\Cref{fig:D-heatmap}, we illustrate the construction of $\hatD$ from $D$.
	The function values in the four orthants correspond to the diagonal elements of $D$, and the function equals $0$ on the boundary between orthants.
	From the concentration property of $\ket{\hatzero}$ and $\ket{\hatone}$ it is natural to expect that $\hatD (\ket{\hat{x}_1}\ket{\hat{x}_2} \cdots \ket{\hat{x}_n}) \approx \mel{x_1  \cdots x_n}{D}{x_1 \cdots x_n} (\ket{\hat{x}_1}\ket{\hat{x}_2} \cdots \ket{\hat{x}_n})$, justifying the notation $\hatD$.

	\begin{figure}[!htbp]
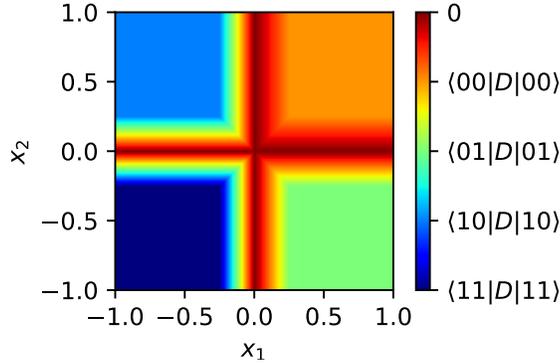

		\centering
		\figheatmap \caption{The Construction of $\hatD$ from a diagonal Hamiltonian $D$ with parameter $w = 1/4$ in \Cref{def:hatD}. It is efficient in the sense that a query to $\hatD(\cdot)$ can be implemented by one query to $D$.}
		\label{fig:D-heatmap}
	\end{figure}

	While the construction of $\hatD$ appears natural, the error analysis of \cite{zheng2024computational} (including, but not limited to, the use of ``$\approx$'' in the last paragraph) does not extend directly. Their approach only provides a trivial error estimate of $\Theta(2^n \Lambda^{-\Omega(1)}) = 2^{(1-o(1))n} \gg 1$, because the nonlocal nature of $D$ forces its representation as a sum of $Z$-tensor products to involve up to $2^n$ terms.
	Thus, it is not immediately clear that the $\hatD$ construction works.
	In contrast, we develop a novel error analysis technique that achieves an error bound of $\poly(n) \Lambda^{-\Omega(1)}$ by leveraging a newfound \emph{symmetry} property of $\ket{\chi_0}$ and $\ket{\chi_1}$, articulated in \Cref{eq:key-sym}.

	Exploiting the correspondence between TFD Hamiltonians and \sdg\ operators, we prove a one-to-one mapping between a linear TFD adiabatic path ($H_{\rm TFI} \pathto D$) from \Cref{thm:main_informal} and a path of \sdg\ operators:
	\begin{equation} \label{eq:H-ov}
		H(t) = - \Delta  + tf(x) +  \nu(t) g(x), \quad \forall t \ge 0,
	\end{equation}
	where $f$ is a real-valued objective function on a box-shaped subset of $\R^n$, $g(x)$ is an auxiliary function independent of $f$, and $\nu(t) = e^{-\Theta(\sqrt{t})}$.

	Similar to the linear adiabatic path, the initial ground state of $H(0)$ can be prepared as a product state, and $H(t)$ always admits a $1/\poly(n)$ spectral gap.
	Moreover, it is obvious from \Cref{eq:H-ov} that the global minimizer of $f$ is encoded in the ground state of $H(t)$ as $t \to \infty$ and can be thus efficiently extracted by computational basis measurement. This leads to an efficient quantum algorithm for minimizing $f$ by simulating the \sdg\ operator evolution in an adiabatic regime.

	On the other hand, an oracle to $f$ can be efficiently implemented with an oracle to $D$. Therefore any efficient classical quantum algorithm finding the global minimum of $f$ can also finds the minimal entry in $D$, violating the classical hardness established in \Cref{thm:main_informal}.

	Our construction of hard continuous instances from hard discrete instances is general: it applies to any discrete objective functions that admit quantum speedups through adiabatic quantum optimization with the mixer being TFD Hamiltonians.
	See \Cref{prop:merit_dtc} for the formal statement.

} 
\section{Preliminaries}\label{sec:prelim}

We use $\mathbb R,\mathbb C$ to denote real number and complex numbers respectively; and use $\mathbb N$ to denote all non-negative integers and use $\mathbb N_+$ for all positive integers.
For a positive integer $n$, we use $[n]$ to denote the set $\{1,2,\ldots,n\}$.
For real numbers $\ell\le r$, we use $[\ell,r]$ to denote the set $\{t\in\mathbb R\colon\ell\le t\le r\}$.
For two binary string $x,y$ of equal length, denote by $\dist(x,y)$ the Hamming distance between \( x \) and \( y \).

\paragraph{Asymptotics.}
Throughout this paper, we use the standard asymptotic notations $O(\cdot)$, $\Theta(\cdot)$, and $\Omega(\cdot)$ to represent \emph{predetermined} expressions that do not depend on any other parameters.\footnote{For example, the statement ``if $f \geq O(g)$ then $\epsilon = \Theta(\eta)$'' means that there exist functions $g_0 = O(g)$ and $\eta_0 = \Theta(\eta)$ such that if $f \geq g_0$, then $\epsilon = \eta_0$.}
Additionally, we use $\poly(\cdot)$ to denote a function that is bounded below by $1$ and bounded above by some implicit polynomial in terms of the parameters within.
We emphasize that \emph{all} implicit functions represented in these asymptotic forms can be computed explicitly and efficiently.

\paragraph{Oracles.}
Our classical hardness and quantum efficiency will be proved in the \emph{oracle} model, a.k.a, the \emph{block-box} model.
The oracles in this paper will be classical functions that allow both classical and quantum access.

Formally, let $f\colon\{0,1\}^n\to\{0,1\}^m$ be a function, where we assume without loss generality that the input and output of $f$ are binary strings.
The oracle $O_f$ provides efficient query access to $f$:
\begin{itemize}
	\item in the classical case, $O_f$ directly evaluates $f$ on the classical query, i.e., $O_f(x)=f(x)$ for any $x\in\{0,1\}^n$;
	\item in the quantum case, $O_f$ handles queries in superposition and evaluates $f$ in superposition, i.e., $O_f\ket{x}_{\mathcal A}\ket{y}_{\mathcal B}=\ket{x}_{\mathcal A}\ket{y\oplus f(x)}_{\mathcal B}$ where $\mathcal A$ is the query register and $\mathcal B$ is the answer register; and $O_f\sum_{x,y}\alpha_{x,y}\ket{x}\ket{y}=\sum_{x,y}\alpha_{x,y}\ket{x}\ket{y\oplus f(x)}$ for any $\{\alpha_{x,y}\}_{x,y}$.
\end{itemize}

We say $O$ is an oracle if $O=O_f$ for some function $f$.
We emphasize that the query oracle $O_f$ ignores the complexity of implementing $f$.

\paragraph{Norms.}
For a matrix (Hamiltonian) $A$, we use $\norm{A}$ to denote its operator norm and use $\norm{A}_\mathrm{F}$ to denote its Frobenius norm.
For a vector (state) $v$, we use $\norm{v}$ to denote its Euclidean (i.e., $\ell_2$) norm.
The distance between two vectors (states) $u,v$ is measured in $\ell_2$ norm; and we say they are $\epsilon$-close if $\norm{u-v}\le\epsilon$.

\begin{fact}\label{fct:mat-ineq}
	The following matrix norm inequalities are standard.
	\begin{enumerate}[label=(\roman*)]
		\item\label{itm:fct:mat-ineq_1} $\norm{A} \geq \norm{A}_\rmmax$, where $\norm{A}_\rmmax \coloneqq \max_{i,j} \{\abs{A_{ij}}\}$ is the \emph{max norm} of $A$.
		\item\label{itm:fct:mat-ineq_2} $\norm{A} \leq \sqrt{\norm{A}_1 \norm{A}_\infty}$, where $\|A\|_1 \coloneqq \max_{j} \sum_i  \abs{A_{ij}}$ is the induced $1$-norm and $\norm{A}_\infty \coloneqq \max_i \sum_j \abs{A_{ij}}$ is the induced $\infty$-norm. In particular, $\norm{A} \leq \norm{A}_1=\norm{A}_\infty$ for Hermitian $A$.
	\end{enumerate}
\end{fact}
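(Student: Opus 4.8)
The plan is to derive both bounds directly from the variational definition $\norm{A}=\sup_{\norm{v}=1}\norm{Av}$, using nothing beyond the Cauchy--Schwarz inequality. For part (i), note that $A_{ij}$ is the inner product of the standard basis vector $e_i$ with $Ae_j$, so $\abs{A_{ij}}\le\norm{e_i}\norm{Ae_j}\le\norm{A}$; maximizing over $i,j$ gives $\norm{A}_\rmmax\le\norm{A}$.

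For part (ii), I would run the standard Schur test. Fix a unit vector $v$, so that $\norm{Av}^2=\sum_i\bigl|\sum_j A_{ij}v_j\bigr|^2$. Splitting $\abs{A_{ij}v_j}=\abs{A_{ij}}^{1/2}\cdot\bigl(\abs{A_{ij}}^{1/2}\abs{v_j}\bigr)$ and applying Cauchy--Schwarz to the inner sum over $j$ yields $\bigl|\sum_j A_{ij}v_j\bigr|^2\le\bigl(\sum_j\abs{A_{ij}}\bigr)\bigl(\sum_j\abs{A_{ij}}\abs{v_j}^2\bigr)\le\norm{A}_\infty\sum_j\abs{A_{ij}}\abs{v_j}^2$. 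Summing over $i$ and interchanging the order of summation,
\begin{equation}
	\norm{Av}^2\le\norm{A}_\infty\sum_j\abs{v_j}^2\sum_i\abs{A_{ij}}\le\norm{A}_\infty\,\norm{A}_1\,\norm{v}^2,
\end{equation}
and taking the supremum over unit $v$ gives $\norm{A}\le\sqrt{\norm{A}_1\norm{A}_\infty}$. When $A$ is Hermitian, $\abs{A_{ij}}=\abs{A_{ji}}$, so the row sums and column sums of the array $\bigl(\abs{A_{ij}}\bigr)_{i,j}$ coincide, forcing $\norm{A}_1=\norm{A}_\infty$, and the bound collapses to $\norm{A}\le\norm{A}_1=\norm{A}_\infty$.

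This is entirely routine, so I do not anticipate any real obstacle. The only point requiring a little care is orienting the Cauchy--Schwarz split in part (ii) so that one factor collapses to a row sum of $\abs{A_{ij}}$ (bounded by $\norm{A}_\infty$) while, after interchanging the summations, the residual double sum reorganizes into column sums (bounded by $\norm{A}_1$).
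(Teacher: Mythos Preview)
Your proof is correct. The paper does not give a proof of this Fact at all; it simply states the inequalities as standard and moves on, so there is nothing to compare against beyond noting that your argument is exactly the textbook one.
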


\subsection{\aPath}\label{sec:ham-path}

We only work with Hermitian Hamiltonian in this paper. We use ground state to refer to a \emph{normalized} eigenvector of the Hamiltonian corresponding to its minimal eigenvalue.

A Hamiltonian is \emph{stoquastic}~\cite{bravyi2008complexity} if it does not have a sign problem, i.e., with respect to some fixed basis, it has real entries and all off-diagonal entries are non-positive.
The fixed basis will be the computational basis throughout this paper.
While our focus is on stoquastic Hamiltonians, some of our results work for general Hamiltonians.

\paragraph{\apath.}
A \emph{\apath} is a time-dependent Hamiltonian \(H(t)\) defined on the interval \([t_0,t_1]\).
The spectral gap of the path is defined as the infimum of the spectral gaps of \(H(t)\) for all \(t \in [t_0,t_1]\).
Similarly, the norm of the path is given by the supremum of the operator norms of \(H(t)\) for all \(t \in [t_0,t_1]\).
The path is called $L$-Lipschitz if for any $t,t' \in [t_0,t_1]$, we have
\begin{equation}
	\frac{\norm{H(t)-H(t')}}{{\abs{t-t'}}} \leq L.
\end{equation}

\paragraph{Piecewise-linear \apath.}
We use the notation
\begin{equation}
	H_1 \pathto H_2 \pathto \cdots \pathto H_{\ell}
\end{equation}
to describe a time-dependent Hamiltonian \( H(t) \) for \( t \in [0,1] \), defined as
\begin{equation}
	H(t) \coloneqq (1 - \{\tau\}) H_{\lfloor \tau \rfloor} + \{\tau\} H_{\lfloor \tau \rfloor + 1} ,
\end{equation}
where $\tau \coloneqq (\ell-1)t+1$ ranges over $[1,\ell]$, \( \lfloor \tau \rfloor \) is the floor function, and \( \{ \tau\} =  \tau - \lfloor  \tau \rfloor \) represents the fractional part of \( \tau \).

\medskip

The following definition extends the notion of an \emph{interaction graph} from a static Hamiltonian to a time-dependent Hamiltonian.

\begin{definition}[Interaction graph and interaction constraint]
	Given a Hamiltonian $H$, its \emph{interaction graph} is defined as follows. The vertex set $V$ contains all computational basis states and the edge set $E$ contains an edge $(x,y)$ iff $\mel{x}{H}{y} \neq 0$.\footnote{Note that self-loop is allowed in $E$.} For a time-dependent Hamiltonian $H(t)$ and a graph $G = (V, E)$, we say that $H(t)$ has an \emph{interaction constraint} $G$ if the interaction graph for $H(t)$ is a spanning subgraph of $G$ for every $t$.
\end{definition}

\paragraph{Sparsity and oracle access.}
Recall that a Hamiltonian \( H \) is \emph{\( s \)-sparse} if it has at most $s$ nonzero entries per row and per column in the computational basis. In other words, the interaction graph associated with \( H \) has a maximum degree at most \( s \). Therefore, we call a time-dependent Hamiltonian $H(t)$ \emph{explicitly $s$-sparse} if there exists an associated interaction constraint $G$ with maximum degree at most $s$.

We say that \( H(t) \) is accessible via oracle access if there exist oracles that provide efficient access to its structure:
\begin{itemize}
	\item \textsc{Interaction Constraint Oracle.} There exists an interaction constraint $G$ of $H(t)$, such that given a basis state index \( x \), this oracle returns the adjacency list of \( x \) in $G$.
	\item \textsc{Matrix Element Oracle.} Given $t$ and indices \( x, y \), this oracle returns the value of the matrix element \(\mel{x}{H(t)}{y}\).
\end{itemize}

Despite being nonlocal, any \( n \)-qubit explicitly \( \poly(n) \)-sparse Hamiltonian can be efficiently simulated given oracle access~\cite{aharonov2003adiabatic,berry2007efficient}.
Moreover, efficient simulation remains feasible when the Hamiltonian is time-dependent~\cite{berry2020time}.

\paragraph{Pauli operators.}
We denote the standard single-qubit Pauli operators by \(X\), \(Y\), and \(Z\). For multi-qubit systems, we use \(X_u\), \(Y_u\), and \(Z_u\) to represent these operators acting specifically on qubit \(u\).

\paragraph{Hamiltonian with special structures.}
A \emph{transverse field Ising} (TFI) Hamiltonian is a Hamiltonian of the form
\begin{equation}
	H = \sum_{u \in[n]} (h_u X_u + g_u Z_u) + \sum_{1 \leq u<v \leq n} g_{u,v} Z_{u,v},
\end{equation}
where $h_u,g_u,g_{u,v} \in \mathbb{R}$.\footnote{Some definitions of the TFI model exclude the single-qubit $Z$ terms. Nevertheless, including these terms does not alter the computational power of the model~\cite[Eq.~(6)]{bravyi2014complexity}.}
We call $H$ a \emph{transverse field diagonal} (TFD) Hamiltonian if it has the form
\begin{equation}\label{eqn:tfd-defn}
	H = \sum_{u \in [n]} h_u X_u + D,
\end{equation}
where $h_u \in \mathbb{R}$, and $D$ is diagonal.
More generally, we say \( H \) is a \emph{hypercube} Hamiltonian if it has an interaction constraint being the \( n \)-dimensional hypercube (including self-loops). Specifically, this means that $\mel{x}{H}{y} = 0$ holds for all \( x, y \in \binary^n \) with \( \operatorname{dist}(x,y) > 1 \).
Note that all TFI Hamiltonians are TFD ones, and all TFD Hamiltonians are hypercube ones.

\subsection{Hamiltonian Simulation}

We use the term \emph{Hamiltonian simulation}\footnote{Note that this notion of Hamiltonian simulation should not be confused with the more general use of the term in quantum computing, which typically refers to quantum algorithms designed to approximate the unitary dynamics \( e^{-iHt} \) of a given Hamiltonian $H$~\cite{lloyd1996universal,childs2018toward}.}~\cite{bravyi2017complexity,cubitt2018universal} to describe the process of simulating one quantum many-body system using another. Such a simulation involves encoding a Hamiltonian~$H$ within another Hamiltonian~$H'$, enabling the second system to replicate \emph{all} physical behaviors of the first. We formally introduce this notion below.

\begin{definition}[Hamiltonian simulation~\cite{bravyi2017complexity,cubitt2018universal}] \label{def:sim}
	{Let \( H \) be a Hamiltonian acting on a Hilbert space \( \mathcal{H} \) of dimension \( N \).
		A Hamiltonian \( H_{\rmsim} \) on Hilbert space $\mathcal{H}_\rmsim$ and an isometry (encoding) \( \mathcal{E} : \mathcal{H} \to \mathcal{H}_{\rmsim} \) are said to simulate \( H \) with error \( (\epsilon_{\rm enc}, \epsilon) \) if there exists an isometry \( \tilde{\mathcal{E}} : \mathcal{H} \to \mathcal{H}_{\rmsim} \) such that}
	\begin{enumerate}[label=(\roman*)]
		\item The image of \( \tilde{\mathcal{E}} \) coincides with the \emph{low-energy subspace} \( \mathcal{L}_N(H_{\rmsim}) \), which is spanned by the eigenvectors of $H_{\rmsim}$ associated with its $N$ smallest eigenvalues.
		\item \( \| H - \tilde{\mathcal{E}}^\dagger H_{\rmsim} \tilde{\mathcal{E}} \| \leq \epsilon \).
		\item \( \| \mathcal{E} - \tilde{\mathcal{E}} \| \leq \epsilon_{\rm enc} \).
	\end{enumerate}
\end{definition}
\begin{remark}
	If $\epsilon_{\rm enc}=\epsilon$, we say that $(H_\rmsim,\calE)$ simulates $H$ \emph{with error $\epsilon$} for convenience.
\end{remark}

\begin{lemma}[Ground state simulation \cite{bravyi2017complexity}] \label{lem:gsim}
	Suppose $H$ has a non-degenerate ground state $\lvert g \rangle$ separated from excited states by a spectral gap $\delta$. Suppose $(H_{\rmsim}, \mathcal{E})$ simulates $H$ with error $(\epsilon_{\rm enc}, \epsilon)$.
	Then the spectral gap of $H_\rmsim$ is at least $\delta-2\epsilon$.
	In addition, $H_{\rmsim}$ has a non-degenerate ground state $\lvert g_{\rmsim} \rangle$ and
	\begin{equation}
		\|\mathcal{E} \lvert g \rangle - \lvert g_{\rmsim} \rangle \| \leq \epsilon_{\rm enc} + O(\delta^{-1} \epsilon).
	\end{equation}
\end{lemma}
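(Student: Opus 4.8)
The plan is to work entirely in the low-energy subspace $\mathcal{L}_N(H_\rmsim)$ and transport the spectral data of $H$ through the near-isometry $\tilde{\calE}$, then pay the small cost of replacing $\tilde{\calE}$ by $\calE$. First I would set $H' \coloneqq \tilde{\calE}^\dagger H_\rmsim \tilde{\calE}$, which by Definition~\ref{def:sim}\ref{itm:fct:mat-ineq_1}-style item (i) is unitarily equivalent (via $\tilde\calE$) to the restriction of $H_\rmsim$ to its $N$-dimensional low-energy subspace, so $H_\rmsim\restriction_{\mathcal{L}_N}$ and $H'$ have identical spectra. By item (ii), $\|H - H'\|\le\epsilon$, so Weyl's inequality gives that the two smallest eigenvalues of $H'$ are within $\epsilon$ of $\mu_0,\mu_1$, the two smallest eigenvalues of $H$; since $\mu_1-\mu_0\ge\delta$, the gap of $H'$ (hence of $H_\rmsim\restriction_{\mathcal{L}_N}$) is at least $\delta-2\epsilon$.

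Next I would argue that the full operator $H_\rmsim$ has the same two smallest eigenvalues as its restriction $H_\rmsim\restriction_{\mathcal{L}_N}$: by construction $\mathcal{L}_N(H_\rmsim)$ is exactly the span of the $N$ lowest eigenvectors of $H_\rmsim$, so the eigenvalues of $H_\rmsim$ below the $(N{+}1)$-th one coincide with those of the restriction. In particular the spectral gap of $H_\rmsim$ equals that of $H_\rmsim\restriction_{\mathcal{L}_N}$ (assuming $N\ge 2$; for $N=1$ the claim about a spectral gap is vacuous or trivially handled), which is $\ge\delta-2\epsilon$. This also shows $H_\rmsim$ has a non-degenerate ground state $\ket{g_\rmsim}$ whenever $\delta-2\epsilon>0$, and $\ket{g_\rmsim}\in\mathcal{L}_N(H_\rmsim)=\im(\tilde\calE)$.

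For the ground-state estimate, write $\ket{g_\rmsim} = \tilde\calE\ket{h}$ for a unit vector $\ket{h}\in\mathcal H$; since $\tilde\calE$ is an isometry, $\ket{h}$ is the ground state of $H' = \tilde\calE^\dagger H_\rmsim\tilde\calE$. Now apply the standard eigenvector perturbation bound (Davis--Kahan $\sin\Theta$, or a direct resolvent argument) to $H$ and $H'=H+(H'-H)$ with $\|H'-H\|\le\epsilon$ and spectral gap $\delta$ of $H$: this yields $\|\ket{h}\bra{h} - \ket{g}\bra{g}\|\le O(\delta^{-1}\epsilon)$, and after fixing the global phase, $\|\ket{h}-\ket{g}\|\le O(\delta^{-1}\epsilon)$. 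Then
\begin{align}
	\|\calE\ket{g} - \ket{g_\rmsim}\|
	&= \|\calE\ket{g} - \tilde\calE\ket{h}\| \nonumber\\
	&\le \|\calE\ket{g} - \tilde\calE\ket{g}\| + \|\tilde\calE\ket{g} - \tilde\calE\ket{h}\| \nonumber\\
	&\le \|\calE - \tilde\calE\| + \|\ket{g} - \ket{h}\|
	\le \epsilon_{\rm enc} + O(\delta^{-1}\epsilon),
\end{align}
using item (iii) and that $\tilde\calE$ is norm-preserving.

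The main obstacle is the eigenvector perturbation step: one must be careful that the relevant gap controlling the perturbation of $\ket{g}$ is the gap $\delta$ of the \emph{target} $H$ (valid since $\epsilon<\delta/2$ keeps the perturbed ground eigenvalue isolated), and one must track the phase convention so that the $\sin\Theta$ bound on projectors upgrades to a bound on vectors without an extra constant blowing up the $O(\cdot)$. A secondary subtlety is the bookkeeping that the low-energy subspace of $H_\rmsim$ really is invariant and carries exactly the bottom-$N$ spectrum, so that statements proved for the restriction lift to $H_\rmsim$ itself; this is immediate from the definition of $\mathcal{L}_N(H_\rmsim)$ as a spectral subspace but deserves an explicit sentence.
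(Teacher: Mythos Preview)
The paper does not supply its own proof of this lemma; it is quoted from \cite{bravyi2017complexity} as a black box. Your argument is correct and is essentially the standard one: transport the bottom-$N$ spectrum of $H_\rmsim$ to $H'=\tilde\calE^\dagger H_\rmsim\tilde\calE$, compare $H'$ to $H$ via Weyl for the gap and via Davis--Kahan for the ground eigenvector, then pay $\epsilon_{\rm enc}$ to swap $\tilde\calE$ for $\calE$.
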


\subsection{Perturbation Theory}

We will use the following standard matrix analysis results.

\begin{fact}[Weyl's inequality, see e.g., \cite{wiki:weyl}]
	\label{fct:weyl}
	Let $A,B$ be Hermitian matrices. Then for each $k$, the $k$th smallest eigenvalue of $A+B$ differs from the $k$th smallest eigenvalue of $A$ by at most $\norm{B}$. As a result, the spectral gap of $A+B$ is at least the spectral gap of $A$ minus $2\norm{B}$.
\end{fact}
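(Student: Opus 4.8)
The plan is to obtain the eigenvalue perturbation bound from the Courant--Fischer variational characterization of eigenvalues of Hermitian matrices, and then read off the spectral-gap consequence directly. Write $\lambda_k(M)$ for the $k$th smallest eigenvalue of a Hermitian matrix $M$ acting on $\mathbb C^N$. Courant--Fischer asserts
\[
\lambda_k(M)=\min_{\substack{S\subseteq\mathbb C^N\\\dim S=k}}\ \max_{\substack{v\in S\\\norm{v}=1}}\langle v,Mv\rangle .
\]
The one elementary ingredient I would record first is that for any Hermitian $B$ and any unit vector $v$ we have $\abs{\langle v,Bv\rangle}\le\norm{B}$, since $\langle v,Bv\rangle$ is a convex combination of the eigenvalues of $B$, each of magnitude at most $\norm{B}$.

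Next I would fix $k$ and let $S$ range over the $k$-dimensional subspaces of $\mathbb C^N$. For every such $S$ and every unit $v\in S$, $\langle v,(A+B)v\rangle=\langle v,Av\rangle+\langle v,Bv\rangle\le\langle v,Av\rangle+\norm{B}$; taking the maximum over unit $v\in S$ and then the minimum over $S$ gives $\lambda_k(A+B)\le\lambda_k(A)+\norm{B}$. Applying the same inequality with $A$ and $A+B$ interchanged---that is, writing $A=(A+B)+(-B)$ and using $\norm{-B}=\norm{B}$---gives $\lambda_k(A)\le\lambda_k(A+B)+\norm{B}$. Combining the two yields $\abs{\lambda_k(A+B)-\lambda_k(A)}\le\norm{B}$ for every $k$, which is the first assertion.

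For the second assertion I would apply the first to the two smallest eigenvalues: the spectral gap of $A+B$ is $\lambda_2(A+B)-\lambda_1(A+B)\ge(\lambda_2(A)-\norm{B})-(\lambda_1(A)+\norm{B})=(\lambda_2(A)-\lambda_1(A))-2\norm{B}$, i.e.\ the spectral gap of $A$ minus $2\norm{B}$. Since this is a completely standard fact, I do not expect any genuine obstacle; the only points to be careful about are invoking the correct ($\min$-over-$k$-dimensional-subspaces) form of Courant--Fischer for the $k$th \emph{smallest} eigenvalue, and applying the symmetry step to $-B$ rather than to $B$. One could equally well simply cite the stated reference and omit the argument.
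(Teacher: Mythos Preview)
Your argument is correct and is the standard Courant--Fischer proof of Weyl's inequality. The paper does not actually prove this statement---it is recorded as a ``Fact'' with a citation and no proof---so your write-up supplies what the paper omits, and your closing remark that one could simply cite the reference is exactly what the paper does.
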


\begin{fact}[Davis-Kahan $\sin\theta$ theorem, see e.g., {\cite[Theorem 2]{yu2015useful}}]\label{fct:davis-kahan}
	Let $A,B\in\mathbb C^{n\times n}$ be Hermitian\footnote{\cite[Theorem 2]{yu2015useful} states only for real symmetric matrices, but the analysis naturally generalizes to Hermitian matrices.} matrices.
	Let $d\ge1$ be an integer.
	Define $\delta$ to be the difference between the $(d+1)$th and the $d$th smallest eigenvalue of $A$; and assume $\delta>0$.
	Let $V\in\mathbb C^{n\times d}$ (resp., $\tilde V\in\mathbb C^{n\times d}$) have orthonormal columns corresponding to eigenvectors of the first $d$ smallest eigenvalues of $A$ (resp., $A+B$).
	Then there exists a unitary transform $U\in\mathbb C^{d\times d}$ such that
	\begin{align}
		\norm{VU-\tilde V}_\mathrm{F}
		\le\frac{2^{3/2}\cdot\sqrt d\cdot\norm{B}}{\delta}.
	\end{align}
\end{fact}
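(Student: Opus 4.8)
The plan is to run the classical Davis--Kahan argument through a Sylvester equation, specialized to the ``first $d$ eigenvectors'' case, and then convert the resulting $\sin\Theta$ estimate into the stated Procrustes-type bound. First I would fix orthonormal bases: with $V,\tilde V$ as in the statement, let $V_\perp\in\mathbb C^{n\times(n-d)}$ have orthonormal columns spanning $\operatorname{col}(V)^{\perp}$, and set $S:=V_\perp^\dagger\tilde V$. Since $[\,V\ \ V_\perp\,]$ is unitary, $\norm{V^\dagger\tilde V}_{\mathrm F}^2+\norm{S}_{\mathrm F}^2=\norm{\tilde V}_{\mathrm F}^2=d$, so $\norm{S}_{\mathrm F}^2=\sum_i\sin^2\theta_i$ where the $\theta_i\in[0,\pi/2]$ are the principal angles between $\operatorname{col}(V)$ and $\operatorname{col}(\tilde V)$ (the singular values of $V^\dagger\tilde V$ being the $\cos\theta_i$). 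Taking $U\in\mathbb C^{d\times d}$ from the singular value decomposition of $V^\dagger\tilde V$ gives $\norm{VU-\tilde V}_{\mathrm F}^2=2\sum_i(1-\cos\theta_i)\le 2\sum_i\sin^2\theta_i=2\norm{S}_{\mathrm F}^2$, using $1-\cos\theta\le\sin^2\theta$ on $[0,\pi/2]$. Hence it suffices to prove $\norm{S}_{\mathrm F}\le 2\sqrt d\,\norm{B}/\delta$, after which the claim follows with the loss of a factor $\sqrt2$.

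Next I would split into two regimes. Because each $\sin\theta_i\le1$ and $\cos\theta_i\ge0$, one always has $\norm{VU-\tilde V}_{\mathrm F}^2\le 2d$ for the optimal $U$, so whenever $\norm{B}\ge\delta/2$ the claimed bound $2^{3/2}\sqrt d\,\norm{B}/\delta\ge\sqrt{2d}$ already holds trivially; assume therefore $\norm{B}<\delta/2$. Let $\tilde\Lambda=\operatorname{diag}(\tilde\lambda_1,\dots,\tilde\lambda_d)$ collect the $d$ smallest eigenvalues of $A+B$, so $(A+B)\tilde V=\tilde V\tilde\Lambda$, and let $\Lambda_\perp$ be the diagonal matrix of the $n-d$ largest eigenvalues of $A$, so $V_\perp^\dagger A=\Lambda_\perp V_\perp^\dagger$. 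From $A\tilde V=(A+B)\tilde V-B\tilde V=\tilde V\tilde\Lambda-B\tilde V$, left-multiplying by $V_\perp^\dagger$ yields the Sylvester equation
\[
\Lambda_\perp S - S\tilde\Lambda = -\,V_\perp^\dagger B\tilde V .
\]
By Weyl's inequality (\Cref{fct:weyl}) each $\tilde\lambda_j$ is at most the $d$th smallest eigenvalue of $A$ plus $\norm{B}$, while every entry of $\Lambda_\perp$ is at least the $(d+1)$th smallest eigenvalue of $A$; hence the spectra of $\Lambda_\perp$ and $\tilde\Lambda$ are separated by at least $\delta-\norm{B}>\delta/2$. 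Reading the Sylvester equation entrywise in these eigenbases gives $\abs{S_{ij}}\le\abs{(V_\perp^\dagger B\tilde V)_{ij}}/(\delta/2)$, so $\norm{S}_{\mathrm F}\le \norm{V_\perp^\dagger B\tilde V}_{\mathrm F}/(\delta/2)\le 2\sqrt d\,\norm{B}/\delta$, where I used $\norm{V_\perp^\dagger B\tilde V}_{\mathrm F}\le\norm{V_\perp^\dagger B}\,\norm{\tilde V}_{\mathrm F}\le \sqrt d\,\norm{B}$. Combined with the first paragraph this gives $\norm{VU-\tilde V}_{\mathrm F}\le\sqrt2\,\norm{S}_{\mathrm F}\le 2^{3/2}\sqrt d\,\norm{B}/\delta$.

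This is a routine matrix computation and I do not anticipate a genuine obstacle; the only points requiring care are the clean spectral-separation estimate from Weyl's inequality (which is exactly what forces the two-regime split and yields the factor $2$ over the naive denominator $\delta$) and the passage between the orthogonal-Procrustes distance $\norm{VU-\tilde V}_{\mathrm F}$ and the canonical quantity $\norm{\sin\Theta}_{\mathrm F}=\norm{S}_{\mathrm F}$, which contributes the $\sqrt2$. Alternatively one may simply invoke~\cite{yu2015useful} verbatim, whose proof assembles precisely these ingredients.
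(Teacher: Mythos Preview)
Your argument is correct and complete: the Sylvester-equation derivation, the Weyl separation estimate, the two-regime split, and the Procrustes conversion via $1-\cos\theta\le\sin^2\theta$ all go through as stated. Note, however, that the paper does not prove this statement at all---it is recorded as a \emph{Fact} with a citation to \cite{yu2015useful} and simply invoked where needed. Your write-up is essentially a clean recapitulation of the proof in that reference (as you yourself note at the end), so there is no substantive difference in approach, only in whether the argument is inlined or cited.
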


\begin{fact}[Property of the Kronecker sum]\label{fct:ksum}
	Let $A_1,A_2,\dots,A_n$ be Hermitian matrices.
	Their Kronecker sum
	\begin{equation}
		A \coloneqq \sum_{i \in [n]} I^{\otimes (i-1)} \otimes A_i \otimes I^{\otimes (n-i)}
	\end{equation}
	satisfies $\sigma(A) = \sigma(A_1) + \cdots + \sigma(A_n)$, where $\sigma(H)$ denotes the spectrum of $H$ (as a set) and $+$ denotes the Minkowski sum.

\end{fact}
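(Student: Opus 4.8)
The plan is to simultaneously diagonalize every summand of $A$ by a single tensor-product unitary. Since each $A_i$ is Hermitian, I can write $A_i = U_i \Lambda_i U_i^\dagger$, where $U_i$ is unitary and $\Lambda_i = \operatorname{diag}\big(\mu^{(i)}_1, \dots, \mu^{(i)}_{d_i}\big)$ lists the eigenvalues of $A_i$ (with $d_i$ the dimension on which $A_i$ acts). I then set $U \coloneqq U_1 \otimes U_2 \otimes \cdots \otimes U_n$, which is again unitary; note also that $A$ is itself Hermitian, being a real linear combination of Hermitian operators, so $\sigma(A) \subset \mathbb{R}$ is unambiguous.

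The key step is to check that conjugation by $U$ turns each summand into a diagonal operator. Using the mixed-product rule $(B_1 \otimes \cdots \otimes B_n)(C_1 \otimes \cdots \otimes C_n) = (B_1 C_1) \otimes \cdots \otimes (B_n C_n)$ together with $U_j^\dagger U_j = I$, one obtains
\begin{equation*}
	U^\dagger \big( I^{\otimes (i-1)} \otimes A_i \otimes I^{\otimes (n-i)} \big) U = I^{\otimes (i-1)} \otimes \Lambda_i \otimes I^{\otimes (n-i)}
\end{equation*}
for every $i \in [n]$. Summing over $i$ shows that
\begin{equation*}
	U^\dagger A U = \sum_{i \in [n]} I^{\otimes (i-1)} \otimes \Lambda_i \otimes I^{\otimes (n-i)}
\end{equation*}
is diagonal, with the diagonal entry indexed by a tuple $(j_1, \dots, j_n)$ equal to $\mu^{(1)}_{j_1} + \cdots + \mu^{(n)}_{j_n}$.

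Since $A$ is unitarily similar to this explicit diagonal matrix, $\sigma(A)$ is exactly the set of its diagonal entries, namely $\big\{\, \mu^{(1)} + \cdots + \mu^{(n)} : \mu^{(i)} \in \sigma(A_i) \text{ for each } i \,\big\}$, which is precisely the Minkowski sum $\sigma(A_1) + \cdots + \sigma(A_n)$. I do not anticipate any genuine obstacle here; the only point that deserves a moment's care is that the claim concerns spectra as sets, so any coincidences among the values $\mu^{(i)}_{j_i}$ — and the attendant eigenvalue multiplicities of $A$ — are irrelevant to the statement.
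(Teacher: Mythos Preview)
Your argument is correct and is the standard proof of this well-known fact. The paper itself does not supply a proof (it is stated as a \emph{Fact} in the preliminaries), so there is nothing to compare against; your simultaneous diagonalization via the tensor-product unitary $U = U_1 \otimes \cdots \otimes U_n$ is exactly the expected route.
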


Next, we quote perturbative results regarding Hamiltonian simulation.

\paragraph{Taylor series for perturbative reduction.}
Let a finite-dimensional Hilbert space $\mathcal{H}_{\rmsim}$ be decomposed into a direct sum
\begin{equation}
	\mathcal{H}_{\rmsim} = \mathcal{H}_{-} \oplus \mathcal{H}_{+}.
\end{equation}
Let $P_{\pm}$ be the projector onto $\mathcal{H}_{\pm}$.
For any linear operator $O$ on $\mathcal{H}_{\rmsim}$, define
\begin{equation}
	O_{--} = P_- O P_-, \quad O_{-+} = P_- O P_+, \quad O_{+-} = P_+ O P_-, \quad O_{++} = P_+ O P_+.
\end{equation}
The operator $O$ is called block-diagonal if $O_{-+} = 0$ and $O_{+-} = 0$.

Let $H_0$ and $V$ be Hermitian on $\calH_\rmsim$ such that $H_0$ is block-diagonal and $(H_0)_{--}=0$.
Moreover, $(H_0)_{++}$ has all eigenvalues no less than $1$.
Consider a perturbed Hamiltonian
\begin{equation}
	H = \Delta H_0 + V.
\end{equation}

\begin{lemma}[\cite{bravyi2017complexity}] \label{lem:taylor}
	Assume $\Delta > 16 \norm{V}$.
	Then there exists a \emph{Schrieffer-Wolff transformation} $U$~\cite{bravyi2011schrieffer} such that the following holds.
	$U H U^\dagger$ is block-diagonal with regard to $\calH_-$ and $\calH_+$.
	$(UHU^\dagger)_{++}$ has all eigenvalues greater than those of $(UHU^\dagger)_{--}$.
	Furthermore,
	\begin{equation}
		(U H U^\dagger)_{--} = V_{--} - \Delta^{-1} V_{-+} H_0^{-1} V_{+-} + O(\Delta^{-2} \norm{V}^3).
	\end{equation}
\end{lemma}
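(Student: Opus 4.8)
The plan is to construct the Schrieffer--Wolff transformation $U = e^{S}$ with an anti-Hermitian generator $S$ chosen so that $e^S H e^{-S}$ is exactly block-diagonal with respect to the decomposition $\calH_\rmsim = \calH_- \oplus \calH_+$, and then to Taylor-expand both $S$ and the resulting effective operator $(U H U^\dagger)_{--}$ in powers of $\Delta^{-1}$. First I would recall the defining equation for the block-diagonalizing generator: writing $H = H_{\rm d} + H_{\rm od}$ with $H_{\rm d} = \Delta H_0 + V_{--} + V_{++}$ the block-diagonal part and $H_{\rm od} = V_{-+} + V_{+-}$ the off-diagonal part, the condition that $e^S H e^{-S}$ be block-diagonal is equivalent (order by order) to the operator equation $[H_{\rm d}, S] = H_{\rm od} + (\text{higher commutators})$. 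Since $H_{\rm d}$ restricted to $\calH_-$ is $V_{--}$ (bounded by $\norm{V}$) and restricted to $\calH_+$ is $\Delta (H_0)_{++} + V_{++}$ (with $(H_0)_{++} \succeq I$, hence eigenvalues $\ge \Delta - \norm{V}$), the relevant ``energy denominators'' are all $\ge \Delta - 2\norm{V} > 0$ once $\Delta > 16\norm{V}$; this makes the superoperator $X \mapsto [H_{\rm d}, X]$ invertible on the off-diagonal sector with inverse of norm $O(\Delta^{-1})$, so $S$ is well-defined and $\norm{S} = O(\Delta^{-1}\norm{V})$, with leading term $S^{(1)}$ given by $S^{(1)}_{+-} = \Delta^{-1} (H_0)_{++}^{-1} V_{+-}$ (and its adjoint), up to $O(\Delta^{-2}\norm{V}^2)$ corrections. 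Convergence of the series defining $S$ (equivalently, of the nested-commutator fixed-point iteration) is where the hypothesis $\Delta > 16\norm{V}$ is consumed: the factor $16$ leaves enough slack to dominate the combinatorial growth of the commutator coefficients, as in \cite{bravyi2011schrieffer,bravyi2017complexity}.

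Next I would expand the effective Hamiltonian. Using the Hausdorff formula,
\begin{equation}
U H U^\dagger = H + [S,H] + \tfrac12[S,[S,H]] + \cdots,
\end{equation}
and projecting onto $\calH_-$: the $\Delta H_0$ term contributes nothing to the $(--)$ block directly since $(H_0)_{--} = 0$, but it does contribute through the commutators. The zeroth-order-in-$\Delta^{-1}$ piece of $(UHU^\dagger)_{--}$ is just $V_{--}$. Collecting the $\Delta^{-1}$ terms: the commutator $[S^{(1)}, \Delta H_0]$ has $(--)$ block $-S^{(1)}_{-+}(\Delta H_0)_{+-} \cdot(\cdots)$—more carefully, the standard computation gives the second-order term $-\Delta^{-1} V_{-+} (H_0)_{++}^{-1} V_{+-}$, matching the claimed formula. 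Everything omitted is $O(\Delta^{-2}\norm{V}^3)$: this includes the third-order SW terms, the $O(\Delta^{-2})$ correction to $S$, and all higher nested commutators, each of which carries an extra factor $\norm{V}/\Delta$ relative to the second-order term. The ordering claim—that $(UHU^\dagger)_{++}$ has all eigenvalues strictly greater than those of $(UHU^\dagger)_{--}$—follows from Weyl's inequality (\Cref{fct:weyl}): $(UHU^\dagger)_{++}$ is a perturbation of $\Delta(H_0)_{++}$ by something of norm $O(\norm{V})$, so its eigenvalues are $\ge \Delta - O(\norm{V})$, whereas $(UHU^\dagger)_{--}$ is a perturbation of $V_{--}$ and has eigenvalues $\le O(\norm{V})$; for $\Delta > 16\norm{V}$ the gap is strictly positive.

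The main obstacle I expect is not any single computation but the bookkeeping of the error term: one must verify that the Schrieffer--Wolff series converges (not merely that its first two terms look right), that $\norm{S}$ is genuinely $O(\Delta^{-1}\norm{V})$ with explicit constants, and that every truncated tail is uniformly $O(\Delta^{-2}\norm{V}^3)$ with a constant independent of the dimension of $\calH_\rmsim$ and of the particular $H_0, V$. Since this is precisely the content of the cited results \cite{bravyi2011schrieffer,bravyi2017complexity}, the cleanest route is to invoke their quantitative Schrieffer--Wolff bounds as a black box and simply extract the second-order effective Hamiltonian; I would present the proof as such a reduction, spelling out only the identification of the leading terms $V_{--}$ and $-\Delta^{-1}V_{-+}H_0^{-1}V_{+-}$ and citing the convergence/error estimates.
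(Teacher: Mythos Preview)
Your proposal is correct and in fact goes further than the paper: the paper states this lemma purely as a citation to \cite{bravyi2017complexity} and gives no proof of its own. Your sketch of the Schrieffer--Wolff construction, the identification of the leading terms $V_{--}$ and $-\Delta^{-1}V_{-+}H_0^{-1}V_{+-}$, and the Weyl-type argument for the eigenvalue ordering are all standard and accurate, and your final recommendation---to invoke the quantitative bounds of \cite{bravyi2011schrieffer,bravyi2017complexity} as a black box---is exactly what the paper does.
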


Suppose now that $H_\rmtarget$ is a fixed  target Hamiltonian acting on
some Hilbert space $\calH_\rmtarget$ and $\calE \colon \calH_\rmtarget\to \calH_\rmsim$
is some fixed isometry (encoding) such that $\im{(\calE)}=\calH_-$.
Define the {\em logical target Hamiltonian} $\bar{H}_\rmtarget$ acting on $\calH_-$ as
\begin{equation}
	\label{target_logical}
	\overline{H}_\rmtarget=\calE H_\rmtarget \calE^\dag.
\end{equation}

\begin{lemma}[First-order reduction~\cite{bravyi2017complexity}]
	\label{lemma:1st}
	Suppose Hermitian operators $H_0,V$ satisfy that $H_0$ is block-diagonal, $(H_0)_{--}=\zeta\cdot P_{-}$ for some $\zeta \in \mathbb{R}$,\footnote{The $\zeta=0$ case is proved in~\cite{bravyi2017complexity} and it is trivial to generalize to arbitrary $\zeta$ by subtracting $\zeta\cdot I$ from $H_0$.} $(H_0)_{++}$ has all eigenvalues at least $\zeta + \Delta$, and
	\begin{equation}
		\label{1st}
		\| \overline{H}_\rmtarget- (H_0)_{--} - (V)_{--} \| \le \epsilon/2.
	\end{equation}
	If $\Delta\ge O(\epsilon^{-1} \|V\|^2 + \epsilon_{\rm enc}^{-1}\|V\|)$, then $H_\rmsim=H_0+V$ and $\mathcal E$ above simulate
	$H_\rmtarget$ with error $(\epsilon_{\rm enc},\epsilon)$.
\end{lemma}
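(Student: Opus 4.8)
The plan is to feed the decomposition $H_\rmsim = H_0 + V$ through the Schrieffer--Wolff estimate of \Cref{lem:taylor} and then repackage its output to match \Cref{def:sim}. We may assume $\epsilon,\epsilon_{\rm enc}$ are at most a small absolute constant, so that, with the hidden constant chosen large enough, the hypothesis $\Delta \ge O(\epsilon^{-1}\norm{V}^2 + \epsilon_{\rm enc}^{-1}\norm{V})$ already forces $\Delta > 16\norm{V}$, the condition needed to invoke \Cref{lem:taylor}. First I would normalize the block structure so that the $(--)$ block of the dominant term vanishes: set $\hat H_0 \coloneqq \Delta^{-1}(H_0 - \zeta I)$, which is block-diagonal with $(\hat H_0)_{--} = 0$ and every eigenvalue of $(\hat H_0)_{++}$ at least $1$, and observe that $H_\rmsim - \zeta I = \Delta\hat H_0 + V$. \Cref{lem:taylor} then supplies a Schrieffer--Wolff transformation $U$ for which $U(H_\rmsim - \zeta I)U^\dagger$ is block-diagonal, its $(++)$ block is spectrally above its $(--)$ block, and
\[
\big(U(H_\rmsim - \zeta I)U^\dagger\big)_{--} = V_{--} - \Delta^{-1}V_{-+}\hat H_0^{-1}V_{+-} + O(\Delta^{-2}\norm{V}^3).
\]
Adding $\zeta I$ back preserves both block-diagonality and the ordering of the two blocks; using that $\hat H_0^{-1}$ here acts within $\calH_+$ and therefore has norm at most $1$, that $\Delta \gtrsim \norm{V}$ absorbs the cubic term, and that $(H_0)_{--} = \zeta P_-$, I obtain
\[
\norm{(UH_\rmsim U^\dagger)_{--} - (H_0)_{--} - (V)_{--}} = O(\Delta^{-1}\norm{V}^2).
\]

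Next I would pin down the low-energy subspace and define the witnessing encoding. Since $\calE$ is an isometry with $\im(\calE) = \calH_-$, we have $\dim\calH_- = \dim\calH_\rmtarget = N$; since $UH_\rmsim U^\dagger$ is block-diagonal with its $(++)$ block strictly above its $(--)$ block, the span of the $N$ lowest eigenvectors of $UH_\rmsim U^\dagger$ is exactly $\calH_-$, hence $\mathcal L_N(H_\rmsim) = U^\dagger\calH_-$. Take $\tcalE \coloneqq U^\dagger\calE$. It is an isometry, as $\tcalE^\dagger\tcalE = \calE^\dagger UU^\dagger\calE = I$, and its image is $U^\dagger\calH_- = \mathcal L_N(H_\rmsim)$, giving condition (i). For condition (ii), $\calE\calE^\dagger = P_-$ yields $\tcalE^\dagger H_\rmsim\tcalE = \calE^\dagger(UH_\rmsim U^\dagger)_{--}\calE$, and since $\calE^\dagger\overline H_\rmtarget\calE = H_\rmtarget$ with $\calE$ an isometry,
\[
\norm{H_\rmtarget - \tcalE^\dagger H_\rmsim\tcalE} = \norm{\calE^\dagger\big(\overline H_\rmtarget - (UH_\rmsim U^\dagger)_{--}\big)\calE} \le \norm{\overline H_\rmtarget - (H_0)_{--} - (V)_{--}} + O(\Delta^{-1}\norm{V}^2) \le \epsilon/2 + \epsilon/2,
\]
where the first summand is bounded by the assumption \eqref{1st} and the second by the estimate above together with $\Delta \ge O(\epsilon^{-1}\norm{V}^2)$.

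For the encoding error (iii), $\norm{\calE - \tcalE} = \norm{(I - U^\dagger)\calE} \le \norm{I - U}$. Here I would use the quantitative feature of the Schrieffer--Wolff rotation underlying \Cref{lem:taylor}, namely that $U = e^{S}$ with $S$ anti-Hermitian and $\norm{S} = O(\Delta^{-1}\norm{V})$ (see \cite{bravyi2011schrieffer}), so that $\norm{I - U} \le \norm{S}e^{\norm{S}} = O(\Delta^{-1}\norm{V}) \le \epsilon_{\rm enc}$ once $\Delta \ge O(\epsilon_{\rm enc}^{-1}\norm{V})$. Taking $\Delta$ above all three thresholds gives the stated hypothesis and shows that $(H_\rmsim,\calE)$ simulates $H_\rmtarget$ with error $(\epsilon_{\rm enc},\epsilon)$, witnessed by $\tcalE$. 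The only ingredient beyond the verbatim content of \Cref{lem:taylor} is the bound $\norm{I - U} = O(\Delta^{-1}\norm{V})$ on the Schrieffer--Wolff rotation; I expect that to be the one place needing care — either by recording it in a slightly strengthened restatement of \Cref{lem:taylor} or by reproducing the short Schrieffer--Wolff norm estimate — while the remainder is routine block-matrix bookkeeping.
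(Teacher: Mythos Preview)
The paper does not prove this lemma: it is quoted from \cite{bravyi2017complexity}, with the only additional content being the footnote that the general-$\zeta$ case follows from the $\zeta=0$ case by subtracting $\zeta I$. Your reconstruction is correct and is precisely the standard argument underlying the cited result---reduce to $\zeta=0$, apply the Schrieffer--Wolff expansion of \Cref{lem:taylor}, take $\tcalE=U^\dagger\calE$, and bound the three items of \Cref{def:sim}; your own caveat that the bound $\norm{I-U}=O(\Delta^{-1}\norm V)$ must be imported from \cite{bravyi2011schrieffer} is exactly right and is the only ingredient not contained verbatim in the paper's statement of \Cref{lem:taylor}.
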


\begin{lemma}[Second-order reduction~\cite{bravyi2017complexity}] \label{lemma:2nd}
	Suppose Hermitian operators $H_0,V_\rmmain,V_\rmextra$ satisfy that $H_0$ is block-diagonal, $(H_0)_{--}=0$, $(H_0)_{++}$ has all eigenvalues at least $1$,
	$V_\rmextra$ is block-diagonal,
	$(V_\rmmain)_{--}=0$,   and
	\begin{equation}
		\label{2nd}
		\|  \overline{H}_\rmtarget- (V_\rmextra)_{--}  +  (V_\rmmain)_{-+} H_0^{-1} (V_\rmmain)_{+-}\| \le \epsilon/2.
	\end{equation}
	Let $\Lambda=\max\{\norm{V_\rmmain},\norm{V_\rmextra}\}$.
	If $\Delta\ge O(\epsilon^{-2} \Lambda^6 + \epsilon_{\rm enc}^{-2} \Lambda^2)$, then $H_\rmsim=\Delta H_0+\Delta^{1/2} V_\rmmain+V_\rmextra$ and $\mathcal E$ above simulate
	$H_\rmtarget$ with error $(\epsilon_{\rm enc},\epsilon)$.
\end{lemma}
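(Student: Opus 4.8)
The plan is to invoke the Schrieffer--Wolff Taylor expansion of \Cref{lem:taylor} on $H_\rmsim = \Delta H_0 + \Delta^{1/2} V_\rmmain + V_\rmextra$, read as $\Delta H_0 + V$ with perturbation $V \coloneqq \Delta^{1/2} V_\rmmain + V_\rmextra$. Its hypothesis $\Delta > 16\norm{V}$ is easily discharged, since $\norm{V} \le \Delta^{1/2}\Lambda + \Lambda$ forces it once $\Delta$ exceeds a large enough constant times $\Lambda^2$, which is subsumed by the assumed lower bound on $\Delta$. \Cref{lem:taylor} then produces a Schrieffer--Wolff unitary $U$ such that $U H_\rmsim U^\dagger$ is block-diagonal with its $(++)$ block lying strictly above its $(--)$ block, and $(U H_\rmsim U^\dagger)_{--} = V_{--} - \Delta^{-1} V_{-+} H_0^{-1} V_{+-} + O(\Delta^{-2}\norm{V}^3)$. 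Now I would simplify the right-hand side: since $V_\rmextra$ is block-diagonal, $V_{-+} = \Delta^{1/2}(V_\rmmain)_{-+}$ and $V_{+-} = \Delta^{1/2}(V_\rmmain)_{+-}$, so the two factors of $\Delta^{1/2}$ cancel the $\Delta^{-1}$; and since $(V_\rmmain)_{--}=0$, we have $V_{--}=(V_\rmextra)_{--}$. The one point needing care is that $\norm{V}=\Theta(\Delta^{1/2}\Lambda)$, so the ostensibly $O(\Delta^{-2})$ remainder is really $O(\Delta^{-1/2}\Lambda^3)$. Thus $(U H_\rmsim U^\dagger)_{--} = (V_\rmextra)_{--} - (V_\rmmain)_{-+} H_0^{-1} (V_\rmmain)_{+-} + O(\Delta^{-1/2}\Lambda^3)$, and combining with the assumed bound $\norm{\bar{H}_\rmtarget - (V_\rmextra)_{--} + (V_\rmmain)_{-+}H_0^{-1}(V_\rmmain)_{+-}}\le\epsilon/2$ gives $\norm{\bar{H}_\rmtarget - (U H_\rmsim U^\dagger)_{--}}\le\epsilon/2 + O(\Delta^{-1/2}\Lambda^3)\le\epsilon$ as soon as $\Delta\ge\Omega(\epsilon^{-2}\Lambda^6)$.

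It then remains to produce the isometry required by \Cref{def:sim}, and I would take $\tcalE \coloneqq U^\dagger\calE$, exactly as in the first-order reduction \Cref{lemma:1st}. Condition (i) holds because $U H_\rmsim U^\dagger$ is block-diagonal with an $N$-dimensional $(--)$ block ($N=\dim\calH_\rmtarget$) lying strictly below the $(++)$ block, so the $N$ smallest eigenvalues of $H_\rmsim$ are exactly those of $(U H_\rmsim U^\dagger)_{--}$ and $\mathcal{L}_N(H_\rmsim) = U^\dagger\calH_- = \im(\tcalE)$. Condition (ii) follows from $\calE\calE^\dagger=P_-$, $\calE^\dagger\calE=I$, and $\bar{H}_\rmtarget=\calE H_\rmtarget\calE^\dagger$, which give $\tcalE^\dagger H_\rmsim\tcalE = \calE^\dagger(U H_\rmsim U^\dagger)_{--}\calE$ and $\calE^\dagger\bar{H}_\rmtarget\calE = H_\rmtarget$, hence $\norm{H_\rmtarget - \tcalE^\dagger H_\rmsim\tcalE}\le\norm{\bar{H}_\rmtarget - (U H_\rmsim U^\dagger)_{--}}\le\epsilon$. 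Condition (iii) reduces to $\norm{\calE-\tcalE}=\norm{(I-U^\dagger)\calE}\le\norm{I-U}$, and here I would use the standard fact that the Schrieffer--Wolff transformation is $U=e^{S}$ with anti-Hermitian generator $S$ obeying $\norm{S}=O(\norm{V}/\Delta)=O(\Delta^{-1/2}\Lambda)$, so $\norm{I-U}=O(\Delta^{-1/2}\Lambda)\le\epsilon_{\rm enc}$ once $\Delta\ge\Omega(\epsilon_{\rm enc}^{-2}\Lambda^2)$. Together with the previous paragraph this makes $\Delta\ge\Omega(\epsilon^{-2}\Lambda^6 + \epsilon_{\rm enc}^{-2}\Lambda^2)$ sufficient.

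I expect the main obstacle to be condition (iii): \Cref{lem:taylor} as quoted controls only the $(--)$ block of $U H_\rmsim U^\dagger$, not the distance of $U$ from the identity, so a complete proof has to reach into the construction of the Schrieffer--Wolff transformation (cf.\ \cite{bravyi2011schrieffer}) to extract $\norm{I-U}=O(\norm{V}/\Delta)$ — or, equivalently, restate \Cref{lem:taylor} with this quantitative estimate attached. The second, milder subtlety is the one already flagged above: because $V_\rmmain$ is scaled by $\Delta^{1/2}$, the perturbation norm $\norm{V}$ is itself $\Delta$-dependent ($\norm{V}=\Theta(\Delta^{1/2}\Lambda)$), which is exactly why the final requirement carries the sixth power $\Delta=\Omega(\epsilon^{-2}\Lambda^6)$ rather than the third power one might naively anticipate.
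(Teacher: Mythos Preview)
Your approach is correct and is essentially the standard one from \cite{bravyi2017complexity}; the paper itself does not prove \Cref{lemma:2nd} but merely quotes it from that reference, so there is no in-paper proof to compare against. You have correctly identified both substantive points: (a) the $\Delta^{1/2}$ scaling on $V_\rmmain$ makes $\norm{V}=\Theta(\Delta^{1/2}\Lambda)$, which is why the remainder $O(\Delta^{-2}\norm{V}^3)$ becomes $O(\Delta^{-1/2}\Lambda^3)$ and drives the $\epsilon^{-2}\Lambda^6$ requirement; and (b) the encoding-error bound (iii) needs $\norm{I-U}=O(\norm{V}/\Delta)$, which is not part of \Cref{lem:taylor} as stated but is a standard property of the Schrieffer--Wolff generator established in \cite{bravyi2011schrieffer}.
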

\section{Proof Outline for Theorem~\ref{thm:main_informal}} \label{sec:main}

We provide the detailed proof outline of our main result \Cref{thm:main_informal} here.

\begin{restatable}[Formal version of \Cref{thm:main_informal}]{theorem}{thmmain} \label{thm:main}
	For any $n$, there exist an $n$-qubit TFI Hamiltonian $H_{\rm TFI}$ and a family of $n$-qubit diagonal Hamiltonians $D$ such that the linear \apath\ $H(t)$ ($t \in [0,1]$) of the form
	\begin{equation}
		H_{\rm TFI} \pathto D
	\end{equation}
	has exponential quantum advantage given oracle access to $D$ (and an explicit description of $H_{\rm TFI}$):
	\begin{enumerate}[label=(\roman*)]
		\item\label{itm:thm:main_1} Any classical algorithm requires at least $\exp(n^{\Omega(1)})$ queries to find $u \in \binary^n$ that minimizes $\mel{u}{D}{u}$, with success probability greater than $\exp(-n^{\Omega(1)})$.
		\item\label{itm:thm:main_2} $H(t)$ has spectral gap $n^{\Omega(1)}$, norm $\poly(n)$, and the ground state of $H(0) = H_{\rm TFI}$ is $n^{-\Omega(1)}$-close to $\ket{+}^{\otimes (n-\ell)} \ket{0}^{\otimes \ell}$ for some $\ell$.
		Consequently, by \Cref{lem:qatsim}, simulating the \sdg\ dynamics
		\begin{equation}
			i \frac{\dee}{\dee t} \ket{\psi(t)} = T H(t) \ket{\psi(t)}, \quad \ket{\psi(0)} = \ket{+}^{\otimes (n-\ell)} \ket{0}^{\otimes \ell}, \quad T = \poly(n)
		\end{equation}
		yields a final state $\ket{\psi(1)}$ that is $n^{-\Omega(1)}$-close to $\ket{u}$ with query and gate complexity $\poly(n)$.
	\end{enumerate}
\end{restatable}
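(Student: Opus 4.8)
The plan is to build $H_{\rm TFI}\pathto D$ as the output of a short chain of \emph{Hamiltonian-simulation reductions} (\Cref{def:sim}) applied to the GHV adiabatic path, where at every link the new path simulates the previous one with error $(\epsilon_{\rm enc},\epsilon)=1/\poly(n)$. By \Cref{lem:gsim} such a link keeps the ground state non-degenerate, shrinks the spectral gap by only $O(\epsilon)$, and moves the ground state by only $O(\epsilon_{\rm enc}+\delta^{-1}\epsilon)$, so after $O(1)$ links the relative spectral gap is still $1/\poly(n)$ (equivalently, gap $n^{\Omega(1)}$ and norm $\poly(n)$ in the normalization of the statement). Alongside this I would track two further invariants: the path stays explicitly $\poly(n)$-sparse with $\poly(n)$ Lipschitz constant, and every reduction is \emph{efficient and black-box} in the oracles of the previous path. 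First I invoke GHV (\Cref{thm:GHV21}, \cite{gilyen2021sub}) to obtain, for each hard graph instance, a stoquastic, explicitly $\poly(m)$-sparse, piecewise-linear path $H_{\rm init}\pathto H_{\rm graph}\pathto H_{\rm final}$ on $m$ qubits with $1/\poly(m)$ relative gap, endpoints $\propto-\ketbra{0}{0}$ and $\propto-\ketbra{u}{u}$, and the property that no classical algorithm locates $\ket{u}$ with $\exp(m^{\Omega(1)})$ queries to the adjacency-list oracle and success probability $\exp(-m^{\Omega(1)})$ over the instance family. The final qubit count $n$ will be $m$ plus $\poly(m)$ ancillas accumulated below, so $\exp(m^{\Omega(1)})=\exp(n^{\Omega(1)})$.

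Next I apply, \emph{pointwise in $t$}, the two perturbative reductions of \Cref{sec:overview-pert}: \Cref{lem:pathA} (second-order, via \Cref{lemma:2nd}) subdivides edges to embed the interaction graph into a hypercube, yielding a stoquastic hypercube path, and \Cref{lem:pathB} (second-order again) then yields a TFD path $K(t)=-a\sum_iX_i+D(t)$. The checks are that the gadget topology and all parameters ($\Delta$, chain length, and so on) depend only on $n$, so the pointwise construction glues into a genuine Lipschitz path, and that the encodings $\calE$ of these gadgets send computational-basis states to computational-basis states, so the ground state of $K(1)$ is a single basis state still determined by $u$. I then linearize (\Cref{sec:overview-linearize}): because $K(t)$ is $\poly(n)$-Lipschitz, \Cref{lem:pathB2} replaces it by a piecewise-linear path $K_0\pathto\cdots\pathto K_\ell$ with $\ell=\poly(n)$ pieces that is $1/\poly(n)$-close pointwise (so \Cref{fct:weyl} and \Cref{fct:davis-kahan} preserve the gap and the ground state), and \Cref{lem:pathC}---built from the TFI ``clock'' toy system of \Cref{lem:linearTFI} and first-order perturbation theory (\Cref{lemma:1st})---compiles $K_0\pathto\cdots\pathto K_\ell$ into a single \emph{linear} TFD path $-\sum_i a_iX_i+D(t)$ on $\poly(n)$ more qubits.

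The last construction step is the path tilt and extension of \Cref{sec:overview-tilt} (\Cref{lem:pathD}): fix a small $\eta$, pass to $H'(t)=-(1-t\eta)\sum_i a_iX_i+(1+t\eta)D(t)$, linearly extend $D(t)$ to $t\in[-1/\eta,1/\eta]$, reparametrize to $[0,1]$, and set $H_{\rm TFI}\coloneqq H'(0)$---a transverse field together with a few ancilla $1$- and $2$-qubit $Z$ terms, hence TFI---and let $D$ be the terminal diagonal Hamiltonian. \emph{This is the step I expect to be the main obstacle}: on the extended intervals $[-1/\eta,-1)\cup(1,1/\eta]$ none of the previous perturbative estimates apply, so one must expose the internal structure of $D(t)$ and push extra structural invariants through the earlier reductions (their strengthened versions) in order to certify a $1/\poly(n)$ gap there. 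Granting this, \Crefitem{thm:main}{itm:thm:main_2} follows: the assembled path has gap $n^{\Omega(1)}$, norm $\poly(n)$, and $\poly(n)$ Lipschitz constant; the ground state of $H_{\rm TFI}$ is $1/\poly(n)$-close to the product state $\ket{+}^{\otimes(n-\ell)}\ket{0}^{\otimes\ell}$, which is cheaply preparable; the adiabatic theorem (\Cref{lem:qatsim}) then gives that evolving $i\frac{\dee}{\dee t}\ket{\psi(t)}=TH(t)\ket{\psi(t)}$ from this state with $T=\poly(n)$ ends $1/\poly(n)$-close to the ground state $\ket{u}$ of $D$; and since the path is explicitly $\poly(n)$-sparse with a matrix-element oracle answerable by one query to $D$ and an explicitly computable interaction-constraint oracle, time-dependent sparse-Hamiltonian simulation \cite{berry2020time} yields a $\poly(n)$-query, $\poly(n)$-gate digital implementation.

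Finally, \Crefitem{thm:main}{itm:thm:main_1}: composing the black-box reductions, one query to the oracles of the final path costs only $\poly(n)$ queries to GHV's adjacency-list oracle, and the composite encoding $\calE$ sends GHV's exit $\ket{u}$ to the minimizer of $\mel{u}{D}{u}$; hence any classical algorithm finding that minimizer with $\exp(n^{\Omega(1)})$ queries and success probability $\exp(-n^{\Omega(1)})$ would solve GHV's entrance-to-exit problem within the same bounds, contradicting the GHV lower bound. The remaining work is essentially bookkeeping: verifying that the pointwise reductions glue to continuous, Lipschitz paths, that the accumulated simulation errors stay $1/\poly(n)$ and the ancilla counts stay $\poly(n)$ under composition, and that the iterated ``$\poly$ of $\poly$'' overheads keep $n=\poly(m)$, which is exactly what lets the $\exp(m^{\Omega(1)})$ GHV bound be read as $\exp(n^{\Omega(1)})$.
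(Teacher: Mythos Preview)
Your proposal is correct and follows essentially the same approach as the paper: GHV $\to$ hypercube (\Cref{lem:pathA}) $\to$ TFD (\Cref{lem:pathB}) $\to$ piecewise-linear (\Cref{lem:pathB2}) $\to$ linear via the clock toy system (\Cref{lem:linearTFI}, \Cref{lem:pathC}) $\to$ tilt-and-extend (\Cref{lem:pathD}), with classical hardness inherited by black-box composition. You also correctly identify the main obstacle---certifying the gap on the extended intervals $[-1/\eta,-1)\cup(1,1/\eta]$---which the paper handles exactly as you anticipate, by building an extra structural property into the clock (\Crefitem{lem:linearTFI}{itm:lem:linearTFI_4}) and exploiting the explicit form of the diagonal components in \Cref{lem:pathD}.
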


Our proof of \Cref{thm:main} starts with the following breakthrough construction due to Gily\'en, Hastings, and Vazirani \cite{gilyen2021sub}, which is a family of stoquastic \apaths\ that allows efficient adiabatic quantum computation (see \Crefitem{thm:GHV21}{itm:thm:GHV21_2} and \Cref{lem:qatsim}) but is provably hard for classical query algorithms (\Crefitem{thm:GHV21}{itm:thm:GHV21_1}).

\begin{theorem}[\cite{gilyen2021sub}] \label{thm:GHV21}
	Let $\nGHV$ be an arbitrary positive integer and let $\mGHV=\nGHV^{16/5 - o(1)}$.
	There exists a family of $\nGHV$-qubit stoquastic \apaths\ $H_{\rm GHV}(t)$ that are explicitly $\Theta(\mGHV)$-sparse, of the form
	\begin{equation}
		H_\textrm{\rm init} \pathto H_{\rm graph} \pathto H_\textrm{\rm final},
	\end{equation}
	where $H_\textrm{\rm init} = -\mGHV \ketbra{0^\nGHV}{0^\nGHV}$ and $H_\textrm{\rm final} = -\mGHV \ketbra{u}{u}$ for some $u \in \binary^{\nGHV}$, such that the following holds given oracle access to $H(t)$:
	\begin{enumerate}[label=(\roman*)]
		\item\label{itm:thm:GHV21_1} Any classical algorithm requires at least $\exp(\nGHV^{1/5-o(1)})$ queries to find $u$ with probability greater than $\exp(-\nGHV^{1/5-o(1)})$.
		\item\label{itm:thm:GHV21_2} $H(t)$ has spectral gap $\Omega(\sqrt{\mGHV})$ and norm $O(\mGHV)$.
	\end{enumerate}
\end{theorem}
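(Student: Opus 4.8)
This theorem is quoted essentially verbatim from \cite{gilyen2021sub}, so the plan is to recapitulate the structure of their argument rather than to reprove it. There are four ingredients: the decorated graph and its Hamiltonian path; a reduction to a low-dimensional ``column'' subspace; the quantum upper bound; and the classical lower bound. \textbf{Graph and path.} Following \cite{gilyen2021sub}, I would take an entrance-to-exit instance on an oracularized graph in the spirit of the welded trees of \cite{Childs2003}: two balanced trees welded in the middle along a random structure, with all $2^{\nGHV}$ vertices relabelled by a hidden random injection into $\binary^{\nGHV}$, so that the adjacency-list oracle leaks only local neighbourhoods. The essential new ingredient is GHV's decoration of this graph with high-multiplicity gadgets (the source of the degree $\Theta(\mGHV)$ and of the exponent $16/5$), chosen so that: (a) the vertex set partitions into $\poly(\nGHV)$ ``columns'' indexed by graph-distance from the entrance; (b) $H_{\rm graph}$, the weighted negated adjacency operator, is sparse, stoquastic, and preserves the span $\mathcal{C}$ of the uniform column states; and (c) on $\mathcal{C}$ the induced tridiagonal operator has a ground state with $1/\poly(\nGHV)$ overlap on both the entrance column $\{0^{\nGHV}\}$ and the exit column $\{u\}$, together with a relative spectral gap $\Omega(1/\sqrt{\mGHV})$. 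The adiabatic path is then the piecewise-linear $H_{\rm init}\pathto H_{\rm graph}\pathto H_{\rm final}$ with $H_{\rm init}=-\mGHV\ketbra{0^{\nGHV}}{0^{\nGHV}}$ and $H_{\rm final}=-\mGHV\ketbra{u}{u}$.

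\textbf{Spectral analysis of the path.} Each endpoint preserves $\mathcal{C}$ (the rank-one terms act within a single column), hence so does every $H_{\rm GHV}(t)$; I would restrict to $\mathcal{C}$, obtaining a $\poly(\nGHV)$-dimensional time-dependent tridiagonal matrix, and separately check that $\mathcal{C}^\perp$ stays at energy $\ge -O(1)$ and so never competes with the column ground state. Inside $\mathcal{C}$ the computation is perturbative/variational: along $H_{\rm init}\pathto H_{\rm graph}$ the rank-one term of strength up to $\mGHV$ is turned off as the graph operator is turned on, and property (c) --- the $\Omega(1)$ entrance-overlap of the graph ground state --- guarantees that the lowest two eigenvalues track each other without a narrow avoided crossing, keeping the gap $\Omega(\sqrt{\mGHV})$; symmetrically for $H_{\rm graph}\pathto H_{\rm final}$ via the exit-overlap. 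This yields item~\ref{itm:thm:GHV21_2}: spectral gap $\Omega(\sqrt{\mGHV})$, norm $O(\mGHV)$, and a ground state moving continuously $\ket{0^{\nGHV}}\to(\text{spread-out graph state})\to\ket{u}$.

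\textbf{Quantum upper bound and classical lower bound.} The relative gap $\Omega(1/\sqrt{\mGHV})=1/\poly(\nGHV)$, the $\Theta(\mGHV)$-sparsity together with oracle access, and the trivially preparable initial ground state $\ket{0^{\nGHV}}$ are exactly the hypotheses of the adiabatic-simulation lemma \Cref{lem:qatsim}, which then produces $u$ with $\poly(\nGHV)$ queries and gates --- so the quantum side needs nothing beyond the spectral bounds above. For item~\ref{itm:thm:GHV21_1} I would run the standard hybrid/indistinguishability argument for welded-trees-type graphs, adapted to the decorated graph: since the labels form a hidden random injection and the large-scale structure of the explored region is tree-like (beyond the local gadgets), a classical algorithm making $q$ adjacency queries exposes only a random bounded subgraph that, except with probability $q\cdot\exp(-\nGHV^{1/5-o(1)})$, is statistically independent of the exit label $u$; tracking the parameters of \cite{gilyen2021sub} turns this into the claimed bound $q\ge\exp(\nGHV^{1/5-o(1)})$, even for success probability $\exp(-\nGHV^{1/5-o(1)})$.

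\textbf{Main obstacle.} The real content --- and GHV's genuine contribution --- is the design of the decorations so that the \emph{entire} linear path, and especially the two joints where the strength-$\mGHV$ rank-one terms meet the $O(1)$-weight graph operator, retains an $\Omega(1/\sqrt{\mGHV})$ relative spectral gap while simultaneously keeping the graph sparse and preserving the classical hardness; reconciling these three competing demands is what forces the exponent $16/5$, and is the step I expect to be by far the most delicate.
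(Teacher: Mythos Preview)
Your proposal correctly identifies that this theorem is quoted from \cite{gilyen2021sub} and is not proved in the present paper; indeed the paper simply states it as \Cref{thm:GHV21} with the citation and uses it as a black box. Your sketch of the GHV argument (column subspace reduction, tridiagonal spectral analysis along the path, and the welded-trees-style classical lower bound) is a faithful outline of their proof, so there is nothing to compare against here.
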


We reserve $\nGHV,\mGHV$ as the parameter choices from \Cref{thm:GHV21} for frequent reference.

Given the \emph{sparse} \apath\ in \Cref{thm:GHV21}, we first amend it to have \emph{hypercube interaction} while preserving some simplicity of the start and the end Hamiltonian.
The resulting \apath\ of this reduction is formalized in \Cref{lem:pathA} and proved in \Cref{sec:reduction_I}.
We remark that \Cref{lem:pathA} is not a (piecewise) linear path of Hamiltonians but a curve of Hamiltonians with bounded Lipschitz condition.

\begin{restatable}{lemma}{lempathA} \label{lem:pathA}
	Let $\nA=\Theta(\mGHV)$.
	There exists a family of $\nA$-qubit stoquastic hypercube \apaths\ $H(t)$ for $t \in [0,1]$, such that the following holds given oracle access to $H(t)$:
	\begin{enumerate}[label=(\roman*)]
		\item\label{itm:lem:pathA_1} $H(0) = -\mGHV\sum_{i \in [\nA]}X_i$, $H(1) = -\mGHV\ketbra{u}{u}$ for some $u \in \binary^{\nA}$, and $H(t)$ is $\poly(\nGHV)$-Lipschitz.
		\item\label{itm:lem:pathA_2} Any classical algorithm requires at least $\exp(\nGHV^{1/5-o(1)})$ queries to find $u$ with probability greater than $\exp(-\nGHV^{1/5-o(1)})$.
		\item\label{itm:lem:pathA_3} $H(t)$ has spectral gap $\Omega(\sqrt{\mGHV})$ and norm $\poly(\nGHV)$.
	\end{enumerate}
\end{restatable}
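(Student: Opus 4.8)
I would obtain $H(t)$ by applying a single perturbative \emph{edge-subdivision gadget} pointwise in $t$ to the Gily\'en--Hastings--Vazirani path $H_{\rm GHV}(t)$ of \Cref{thm:GHV21}, turning its sparse interaction graph into a subgraph of the $\nA$-dimensional hypercube, and then repairing the two ends with short surgeries so that the resulting curve starts at $-\mGHV\sum_iX_i$ and ends at a single--basis-state projector. Fix once and for all an efficiently computable injection $\varphi\colon\binary^\nGHV\to\binary^\nA$ with $\nA=\Theta(\mGHV)$, together with, for every edge $(x,y)$ of the fixed interaction constraint $G$ of $H_{\rm GHV}$, an internally vertex-disjoint path of length $k+1$ in the $\nA$-hypercube joining $\varphi(x)$ to $\varphi(y)$, where $k=\poly(\nGHV)$. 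Such a routing exists because the hypercube has exponentially more vertices than $G$ and far more coordinates than $\nGHV+\log(\text{max degree})$: one threads each edge's chain through a private block of ``label'' coordinates so that distinct chains cannot meet. Crucially the routing is \emph{locally decodable} --- given a hypercube vertex one can decide in $\poly(\nGHV)$ time, with $\poly(\nGHV)$ queries to the GHV oracles, whether it is some $\varphi(x)$, a gadget node interior to a chain (and if so which edge and where), or a dummy node.

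\textbf{The gadget.} For each $t$ set $H_\rmsim(t)=\Delta H_0+\Delta^{1/2}V_\rmmain(t)+V_\rmextra(t)$. The time-independent $H_0$ is the direct sum over chains of the rescaled path Laplacian $\tfrac{(k+1)^2}{\pi^2}(2I-A_{\mathrm{path}})$ (with the self-loops coming from the $2I$), plus unit self-loops on all dummy nodes, so that on $\calH_+\coloneqq(\spn\{\ket{\varphi(x)}\})^\perp$ every eigenvalue of $(H_0)_{++}$ is at least $1$. $V_\rmmain(t)$ consists of the ``port'' couplings of weight $-\sqrt{c_k\,|w_e(t)|}$, with $w_e(t)=\mel{x}{H_{\rm GHV}(t)}{y}\le0$ and $c_k=\Theta(k^3)$, joining each $\ket{\varphi(x)}$ to the near endpoint of each incident chain; $V_\rmextra(t)$ is diagonal, equal to $\mel{x}{H_{\rm GHV}(t)}{x}+k\sum_{e\ni x}|w_e(t)|$ on $\ket{\varphi(x)}$ and $0$ elsewhere. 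The key point --- and the reason chains may be \emph{polynomially} long --- is that $(2I-A_{\mathrm{path}})^{-1}$ has $(1,k)$-entry $\tfrac1{k+1}$, so that $(V_\rmmain)_{-+}H_0^{-1}(V_\rmmain)_{+-}$ reproduces the hopping $w_e(t)$ between $\ket{\varphi(x)},\ket{\varphi(y)}$ exactly with only a $\poly(k)$ overhead in the port weights and hence in $\|V_\rmmain\|$, while its second-order self-energy is precisely the term absorbed into $V_\rmextra$. Then \Cref{lemma:2nd} applies with $\Delta=\poly(\nGHV)$ and $\epsilon=\epsilon_{\mathrm{enc}}=\nGHV^{-\Omega(1)}$: $(H_\rmsim(t),\calE)$ simulates $H_{\rm GHV}(t)$ with $\calE\ket x\approx\ket{\varphi(x)}$. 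By \Cref{lem:gsim}, $H_\rmsim(t)$ has spectral gap at least that of $H_{\rm GHV}(t)$ minus $2\epsilon$, hence $\Omega(\sqrt{\mGHV})$; its norm is $\poly(\nGHV)$; and it is a stoquastic hypercube Hamiltonian throughout.

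\textbf{Lipschitzness and endpoint surgery.} The one delicate point is that $|w_e(t)|\to0$ at the two ends of the GHV path while $\sqrt{\,\cdot\,}$ is not Lipschitz at $0$. I would therefore use the gadget only on $t\in[\delta,1-\delta]$ for a sufficiently small $\delta=\nGHV^{-\Theta(1)}$, on which every $|w_e(t)|\ge\nGHV^{-\Theta(1)}$ and $H_\rmsim(t)$ is $\poly(\nGHV)$-Lipschitz, and then prepend a path from $-\mGHV\sum_iX_i$ to $H_\rmsim(\delta)$ and append one from $H_\rmsim(1-\delta)$ to $-\mGHV\ketbra{\varphi(u)}{\varphi(u)}$, each stoquastic, hypercube, $\poly(\nGHV)$-Lipschitz, and of gap $\Omega(\sqrt{\mGHV})$. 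For the prepend I would first move within the commuting single-qubit family $-\mGHV\sum_i(\cos\theta_iX_i+\sin\theta_iZ_i)$ (spectral gap $2\mGHV$ at every point by \Cref{fct:ksum}) from $-\mGHV\sum_iX_i$ to a diagonal $-\mGHV\sum_i s_iZ_i$ whose unique ground state is $\ket{\varphi(0^\nGHV)}$, and then interpolate linearly to $H_\rmsim(\delta)$. The gap of this interpolation survives because, writing $L=\spn\{\ket{\varphi(x)}\}$, its restriction to $L$ is diagonal with $\ket{\varphi(0^\nGHV)}$ separated from the rest of $L$ by $\Omega(\mGHV)$, the complementary block lies $\Omega(\mGHV)$ higher (chains and dummies sit at energy $\Omega(\Delta)$ in $H_\rmsim(\delta)$ and at $\ge-\mGHV(\nA-2)$ in $-\mGHV\sum_i s_iZ_i$), and the $L$-to-complement coupling has norm $\Delta^{1/2}\|V_\rmmain(\delta)\|=o(\mGHV)$ once $\delta$ is a small enough inverse polynomial; a standard block-matrix perturbation bound then yields gap $\Omega(\mGHV)$. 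The append is analogous but easier: $H_{\rm GHV}(1-\delta)$ is an $O(\delta)$-perturbation of $H_{\rm final}=-\mGHV\ketbra{u}{u}$, so the ground state of $H_\rmsim(1-\delta)$ is $O(\delta+\epsilon_{\mathrm{enc}})$-close to $\ket{\varphi(u)}$ and the same block argument shows a direct linear interpolation to $-\mGHV\ketbra{\varphi(u)}{\varphi(u)}$ keeps the gap $\Omega(\mGHV)$. Concatenating the three segments and reparametrizing to $[0,1]$ gives item~\ref{itm:lem:pathA_1} and the gap and norm bounds of item~\ref{itm:lem:pathA_3}.

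\textbf{Hardness and the hard part.} For item~\ref{itm:lem:pathA_2}: by local decodability every query to the interaction-constraint and matrix-element oracles of $H(t)$ is answered with $\poly(\nGHV)$ queries to the GHV oracles, so a classical algorithm finding the final marked string $\varphi(u)$ of $H(1)$ within the stated query budget and success probability yields, after the efficiently invertible $\varphi$, the string $u$ of \Cref{thm:GHV21}, contradicting \Crefitem{thm:GHV21}{itm:thm:GHV21_1}. I expect the main obstacle to be the prepend surgery: one cannot connect the transverse field $-\mGHV\sum_iX_i$ to the gadgetized GHV entrance by a naive linear hypercube path, since that is exactly adiabatic unstructured search and has exponentially small gap; the escape is that \Cref{lem:pathA} asks only for a large gap at each fixed $t$, not for adiabatic preparability of the ground state, which licenses the detour through the commuting $X$--$Z$ family --- after which one must still certify, uniformly in the interpolation parameter, that the residual interpolation stays gapped, and the requirement there that the gadget's off-diagonal dressing be driven to $o(\mGHV)$ by shrinking $\delta$ is where real care is needed. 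A secondary technical point is building the hypercube routing of the chains so as to be simultaneously internally vertex-disjoint and locally decodable from the oracle.
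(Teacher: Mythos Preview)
Your overall approach—second-order edge-subdivision gadget to embed the GHV interaction graph into a hypercube, followed by endpoint surgery—matches the paper's. The paper's explicit routing is $\calE\ket{x}=\ket{x,x,0^s,0^s,0^\nGHV}$ with $\nA=3\nGHV+2s$, and each edge $(x,y)$ is threaded through a fixed-length chain in the extra coordinates (\Cref{def:sparese_to_hypercube_path_P_xy}); this realizes your ``locally decodable injection'' concretely.

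The main technical difference is \emph{where the time-dependence lives in the gadget}. You place it in $V_\rmmain$ via port weights $\propto\sqrt{|w_e(t)|}$, which forces you to confront the non-Lipschitz square root, truncate to $[\delta,1-\delta]$, and argue (correctly but delicately) that your direct interpolation $(1-\lambda)(-\mGHV\sum s_iZ_i)+\lambda H_\rmsim(\delta)$ stays gapped. The paper instead keeps $V_\rmmain$ \emph{constant} and puts the $t$-dependence into the chain weights of $H_0$: setting $a_{x,y}(t)=\mel{x}{H_{\rm GHV}(t)}{y}-\epsilon/(2s)$ with a small fixed offset, the chain weights are $\alpha\cdot a_{x,y}(t)^{-1}$. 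The offset guarantees $|a_{x,y}(t)|\ge\epsilon/(2s)$ for all $t$ and all edges in the interaction constraint, so $1/a_{x,y}(t)$ is globally Lipschitz; the price is an $O(\epsilon)$ error in the effective Hamiltonian, absorbed into the simulation error of \Cref{lemma:2nd}. This removes the square-root obstacle entirely and lets the gadget run on all of $[0,1]$.

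The endpoint surgery then becomes more modular: to reach $-\mGHV\ketbra{u}{u}$ the paper first linearly ramps $V_\rmmain\to 0$, for which \Cref{lemma:2nd} continues to certify simulation of an explicit $\lambda$-dependent effective Hamiltonian whose gap is straightforward to bound (\Cref{clm:lem:pathA_tail}), and then interpolates the remaining block-diagonal piece. The prepend is symmetric, preceded by your $X\!\to\!Z\!\to\!\ketbra{0}{0}$ segment (\Cref{clm:lem:pathA_4}). Your block-perturbation argument works but implicitly relies on a uniform lower bound on the nonzero off-diagonal entries of $H_{\rm graph}$ (to ensure $|w_e(t)|\ge\nGHV^{-\Theta(1)}$ on $[\delta,1-\delta]$), which \Cref{thm:GHV21} as stated does not guarantee; the offset trick sidesteps this dependence.
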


With the hypercube structure of \apaths\  from \Cref{lem:pathA}, we use another reduction to further fix it to be \apaths\  consisting of TFD Hamiltonians.
Then the \apath\ evolution only changes the diagonal terms.
This is formalized in \Cref{lem:pathB} below and proved in \Cref{sec:reduction_II}.

\begin{restatable}{lemma}{lempathB}\label{lem:pathB}
	Let $\nB \coloneqq \Theta(\mGHV)$.
	There exist a parameter $\DeltaB \in [\mGHV, \poly(\mGHV)]$ and a family of $\nB$-qubit stoquastic \apaths\ $H(t)$ for $t \in [0,1]$ of the form
	\begin{equation}
		H(t) = \XB + D(t)
		\quad\text{and}\quad
		\XB \coloneqq - \DeltaB \sum_{i \in [\nB]} X_i,
	\end{equation}
	where $D(t)$ is diagonal, such that the following holds given oracle access to $D(t)$:
	\begin{enumerate}[label=(\roman*)]
		\item\label{itm:lem:pathB_1} $D(0) = 0$, $D(1) = - 3 \DeltaB \nB \ketbra{u}{u}$ for some $u \in \binary^{\nB}$, and $D(t)$ is $\poly(\nGHV)$-Lipschitz.
		\item\label{itm:lem:pathB_2} Any classical algorithm requires at least $\exp(\nGHV^{1/5-o(1)})$ queries to find $u$ with probability greater than $\exp(-\nGHV^{1/5-o(1)})$.
		\item\label{itm:lem:pathB_3} $H(t)$ has spectral gap $\Omega(\sqrt{\mGHV})$ and norm $\poly(\nGHV)$.
	\end{enumerate}
\end{restatable}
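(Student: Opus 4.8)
The plan is to obtain $H(t)=\XB+D(t)$ by applying a perturbative gadget \emph{pointwise in $t$} to the stoquastic hypercube \apath\ of \Cref{lem:pathA}, which I denote $H_{\rm hc}(t)$; write its diagonal part as $D_{\rm hc}(t)$ and its off-diagonal entries as $\mel{x}{H_{\rm hc}(t)}{y}=-J_{xy}(t)\le 0$ on hypercube edges $(x,y)$. The gadget is the second-order perturbative reduction of \Cref{lemma:2nd} depicted in \Cref{fig:to-hypercube}: for each hypercube edge $e=(x,y)$ one introduces a pair of ancilla ``mediator'' basis states $m_e^{\pm}$; the uniform transverse field $\XB=-\DeltaB\sum_iX_i$ supplies the four ``orange'' tunneling amplitudes $\{x,y\}\leftrightarrow m_e^{\pm}$ (identified with $\Delta^{1/2}V_\rmmain$, so $\DeltaB=\Theta(\sqrt\Delta)$), while a diagonal ``blue'' penalty of height $\Theta(\Delta/J_{xy}(t))$ on each $m_e^{\pm}$ places it in the high-energy block (identified with $\Delta H_0$). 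Evaluating the Schrieffer--Wolff term $-(V_\rmmain)_{-+}H_0^{-1}(V_\rmmain)_{+-}$ over the two mediators of $e$ produces an effective hopping of weight $-J_{xy}(t)$ between $x$ and $y$, plus an explicit diagonal self-energy; folding that self-energy together with $D_{\rm hc}(t)$ into the block-diagonal term $V_\rmextra$ makes the logical Hamiltonian on the encoded subspace equal to $H_{\rm hc}(t)$ up to error $\epsilon$. Since $V_\rmmain$ and $V_\rmextra$ contribute only $\XB$ plus diagonal terms, the resulting $H(t)$ has the stated TFD form.

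Concretely I would: (i) fix a coordinate-doubling embedding $\phi$ of the $\nA$-cube into the $\nB$-cube with $\nB=2\nA=\Theta(\mGHV)$, so adjacent original vertices have images at Hamming distance $2$ whose two midpoints serve as $m_e^{\pm}$, with no collisions among images or midpoints of distinct edges; (ii) let $\calE$ be the encoding onto the span of the images $\ket{\phi(x)}$, noting $(V_\rmmain)_{--}=0$ since flipping any single qubit of $\ket{\phi(x)}$ leaves $\im(\calE)$; (iii) define $H_0$ from the mediator penalties (after first rescaling $H_{\rm hc}(t)$ down by $\Theta(\mGHV)$ so that all weights satisfy $\abs{J_{xy}(t)}\le 1$, hence $(H_0)_{--}=0$ and $(H_0)_{++}\ge 1$), $V_\rmmain$ from the rescaled $\XB$, and $V_\rmextra$ from $D_{\rm hc}(t)$ plus the diagonal correction; (iv) verify the estimate \eqref{2nd} edge by edge via the diamond computation; and (v) pick $\Delta=\poly(\mGHV)$ large enough to beat the threshold $\Delta\ge O(\epsilon^{-2}\Lambda^6+\epsilon_{\rm enc}^{-2}\Lambda^2)$ with $\Lambda=\poly(\mGHV)$ and $\epsilon=\epsilon_{\rm enc}=1/\poly(\mGHV)$, then rescale $H(t)$ back up by $\Theta(\mGHV)$, which yields $\DeltaB\in[\mGHV,\poly(\mGHV)]$ and keeps $\norm{D(t)}$ polynomial. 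Throughout, I would cap every mediator penalty at $\poly(\mGHV)$, so an edge whose current weight drops below $\epsilon/\poly(\mGHV)$ (in particular an inactive edge) is effectively switched off; by \Crefitem{fct:mat-ineq}{itm:fct:mat-ineq_2} and the bounded degree of the hypercube the resulting operator-norm error is $O(\epsilon)$.

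The three conclusions then follow from the pointwise simulation. For \ref{itm:lem:pathB_2}, every entry of $D(t)$ is an explicit function of $O(1)$ oracle answers of $H_{\rm hc}(t)$, so a classical algorithm minimizing $\mel{u}{D(1)}{u}$ yields one finding the same $u$ for $H_{\rm hc}(t)$, and the $\exp(\nGHV^{1/5-o(1)})$ lower bound of \Crefitem{lem:pathA}{itm:lem:pathA_2} transfers. For \ref{itm:lem:pathB_3}, \Cref{lem:gsim} applied pointwise, with $H_{\rm hc}(t)$ having a non-degenerate ground state of gap $\Omega(\sqrt\mGHV)$ (\Crefitem{lem:pathA}{itm:lem:pathA_3}), gives $H(t)$ a gap $\ge\Omega(\sqrt\mGHV)-2\epsilon=\Omega(\sqrt\mGHV)$, while $\norm{H(t)}\le\norm{\XB}+\norm{D(t)}=\DeltaB\nB+\poly(\mGHV)=\poly(\nGHV)$. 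For \ref{itm:lem:pathB_1}: at $t=0$ the target $H_{\rm hc}(0)=-\mGHV\sum_iX_i$ is purely a uniform transverse field, which the padded $\XB$ realizes directly with no active gadget, so $D(0)=0$ and $H(0)=\XB$; at $t=1$ the target $H_{\rm hc}(1)=-\mGHV\ketbra uu$ has no hopping, all mediators are switched off, and $D(1)$ is purely diagonal, normalized to $-3\DeltaB\nB\ketbra uu$ by the constant that ties $D(1)$ to the mediator-penalty scale; and $\poly(\nGHV)$-Lipschitzness of $D(t)$ follows from that of $H_{\rm hc}(t)$ because $D(t)$ depends on $H_{\rm hc}(t)$ through the (capped, hence locally Lipschitz) penalty formulas and through $D_{\rm hc}(t)$.

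The main obstacle is making the reduction hold \emph{uniformly in $t$} while landing exactly on the prescribed endpoints with a single fixed $\DeltaB$ and a single fixed encoding $\calE$ that must be the low-energy subspace of $H(t)$ for \emph{every} $t$ (\Cref{def:sim}\,(i)). This forces several things to be arranged at once: (a) a continuous hand-off between the gadget-free behavior near $t=0$ (uniform hopping, reproduced directly by the padded $\XB$) and the gadget-active behavior for $t$ bounded away from $0$, without the $2^{\nA}$-th eigenvalue of $H(t)$ ever colliding with the $(2^{\nA}+1)$-th; (b) control of the mediator penalties $\propto 1/\abs{J_{xy}(t)}$ as the hopping vanishes near $t=1$, through the capping argument above, with the induced error kept safely below the $\Omega(\sqrt\mGHV)$ gap; and (c) the polynomial bookkeeping that ties $\Lambda$, $\Delta$, $\DeltaB$, $\epsilon$, $\epsilon_{\rm enc}$, and $\norm{D(t)}$ together consistently. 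None of these is individually deep, but enforcing them all simultaneously is where the work lies, and it is presumably also what necessitates the extra structural properties of \Cref{lem:pathA} and the refinements foreshadowed in \Cref{sec:overview-tilt}.
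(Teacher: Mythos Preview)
Your gadget construction (second-order reduction with the coordinate-doubling encoding $\ket{x}\mapsto\ket{x,x}$, so $\nB=2\nA$) matches the paper's first step exactly. The gap is in your endpoint handling.

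Your claim that ``the padded $\XB$ realizes [$H_{\rm hc}(0)=-\mGHV\sum_iX_i$] directly with no active gadget, so $D(0)=0$'' is wrong. On the encoded subspace $\spn\{\ket{x,x}\}$, a single-qubit flip $X_i$ always leaves the subspace, so $(\XB)_{--}=0$; the effective hopping between $\ket{x,x}$ and $\ket{y,y}$ comes \emph{only} from the second-order gadget through the mediators. At $t=0$ every hypercube edge has weight $\mGHV$, so the mediator penalties are fully active and well below any cap; hence the gadget produces a nonzero $D'(0)$ containing $\Delta H_0(0)$ on all $\ket{x,y}$ with $x\ne y$. Your capping scheme only addresses edges whose weight \emph{vanishes} (relevant near $t=1$), so it is irrelevant at $t=0$. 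Similarly, at $t=1$ the capped mediator penalties and $V_\rmextra$ self-energies do not conspire to give exactly $-3\DeltaB\nB\ketbra{u}{u}$; that specific coefficient is chosen so that the diagonal term dominates $\norm{\XB}=\DeltaB\nB$ and the gap survives.

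The paper does not try to make the pointwise gadget land on the clean endpoints. Instead it runs the gadget on all of $[0,1]$ to get $H'(t)$, then \emph{appends} explicit extra linear segments: at the start, $\XB\pathto\XB+\tfrac12\Delta\alpha\bigl(nI-\sum_i Z_iZ_{n+i}\bigr)\pathto H'(0)$; at the end, $H'(1)\pathto H'(1)-3\DeltaB\nB\ketbra{u}{u}\pathto \XB-3\DeltaB\nB\ketbra{u}{u}$. These segments are analyzed separately for their spectral gap, using the explicit form $H'(0)$ takes when $K(0)$ has uniform hopping (it becomes a $ZZ$-penalty plus a constant on $\calH_-$), and using \Cref{lem:taylor} plus a direct diagonal computation for the end segment. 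Your ``continuous hand-off'' paragraph correctly identifies this as the crux but does not supply a mechanism; the paper's mechanism is precisely these appended paths. A smaller difference: rather than cap penalties, the paper shifts every matrix entry by a fixed $-\epsilon/(2(n+1))$ so that $a_{xy}^{-1}$ is always finite and Lipschitz in $t$, which makes the Lipschitz bookkeeping immediate.
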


One evident difference between \Cref{lem:pathB} and \Cref{thm:main} is that the \apath\ in \Cref{lem:pathB} is not a straight line but a complicated curve.
To address this issue, we first approximate the curve by piecewise linear paths as the following \Cref{lem:pathB2}, the validity of which is guaranteed by the Lipschitz condition in \Cref{lem:pathB}.

\begin{restatable}{lemma}{lempathBtwo}\label{lem:pathB2}
	$H(t)$ in \Cref{lem:pathB} can be replaced with \apaths\ of the form
	\begin{equation}\label{eq:lem:pathB2_1}
		\XB + D_0 \pathto \XB + D_1 \pathto \cdots \pathto \XB + D_\ell,
	\end{equation}
	where $\ell = \poly(\nGHV)$.
\end{restatable}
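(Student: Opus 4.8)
The plan is to discretize the curve. Write $H(t)=\XB+D(t)$ for the path in \Cref{lem:pathB}, where $D$ is the diagonal curve that is $L$-Lipschitz with $L=\poly(\nGHV)$, satisfies $D(0)=0$ and $D(1)=-3\DeltaB\nB\ketbra{u}{u}$, and is accessed by a matrix-element oracle. Fix a parameter $\ell=\poly(\nGHV)$ (to be pinned down below), sample the Hamiltonians $\XB+D_j$ with $D_j\coloneqq D(j/\ell)$ for $j=0,1,\dots,\ell$, and take the new path to be the piecewise-linear interpolation $\XB+D_0\pathto\XB+D_1\pathto\cdots\pathto\XB+D_\ell$. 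Because each $D_j$ is diagonal, every point of the new path is again $\XB+(\text{diagonal})$, and its only off-diagonal entries are those of $\XB$, which are unchanged (real and nonpositive); hence the new path is still a stoquastic \apath\ of exactly the form demanded by \Cref{lem:pathB}, with $D_0=0$ and $D_\ell=-3\DeltaB\nB\ketbra{u}{u}$. On each segment the interpolant has slope of norm $\ell\norm{D_{j+1}-D_j}\le\ell\cdot(L/\ell)=L$, so it is still $L$-Lipschitz; this gives \Crefitem{lem:pathB}{itm:lem:pathB_1}.

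For the spectral gap and norm, the key estimate is that the interpolant stays uniformly $O(L/\ell)$-close to the original curve. Indeed, for $a<b$, $s\in[0,1]$, and $c=(1-s)a+sb$, the Lipschitz bound gives
\begin{equation}
	\norm{(1-s)D(a)+sD(b)-D(c)}\le(1-s)L\abs{a-c}+sL\abs{b-c}=2s(1-s)L(b-a)\le\tfrac12 L(b-a),
\end{equation}
so with $b-a=1/\ell$ each point of the new path lies within operator-norm distance $L/(2\ell)$ of the corresponding point of $H(\cdot)$. By Weyl's inequality (\Cref{fct:weyl}), the spectral gap of the new path is at least $\Omega(\sqrt{\mGHV})-L/\ell$ and its norm is at most $\poly(\nGHV)+L/(2\ell)$. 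Since $L=\poly(\nGHV)$ while $\sqrt{\mGHV}=\nGHV^{8/5-o(1)}$, the ratio $L/\sqrt{\mGHV}$ is polynomial, so choosing $\ell=\poly(\nGHV)$ to be a sufficiently large multiple of $L/\sqrt{\mGHV}$ preserves the gap $\Omega(\sqrt{\mGHV})$ and the norm $\poly(\nGHV)$, yielding \Crefitem{lem:pathB}{itm:lem:pathB_3}.

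For classical hardness, the new path keeps the hypercube interaction constraint verbatim, so that oracle is unchanged; and a matrix-element query $(t,x,y)$ to the new path is answered by first locating which of the $\ell$ linear segments contains $t$ together with the corresponding weight $s\in[0,1]$, then returning $-\DeltaB$ when $\dist(x,y)=1$, the value $(1-s)\mel{x}{D_j}{x}+s\mel{x}{D_{j+1}}{x}$ when $x=y$, and $0$ otherwise. This costs at most two queries to the original matrix-element oracle of $D(\cdot)$. Consequently any $q$-query classical algorithm that finds $u$ (with the stated success probability) given access to the new path is turned into a $\le 2q$-query algorithm finding the same $u$ given access to $D(\cdot)$, and \Crefitem{lem:pathB}{itm:lem:pathB_2} then forces $q\ge\exp(\nGHV^{1/5-o(1)})$; this transfers the classical lower bound to the new path.

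I do not expect a genuine obstacle here: the statement is essentially the observation that a Lipschitz curve is well approximated by a fine piecewise-linear interpolation, combined with the stability of spectral gaps under small perturbations (\Cref{fct:weyl}). The only thing requiring (mild) care is the simultaneous bookkeeping — checking that a \emph{single} choice of $\ell=\poly(\nGHV)$ preserves all three conclusions of \Cref{lem:pathB} at once, in particular that $L/\sqrt{\mGHV}$ is polynomially bounded (so that an $O(L/\ell)$ perturbation can be made $\ll\sqrt{\mGHV}$ with polynomial $\ell$) and that the oracle-simulation overhead is only $O(1)$ (so that the classical lower bound of \Cref{lem:pathB} is inherited without loss).
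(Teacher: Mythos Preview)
Your proposal is correct and takes essentially the same approach as the paper: set $D_j=D(j/\ell)$, bound the distance between the piecewise-linear interpolant and $H(\cdot)$ using the Lipschitz constant, and invoke Weyl's inequality (\Cref{fct:weyl}) to preserve the gap. The paper's proof is terser---it only spells out the gap and norm check, comparing each interpolated point to an endpoint $H(k/\ell)$ rather than your midpoint $H(c)$---while you additionally verify items \ref{itm:lem:pathB_1} and \ref{itm:lem:pathB_2} explicitly; but the argument is the same.
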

\begin{proof}
	For each $k=0,1,\ldots,\ell$, define $D_k=D(k/\ell)$, where $D(t),t\in[0,1]$ comes from \Cref{lem:pathB}.
	It remains to prove the spectral gap and norm condition for \Cref{eq:lem:pathB2_1}.
	Fix arbitrary $k\in[\ell]$ and $\alpha\in[0,1]$. Define
	\begin{align}
		\tilde H=\XB+\alpha\cdot D_{k-1}+(1-\alpha)\cdot D_k=\alpha\cdot H\left(\frac{k-1}\ell\right)+(1-\alpha)\cdot H\left(\frac k\ell\right),
	\end{align}
	where $H(t),t\in[0,1]$ comes from \Cref{lem:pathB}.
	To bound the operator norm, we simply observe that
	\begin{align}
		\norm{\tilde H}\le\max\{\norm{H((k-1)/\ell)},\norm{H(k/\ell)}\}\le\poly(\nGHV).
	\end{align}
	For the spectral gap, notice that
	\begin{align}
		\norm{\tilde H-H(k/\ell)}
		 & \le\norm{H((k-1)/\ell)-H(k/\ell)}
		=\norm{D((k-1)/\ell)-D(k/\ell)}
		\tag{since $\alpha\in[0,1]$}         \\
		 & \le\poly(\nGHV)/\ell\le1.
		\tag{by \Crefitem{lem:pathB}{itm:lem:pathB_1} and setting $\ell=\poly(\nGHV)$ sufficiently large}
	\end{align}
	Hence by \Cref{fct:weyl}, the spectral gap of $\tilde H$ is at least the spectral gap of $H(k/\ell)$ minus $2$, which is still $\Omega(\sqrt{\mGHV})$ by \Crefitem{lem:pathB}{itm:lem:pathB_3}.
\end{proof}

To further linearize the piecewise linear path from \Cref{lem:pathB2}, we take a slight detour to describe a toy TFI \apath\ that sequentially shifts ground states over the computational basis.
Later we will use this property to embed turning points in \Cref{lem:pathB2}.

\paragraph{TFI Hamiltonian that shifts ground states.}
Define $\ket{\bs{j}} \coloneqq \ket{1^j 0^{\ell-j}}$ and let $\epsilonL, \DeltaL>0$ be parameters to be determined.
We start with an $\ell$-qubit time-dependent TFI Hamiltonian $\HL(t)$ that is \emph{linear} in $t \in \mathbb{R}$.
This \apath\  will gradually shift the ground state from \( \ket{\bs{0}} \) to \( \ket{\bs{1}},\ket{\bs{2}},\ldots,\ket{\bs{\ell}} \) in a sequential order.

Formally, $\HL(t) = \QL + \XL + \ZL(t)$ where $\XL = -\sum_{i \in [\ell]} X_i$ and
\begin{equation}\label{eq:QL_ZL}
	\QL = \DeltaL \left( (\ell-1)I + Z_1 - Z_\ell - \sum_{i \in [\ell-1]} Z_{i}Z_{i+1} \right),
	\quad
	\ZL(t) = \frac{1}{\epsilonL} \sum_{i \in [\ell]} (i-t)(I - Z_i).
\end{equation}

The properties of $\HL(t)$ is listed in the following \Cref{lem:linearTFI} and proved in \Cref{sec:toy_shift}.
We emphasize that \Crefitem{lem:linearTFI}{itm:lem:linearTFI_3} is the key property that allows us to shift ground states.
We also remark that \Crefitem{lem:linearTFI}{itm:lem:linearTFI_4} is an extra benefit that will become useful in a later reduction (\Cref{lem:pathD}).

\begin{restatable}{lemma}{lemlinearTFI} \label{lem:linearTFI}
	Assume $\ell\ge1$ and $\epsilonL\in(0,1/10]$.
	There exists a $\DeltaL = \poly(\ell,1/\epsilonL)$ such that the following holds for $t \in \mathbb{R}$:
	\begin{enumerate}[label=(\roman*)]
		\item\label{itm:lem:linearTFI_1} $\norm{\HL(t)} \leq (\abs{t}+1)\poly(\ell,1/\epsilonL)$.
		\item\label{itm:lem:linearTFI_2} The spectral gap of $\HL(t)$ is at least $1/2$.
		\item\label{itm:lem:linearTFI_3} Let $\ket{\psi(t)}$ be a ground state of $\HL(t)$.
		If $t \in [0, \ell+1]$, then $\ket{\psi(t)}$ is $O(\epsilonL)$-close to a normalized state that is a linear combination of $\ket{ \bs{a}}$ and $\ket{ \bs{b}}$, where $a \coloneqq \max\{\lfloor t-1/2 \rfloor,0\}$ and $b \coloneqq \min\{\lceil t-1/2 \rceil,\ell\}$.
		\item\label{itm:lem:linearTFI_4} For any $\lambda \in [0,2]$, the Hamiltonian
		\begin{equation}
			\HL(\lambda, t) \coloneqq \QL + \lambda \XL + \ZL(t)
		\end{equation}
		satisfies that its spectral gap is at least $1/2$ and its ground state is $O(\epsilonL)$-close to $\ket{\bs{0}}$ (resp., $\ket{\bs{\ell}}$) if $t \leq 0$ (resp., $t \geq \ell + 1$).
	\end{enumerate}
\end{restatable}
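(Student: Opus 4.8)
The plan is to control $\HL(\lambda,t)$ through the large-coupling limit of the diagonal ``domain-wall'' term $\QL$, whose kernel hosts an effective low-energy Hamiltonian on the $(\ell+1)$-dimensional Fock subspace $\calW:=\spn\{\ket{\bs{0}},\ket{\bs{1}},\dots,\ket{\bs{\ell}}\}$. Writing $z_i=(-1)^{x_i}$, one computes $\mel{x}{\QL}{x}=2\DeltaL\bigl(W(x)+x_\ell-x_1\bigr)$, where $W(x)$ counts internal domain walls of $x$; this value is $0$ exactly when $x=1^j0^{\ell-j}$ and is at least $4$ otherwise. Hence $\QL$ is diagonal with $\ker\QL=\calW$ and $\QL\succeq 4\DeltaL(I-P_\calW)$. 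Part~(i) then follows from the triangle inequality, using $\norm{\QL}\le 2\ell\DeltaL$, $\norm{\XL}\le\ell$, and $\norm{\ZL(t)}\le(\abs{t}+1)\poly(\ell,1/\epsilonL)$, once $\DeltaL$ is fixed to be a suitable polynomial in $\ell$ and $1/\epsilonL$.

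Next I identify the effective Hamiltonian. Let $\calE\colon\mathbb C^{\ell+1}\to(\mathbb C^2)^{\otimes\ell}$ be the isometry $e_j\mapsto\ket{\bs{j}}$. A direct check gives $\calE^\dagger\XL\calE=-A_{\rm path}$, the negated adjacency matrix of the path on $\{0,1,\dots,\ell\}$ (so $\norm{A_{\rm path}}<2$), and $\calE^\dagger\ZL(t)\calE=\mathrm{diag}\bigl(q_0(t),\dots,q_\ell(t)\bigr)$ with $q_j(t)=\bigl((j-s)^2-s^2\bigr)/\epsilonL$ and $s:=t-\tfrac12$. So the logical target on $\calW$ is the one-dimensional discrete Schr\"odinger operator $M_\lambda(t):=-\lambda A_{\rm path}+\mathrm{diag}(q_j(t))$, a hopping term in a (clamped) harmonic well centered at $j=s$. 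Since $(\QL)_{--}=0$, $(\QL)_{++}\succeq 4\DeltaL$, and $\calE M_\lambda(t)\calE^\dagger=P_\calW(\lambda\XL+\ZL(t))P_\calW$ holds exactly, the first-order reduction \Cref{lemma:1st} applies with its error hypothesis satisfied by $0$: for $\DeltaL$ a large enough polynomial, $(\HL(\lambda,t),\calE)$ simulates $M_\lambda(t)$ with error $(O(\epsilonL),O(\epsilonL))$, uniformly over $\lambda\in[0,2]$ and over any a priori bounded $t$-window on which $\norm{\ZL(t)}=\poly(\ell,1/\epsilonL)$. Through \Cref{lem:gsim} this transfers the spectral gap and ground-state location of $M_\lambda(t)$ to $\HL(\lambda,t)$ up to additive $O(\epsilonL)$.

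It then remains to analyze $M_\lambda(t)$. For $t\in[0,\ell+1]$ the two smallest diagonal entries sit at indices $a=\max\{\lfloor s\rfloor,0\}$ and $b=\min\{\lceil s\rceil,\ell\}$ --- precisely the $a,b$ of part~(iii) --- and every other $q_k(t)$ exceeds $\min(q_a,q_b)$ by at least $1/\epsilonL$. A short case split then suffices: if $q_{(2)}-q_{(1)}\ge\tfrac1{2\epsilonL}$, a $d=1$ Davis--Kahan bound (\Cref{fct:davis-kahan}) places the ground state of $M_\lambda(t)$ $O(\epsilonL)$-close to $e_a$ and Weyl's inequality (\Cref{fct:weyl}) gives gap $\Omega(1/\epsilonL)$; otherwise $q_{(3)}$ is $\Omega(1/\epsilonL)$-separated from $\{q_a,q_b\}$, so a $d=2$ Davis--Kahan bound localizes the ground state $O(\epsilonL)$-close to $\spn\{e_a,e_b\}$, while a Schur-complement estimate on the then-isolated $2\times2$ block $\left(\begin{smallmatrix}q_a&-\lambda\\-\lambda&q_b\end{smallmatrix}\right)$ keeps the gap at $\ge 2\lambda-o(1)$, which is $\ge\tfrac12$ when $\lambda=1$; this gives part~(iii) and the $t\in[0,\ell+1]$ case of part~(ii). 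For $t\le0$ (resp.\ $t\ge\ell+1$), $q_j(t)$ is monotone on $[0,\ell]$ with $q_{(2)}-q_{(1)}\ge 2/\epsilonL$, so $M_\lambda(t)$ has ground state $O(\epsilonL)$-close to $e_0$ (resp.\ $e_\ell$) and gap $\ge 2/\epsilonL-O(\lambda)\ge\tfrac12$, which handles the bounded part of those regimes in part~(iv). The bit-reversal-and-complement unitary $\hat R$, under which $\hat R\,\HL(\lambda,t)\,\hat R=\HL(\lambda,\ell+1-t)+cI$ and $\hat R\ket{\bs{j}}=\ket{\bs{\ell-j}}$, derives every $t\ge\ell+1$ statement from its $t\le0$ counterpart. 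Finally, for $\abs{t}$ so large that $\norm{\ZL(t)}$ is no longer polynomially bounded (so the reduction above breaks), I rerun \Cref{lemma:1st} with the alternative unperturbed operator $H_0=\QL+\ZL(t)$: it is diagonal with kernel exactly $\spn\{\ket{\bs{0}}\}$ (or $\spn\{\ket{\bs{\ell}}\}$) and with gap $\ge 2(1-t)/\epsilonL$ (or $\ge 2(t-\ell)/\epsilonL$) growing in $\abs{t}$, and $\lambda\XL$ is an $O(\ell)$ perturbation; this closes the remaining tails of parts~(ii) and~(iv).

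The main obstacle is controlling the spectral gap of $M_\lambda(t)$ --- hence of $\HL(t)$ --- uniformly as $t$ crosses the avoided crossings at integer values, where the two lowest levels $q_a,q_b$ collide: there the gap is genuinely $\Theta(\lambda)$ (this is why part~(ii) needs $\lambda=1$, and why part~(iv) only asserts a gap outside $[0,\ell+1]$), and making this quantitative requires decoupling the $2\times2$ crossing block from the rest of $M_\lambda(t)$ with constants sharp enough that the $\Theta(1/\epsilonL)$ diagonal gaps survive the $\Theta(\ell)$ operator norm of $\XL$ --- exactly the separation that routing through the effective Hamiltonian on $\calW$ buys, but only inside a bounded $t$-window, which is what forces the separate large-$\abs{t}$ argument.
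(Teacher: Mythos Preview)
Your proposal is correct and follows essentially the same route as the paper: reduce via \Cref{lemma:1st} to the tridiagonal effective Hamiltonian on the domain-wall subspace $\calW$, analyze that tridiagonal matrix by case-splitting on whether one or two diagonal entries are near-minimal, and treat large $|t|$ by a separate direct perturbation of the diagonal $\QL+\ZL(t)$. Two pleasant shortcuts you take that the paper does not: the bit-reversal-and-complement symmetry $\hat R$ (the paper just redoes the $t\ge\ell+1$ case by hand), and the Schur-complement bound on the isolated $2\times2$ crossing block (the paper instead doubles the internal coupling to $-2\lambda$ so that plain Weyl suffices, yielding the weaker but adequate gap $3\lambda/4$).
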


Given the toy system $\HL(t)$, we return to our current piecewise linear \apath\ $H(t)$ from \Cref{lem:pathB2} and embed all its turning points into $\HL(t)$.
This addresses another issue that \Cref{lem:pathB2} is not a straight path.

Formally, for $(D_0,D_1,\dots,D_\ell)$ in \Cref{lem:pathB2}, we define the $(\nB+\ell)$-qubit Hamiltonian $\HC(t)$ as follows:
\begin{equation} \label{eq:HC}
	\HC(t) \coloneqq \DeltaC \cdot I_{2^{\nB}} \tensor \HL(t) + \XB \tensor I_{2^\ell} + \DC,\quad \DC \coloneqq \sum_{0 \leq i \leq \ell} D_i \tensor \ketbra{\bs{i}}{\bs{i}},
\end{equation}
where $\DeltaC$ is a parameter to be determined.

Observe that $\HC(t)$ is linear in $t$ since $\HL(t)$ is linear.
In addition, we show that it simulates the \apath\ from \Cref{lem:pathB2}.
The following \Cref{lem:pathC} is proved in \Cref{sec:reduction_III} using first-order perturbative reduction and properties in \Cref{lem:linearTFI}.

\begin{restatable}{lemma}{lempathC} \label{lem:pathC}
	Assume $\epsilonL \leq 1/\poly(\nGHV)$.
	There exist
	\begin{equation}
		\DeltaL = \poly(\nGHV,1/\epsilonL)
		\quad\text{and}\quad
		10\norm{\DC}\le\DeltaC\le\poly(\nGHV)
	\end{equation}
	such that the \apath
	\begin{equation}
		\HC(0) \pathto \HC(\ell+1)
	\end{equation}
	has spectral gap $\Omega(\sqrt{\mGHV})$ and norm $\poly(\nGHV,1/\epsilonL)$.
\end{restatable}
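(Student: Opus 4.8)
The plan is to view $\HC(t)$, for each fixed $t\in[0,\ell+1]$, as a first-order perturbative simulation (in the sense of \Cref{lemma:1st} and \Cref{lem:gsim}) of a small \emph{logical} $\nB$-qubit Hamiltonian on the graph register, with the clock register frozen into the ground state $\ket{\psi(t)}$ of $\HL(t)$. First I would note that $\HL(t)$ is affine in $t$ (since $\ZL(t)$ is), hence so is $\HC(t)$; thus the path $\HC(0)\pathto\HC(\ell+1)$ is, after reparametrization, exactly $\{\HC(t)\colon t\in[0,\ell+1]\}$, and it suffices to bound the spectral gap and norm of each such $\HC(t)$. Write $\HC(t)=\DeltaC\,(I_{2^{\nB}}\tensor\HL(t))+V$ with $V\coloneqq\XB\tensor I_{2^\ell}+\DC$, set $H_0\coloneqq\DeltaC(I_{2^{\nB}}\tensor\HL(t))$, and take $\calH_-\coloneqq\mathbb C^{2^{\nB}}\tensor\spn\{\ket{\psi(t)}\}$ with encoding $\calE_t\colon\ket x\mapsto\ket x\tensor\ket{\psi(t)}$. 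By \Crefitem{lem:linearTFI}{itm:lem:linearTFI_2}, $\HL(t)$ has spectral gap $\ge1/2$ with a nondegenerate ground state of energy $\mu_0(t)$, so $\calH_-$ is exactly the ground space of $I_{2^{\nB}}\tensor\HL(t)$: $(H_0)_{--}=\DeltaC\mu_0(t)P_-$ and $(H_0)_{++}$ has all eigenvalues $\ge\DeltaC\mu_0(t)+\DeltaC/2$.

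Next I would compute $(V)_{--}=\bigl(\XB+\sum_{i=0}^\ell|\braket{\psi(t)}{\bs i}|^2 D_i\bigr)\tensor\ketbra{\psi(t)}{\psi(t)}$, using $\DC=\sum_i D_i\tensor\ketbra{\bs i}{\bs i}$ and the orthonormality of the $\ket{\bs i}$, and define the logical target
\begin{equation}
	H_\rmtarget(t)\coloneqq\DeltaC\mu_0(t)\,I+\XB+\sum_{i=0}^\ell|\braket{\psi(t)}{\bs i}|^2 D_i,
\end{equation}
so that $\calE_t H_\rmtarget(t)\calE_t^\dagger=(H_0)_{--}+(V)_{--}$ exactly, i.e.\ the first-order matching condition of \Cref{lemma:1st} holds with zero error. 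I would then choose $\epsilon=\epsilon_{\rm enc}=1/\poly(\nGHV)$ small, take $\DeltaL=\poly(\nGHV,1/\epsilonL)$ as supplied by \Cref{lem:linearTFI} (applicable since $\ell=\poly(\nGHV)$ and $\epsilonL\in(0,1/10]$), and pick $\DeltaC=\poly(\nGHV)$ large enough that $\DeltaC/2\ge O(\epsilon^{-1}\|V\|^2+\epsilon_{\rm enc}^{-1}\|V\|)$ and $\DeltaC\ge10\|\DC\|$; this is feasible since $\|V\|\le\|\XB\|+\|\DC\|\le\poly(\nGHV)$ (using $\DeltaB\le\poly(\mGHV)$, $\nB=\Theta(\mGHV)$, $\mGHV=\poly(\nGHV)$, and $\|D_i\|\le\poly(\nGHV)$ from \Cref{lem:pathB2}). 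With $\zeta=\DeltaC\mu_0(t)$ and gap parameter $\DeltaC/2$, \Cref{lemma:1st} then shows $(\HC(t),\calE_t)$ simulates $H_\rmtarget(t)$ with error $(\epsilon_{\rm enc},\epsilon)$.

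It remains to show $H_\rmtarget(t)$ has a nondegenerate ground state and spectral gap $\Omega(\sqrt\mGHV)$, for then \Cref{lem:gsim} gives that $\HC(t)$ has spectral gap $\ge\Omega(\sqrt\mGHV)-2\epsilon=\Omega(\sqrt\mGHV)$. For this I would invoke \Crefitem{lem:linearTFI}{itm:lem:linearTFI_3}: for $t\in[0,\ell+1]$, $\ket{\psi(t)}$ is $O(\epsilonL)$-close to a normalized $c_a\ket{\bs a}+c_b\ket{\bs b}$ with $a=\max\{\lfloor t-\tfrac12\rfloor,0\}$, $b=\min\{\lceil t-\tfrac12\rceil,\ell\}$, where a short check gives $0\le a\le b\le\ell$ and $b-a\in\{0,1\}$. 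Expanding $\ket{\psi(t)}=c_a\ket{\bs a}+c_b\ket{\bs b}+\ket\delta$ with $\|\ket\delta\|=O(\epsilonL)$ and $|c_a|^2+|c_b|^2=1$, and using $\|D_i\|\le\poly(\nGHV)$, one gets $\sum_i|\braket{\psi(t)}{\bs i}|^2 D_i=|c_a|^2 D_a+|c_b|^2 D_b+E_t$ with $\|E_t\|=O(\epsilonL\poly(\nGHV))$. Hence $H_\rmtarget(t)$ is the scalar $\DeltaC\mu_0(t)I$ plus the point $\XB+|c_a|^2 D_a+|c_b|^2 D_b$ — a vertex of the \Cref{lem:pathB2} path if $b=a$, else an interior point of the segment $\XB+D_a\pathto\XB+D_{a+1}$ — plus $E_t$. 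The middle term has spectral gap $\Omega(\sqrt\mGHV)$ and a nondegenerate ground state by \Cref{lem:pathB2}; the scalar shift is harmless; and since $\epsilonL\le1/\poly(\nGHV)$ we may force $\|E_t\|\le\tfrac1{100}\sqrt\mGHV$, so \Cref{fct:weyl} preserves both properties. The norm bound is then routine: $\|\HC(t)\|\le\DeltaC\|\HL(t)\|+\|\XB\|+\|\DC\|$, with $\|\HL(t)\|\le(|t|+1)\poly(\ell,1/\epsilonL)=\poly(\nGHV,1/\epsilonL)$ for $t\in[0,\ell+1]$ by \Crefitem{lem:linearTFI}{itm:lem:linearTFI_1} and the other terms $\poly(\nGHV)$.

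The step I expect to be the main obstacle is the computation and error control of $(V)_{--}$: showing that the diagonal dressing $\sum_i|\braket{\psi(t)}{\bs i}|^2 D_i$ collapses to a convex combination of at most two \emph{consecutive} $D_i$'s up to $O(\epsilonL\poly(\nGHV))$ error — which is exactly where the sequential ground-state shift of \Crefitem{lem:linearTFI}{itm:lem:linearTFI_3} is indispensable — together with verifying that neither the $t$-dependent scalar shift $\DeltaC\mu_0(t)I$ nor the residual $E_t$ erodes the $\Omega(\sqrt\mGHV)$ gap. Selecting $\DeltaC,\epsilon,\epsilon_{\rm enc},\DeltaL$ so that all hypotheses of \Cref{lemma:1st} and \Cref{lem:gsim} hold uniformly in $t$ is bookkeeping but needs a little care.
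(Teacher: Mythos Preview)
Your proposal is correct and follows essentially the same approach as the paper's proof: both freeze the clock register in the ground state of $\HL(t)$, apply the first-order reduction \Cref{lemma:1st} with $H_0=\DeltaC\,I\tensor\HL(t)$ and $V=\XB\tensor I+\DC$, and then use \Crefitem{lem:linearTFI}{itm:lem:linearTFI_3} to collapse the effective diagonal term $\sum_i|\braket{\psi(t)}{\bs i}|^2 D_i$ to a convex combination of two consecutive $D_a,D_{a+1}$ up to $O(\epsilonL\poly(\nGHV))$ error, landing on the gapped path of \Cref{lem:pathB2}. The only cosmetic differences are that the paper works directly with the actual overlaps $v_{\bs a}^2,v_{\bs b}^2$ and rescales (rather than your exactly normalized $c_a,c_b$), and it takes the simulation error $\epsilon=1/2$ rather than $1/\poly(\nGHV)$; both choices work since the target gap $\Omega(\sqrt{\mGHV})$ is large.
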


By expanding the definition of $\HC(t)$, we have
\begin{align}
	\HC(t)
	 & =\DeltaC \cdot I_{2^{\nB}} \tensor \HL(t) + \XB \tensor I_{2^\ell} + \DC                    \\
	 & =\DeltaC\cdot I_{2^{\nB}}\tensor\left(\QL+\XL+\ZL(t))\right)+ \XB \tensor I_{2^\ell} + \DC.
\end{align}
Note that $\QL$ and $\ZL(t)$ are classical Ising Hamiltonians that consist of $1$- and $2$-local Pauli-$Z$'s.
On the other hand, $\XL$ and $\XB$ are $1$-local $X$-type Hamiltonian; and $\DC$ is just diagonal (not necessarily local).
Comparing these with our target \Cref{thm:main}, we need to remove $\DC$ from the starting Hamiltonian and remove $\XL,\XB$ from the terminating Hamiltonian.
This is achieved by slightly tilting and then extending the \apath\ $\HC(t)$.

\paragraph{Path tilt and extension.}
Let $\epsilonD>0$ be a parameter to be determined later. We slightly tilt the $\XL,\XB$ and $\DC$ component of $\HC(t)$ as the following \apath\ $\HD$:
\begin{equation}\label{eq:HD}
	\HD(t) \coloneqq \DeltaC\cdot I_{2^{\nB}} \tensor \left( \QL + (1- t \epsilonD)\XL + \ZL(t) \right) + (1- t\epsilonD)\cdot \XB \tensor I_{2^\ell} + (1+ t\epsilonD)\cdot\DC.
\end{equation}

Observe that if $t=-1/\epsilonD$, then $\HD(t)$ is a TFI Hamiltonian (i.e., $\DC$ is removed); and if $t=1/\epsilonD$, then $\HD(t)$ is diagonal (i.e., $\XL,\XB$ are removed).
While this is precisely what \Cref{thm:main} needs, we still need to verify that $\HD(t)$ makes sense for quantum adiabatic computing.
This is formalized in the following \Cref{lem:pathD} and proved in \Cref{sec:reduction_IV}.

\begin{restatable}{lemma}{lempathD} \label{lem:pathD}
	There exist $\epsilonL= 1/\poly(\nGHV)$, $\DeltaL \leq \poly(\nGHV)$,
	\begin{equation}\label{eq:lem:pathD_1}
		10\norm{\DC}\le\DeltaC\le\poly(\nGHV)
		\quad\text{and}\quad
		1/\poly(\nGHV)\le\epsilonD\le1/(10\norm{\DC}+\ell)
	\end{equation}
	such that the \apath
	\begin{equation}\label{eq:HD-path}
		\HD(-1/\epsilonD) \pathto \HD(1/\epsilonD)
	\end{equation}
	has spectral gap $\Omega(\sqrt{\mGHV})$ and norm $\poly(\nGHV)$.
\end{restatable}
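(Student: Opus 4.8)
The plan rests on the fact that $\HD(t)$ (recall \eqref{eq:HD}) is affine in $t$, so the \apath\ in \eqref{eq:HD-path} is simply the restriction of $\HD(\cdot)$ to $[-1/\epsilonD,1/\epsilonD]$, and I would bound its spectral gap separately on the ``inner'' interval $t\in[0,\ell+1]$ and on the two ``outer'' intervals $t\in[-1/\epsilonD,0]$ and $t\in[\ell+1,1/\epsilonD]$. Parameters are fixed in the order $\epsilonL\to(\DeltaL,\DeltaC)\to\epsilonD$: first $\epsilonL=1/\poly(\nGHV)$ small enough to invoke \Cref{lem:linearTFI} and \Cref{lem:pathC}; then $\DeltaL=\poly(\nGHV)$ as those lemmas provide and $\DeltaC=\poly(\nGHV)$ taken large enough to both satisfy $\DeltaC\ge 10\norm{\DC}$ and dominate the perturbative bounds below; finally $\epsilonD=1/\poly(\nGHV)$ small enough that $\epsilonD\le 1/(10\norm{\DC}+\ell)$ and that the inner-interval perturbation is negligible. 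The norm bound is immediate, since $\abs{t\epsilonD}\le1$ throughout the interval gives $\norm{\HD(t)}\le\DeltaC(\norm{\QL}+2\norm{\XL}+\norm{\ZL(t)})+2\norm{\XB}+2\norm{\DC}$, and each summand is $\poly(\nGHV)$, using $\norm{\ZL(t)}=O(\epsilonL^{-1}\ell(\ell+\abs{t}))$ with $\abs{t}\le1/\epsilonD=\poly(\nGHV)$. On the inner interval, $\HD(t)-\HC(t)=t\epsilonD(-\DeltaC\,I_{2^{\nB}}\tensor\XL-\XB\tensor I_{2^\ell}+\DC)$ has norm at most $(\ell+1)\epsilonD\cdot\poly(\nGHV)$; since the \apath\ $\HC(0)\pathto\HC(\ell+1)$ of \Cref{lem:pathC} is exactly $\{\HC(t):t\in[0,\ell+1]\}$ with spectral gap $\Omega(\sqrt{\mGHV})$, Weyl's inequality (\Cref{fct:weyl}) gives that $\HD(t)$ has spectral gap $\ge\Omega(\sqrt{\mGHV})-2(\ell+1)\epsilonD\poly(\nGHV)=\Omega(\sqrt{\mGHV})$ once $\epsilonD$ is a small enough inverse polynomial.

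For the outer intervals I would apply first-order perturbative reduction (\Cref{lemma:1st}) pointwise in $t$. Consider $t\le0$, set $\lambda:=1-t\epsilonD\in[1,2]$, and write $\HD(t)=H_0+V$ with $H_0:=\DeltaC\,I_{2^{\nB}}\tensor\HL(\lambda,t)$ and $V:=\lambda\,\XB\tensor I_{2^\ell}+(1+t\epsilonD)\DC$, so $\norm{V}=\poly(\nGHV)$. By \Crefitem{lem:linearTFI}{itm:lem:linearTFI_4}, $\HL(\lambda,t)$ has a nondegenerate ground state $\ket{w}$ that is $O(\epsilonL)$-close to $\ket{\bs{0}}$ and spectral gap $\ge1/2$; let $\mu$ be its ground energy, take $\calH_-:=\mathbb{C}^{2^{\nB}}\tensor\spn\{\ket{w}\}$ with encoding $\calE(v):=v\tensor\ket{w}$, so that $(H_0)_{--}=\DeltaC\mu\,P_-$ and $(H_0)_{++}$ is bounded below by $\DeltaC\mu+\DeltaC/2$. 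The key computation is that, although $\DC=\sum_i D_i\tensor\ketbra{\bs{i}}{\bs{i}}$ is nonlocal, the $O(\epsilonL)$-closeness gives $\abs{\braket{w}{\bs{0}}}^2=1-O(\epsilonL^2)$ and hence $\sum_{i\ge1}\abs{\braket{w}{\bs{i}}}^2=O(\epsilonL^2)$, so together with $D_0=0$ (\Crefitem{lem:pathB}{itm:lem:pathB_1}) one gets $\norm{V_{--}-\calE(\lambda\XB)\calE^\dagger}=(1+t\epsilonD)\norm{\textstyle\sum_{i\ge1}\abs{\braket{w}{\bs{i}}}^2 D_i}=O(\epsilonL^2\poly(\nGHV))$. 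With target $H_\rmtarget:=\DeltaC\mu\,I+\lambda\XB$ and a constant simulation accuracy $\epsilon$, \Cref{lemma:1st} applies (its requirement $\DeltaC/2\ge O(\epsilon^{-1}\norm{V}^2)$ holds once $\DeltaC$ is a large enough polynomial), so $(\HD(t),\calE)$ simulates $H_\rmtarget$ with error $\epsilon$; since $\XB=-\DeltaB\sum_i X_i$ has nondegenerate ground state $\ket{+}^{\otimes\nB}$ with gap $2\DeltaB$, the target has gap $2\lambda\DeltaB\ge2\DeltaB$, and \Cref{lem:gsim} yields spectral gap $\ge2\DeltaB-2\epsilon=\Omega(\mGHV)$ for $\HD(t)$. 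The case $t\ge\ell+1$ is handled the same way with $\lambda=1-t\epsilonD\in[0,1]$, $\ket{w}$ now $O(\epsilonL)$-close to $\ket{\bs{\ell}}$, and $H_\rmtarget:=\DeltaC\mu\,I+\lambda\XB+(1+t\epsilonD)D_\ell$ where $D_\ell=-3\DeltaB\nB\ketbra{u}{u}$ by \Crefitem{lem:pathB}{itm:lem:pathB_1}; the error is again $O(\epsilonL^2\poly(\nGHV))$ via $1-\abs{\braket{w}{\bs{\ell}}}^2=O(\epsilonL^2)$, and the gap of $H_\rmtarget$ follows from rank-one interlacing: with $s:=t\epsilonD\in[0,1]$ and $A:=-(1-s)\DeltaB\sum_i X_i$, the matrix $H_\rmtarget-\DeltaC\mu\,I=A-3(1+s)\DeltaB\nB\ketbra{u}{u}$ is a negative rank-one perturbation of $A$, so its second eigenvalue is at least $\lambda_1(A)=-(1-s)\DeltaB\nB$ while its ground energy is at most $\mel{u}{A}{u}-3(1+s)\DeltaB\nB=-3(1+s)\DeltaB\nB$ (using $\mel{u}{\sum_i X_i}{u}=0$), giving gap $\ge(2+4s)\DeltaB\nB\ge2\DeltaB\nB$ with a nondegenerate ground state; \Cref{lem:gsim} then gives spectral gap $\ge2\DeltaB\nB-2\epsilon=\Omega(\mGHV^2)$. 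Since the three intervals cover $[-1/\epsilonD,1/\epsilonD]$, the spectral gap is $\Omega(\sqrt{\mGHV})$ throughout.

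The hard part will be the outer-interval analysis: making the $2^{\nB}$-fold ground degeneracy of $\DeltaC\,I\tensor\HL(\lambda,t)$ precise and showing that perturbing by $V$ turns $\HD(t)$ effectively into the simple Hamiltonian $H_\rmtarget$. This hinges on the \emph{quadratic} control $\sum_{i\ge1}\abs{\braket{w}{\bs{i}}}^2=O(\epsilonL^2)$ — which is what keeps the contribution of the nonlocal $\DC$, a priori a sum of up to $2^{\nB}$ pieces, down to the single surviving value $D_0=0$ or $D_\ell$ — and on simultaneously threading all parameter constraints in \eqref{eq:lem:pathD_1}, \Cref{lem:linearTFI}, and \Cref{lem:pathC} through the ordering $\epsilonL\to(\DeltaL,\DeltaC)\to\epsilonD$. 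By comparison, the rank-one interlacing bound for the $t\ge\ell+1$ target gap and the interval-covering bookkeeping are routine.
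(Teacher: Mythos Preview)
Your proposal is correct and follows essentially the same approach as the paper: split into the inner interval $[0,\ell+1]$ (perturb off \Cref{lem:pathC} via Weyl) and the outer intervals (first-order reduction against $H_0=\DeltaC\,I\tensor\HL(\lambda,t)$, using \Crefitem{lem:linearTFI}{itm:lem:linearTFI_4} to control the ground-state overlap with $\ket{\bs{0}}$ or $\ket{\bs{\ell}}$). The only cosmetic differences are that the paper takes $H_\rmtarget:=(H_0)_{--}+V_{--}$ exactly (so the \Cref{lemma:1st} error is zero and the work goes into bounding the gap of $\sum_\tau v_\tau^2 K_\tau$), whereas you pick a clean target and bound the residual $O(\epsilonL^2\norm{\DC})$; and the paper bounds the $t\ge\ell+1$ target gap by Weyl (\Cref{clm:lem:pathD_1}) rather than your rank-one interlacing, yielding $(1+5t\epsilonD)\DeltaB\nB$ instead of $(2+4t\epsilonD)\DeltaB\nB$.
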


To put everything together and prove \Cref{thm:main}, we simply write out the \apath\ in \Cref{lem:pathD}.

\begin{proof}[Proof of \Cref{thm:main}]
	Let $n \coloneqq \nB+\ell=\poly(\nGHV)$. Since $\nGHV$ is arbitrary from \Cref{thm:GHV21}, our construction holds for infinitely many $n$, which can be easily extended to all positive integer $n$ by some dummy padding argument.

	\paragraph{Starting Hamiltonian $H_{\rm TFI}$.}
	Recall \Cref{eq:HD} and \Cref{lem:linearTFI}.
	For the starting Hamiltonian, we have
	\begin{align}
		H_{\rm TFI}\coloneqq\HD(-1/\epsilonD)
		 & =\DeltaC\cdot I_{2^{\nB}}\tensor\HL(2,-1/\epsilonD)+2\XB\tensor I_{2^\ell}.
	\end{align}
	Since $\HL(\lambda,t)$ and $\XB$ are both TFI Hamiltonian, we know that $H_{\rm TFI}$ is indeed a TFI Hamiltonian.

	In addition, by \Crefitem{lem:linearTFI}{itm:lem:linearTFI_4}, we know that the ground state of $\HL(2,-1/\epsilonD)$ (and hence $\DeltaC\HL(2,-1/\epsilonD)$) is $O(\epsilonL)$-close to $\ket{\bs0}=\ket{0^\ell}$.
	Since $\XB=-\DeltaB\sum_{i\in[\nB]}X_i$ from \Cref{lem:pathB}, we know that the ground state of $2\XB$ is $\ket{+}^{\nB}$.
	As a result, the ground state of $H_{\rm TFI}$ is $O(\epsilonL)$-close to $\ket{+}^{\nB}\ket{0^\ell}$.
	As $\epsilonL=1/\poly(\nGHV)=1/\poly(n)$, this, together with the spectral gap condition and norm condition from \Cref{lem:pathD}, verifies \Crefitem{thm:main}{itm:thm:main_2}.

	\paragraph{Ending Hamiltonian $D$.}
	Now for the Hamiltonian in the end, we have
	\begin{align}
		D\coloneqq\HD(1/\epsilonD)
		 & =\DeltaC\cdot I_{2^{\nB}}\tensor\HL(0,1/\epsilonD)+2\DC
		\tag{by \Cref{eq:HD}}                                                                                           \\
		 & =\DeltaC\cdot I_{2^{\nB}}\tensor(\QL+\ZL(1/\epsilonD))+2\sum_{0\le i\le\ell}D_i\tensor\ketbra{\bs i}{\bs i}.
		\tag{by \Cref{eq:HC}}
	\end{align}
	Recall from \Cref{eq:QL_ZL} that $\QL$ and $\ZL(t)$ are diagonal, and each $D_i$ is diagonal.
	Hence $D$ is indeed a diagonal Hamiltonian.
	Moreover, it is obvious from the above that a query to $D$ can be implemented by a query to $D_i$ for some $i=0,1,\ldots,\ell$.

	\paragraph{Classical hardness.}
	Finally we prove \Crefitem{thm:main}{itm:thm:main_1}.
	To this end, it is sufficient to show that any classical algorithm, which makes $q$ queries to $D$ and finds its ground state, can be turned into a classical algorithm $\mathcal{A}$ that makes $q$ queries to $\{D_j\}_{j=0,1,\dots,\ell}$ in \Cref{lem:pathB2} and finds the target $u$ in \Cref{lem:pathB2}, because \Crefitem{lem:pathB2}{itm:lem:pathB_2} then asserts that $q$ must be exponentially large.

	In fact, the construction of $\mathcal{A}$ is straightforward because of (i) the implementation of the oracle to $D$ using oracles to $D_j$, as discussed in the previous paragraph, and (ii) the fact that the ground state of $D$ is exactly $\ket{u,\bs{\ell}}$, since every other computational basis state have larger energy---this is implicit in the proof of \Cref{lem:pathD}, and we include an explicit argument here for completeness:
	\begin{itemize}
		\item For any $u' \neq u$, we have
		      \begin{equation}
			      \mel{u',\bs{\ell}}{D}{u',\bs{\ell}} - \mel{u,\bs{\ell}}{D}{u,\bs{\ell}} = 2\mel{u'}{D_\ell}{u'} - 2\mel{u}{D_\ell}{u} = 6 \DeltaB \nB>0.
		      \end{equation}
		\item For any $v$ and $0 \leq j < \ell$, we have
		      \begin{align}
			       & \phantom{\le}\mel{v,\bs{j}}{D}{v,\bs{j}} - \mel{v,\bs{\ell}}{D}{v,\bs{\ell}}                                                                                             \\
			       & = \bigg(2\mel{v}{D_j}{v} + \DeltaC \mel{\bs{j}}{\ZL(1/\epsilonD)}{\bs{j}}\bigg) -  \bigg(2\mel{v}{D_\ell}{v} + \DeltaC\mel{\bs{\ell}}{\ZL(1/\epsilonD)}{\bs{\ell}}\bigg) \\
			       & = \frac{\DeltaC(\ell-j)(2 /\epsilonD-\ell-j-1)}{\epsilonL} + 2\left( \mel{v}{D_j}{v} - \mel{v}{D_\ell}{v} \right)
			      \tag{by \Cref{eq:QL_ZL}}                                                                                                                                                    \\
			       & >1+ 4 \norm{\DC} - 2 \norm{\DC} >0.
			      \tag{by \Cref{eq:lem:pathD_1}}
		      \end{align}
		\item For any $v$ and $\tau \notin \{\bs{0},\bs{1},\dots,\bs{\ell}\}$, define $w$ as the Hamming weight of $\tau$. We have
		      \begin{align}
			       & \phantom{\le}\mel{v,\tau}{D}{v,\tau} - \mel{v,\bs{w}}{D}{v,\bs{w}}                                                                                                                            \\
			       & =  \DeltaC\left(\mel{\tau}{\ZL(1/\epsilonD)}{\tau}+ \mel{\tau}{\QL}{\tau} \right) - 2\mel{v}{D_w}{v}- \DeltaC\left( \mel{\bs{w}}{\ZL(1/\epsilonD)}{\bs{w}}+\mel{\bs{w}}{\QL}{\bs{w}}  \right) \\
			       & > \DeltaC \left( \mel{\tau}{ \QL}{\tau} - \mel{\bs{w}}{\QL}{\bs{w}} \right) - 2\mel{v}{D_w}{v}
			      \tag{since $\mel{\tau}{\ZL(1/\epsilonD)}{\tau}>\mel{\bs w}{\ZL(1/\epsilonD)}{\bs w}$ from \Cref{eq:QL_ZL}}                                                                                       \\
			       & \geq \DeltaC \cdot 1 - 2\norm{\DC}\ge0,
		      \end{align}
		      where we also use \Cref{eq:QL_ZL} to derive $\mel{\tau}{\QL}{\tau}\ge1+\mel{\bs w}{\QL}{\bs w}$ for the last inequality (see also \Cref{clm:toy_shift_tilde_HL_alpha_x} for a detailed calculation).
	\end{itemize}
	In summary, for each $\ket{v,\tau}$, let $w$ be the Hamming weight of $\tau$ and we have
	\begin{equation}
		\mel{v,\tau}{D}{v,\tau}\ge\mel{v,\bs w}{D}{v,\bs w}\ge\mel{v,\bs\ell}{D}{v,\bs\ell}\ge\mel{u,\bs\ell}{D}{u,\bs\ell}.
	\end{equation}
	In addition, at least one of the inequality becomes strict if $\ket{v,\tau}\ne\ket{u,\bs\ell}$.
	This means that $\ket{u,\bs\ell}$ is the ground state of $D$.
\end{proof}     \section{Reduction I: Proof of Lemma \ref{lem:pathA}}\label{sec:reduction_I}

The goal of this section is to prove \Cref{lem:pathA} (restated below).

\lempathA*

The proof consists of two major steps.
In the first step, we apply a second-order perturbative reduction (\Cref{lemma:2nd}) to transform the \apath\ in \Cref{thm:GHV21} into a \apath\ composed of stoquastic hypercube Hamiltonians.
In the second step, we append it with additional paths at both the beginning and end, to ensure that the final \apath\ starts with $-\mGHV\sum_{i \in [\nA]} X_i$ and terminates at $-\mGHV\ketbra{u}{u}$, as stated in \Crefitem{lem:pathA}{itm:lem:pathA_1}.

\paragraph{Organization.}
The first step for \Cref{lem:pathA} is presented in \Cref{sec:reduction_I_first_step}.
The second step is in \Cref{sec:reduction_I_second_step}.
Then in \Cref{sec:reduction_I_together} we put two steps together and prove \Cref{lem:pathA}.
The missing proofs from this section are deferred to \Cref{sec:missing_proofs_in_sec:reduction_I}.

\paragraph{Notation.}
We (re)define some notation.
Let $K(t)$ for $t \in [0,1]$ be the $\nGHV$-qubit \apath\  $H_{\sf GHV}$ in \Cref{thm:GHV21}.
Write $K(1) = -\mGHV \ketbra{w}{w}$, and reserve the notation $u$ such that it is consistent with the statement of \Cref{lem:pathA}.
Without loss of generality, assume that $G=(V,E)$ is the interaction constraint of $K(t)$.
In particular, $G$ is an $s$-regular graph with self-loops and $s=\Theta(\mGHV)$.
Let $M$ be an upper bound on $\norm{K(t)}$ and $M=O(\mGHV)$.

\subsection{First Step -- Sparse Hamiltonians to Hypercube Hamiltonians}\label{sec:reduction_I_first_step}

Let $\nA \coloneqq 3\nGHV+2s$.
Below we show how to reduce $K(t)$ to an $\nA$-qubit stoquastic hypercube Hamiltonian $H'(t)$ such that $H'(t)$ simulates $K(t)$ in the sense of \Cref{def:sim}.
This reduction is one-to-one and thus for convenience we write $K \coloneqq K(t)$ and $H' \coloneqq H'(t)$.

Note that for hypercube Hamiltonians, non-zero entries can only exist between nodes of distance at most one.
To obtain this for $(x,y)\in E$ which may have arbitrary $\dist(x,y)$, we will use a path $P_{x,y}$ to link $(x,y)\in E$ such that $P_{x,y}$ only consists of hypercube transition edges.
To describe $P_{x,y}$, we define an order for all $\nGHV$-bit binary strings and the rank function among them.

\begin{definition}[Order and rank]\label{def:sparese_to_hypercube_order_and_rank}
	For any $x,y\in\{0,1\}^\nGHV$, we interpret them as $\nGHV$-bit binary numbers and compare them using ``$<$'',``$=$'', and ``$>$'' accordingly.
	For each $x\in\{0,1\}^\nGHV$, we list all $y > x$ with $(x,y) \in E$ in the increasing order and define $\mathrm{rk}(x,y)$ to be the rank of $y$ in this list.
	Similarly we list all $y < x$ with $(x,y) \in E$ in the decreasing order and define $\mathrm{rk}(x,y)$ to be the rank of $y$ in this list.
\end{definition}

Fix an $(x,y)\in E$ pair satisfying $x < y$.
Let \(\ell \coloneqq \dist(x,y) \leq \nGHV\) and consider the sequence
\begin{equation}
	x \eqqcolon z_0,\quad z_1,\quad z_2,\quad \dots,\quad z_{\ell-1},\quad z_{\ell} \coloneqq y
\end{equation}
that connects \(x\) and \(y\), where each \(z_i, z_{i+1}\) differs by exactly one bit (i.e., they have Hamming distance 1).
For uniqueness, we choose \(z_0, z_1, \dots, z_{\ell}\) to be the lexicographically smallest.
While this is already a valid hypercube transition path of $\nGHV$-bit binary strings connecting $x$ and $y$, for the sake of later reduction, we pad it into a valid hypercube transition path of $(\nA=3\nGHV+2s)$-bit binary strings of a \emph{fixed} length of $2\nGHV+3$.

\begin{definition}[Path $P_{x,y}$]\label{def:sparese_to_hypercube_path_P_xy}
	Define $r_1=\mathrm{rk}(x,y),r_2=\mathrm{rk}(y,x)$ and set $e_{r_1}=0^{r_1-1}10^{s-r_1},e_{r_2}=0^{r_2-1}10^{s-r_2}$.
	We connect $(3\nGHV+2s)$-bit strings $(x,x,e_{r_1},0^s,0^\nGHV)$ and $(y,y,0^s,e_{r_2},0^\nGHV)$ with a path $P_{x,y}$ of length $2\nGHV+3$, where indexes of consecutive nodes differ by one bit.
	In $P_{x,y}$, we connect following nodes in order:
	\begin{itemize}
		\item $(x,x,e_{r_1},0^s,0^\nGHV), (x,z_1,e_{r_1},0^s,0^\nGHV)$, \dots, $(x,z_{\ell-1},e_{r_1},0^s,0^\nGHV)$, $(x,y,e_{r_1},0^s, 0^\nGHV)$;
		      \hfill($\ell+1$ nodes)
		\item $(x,y,e_{r_1},0^s, 0^{\nGHV-1}1)$, $(x,y,e_{r_1},0^s, 0^{\nGHV-2}1^2)$, \dots, $(x,y,e_{r_1},0^s, 0^{\ell}1^{\nGHV-\ell})$;
		      \hfill($\nGHV-\ell$ nodes)
		\item $(x,y,e_{r_1},e_{r_2}, 0^{\ell}1^{\nGHV-\ell})$;
		      \hfill(one node)
		\item $(x,y,0^s,e_{r_2},0^{\ell}1^{\nGHV-\ell})$, $(x,y,0^s,e_{r_2},0^{\ell+1}1^{\nGHV-\ell-1})$, \dots, $(x,y,0^s,e_{r_2},0^{\nGHV-1}1)$;
		      \hfill($\nGHV-\ell$ nodes)
		\item $(x,y,0^s,e_{r_2},0^\nGHV)$, $(z_1,y,0^s,e_{r_2},0^\nGHV)$, \dots, $(z_{l-1},y,0^s,e_{r_2},0^\nGHV)$, $(y,y,0^s,e_{r_2},0^\nGHV)$.
		      \hfill($\ell+1$ nodes)
	\end{itemize}
	For convenience we label the nodes in $P_{x,y}$ by $v_{x,y,i}$ for $i \in [2\nGHV+3]$ following the listed order above.
\end{definition}

To establish the reduction using \Cref{lemma:2nd}, we denote by $\calK$ and $\calH$ the Hilbert spaces associated with $K$ and $H'$ respectively.
Define the encoding isometry $\calE \colon \calK \to \calH$ by
\begin{equation}\label{eq:lem:pathA_calE}
	\calE \ket{x} = \ket{x, x, 0^s, 0^s, 0^\nGHV}\quad \text{for } x \in \binary^\nGHV.
\end{equation}
Observe that the image of $\calE$ is $\nA$-bit binary strings, consistent with paths in \Cref{def:sparese_to_hypercube_path_P_xy}.

Let $\calH_- \coloneqq \im \calE$ and we write $\calH = \calH_- \oplus \calH_+$.
In light of \Cref{lemma:2nd}, our plan is to construct
\begin{equation}
	H' \coloneqq \Delta H_0+\Delta^{1/2}V_\rmmain+V_\rmextra,
\end{equation}
for some $\Delta,H_0,V_\rmmain,V_\rmextra$ such that $V_\rmextra-(V_\rmmain)_{-+}H_0^{-1}(V_\rmmain)_{+-}$ is close to $\bar K$.

\paragraph{Entries of $K$ with small offsets.}
Let $\epsilon \coloneqq 1/2$.
For any $x, y \in \binary^\nGHV$ with $(x,y) \in E$, define
\begin{equation}\label{eq:lem:pathA_a-def}
	a_{x,y} \coloneqq \mel{x}{K}{y} - \frac{\epsilon}{2s}.
\end{equation}
Informally, the value $a_{x,y}$ simply records the $(x,y)$th entry of $K$ with a small offset $\epsilon/(2s)$ for non-degeneracy: since $K$ is stoquastic, for any $x\neq y$, we have $K_{x,y}\le0$ and thus $a_{x,y}\le-\epsilon/(2s)$.

\paragraph{Construction of $H_0$ and $V_\rmmain$.}
Initially, all matrix elements of $H_0,V_\rmmain$ are set to zero, and we modify specific entries as needed.

Set $(H_0)_{++}$ as the identity mapping.
Now for each $(x,y)\in E$ with $x<y$, recall the path $P_{x,y}=(v_{x,y,1},\ldots,v_{x,y,2\nGHV+3})$ from \Cref{def:sparese_to_hypercube_path_P_xy}.
We modify the following matrix elements of $H_0$ and $V_\rmmain$, while their corresponding transposed entries are set accordingly due to Hermiticity:
\begin{itemize}
	\item Define
	      \begin{equation} \label{eq:lem:pathA_alpha}
		      \alpha \coloneqq \frac{M+1}{4 \sin^2\left( \frac{\pi}{4\nGHV+8}\right)} \leq O(\nGHV^2M).
	      \end{equation}
	\item $\mel{x,x,0^s,0^s,0^\nGHV}{V_\rmmain}{v_{x,y,1}} = \mel{y,y,0^s,0^s,0^\nGHV}{V_\rmmain}{v_{x,y,2\nGHV+3}}= -\sqrt{\alpha(2\nGHV+4)}$.
	\item $\mel{v_{x,y,i}}{H_0}{v_{x,y,i+1}} = \alpha\cdot a_{x,y}^{-1}$ for $i \in [2\nGHV+2]$.
	\item $\mel{v_{x,y,i}}{H_0}{v_{x,y,i}} = -2 \alpha\cdot a_{x,y}^{-1}$ for $i \in [2\nGHV+3]$.
\end{itemize}
Note that $a_{x,y}^{-1}$ is valid due to the small offsets from \Cref{eq:lem:pathA_a-def}.

By some standard calculation of tridiagonal matrices which we defer to \Cref{sec:missing_proofs_in_sec:reduction_I}, we obtain the following properties on $H_0$ and $V_\rmmain$.

\begin{claim}\label{clm:sparse_to_hypercube_first_step_H_0_V_main}
	$H_0$ is block-diagonal with respect to each $\calH=\calH_-\oplus\calH_+$.
	$(H_0)_{++}$ has minimal eigenvalue at least $1$.
	Moreover,
	\begin{equation}\label{eq:clm:sparse_to_hypercube_first_step_H_0_V_main_2}
		\norm{H_0}\le O\left(\max_{x,y}\frac\alpha{|a_{x,y}|}\right)
	\end{equation}
	and
	\begin{equation}\label{eq:clm:sparse_to_hypercube_first_step_H_0_V_main_1}
		\mel{y,y,0^s,0^s,0^\nGHV}{(V_\rmmain)_{-+}H_0^{-1}(V_\rmmain)_{+-}}{x,x,0^s,0^s,0^\nGHV}
		=\begin{cases}
			-a_{x,y}, & x\ne y,                        \\
			-(2\nGHV+3)\sum_{\substack{y \colon y\ne x \\(x,y)\in E}}a_{x,y}, & x=y.
		\end{cases}
	\end{equation}
\end{claim}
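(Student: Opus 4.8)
The plan is to verify each assertion of \Cref{clm:sparse_to_hypercube_first_step_H_0_V_main} by direct inspection of the block structure, followed by an eigenvalue analysis of the tridiagonal blocks. First, block-diagonality of $H_0$ with respect to $\calH_-\oplus\calH_+$ is immediate from the construction: the only entries of $H_0$ are those along the interior of each path $P_{x,y}$, and by \Cref{def:sparese_to_hypercube_path_P_xy} all of the nodes $v_{x,y,i}$ for $i\in[2\nGHV+3]$ lie in $\calH_+$ (the endpoints of $P_{x,y}$ are $(x,x,e_{r_1},0^s,0^\nGHV)$ and $(y,y,0^s,e_{r_2},0^\nGHV)$, which are \emph{not} of the form $\calE\ket{z}=\ket{z,z,0^s,0^s,0^\nGHV}$), together with $(H_0)_{++}=I$ on the remaining basis states of $\calH_+$. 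So $(H_0)_{-+}=(H_0)_{+-}=0$ and $(H_0)_{--}=0$, as required. Since $V_\rmmain$ only connects $\calH_-$ states $\calE\ket{x}$ to the path endpoints $v_{x,y,1}\in\calH_+$, we also have $(V_\rmmain)_{--}=0$, consistent with the hypotheses of \Cref{lemma:2nd} (though that is used later, not here).

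For the lower bound on the eigenvalues of $(H_0)_{++}$: observe that $(H_0)_{++}$ decomposes as a direct sum over the paths $P_{x,y}$ (each contributing a $(2\nGHV+3)\times(2\nGHV+3)$ tridiagonal block) plus the identity on everything else. On the path $P_{x,y}$, the block has off-diagonal entries $\alpha\, a_{x,y}^{-1}$ and diagonal entries $-2\alpha\, a_{x,y}^{-1}$; writing $c\coloneqq -\alpha\,a_{x,y}^{-1}>0$ (positive since $a_{x,y}\le-\epsilon/(2s)<0$), this block equals $c$ times the standard path-graph Laplacian-type matrix with $2$'s on the diagonal and $-1$'s on the off-diagonal, whose eigenvalues are $4\sin^2\!\big(\tfrac{k\pi}{2(2\nGHV+4)}\big)$ for $k=1,\dots,2\nGHV+3$ (standard tridiagonal spectrum; see the deferred calculation). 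Hence the minimal eigenvalue of this block is $c\cdot 4\sin^2\!\big(\tfrac{\pi}{4\nGHV+8}\big)$, and since $c=-\alpha/a_{x,y}\ge\alpha/M$ (because $|a_{x,y}|=|K_{x,y}-\epsilon/(2s)|\le M+\epsilon/(2s)\le M+1$, so actually $c\ge\alpha/(M+1)$), the choice of $\alpha$ in \Cref{eq:lem:pathA_alpha} gives minimal eigenvalue at least $\tfrac{M+1}{4\sin^2(\pi/(4\nGHV+8))}\cdot\tfrac{1}{M+1}\cdot 4\sin^2\!\big(\tfrac{\pi}{4\nGHV+8}\big)=1$. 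The norm bound \Cref{eq:clm:sparse_to_hypercube_first_step_H_0_V_main_2} follows from the same decomposition: each tridiagonal block has norm at most $c\cdot 4\le O(\alpha/|a_{x,y}|)$, and the identity part has norm $1\le O(\max_{x,y}\alpha/|a_{x,y}|)$ since $\alpha/|a_{x,y}|=\Omega(1)$.

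For the key identity \Cref{eq:clm:sparse_to_hypercube_first_step_H_0_V_main_1}: fix $x,y$ with $(x,y)\in E$, $x<y$. The operator $(V_\rmmain)_{-+}H_0^{-1}(V_\rmmain)_{+-}$ acts within $\calH_-$, and since $V_\rmmain$ connects $\calE\ket{x}$ only to $v_{x,y,1}$ (with weight $-\sqrt{\alpha(2\nGHV+4)}$) and $\calE\ket{y}$ only to $v_{x,y,2\nGHV+3}$ (same weight), and $H_0^{-1}$ preserves each path block, the $(y,x)$ matrix element equals $\alpha(2\nGHV+4)\cdot\big[(T^{-1})_{1,\,2\nGHV+3}\big]$ where $T$ is the path-$P_{x,y}$ block of $H_0$. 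Writing $T=c\cdot L$ with $L$ the $(2\nGHV+3)$-dimensional path matrix above, one has $(T^{-1})_{1,N}=c^{-1}(L^{-1})_{1,N}$ with $N=2\nGHV+3$, and the well-known formula for the inverse of this tridiagonal matrix gives $(L^{-1})_{1,N}=\tfrac{1\cdot 1}{N+1}=\tfrac{1}{2\nGHV+4}$. Hence the element is $\alpha(2\nGHV+4)\cdot\tfrac{1}{c}\cdot\tfrac{1}{2\nGHV+4}=\alpha/c=-a_{x,y}$, proving the $x\ne y$ case. For the diagonal $x=y$ case, the $(x,x)$ element collects a contribution from \emph{every} edge $(x,y)\in E$ with $y\ne x$: either $x$ plays the role of the smaller endpoint (contributing via $v_{x,y,1}$, i.e. the $(1,1)$ entry of $T^{-1}$) or the larger endpoint (via $v_{y,x,2\nGHV+3}$, i.e. the $(N,N)$ entry of the analogous block). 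By the symmetry of $L$, $(L^{-1})_{1,1}=(L^{-1})_{N,N}=\tfrac{1\cdot N}{N+1}=\tfrac{2\nGHV+3}{2\nGHV+4}$, so each such edge contributes $\alpha(2\nGHV+4)\cdot c^{-1}\cdot\tfrac{2\nGHV+3}{2\nGHV+4}=-(2\nGHV+3)\,a_{x,y}$, and summing over all $y\ne x$ with $(x,y)\in E$ yields the stated formula. The self-loop edge $(x,x)\in E$ needs separate accounting — it has no associated path $P_{x,x}$ in the above (a loop has $\dist=0$) — but the offset term $a_{x,x}=K_{x,x}-\epsilon/(2s)$ is handled directly in the construction of $V_\rmextra$ rather than through $H_0$, so it does not appear in this expression; I should double-check the construction to confirm self-loops are routed through $V_\rmextra$ (consistent with $V_\rmextra$ being block-diagonal in \Cref{lemma:2nd}).

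The main obstacle will be the bookkeeping in the diagonal ($x=y$) case: one must carefully track, for each incident edge at $x$, whether $x$ is the smaller or larger endpoint and therefore which corner entry ($(1,1)$ or $(N,N)$) of which path block contributes, and confirm that the symmetry $(L^{-1})_{1,1}=(L^{-1})_{N,N}$ makes the count uniform. The underlying tridiagonal facts — the eigenvalues $4\sin^2(k\pi/(2(N+1)))$ and the explicit inverse entries $(L^{-1})_{ij}=\min(i,j)\big(N+1-\max(i,j)\big)/(N+1)$ — are standard and are exactly what \Cref{sec:missing_proofs_in_sec:reduction_I} is meant to supply; the only real care needed is matching indices and signs against the sign convention $c=-\alpha/a_{x,y}>0$.
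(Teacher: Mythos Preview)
Your proposal is correct and follows essentially the same approach as the paper: decompose $(H_0)_{++}$ into tridiagonal blocks along each path $P_{x,y}$, use the standard eigenvalue formula $4\sin^2(k\pi/(4\nGHV+8))$ and the explicit inverse entries $(L^{-1})_{ij}=\min(i,j)(N+1-\max(i,j))/(N+1)$ to read off the required quantities. The one point you assume but should verify is that the paths $P_{x,y}$ are pairwise vertex-disjoint (so the direct-sum decomposition is valid); the paper checks this by observing that any node $(v,v',a,b,c)$ with $a\neq 0$ forces $v=x$ and with $b\neq 0$ forces $v'=y$, which together with the rank encoding $e_{r_1},e_{r_2}$ pins down the unique edge $(x,y)$.
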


\paragraph{Construction of $V_\rmextra$.}
At this point, we define $V_\rmextra$ as
\begin{equation} \label{eq:lem:pathA_Vextra}
	V_\rmextra \coloneqq \sum_x \left( a_{x,x} -  (2\nGHV+3)\sum_{\substack{y \colon y\neq x\\ (x,y)\in E}}a_{x,y} \right)\ketbra{x,x,0^s,0^s,0^\nGHV}{x,x,0^s,0^s,0^\nGHV}.
\end{equation}
Then by \Cref{eq:lem:pathA_a-def} and \Cref{clm:sparse_to_hypercube_first_step_H_0_V_main}, we have
\begin{equation}
	V_\rmextra-(V_\rmmain)_{-+}H_0^{-1}(V_\rmmain)_{+-}=\overline{K - \frac{\epsilon}{2s}E}  =  \overline{K} -\frac{\epsilon}{2s}\bar{E},
\end{equation}
where $\overline{A} \coloneqq \calE A \calE^\dagger$ for any $A$, and $E$ denotes the unweighted adjacency matrix of $G$ (with a slight abuse of notation).

\paragraph{Parameters in the reduction.}
Note that $\norm{\bar{E}} = \norm{E} \leq \norm{E}_1 \leq s$ by \Crefitem{fct:mat-ineq}{itm:fct:mat-ineq_2}.
Therefore all preconditions of \Cref{lemma:2nd} are met.
Applying \Cref{lemma:2nd}, we obtain some $\Delta \leq \poly(1/\epsilon,\Lambda)$ such that $(H',\calE)$ simulates $K$ with error $\epsilon$, where
\begin{equation}
	\Lambda = \max\{ \norm{V_\rmmain},\norm{V_\rmextra}\} \leq \poly(\nGHV,s,M).
\end{equation}
Hence we get $\Delta \leq \poly(1/\epsilon,\nGHV,s,M)$.
For $H_0$, \Cref{clm:sparse_to_hypercube_first_step_H_0_V_main} gives
\begin{equation}
	\norm{H_0} \leq O \left( \max_{x,y} \left\{ \frac{\alpha}{\abs{a_{x,y}}} \right\} \right) \leq O\left(\frac{\nGHV^2M}{\epsilon/s}\right) \leq \poly(1/\epsilon,\nGHV,s,M).
\end{equation}
Combining these estimates, we get $\norm{H'} \leq \poly(1/\epsilon,\nGHV,s,M)=\poly(\nGHV)$ as $s,M\le O(\mGHV)\le\poly(\nGHV)$ by \Cref{thm:GHV21}.

Note that $(H',\calE)$ simulates $K$ with error $\epsilon=1/2$.
By \Crefitem{thm:GHV21}{itm:thm:GHV21_2} and \Cref{lem:gsim}, we have the following claim.

\begin{claim} \label{clm:lem:pathA_1}
	$H'(t)$ has spectral gap at least $\Omega(\sqrt{\mGHV})$ and norm $\poly(\nGHV)$.
\end{claim}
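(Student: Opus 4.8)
The plan is to obtain \Cref{clm:lem:pathA_1} as an essentially immediate corollary of the ground-state simulation lemma (\Cref{lem:gsim}), combined with the spectral data of the GHV path in \Crefitem{thm:GHV21}{itm:thm:GHV21_2} and the norm estimate already derived in the paragraph above. For the norm, nothing further is needed: we have shown $\norm{H'(t)}\le\poly(1/\epsilon,\nGHV,s,M)$, and since $\epsilon=1/2$ is an absolute constant and $s,M\le O(\mGHV)\le\poly(\nGHV)$ by \Cref{thm:GHV21}, this is $\poly(\nGHV)$; crucially the same estimates bound $\Lambda=\max\{\norm{V_\rmmain},\norm{V_\rmextra}\}$ and $\norm{H_0}$ \emph{uniformly} in $t$, so a single $\Delta=\poly(\nGHV)$ meeting the hypothesis of \Cref{lemma:2nd} works simultaneously along the whole path.

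For the spectral gap I argue pointwise. Fix $t$. By \Crefitem{thm:GHV21}{itm:thm:GHV21_2}, $K(t)$ has spectral gap $\delta\ge\Omega(\sqrt{\mGHV})>0$ (for $\nGHV$ large), hence a non-degenerate ground state, so the hypothesis of \Cref{lem:gsim} is satisfied. Since $(H'(t),\calE)$ simulates $K(t)$ with error $\epsilon=1/2$, \Cref{lem:gsim} gives that the spectral gap of $H'(t)$ is at least $\delta-2\epsilon=\delta-1\ge\Omega(\sqrt{\mGHV})-1$. Since $\sqrt{\mGHV}=\nGHV^{8/5-o(1)}\to\infty$, for $\nGHV$ large this additive constant is absorbed into $\Omega(\sqrt{\mGHV})$, and taking the infimum over $t\in[0,1]$ yields the claimed $\Omega(\sqrt{\mGHV})$ spectral gap of the path.

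There is no real obstacle here --- the claim is a one-line consequence of \Cref{lem:gsim} and \Cref{thm:GHV21} --- but the single point that must be checked rather than assumed is uniformity in $t$: the entries $a_{x,y}=a_{x,y}(t)$, and hence $H_0,V_\rmmain,V_\rmextra$, depend on $t$, so one must verify that the error-$\tfrac12$ simulation and all parameter bounds hold simultaneously for every $t$. This is exactly what the uniform bound $M=O(\mGHV)$ on $\norm{K(t)}$ delivers, as noted above, so a single $\Delta$ verifies \Cref{lemma:2nd} across the path and the conclusion of \Cref{lem:gsim} holds throughout. For the finitely many small values of $\nGHV$ where $\Omega(\sqrt{\mGHV})$ carries no content, the claim is vacuous under the padding convention used in the proof of \Cref{thm:main}.
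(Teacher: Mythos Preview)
Your proposal is correct and follows essentially the same approach as the paper: the paper derives the claim in one line from \Cref{lem:gsim} applied to the simulation $(H'(t),\calE)$ of $K(t)$ with error $\epsilon=1/2$, together with the spectral data from \Crefitem{thm:GHV21}{itm:thm:GHV21_2} and the norm bound established in the preceding paragraph. Your additional remarks on uniformity in $t$ are sound and make explicit what the paper leaves implicit.
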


By mechanically expanding the construction above, it is not hard to see that $H'(t)$ is Lipschitz.
This is stated in the following claim and proved in \Cref{sec:missing_proofs_in_sec:reduction_I}.

\begin{claim}\label{clm:lem:pathA_2}
	$H'(t)$ is $\poly(\nGHV)$-Lipschitz.
\end{claim}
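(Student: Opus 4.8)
I would prove this by bounding $\norm{H'(t)-H'(t')}$ through entrywise Lipschitz estimates combined with the sparsity of $H'(t)$. Recall $H'(t)=\Delta H_0(t)+\Delta^{1/2}V_\rmmain+V_\rmextra(t)$. The first observation is that $\alpha$, $V_\rmmain$, and $\Delta$ may all be fixed independently of $t$: $\alpha$ depends only on the uniform norm bound $M$ (\Cref{eq:lem:pathA_alpha}), the entries of $V_\rmmain$ depend only on $\alpha$ and $\nGHV$, and in \Cref{lemma:2nd} the value of $\Delta$ is controlled by $\epsilon$ and $\Lambda=\max\{\norm{V_\rmmain},\norm{V_\rmextra}\}\le\poly(\nGHV)$ uniformly over $t$. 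Hence the entire $t$-dependence of $H'(t)$ passes through $H_0(t)$ and $V_\rmextra(t)$, which in turn depend on $t$ only through the quantities $a_{x,y}(t)=\mel{x}{K(t)}{y}-\epsilon/(2s)$.

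Next I would control $a_{x,y}(t)$ and $a_{x,y}(t)^{-1}$. Since $K(t)$ is the piecewise-linear GHV path $H_{\rm init}\pathto H_{\rm graph}\pathto H_{\rm final}$ of norm $O(\mGHV)=\poly(\nGHV)$ by \Cref{thm:GHV21}, each matrix element $t\mapsto\mel{x}{K(t)}{y}$ is $\poly(\nGHV)$-Lipschitz and bounded in absolute value by $O(\mGHV)$, so the same holds for $a_{x,y}(t)$. Crucially, $K(t)$ is stoquastic, so for $x\ne y$ we have $\mel{x}{K(t)}{y}\le0$ and therefore $\abs{a_{x,y}(t)}\ge\epsilon/(2s)=\Omega(1/\mGHV)$; together with the uniform upper bound this gives $\bigl|a_{x,y}(t)^{-1}-a_{x,y}(t')^{-1}\bigr|=\abs{a_{x,y}(t)-a_{x,y}(t')}/\bigl(\abs{a_{x,y}(t)}\abs{a_{x,y}(t')}\bigr)\le\poly(\nGHV)\abs{t-t'}$. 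Consequently every entry of $H_0(t)$ — which by construction is $0$, $1$, or a multiple of some $\alpha\,a_{x,y}(t)^{-1}$ — is $\poly(\nGHV)$-Lipschitz, and every diagonal entry of $V_\rmextra(t)$, which by \Cref{eq:lem:pathA_Vextra} is a $\poly(\nGHV)$-term linear combination of the $a_{x,y}(t)$'s with $O(\nGHV)$-bounded coefficients, is $\poly(\nGHV)$-Lipschitz as well.

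The final step converts entrywise Lipschitzness into an operator-norm bound. Both $H_0(t)$ and $V_\rmextra(t)$ are Hermitian and supported on the $\nA$-dimensional hypercube interaction graph (self-loops included), hence have at most $\nA+1=\poly(\nGHV)$ nonzero entries per row; for any such matrix $A$, \Crefitem{fct:mat-ineq}{itm:fct:mat-ineq_2} gives $\norm{A}\le\norm{A}_\infty\le(\nA+1)\norm{A}_\rmmax$. Applying this to $A=H_0(t)-H_0(t')$ and $A=V_\rmextra(t)-V_\rmextra(t')$ and invoking the entrywise bounds above yields $\norm{H_0(t)-H_0(t')}\le\poly(\nGHV)\abs{t-t'}$ and $\norm{V_\rmextra(t)-V_\rmextra(t')}\le\poly(\nGHV)\abs{t-t'}$; since $V_\rmmain$ is constant in $t$, $\norm{H'(t)-H'(t')}\le\Delta\norm{H_0(t)-H_0(t')}+\norm{V_\rmextra(t)-V_\rmextra(t')}\le\poly(\nGHV)\abs{t-t'}$, proving the claim. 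No part of this is genuinely difficult; the only points that require attention are the stoquasticity observation that keeps $a_{x,y}(t)$ bounded away from $0$ (so that $a_{x,y}^{-1}$ is truly Lipschitz, not merely continuous) and the use of the hypercube sparsity to pass from entrywise to operator-norm control — while choosing $\Delta$ uniformly in $t$ is the one piece of bookkeeping worth stating explicitly.
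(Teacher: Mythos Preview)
Your proof is correct and follows essentially the same route as the paper: both bound the Lipschitz constants of $H_0(t)$ and $V_\rmextra(t)$ by tracing the $t$-dependence back to $a_{x,y}(t)$ and $a_{x,y}(t)^{-1}$, using the stoquastic offset to keep $a_{x,y}^{-1}$ under control, and then combine with $\Delta\le\poly(\nGHV)$. The only cosmetic difference is that the paper passes from entrywise to operator-norm Lipschitzness via the block-diagonal structure of $H_0$ (each block is a scalar multiple of a fixed tridiagonal matrix), whereas you use the hypercube sparsity together with \Crefitem{fct:mat-ineq}{itm:fct:mat-ineq_2}; your observation that $V_\rmmain$ is constant in $t$ is also slightly cleaner than the paper's treatment.
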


\subsection{Second Step -- Cleaner Start and End}\label{sec:reduction_I_second_step}

Inheriting the notation from \Cref{sec:reduction_I_first_step}, we first append a linear path at the end of $H'(1)$.
Write $H'(1) = \Delta H_0+\Delta^{1/2}V_\rmmain+V_\rmextra$.
Recall that $K(1) = -\mGHV \ketbra{w}{w}$.
Let $\ket{u} \coloneqq \calE \ket{w}$ where $\calE$ is the encoding isometry from \Cref{eq:lem:pathA_calE}.
The path to be appended is
\begin{equation} \label{eq:lem:pathA_tail}
	H'(1)=\Delta H_0+\Delta^{1/2}V_\rmmain+V_\rmextra \pathto \Delta H_0+V_\rmextra \pathto -\mGHV \ketbra{u}{u}. \tag{P1}
\end{equation}

We will show that \Cref{eq:lem:pathA_tail} preserves the spectral gap.

\begin{claim}\label{clm:lem:pathA_tail}
	The \apath\ \Cref{eq:lem:pathA_tail} has spectral gap $\Omega(\mGHV)$ and norm $\poly(\nGHV)$.
	Consequently, it is $\poly(\nGHV)$-Lipschitz by piecewise linearity.
\end{claim}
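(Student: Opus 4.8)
The plan is to analyze the two linear segments of \Cref{eq:lem:pathA_tail} separately, using that both of its endpoints $\Delta H_0+V_\rmextra$ and $-\mGHV\ketbra{u}{u}$ are block-diagonal with respect to $\calH=\calH_-\oplus\calH_+$: $H_0$ is block-diagonal by \Cref{clm:sparse_to_hypercube_first_step_H_0_V_main}, $V_\rmextra$ is supported entirely on $\calH_-$ by \Cref{eq:lem:pathA_Vextra}, and $\ket{u}=\calE\ket{w}\in\calH_-$. Throughout I use the quantitative facts established in \Cref{sec:reduction_I_first_step}: $\Delta,\norm{H_0},\norm{V_\rmmain},\norm{V_\rmextra}\le\poly(\nGHV)$, $\Delta$ is large enough for \Cref{lemma:2nd}, and $K(1)=-\mGHV\ketbra{w}{w}$, so that $a_{x,y}=-\epsilon/(2s)=-1/(4s)$ on every edge $x\ne y$ of $G$ while $a_{x,x}=-\mGHV[x=w]-1/(4s)$. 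Since $s=\Theta(\mGHV)$ and $\mGHV=\nGHV^{16/5-o(1)}\gg\nGHV$, all ``lower-order'' quantities below are of size $O(\nGHV)$, while the gap comes from the isolated $-\mGHV$ term.

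For the first segment $H'(1)\pathto\Delta H_0+V_\rmextra$, at parameter $\theta\in[0,1]$ the Hamiltonian is $H_\theta:=\Delta H_0+(1-\theta)\Delta^{1/2}V_\rmmain+V_\rmextra$, which is exactly of the form treated by \Cref{lemma:2nd} with $V_\rmmain$ replaced by $(1-\theta)V_\rmmain$; damping $V_\rmmain$ only decreases the relevant $\Lambda$, so the same $\Delta$ still meets the hypothesis. Choosing the target $H^{(\theta)}_\rmtarget$ on $\calK$ so that the logical target Hamiltonian $\overline{H^{(\theta)}_\rmtarget}$ equals $(V_\rmextra)_{--}-(1-\theta)^2(V_\rmmain)_{-+}H_0^{-1}(V_\rmmain)_{+-}$ makes the left side of \Cref{2nd} zero, so \Cref{lemma:2nd} shows $(H_\theta,\calE)$ simulates $H^{(\theta)}_\rmtarget$ with error $\epsilon=1/2$. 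Combining \Cref{eq:lem:pathA_Vextra}, \Cref{eq:clm:sparse_to_hypercube_first_step_H_0_V_main_1}, and the explicit values of $a_{x,y}$, a routine computation gives $H^{(\theta)}_\rmtarget=-\mGHV\ketbra{w}{w}+R$, where $R$ is Hermitian with diagonal entries $O(\nGHV)$ and at most $s$ off-diagonal entries per row each of magnitude $O(1/s)$, hence $\norm{R}=O(\nGHV)$ by \Cref{fct:mat-ineq}. Weyl's inequality (\Cref{fct:weyl}) then gives $H^{(\theta)}_\rmtarget$ a non-degenerate ground state and spectral gap at least $\mGHV-2\norm{R}=\Omega(\mGHV)$, so by \Cref{lem:gsim} the gap of $H_\theta$ is at least $\Omega(\mGHV)-2\epsilon=\Omega(\mGHV)$.

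For the second segment $\Delta H_0+V_\rmextra\pathto-\mGHV\ketbra{u}{u}$, at parameter $\theta\in[0,1]$ the Hamiltonian $(1-\theta)(\Delta H_0+V_\rmextra)+\theta\cdot(-\mGHV\ketbra{u}{u})$ is block-diagonal. On $\calH_+$ it equals $(1-\theta)\Delta(H_0)_{++}$ (both $V_\rmextra$ and $\ketbra{u}{u}$ act trivially there), with eigenvalues $\ge(1-\theta)\Delta\ge0$; on $\calH_-$, using $(H_0)_{--}=0$, it equals $-\mGHV\ketbra{u}{u}+(1-\theta)R'$ with $R':=(V_\rmextra)_{--}+\mGHV\ketbra{u}{u}$ diagonal with entries in $[-1,O(\nGHV)]$ by \Cref{eq:lem:pathA_Vextra}, so $\norm{R'}=O(\nGHV)$. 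Thus along this whole segment the ground state is the unique ground state of the $\calH_-$ block, with energy $\le-\mGHV+O(\nGHV)$, while every other eigenvalue of either block is at least $-O(\nGHV)$; the spectral gap is therefore $\ge\mGHV-O(\nGHV)=\Omega(\mGHV)$.

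Finally, on the first segment $\norm{H_\theta}\le\Delta\norm{H_0}+\Delta^{1/2}\norm{V_\rmmain}+\norm{V_\rmextra}=\poly(\nGHV)$, and on the second segment the Hamiltonian is a convex combination of $\Delta H_0+V_\rmextra$ and $-\mGHV\ketbra{u}{u}$, both of norm $\poly(\nGHV)$; hence the path has norm $\poly(\nGHV)$, and being piecewise linear with endpoints of norm $\poly(\nGHV)$ it is $\poly(\nGHV)$-Lipschitz. The main obstacle is the first segment: one must justify re-invoking \Cref{lemma:2nd} for the damped perturbation $(1-\theta)V_\rmmain$ (legitimate because its hypotheses are monotone in $\norm{V_\rmmain}$ and the error term can be driven to exactly zero by a suitable choice of target), and then carefully track the offsets $\epsilon/(2s)$ and the factor $2\nGHV+3$ to confirm the effective target stays within $O(\nGHV)$ of $-\mGHV\ketbra{w}{w}$ --- which is precisely why the gap here is $\Omega(\mGHV)$ rather than merely $\Omega(\sqrt{\mGHV})$.
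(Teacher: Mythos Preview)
Your proposal is correct and follows essentially the same approach as the paper: re-invoke \Cref{lemma:2nd} with the damped perturbation $(1-\theta)V_\rmmain$ on the first segment, and do a direct block-diagonal analysis on the second. The only difference is bookkeeping: the paper observes that the effective target simplifies to $-\mGHV\ketbra{w}{w}-\tfrac{\lambda^2\epsilon}{2s}E+c\cdot I$, so the residual has norm $O(1)$ and the gap is $\mGHV-O(1)$; you instead bound the residual by $O(\nGHV)$ and invoke $\mGHV=\nGHV^{16/5-o(1)}\gg\nGHV$, which is coarser but perfectly valid here.
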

\begin{proof}
	The norm bound is evident by \Cref{clm:lem:pathA_1} and we focus on the spectral gap.

	We start with the first half of \Cref{eq:lem:pathA_tail}.
	For any $\lambda \in [0,1]$, define $L(\lambda) \coloneqq \Delta H_0+\lambda \Delta^{1/2}V_\rmmain+V_\rmextra$.
	By \Cref{lemma:2nd}, $(L(\lambda),\calE)$ simulates the Hamiltonian
	\begin{equation} \label{eq:lem:pathA_Ltarget}
		\calE^\dagger \left((V_\rmextra)_{--} - \lambda^2 (V_\rmmain)_{-+} H_0^{-1} (V_\rmmain)_{+-}\right) \calE
	\end{equation}
	with error $\epsilon = 1/2$ by the construction in \Cref{sec:reduction_I_first_step}.
	By \Cref{lem:gsim}, this implies that the spectral gap of $L(\lambda)$ is at least the spectral gap of \Cref{eq:lem:pathA_Ltarget} minus $2\epsilon$.
	By \Cref{eq:lem:pathA_Vextra} and \Cref{eq:lem:pathA_a-def}, we have
	\begin{align}
		\calE^\dag(V_\rmextra)_{--}\calE
		 & = -\mGHV\cdot\ketbra{w}{w} + \sum_{x \in \binary^\nGHV} (s-2)(2\nGHV+3)\cdot\frac{\epsilon}{2s}\cdot\ketbra{x}{x}, \\
		 & =-\mGHV\cdot\ketbra{w}{w} +\frac{\epsilon(s-2)(2\nGHV+3)}{2s}\cdot I,
	\end{align}
	where we also use the fact that $K(1)=-\mGHV\ketbra w$.
	By \Cref{clm:sparse_to_hypercube_first_step_H_0_V_main}, we also have
	\begin{align}
		\calE^\dag(V_\rmmain)_{-+} H_0^{-1} (V_\rmmain)_{+-}\calE
		 & = (s-1)(2\nGHV+3)\sum_{x \in \binary^\nGHV}\frac{\epsilon}{2s}\cdot \ketbra{x}{x}
		+\sum_{\substack{x,y \colon x \neq y                                                                   \\ (x,y)\in E}} \frac{\epsilon}{2s}\cdot \ketbra{x}{y}\\
		 & =\left(\frac{\epsilon(s-1)(2\nGHV+3)}{2s}-\frac\epsilon{2s}\right)\cdot I+\frac\epsilon{2s}\cdot E,
	\end{align}
	where $E$ denotes the unweighted adjacency matrix of $G$, with a slight abuse of notation.
	Thus, \Cref{eq:lem:pathA_Ltarget} has the form
	\begin{equation}\label{eq:clm:lem:pathA_tail_1}
		-\mGHV\ketbra{w}{w}- \frac{\lambda^2 \epsilon}{2s}\cdot E+c(s,\nGHV,\epsilon,\lambda)\cdot I,
	\end{equation}
	where $c(s,\nGHV,\epsilon,\lambda)$ is some constant depending only on $\lambda,s,\nGHV,\epsilon$.
	Therefore, its spectral gap is at least $\mGHV - \lambda^2 \epsilon$ by \Cref{fct:weyl} and $\norm{E} \leq \norm{E}_1 = s$.
	Together with \Cref{lem:gsim}, this implies that the first half of the \apath\ \Cref{eq:lem:pathA_tail} is at least $\mGHV - \lambda^2 \epsilon - 2\epsilon\ge \mGHV-3\epsilon=\Omega(\mGHV)$.

	For the second half of \Cref{eq:lem:pathA_tail}, note that it is always block-diagonal with respect to $\calH_{-}$ and $\calH_{+}$.
	The Hamiltonian restricted to $\calH_+$ gradually shifts from $\Delta (H_0)_{++}$ to $0$; and, recalling \Cref{eq:clm:lem:pathA_tail_1}, the restriction to $\calH_-$ shifts from $-\mGHV \ketbra{u}{u} + c(s,\nGHV,\epsilon,0)\cdot I_{--}$ to $-\mGHV \ketbra{u}{u}$.
	Since $\Delta (H_0)_{++}$ has eigenvalues greater than those of $-\mGHV \ketbra{u}{u} + c(s,\nGHV,\epsilon,0)\cdot I_{--}$ by \Cref{lemma:2nd}, it follows that the spectral gap is always exactly $\mGHV$.
	This completes the proof of \Cref{clm:lem:pathA_tail}.
\end{proof}

Write $H'(0)=\Delta \tilde H_0+\Delta^{1/2}\tilde V_\rmmain+\tilde V_\rmextra$.
We append the following \apath\  to the beginning of $H'(t)$:
\begin{equation} \label{eq:lem:pathA_head}
	-\mGHV \ketbra{0^{\nA}}{0^{\nA}} \pathto \Delta \tilde H_0+\tilde V_\rmextra \pathto \tilde H_0+\Delta^{1/2}\tilde V_\rmmain+\tilde V_\rmextra=H'(0). \tag{P2}
\end{equation}

Recall that $K(0)=-\mGHV\ketbra{0^\nGHV}{0^\nGHV}$ and $\calE\ket{0^\nGHV}=\ket{0^{\nA}}$.
By an analogous argument as in \Cref{clm:lem:pathA_tail}, we have the following claim.

\begin{claim} \label{clm:lem:pathA_head}
	The \apath\ \Cref{eq:lem:pathA_head} has spectral gap $\Omega(\mGHV)$ and norm $\poly(\nGHV)$.
	Consequently, it is $\poly(\nGHV)$-Lipschitz by piecewise linearity.
\end{claim}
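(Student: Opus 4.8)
The plan is to mirror the proof of \Cref{clm:lem:pathA_tail}, exploiting the fact that \Cref{eq:lem:pathA_head} is structurally the ``reverse'' of \Cref{eq:lem:pathA_tail}, with $K(0)=-\mGHV\ketbra{0^\nGHV}{0^\nGHV}$ playing the role that $K(1)=-\mGHV\ketbra{w}{w}$ played there. Since $\calE\ket{0^\nGHV}=\ket{0^{\nA}}$ and $K(0)$ is, like $K(1)$, a rank-one diagonal projector, the computations of logical target Hamiltonians in the proof of \Cref{clm:lem:pathA_tail} transfer verbatim after substituting $w\mapsto 0^\nGHV$. The norm bound on the whole path is immediate: each of the three corner Hamiltonians $-\mGHV\ketbra{0^{\nA}}{0^{\nA}}$, $\Delta\tilde H_0+\tilde V_\rmextra$, and $H'(0)$ has operator norm $\poly(\nGHV)$ (using $\Delta\le\poly(\nGHV)$, \Cref{clm:sparse_to_hypercube_first_step_H_0_V_main}, and \Cref{clm:lem:pathA_1}), so by convexity the path has norm $\poly(\nGHV)$, and a piecewise-linear path through $\poly(\nGHV)$-norm Hamiltonians is automatically $\poly(\nGHV)$-Lipschitz. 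It remains to control the spectral gap on each of the two linear segments.

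For the segment $\Delta\tilde H_0+\tilde V_\rmextra\pathto H'(0)$, I would argue exactly as for the first half of \Cref{eq:lem:pathA_tail}. For $\lambda\in[0,1]$ set $L'(\lambda)\coloneqq\Delta\tilde H_0+\lambda\Delta^{1/2}\tilde V_\rmmain+\tilde V_\rmextra$; by \Cref{lemma:2nd}, $(L'(\lambda),\calE)$ simulates $\calE^\dagger\big((\tilde V_\rmextra)_{--}-\lambda^2(\tilde V_\rmmain)_{-+}\tilde H_0^{-1}(\tilde V_\rmmain)_{+-}\big)\calE$ with error $\epsilon=1/2$. Repeating the computations in the proof of \Cref{clm:lem:pathA_tail} with $w$ replaced by $0^\nGHV$ shows this logical Hamiltonian equals $-\mGHV\ketbra{0^\nGHV}{0^\nGHV}-\tfrac{\lambda^2\epsilon}{2s}E+c'\cdot I$ for some scalar $c'=c'(s,\nGHV,\epsilon,\lambda)$, whose spectral gap is at least $\mGHV-\lambda^2\epsilon$ by \Cref{fct:weyl} and $\norm{E}\le s$. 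Then \Cref{lem:gsim} gives that $L'(\lambda)$ has spectral gap at least $\mGHV-\lambda^2\epsilon-2\epsilon\ge\mGHV-3\epsilon=\Omega(\mGHV)$.

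For the segment $-\mGHV\ketbra{0^{\nA}}{0^{\nA}}\pathto\Delta\tilde H_0+\tilde V_\rmextra$, I would mimic the block-diagonal analysis of the second half of \Cref{eq:lem:pathA_tail}. Both endpoints are block-diagonal with respect to $\calH=\calH_-\oplus\calH_+$: $\ket{0^{\nA}}=\calE\ket{0^\nGHV}\in\calH_-$, $\tilde H_0$ is block-diagonal with $(\tilde H_0)_{--}=0$ by \Cref{clm:sparse_to_hypercube_first_step_H_0_V_main}, and $\tilde V_\rmextra$ is supported on $\calH_-$ by \Cref{eq:lem:pathA_Vextra}; hence the linear interpolation is block-diagonal. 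Restricted to $\calH_+$, at interpolation parameter $\mu\in[0,1]$ it equals $\mu\Delta(\tilde H_0)_{++}$, whose eigenvalues are at least $\mu\Delta$; restricted to $\calH_-$ it equals $-\mGHV\ketbra{0^{\nA}}{0^{\nA}}+\mu c''\cdot I_{--}$ with $c''\coloneqq c'(s,\nGHV,\epsilon,0)=O(\nGHV)$, whose spectral gap is exactly $\mGHV$. Since $\Delta$ is a sufficiently large polynomial in $\nGHV$ while $c''=O(\nGHV)$, the $\calH_+$ block never dips below the second-smallest eigenvalue of the $\calH_-$ block, so the overall spectral gap is exactly $\mGHV$ throughout this segment. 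Combining the two segments yields spectral gap $\Omega(\mGHV)$ for \Cref{eq:lem:pathA_head}, and the Lipschitz claim follows from piecewise linearity as above.

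I expect no serious obstacle here: the argument is essentially bookkeeping. The only points requiring care are (i) checking that the logical target Hamiltonian computation of \Cref{clm:lem:pathA_tail} indeed goes through unchanged when $K(1)$ is swapped for the equally simple $K(0)$, and (ii) verifying in the first segment that the perturbation scale $\Delta$ dominates the scalar shift $c''$, which follows from the lower bound on $\Delta$ supplied by \Cref{lemma:2nd} (recall $\Delta\ge O(\Lambda^6)$ and $\Lambda\ge\norm{\tilde V_\rmextra}\ge\mGHV\gg c''$).
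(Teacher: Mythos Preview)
Your proposal is correct and takes essentially the same approach as the paper. The paper's own proof consists of a single sentence observing that $K(0)=-\mGHV\ketbra{0^\nGHV}{0^\nGHV}$ and $\calE\ket{0^\nGHV}=\ket{0^{\nA}}$, and then invoking ``an analogous argument as in \Cref{clm:lem:pathA_tail}''; you have simply written out what that analogous argument is, segment by segment, with the substitution $w\mapsto 0^\nGHV$, $\ket{u}\mapsto\ket{0^{\nA}}$.
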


Finally, consider the following \apath, which we append in front of \Cref{eq:lem:pathA_head}:
\begin{equation} \label{eq:lem:pathA_headhead}
	-\mGHV \sum_{i \in [\nA]} X_i \pathto -\mGHV \sum_{i \in [\nA]} Z_i \pathto -\mGHV \ketbra{0^{\nA}}{0^{\nA}}. \tag{P3}
\end{equation}

\begin{claim} \label{clm:lem:pathA_4}
	The \apath\ \Cref{eq:lem:pathA_headhead} has spectral gap at least $\mGHV$ and norm $O(\mGHV)$.
	Consequently, it is $O(\mGHV)$-Lipschitz by piecewise linearity.
\end{claim}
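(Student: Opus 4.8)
The path \Cref{eq:lem:pathA_headhead} is piecewise linear with two linear pieces, and since every Hamiltonian that appears is a sum of single‑qubit operators, the plan is simply to diagonalize each piece in closed form: compute the two smallest eigenvalues (giving the spectral gap) and the largest eigenvalue in magnitude (giving the norm), and then combine.

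For the first piece $-\mGHV\sum_i X_i\pathto-\mGHV\sum_i Z_i$, I would write the interpolant as $H_A(\theta)=\sum_{i\in[\nA]}I^{\otimes(i-1)}\otimes h(\theta)\otimes I^{\otimes(\nA-i)}$ with $h(\theta):=-\mGHV\bigl((1-\theta)X+\theta Z\bigr)$, whose two eigenvalues are $\pm\mGHV r(\theta)$ for $r(\theta):=\sqrt{(1-\theta)^2+\theta^2}$. By \Cref{fct:ksum}, the spectrum of $H_A(\theta)$ is $\{\mGHV r(\theta)(2j-\nA):0\le j\le\nA\}$, so its ground energy is $-\nA\mGHV r(\theta)$, its second eigenvalue is $-(\nA-2)\mGHV r(\theta)$, the gap is $2\mGHV r(\theta)$, and the norm is $\nA\mGHV r(\theta)$. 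Since $r$ ranges over $[1/\sqrt2,1]$ on $[0,1]$, this piece has gap at least $\sqrt2\,\mGHV\ge\mGHV$ and norm at most $\nA\mGHV$.

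For the second piece $-\mGHV\sum_i Z_i\pathto-\mGHV\ketbra{0^{\nA}}{0^{\nA}}$, the interpolant $H_B(\theta)=-\mGHV\bigl((1-\theta)\sum_i Z_i+\theta\ketbra{0^{\nA}}{0^{\nA}}\bigr)$ is diagonal, taking value $-\mGHV\bigl((1-\theta)(\nA-2w)+\theta\cdot[w=0]\bigr)$ on a basis string of Hamming weight $w$. The elementary inequality $\theta+2w(1-\theta)>0$ for $\theta\in[0,1]$ and $w\ge1$ shows that $\ket{0^{\nA}}$ stays the unique ground state, with energy $-\mGHV\bigl((1-\theta)\nA+\theta\bigr)$, while among the remaining basis states the minimum energy is attained at weight $w=1$ and equals $-(1-\theta)(\nA-2)\mGHV$ (by monotonicity in $w$ and $1-\theta\ge0$). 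Hence this piece has gap $\mGHV(2-\theta)\ge\mGHV$ and norm at most $\nA\mGHV$ (the extremal magnitude, near the all‑ones string, being $(1-\theta)\nA\mGHV$).

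Putting the two pieces together yields spectral gap at least $\mGHV$ and norm at most $\nA\mGHV=\poly(\nGHV)$, and since a linear segment from $A$ to $B$ is $\norm{B-A}$‑Lipschitz with $\norm{B-A}\le\norm A+\norm B$, the path is $O(\nA\mGHV)$‑Lipschitz. I do not expect any real obstacle here — the computation is routine — the only place needing a moment's care being the eigenvalue bookkeeping along the second piece, namely confirming that $\ket{0^{\nA}}$ remains the ground state throughout and that the competing eigenvalue lies in the weight‑one subspace rather than at some intermediate weight; both facts follow from the displayed inequality.
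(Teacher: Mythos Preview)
Your proof is correct and follows essentially the same approach as the paper: reduce the first segment via \Cref{fct:ksum} to the single-qubit path $-\mGHV X \pathto -\mGHV Z$ (with gap $2\mGHV r(\theta)\ge\sqrt2\,\mGHV$), and analyze the second segment directly as a diagonal Hamiltonian (the paper simply asserts the gap decreases linearly from $2\mGHV$ to $\mGHV$, which your weight-$w$ computation makes explicit). Your norm and Lipschitz bounds of $\nA\mGHV$ are the right ones---the claim's stated $O(\mGHV)$ appears to be a minor imprecision, since $\nA=\Theta(\mGHV)$ and only $\poly(\nGHV)$ is used downstream.
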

\begin{proof}
	The second half of \Cref{eq:lem:pathA_headhead} is always diagonal with spectral gap decreasing linearly from $2\mGHV$ to $\mGHV$.
	For the first half, we apply \Cref{fct:ksum} and its spectral gap is identical to that of the single-qubit \apath\ $-\mGHV X \pathto -\mGHV Z$.
	By direct calculation, the spectral gap of $\lambda X + (1-\lambda)Z$ is given by
	$
		2\sqrt{2\lambda^2 - 2\lambda + 1} \geq \sqrt{2}.
	$
	Hence the first half of \Cref{eq:lem:pathA_headhead} has spectral gap at least $\sqrt{2} \mGHV$.
\end{proof}

\subsection{Putting Everything Together}\label{sec:reduction_I_together}

Finally we prove \Cref{lem:pathA} by combining the hypercube \apath\ in \Cref{sec:reduction_I_first_step} and the additional decoration in \Cref{sec:reduction_I_second_step}.

\begin{proof}[Proof of \Cref{lem:pathA}]
	Recall \apaths\  $H'([0,1])$ from \Cref{sec:reduction_I_first_step} and \Cref{eq:lem:pathA_headhead}, \Cref{eq:lem:pathA_head}, \Cref{eq:lem:pathA_tail} from \Cref{sec:reduction_I_second_step}.
	We sequentially connect \Cref{eq:lem:pathA_headhead}, \Cref{eq:lem:pathA_head}, $H'([0,1])$, and \Cref{eq:lem:pathA_tail}.
	This forms a \apath\ $H:[0,4] \to \mathcal{H}$. By rescaling the domain linearly from $[0,4]$ to $[0,1]$, we obtain the final path $H(t)$.

	By \Cref{clm:lem:pathA_1}, \Cref{clm:lem:pathA_2}, \Cref{lem:pathA}, \Cref{clm:lem:pathA_head}, and \Cref{clm:lem:pathA_4}, $H(t)$ has spectral gap $\Omega(\sqrt{\mGHV})$, norm $\poly(\nGHV)$, and it is $\poly(\nGHV)$-Lipschitz.
	This verifies \Crefitem{lem:pathA}{itm:lem:pathA_1} and \Crefitem{lem:pathA}{itm:lem:pathA_3}.

	For \Crefitem{lem:pathA}{itm:lem:pathA_2}, we note that any query to $\mel{x}{H(t)}{y}$ reduces to queries of the $(x,y)$ entries of $H_0(t')$, $V_\rmmain(t')$, and $V_\rmextra(t')$ for some $t' \coloneqq t'(t)$. By our construction, these can be implemented using $O(1)$, $O(1)$, and $O(s)$ queries, respectively, to the entries of $K(t')$.
	Consequently, any $q$-query classical algorithm that finds $\ket{u}$ can also find $\ket{w}$, since $\ket{w} = \calE^\dagger \ket{u}$. This transformation results in an $O(qs)$-query classical algorithm with oracle access to $K(t')$.
	By \Crefitem{thm:GHV21}{itm:thm:GHV21_1}, we have $O(qs) \geq \exp(\nGHV^{1/5-o(1)})$.
	Since $s=\poly(\nGHV)$, this implies that $q \geq\exp(\nGHV^{1/5-o(1)})$.
\end{proof}

\subsection{Deferred Proofs}\label{sec:missing_proofs_in_sec:reduction_I}

\begin{proof}[Proof of \Cref{clm:sparse_to_hypercube_first_step_H_0_V_main}]
	Observe that every index $\ket{v,v',a,b,c}$ in $\calH_+$ is only used at most once in some path, because $v=x$ when $a \neq 0$, and $v'=y$ when $b \neq 0$.
	This implies that $(H_0)_{++}$ is block-diagonal with respect to each $P_{x,y}$, and thus also block-diagonal on $\calH$.

	In fact, observe that $(H_0)_{++}$ restricted on (nodes of) $P_{x,y}$ is a $(2\nGHV+3)\times(2\nGHV+3)$ tridiagonal matrix of the form
	\begin{equation}\label{eq:lem:pathA_H0-bd}
		-\frac{\alpha}{a_{x,y}} \cdot
		\begin{bmatrix}
			2  & -1     &        &    \\
			-1 & 2      & \ddots &    \\
			   & \ddots & \ddots & -1 \\
			   &        & -1     & 2
		\end{bmatrix}.
	\end{equation}
	By standard calculations (see e.g., \cite[Page 20]{elliott1953characteristic}), \Cref{eq:lem:pathA_H0-bd} has eigenvalues
	\begin{equation}
		-\frac{\alpha}{a_{x,y}}\left(2-2\cos\left(\frac{k\pi}{2\nGHV+4}\right)\right) = -\frac{\alpha}{a_{x,y}}\cdot4\sin^2\left(\frac{k\pi}{4\nGHV+8}\right)
		\quad\text{for $k \in [2\nGHV+3]$.}
	\end{equation}
	This readily proves \Cref{eq:clm:sparse_to_hypercube_first_step_H_0_V_main_2}.
	In addition, the minimal eigenvalue of \Cref{eq:lem:pathA_H0-bd} is
	\begin{equation}
		\begin{aligned}
			-\alpha\cdot \frac{1}{a_{x,y}}\cdot4\sin^2\left(\frac{\pi}{4\nGHV+8}\right) & = \frac{M+1}{4 \sin^2\left( \frac{\pi}{4\nGHV+8}\right)}\cdot \frac{1}{-a_{x,y}}\cdot4\sin^2\left(\frac{\pi}{4\nGHV+8}\right) & \text{(by \Cref{eq:lem:pathA_alpha})} \\
			                                                                            & \geq \frac{M+1}{4 \sin^2\left( \frac{\pi}{4\nGHV+8}\right)}\cdot \frac{1}{M + 1}\cdot4\sin^2\left(\frac{\pi}{4\nGHV+8}\right)                                         \\
			                                                                            & = 1,
		\end{aligned}
	\end{equation}
	which implies that $(H_0)_{++}$ has minimal eigenvalue at least $1$.
	Standard calculations (see e.g., \cite[Lemma 3]{usmani1994inversion}) also show that the inverse of \Cref{eq:lem:pathA_H0-bd} is
	\begin{equation}\label{eq:lemma:to-hypercube_2}
		-\frac{a_{x,y}}{\alpha}\cdot \frac{1}{2\nGHV + 4} \sum_{i,j \in [2\nGHV+3]} \ketbra{i}{j} \cdot  \min\{i,j\} (2\nGHV + 4 - \max\{i,j\}).
	\end{equation}
	Now we compute $(V_\rmmain)_{-+}H_0^{-1}(V_\rmmain)_{+-}$ as follows:
	For $x<y$, we have
	\begin{equation} \label{eq:lem:pathA_Vmain-offdiag}
		\begin{aligned}
			 & \phantom{=}\ \mel{y,y,0^s,0^s,0^\nGHV}{(V_\rmmain)_{-+}H_0^{-1}(V_\rmmain)_{+-}}{x,x,0^s,0^s,0^\nGHV} \\
			 & = \alpha (2\nGHV+4)
			\mel{v_{x,y,2\nGHV+3}}{H_0^{-1}}{v_{x,y,1}}
			 & \text{(by definition of $H_0$ and $V_\rmmain$)}                                                       \\
			 & = \alpha (2\nGHV+4)  \left( -\frac{a_{x,y}}{\alpha}\cdot \frac{1}{2\nGHV + 4} \right)
			 & \text{(by \Cref{eq:lemma:to-hypercube_2})}                                                            \\
			 & =-a_{x,y}.
		\end{aligned}
	\end{equation}
	On the other hand, for diagonal terms,
	\begin{equation} \label{eq:lem:pathA_Vmain-diag}
		\begin{aligned}
			 & \phantom{=}\ \mel{x,x,0^s,0^s,0^\nGHV}{(V_\rmmain)_{-+}H_0^{-1}(V_\rmmain)_{+-}}{x,x,0^s,0^s,0^\nGHV} \\
			 & = \sum_{\substack{y \colon y>x                                                                        \\ (x,y)\in E}} \alpha (2\nGHV+4)
			\mel{v_{x,y,1}}{H_0^{-1}}{v_{x,y,1}}
			\\
			 & \phantom{=}\ + \sum_{\substack{y \colon y<x                                                           \\ (x,y)\in E}} \alpha (2\nGHV+4)
			\mel{v_{y,x,2\nGHV+3}}{H_0^{-1}}{v_{y,x,2\nGHV+3}}
			 & \text{(by definition of $H_0$ and $V_\rmmain$)}                                                       \\
			 & = \sum_{\substack{y \colon y \neq x                                                                   \\ (x,y)\in E}} \alpha (2\nGHV+4)  \left( -\frac{a_{x,y}}{\alpha}\cdot \frac{2\nGHV+3}{2\nGHV + 4} \right)
			 & \text{(by \Cref{eq:lemma:to-hypercube_2})}                                                            \\
			 & =-\sum_{\substack{y \colon y\neq x                                                                    \\ (x,y)\in E}} (2\nGHV+3)\cdot a_{x,y}.
		\end{aligned}
	\end{equation}
	This completes the proof of \Cref{clm:sparse_to_hypercube_first_step_H_0_V_main}.
\end{proof}

\begin{proof}[Proof of \Cref{clm:lem:pathA_2}]
	To analyze the Lipschitz constant of $H'(t)$, consider our reduction $K(t) \mapsto H'(t)$.
	We know that $K(t)$ is $O(\mGHV)$-Lipschitz since it is piecewise linear and has norm bounded by $O(\mGHV)$ as in \Cref{thm:GHV21}.
	Therefore, $\norm{K(t)}_{\rm max}$ is also $O(\mGHV)$-Lipschitz due to \Crefitem{fct:mat-ineq}{itm:fct:mat-ineq_1}.

	Recall the definition in \Cref{eq:lem:pathA_a-def} that
	\begin{equation}
		a_{x,y}(t) = \mel{x}{K(t)}{y} - \epsilon/(2s).
	\end{equation}
	This $a_{x,y}(t)$ is $O(\mGHV)$-Lipschitz and $1/a_{x,y}(t)$ is $O(\mGHV s^2/\epsilon^2)$-Lipschitz thanks to the small offset.

	Now, we analyze the Lipschitz continuity of the individual terms in $H'(t)$.
	\begin{itemize}
		\item Looking into the construction of $H_0(t)$, we observe that $H_0(t)$ is block-diagonal with each block taking value in $\left\{0, 1, -\frac{\alpha}{a_{x,y}(t)} T\right\}$, where $T$ is a constant matrix \Cref{eq:lem:pathA_H0-bd} in the proof of \Cref{clm:sparse_to_hypercube_first_step_H_0_V_main}. Given that $\alpha \leq O(\nGHV^2 M)$ by \Cref{eq:lem:pathA_alpha}, we conclude that $H_0(t)$ is $O(\nGHV^2 M \mGHV s^2/\epsilon^2)$-Lipschitz.
		\item $V_\rmmain(t)$ has exactly $s-1$ nonzero entries per row, each equal to $-\sqrt{\alpha(2\nGHV+4)}$ by construction. Thus, it is $O(\nGHV^{2.5}M)$-Lipschitz.
		\item $V_\rmextra(t)$ is diagonal, and its Lipschitz constant is $O(s M)$ by its definition in \Cref{eq:lem:pathA_Vextra}.
	\end{itemize}
	Finally, since $\Delta \leq \poly(1/\epsilon, \nGHV, s, M)$, we conclude that $H'(t) = \Delta H_0(t)+\Delta^{1/2}V_\rmmain(t)+V_\rmextra(t)$ is $\poly(\nGHV,s,\mGHV,M,1/\epsilon)$-Lipschitz as claimed.
\end{proof}     \section{Reduction II: Proof of Lemma \ref{lem:pathB}}\label{sec:reduction_II}

In this section we prove \Cref{lem:pathB} (restated below).

\lempathB*

The proof consists of two major steps.
In the first step, we apply a second-order perturbative reduction (\Cref{lemma:2nd}) to transform the \apath\ in \Cref{lem:pathA} into a \apath\ composed of Hamiltonians of the form $H'(t) = \XB + D'(t)$, where $D'(t)$ is diagonal and $\XB=-\DeltaB\sum_{i\in[\nB]}X_i$ is some fixed $X$-type interaction.
In the second step, we further decorate the obtained \apath\ by appending additional paths at both its beginning and end, ensuring that the final \apath\ starts with $\XB$ and terminates at $\XB - 3 \DeltaB \nB \ketbra{u}{u}$, as stated in \Crefitem{lem:pathB}{itm:lem:pathB_1}.

\paragraph{Organization.}
The first step is formalized in \Cref{sec:reduction_II_hypercube_to_X} and the second step is in \Cref{sec:reduction_II_cleaner}.
In \Cref{sec:reduction_II_together} we put them together and prove \Cref{lem:pathB}.
The missing proofs from this section can be found in \Cref{sec:missing_proofs_in_sec:reduction_II}.

\paragraph{Notation.}
We (re)define some notation.
Let $n$ be $\nA$ and let $K(t)$ for $t \in [0,1]$ be the $n$-qubit \apath\ in \Cref{lem:pathA}.
Write $K(1) = -\mGHV \ketbra{w}{w}$, and reserve the notation $u$ such that it is consistent with the statement of \Cref{lem:pathB}.
Let $M$ denote an upper bound on $\norm{K(t)}$ and $\mGHV\le M\le\poly(n)=\poly(\nGHV)$.

\subsection{First Step -- Hypercube Hamiltonians to TFD Hamiltonians}\label{sec:reduction_II_hypercube_to_X}

Let $\nB \coloneqq 2n$.
Below we show how to reduce $K(t)$ to an $\nB$-qubit Hamiltonian $H'(t) = \XB + D'(t)$ for any $t \in [0,1]$ such that $H'(t)$ simulates $K(t)$ in the sense of \Cref{def:sim}.
This reduction is one-to-one and thus for convenience we write $K \coloneqq K(t)$ and $H' \coloneqq H'(t)$.

To establish the proof using \Cref{lemma:2nd}, we denote $\calK$ and $\calH$ by the Hilbert spaces associated with $K$ and $H'$ respectively.
Define the encoding isometry $\calE \colon \calK \to \calH$ by
\begin{equation}\label{eq:lem:pathB_calE}
	\calE \ket{x} = \ket{x, x}\quad \text{for } x \in \binary^n.
\end{equation}
Let $\calH_- \coloneqq \im \calE$ and we write $\calH = \calH_{-} \oplus \calH_+$.
In light of \Cref{lemma:2nd}, our plan is to construct
\begin{equation}
	H' \coloneqq \Delta H_0+\Delta^{1/2}V_\rmmain+V_\rmextra,
\end{equation}
for some $\Delta,H_0,V_\rmmain,V_\rmextra$ such that $V_\rmextra-(V_\rmmain)_{-+}H_0^{-1}(V_\rmmain)_{+-}$ is close to $\bar K$.

\paragraph{Entries of $K$ with small offsets.}
Let $\epsilon \coloneqq 1/2$.
For any $x,y \in \binary^n$ with $\dist(x,y) \leq 1$, define
\begin{equation} \label{eq:lem:pathB_a-def}
	a_{x,y} \coloneqq \mel{x}{K}{y} - \frac{\epsilon}{2(n+1)}.
\end{equation}
Since $K$ is stoquastic, the small offset ensures that $a_{x,y}<\frac\epsilon{2(n+1)}$ whenever $\dist(x,y)=1$.

\paragraph{Construction of $H_0$.}
We define the diagonal matrix $H_0$ as follows:
\begin{equation} \label{eq:lem:pathB_H0}
	\mel{x,y}{H_0}{x,y} = \begin{cases}
		0                                                                       & x=y,             \\
		-(M+1)\cdot a_{x,y}^{-1}                                                & \dist(x,y)=1,    \\
		(M+1)\left(\mGHV + \frac{\epsilon}{2(n+1)}\right)^{-1} \cdot \dist(x,y) & {\rm otherwise}.
	\end{cases}
\end{equation}
We note the following easy observation.

\begin{claim}\label{clm:reduction_II_first_step_H_0}
	$(H_0)_{--}=0$ and eigenvalues of $(H_0)_{++}$ are at least $1$.
\end{claim}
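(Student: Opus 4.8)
The plan is short, because $H_0$ in \Cref{eq:lem:pathB_H0} is diagonal, so both $(H_0)_{--}$ and $(H_0)_{++}$ are diagonal in the computational basis and their eigenvalues are exactly the relevant diagonal entries. First I would record that $\calH_-=\im\calE$ is spanned by $\{\ket{x,x}\}_{x\in\binary^n}$ via \Cref{eq:lem:pathB_calE}, so the first case of \Cref{eq:lem:pathB_H0} immediately gives $(H_0)_{--}=0$.

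For $(H_0)_{++}$, it then suffices to show $\mel{x,y}{H_0}{x,y}\ge 1$ for every $x\ne y$, which I would verify through the two remaining cases of \Cref{eq:lem:pathB_H0}. When $\dist(x,y)=1$: stoquasticity of $K$ gives $\mel{x}{K}{y}\le 0$, hence $a_{x,y}=\mel{x}{K}{y}-\tfrac{\epsilon}{2(n+1)}<0$ (in particular $a_{x,y}^{-1}$ is well-defined), and the entry $-(M+1)a_{x,y}^{-1}=(M+1)/\abs{a_{x,y}}$ is positive; moreover $\abs{a_{x,y}}\le\abs{\mel{x}{K}{y}}+\tfrac{\epsilon}{2(n+1)}\le\norm{K}_\rmmax+1\le M+1$ by \Crefitem{fct:mat-ineq}{itm:fct:mat-ineq_1} and $\norm{K}\le M$, so the entry is at least $(M+1)/(M+1)=1$. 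When $\dist(x,y)\ge 2$: the entry equals $(M+1)\big(\mGHV+\tfrac{\epsilon}{2(n+1)}\big)^{-1}\dist(x,y)\ge(M+1)\big(\mGHV+\tfrac{\epsilon}{2(n+1)}\big)^{-1}$, and since $\mGHV\le M$ (from the notation paragraph of this section) and $\tfrac{\epsilon}{2(n+1)}\le 1$ we have $\mGHV+\tfrac{\epsilon}{2(n+1)}\le M+1$, giving a lower bound of $1$ once more.

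There is no genuine obstacle here; the only points needing a little care are (i) confirming $a_{x,y}\ne 0$ so that the inverse appearing in \Cref{eq:lem:pathB_H0} is meaningful, which is precisely the role of the $\epsilon/(2(n+1))$ offset in \Cref{eq:lem:pathB_a-def}, and (ii) consistently invoking the standing bound $\mGHV\le M\le\poly(n)$. Combining the two cases shows every diagonal entry of $(H_0)_{++}$ is at least $1$, hence all its eigenvalues are $\ge 1$, which completes the claim.
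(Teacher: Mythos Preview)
Your proof is correct and follows essentially the same approach as the paper: since $H_0$ is diagonal, you verify directly that the diagonal entries on $\calH_-$ vanish and those on $\calH_+$ are at least $1$, handling the two cases $\dist(x,y)=1$ and $\dist(x,y)\ge 2$ separately. If anything, your version is more carefully written than the paper's (which compresses both cases into a single displayed inequality and contains a small typo), and your explicit remark that the offset ensures $a_{x,y}\ne 0$ is a nice touch.
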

\begin{proof}
	Note that $\calH_-$ is spanned by $\ket{x,x},x\in\{0,1\}^n$. Hence $(H_0)_{--}=0$.
	For $x \neq y$, we have
	\begin{equation}
		a_{x,y} \geq (M + 1) \left( M + \frac{\epsilon}{2(n+1)} \right)^{-1} \geq (M + 1)(M + 1)^{-1} = 1,
	\end{equation}
	which means that eigenvalues of $(H_0)_{++}$ are at least $1$.
\end{proof}

\paragraph{Construction of $V_\rmmain$ and $V_\rmextra$.}
We construct $V_\rmmain$ and $V_\rmextra$ as follows:
\begin{equation} \label{eq:lem:pathB_Vextra}
	V_\rmmain \coloneqq - \sqrt{\frac{M+1}{2}}\sum_{i \in [2n]} X_i, \quad
	V_\rmextra \coloneqq \sum_x\left(a_{x,x}
	-\sum_{y\colon\dist(x,y)=1} a_{x,y} \right)\ketbra{x,x}.
\end{equation}

By a direct calculation, which we defer to \Cref{sec:missing_proofs_in_sec:reduction_II}, we have the following desired relation.

\begin{claim}\label{clm:reduction_II_first_step_Vmain_Vextra}
	$(V_\rmmain)_{--}=0$ and
	\begin{equation} \label{eq:lem:pathB_Heff}
		(V_\rmextra)_{--} - (V_\rmmain)_{-+} H_0^{-1} (V_\rmmain)_{+-} = \overline{K} - \frac{\epsilon}{2(n+1)}\cdot\overline{E},
	\end{equation}
	where $\overline{A} \coloneqq \calE A \calE^\dagger$ for any $A$ and $E \coloneqq \sum_{x,y \colon \dist(x,y) \leq 1} \ketbra{x}{y}$.
\end{claim}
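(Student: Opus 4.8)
The plan is to verify both assertions by a direct computation in the computational basis of $\calH=(\binary^n)^{\otimes2}$, exploiting the product structure $\ket{x,y}$ of the two registers. For $(V_\rmmain)_{--}=0$ I would argue as follows: the subspace $\calH_-$ is spanned by the diagonal states $\ket{x,x}$, while $V_\rmmain$ is proportional to $\sum_{i\in[2n]}X_i$, a sum of single-qubit bit flips. Each $X_i\ket{x,x}$ flips one bit of exactly one register, producing a state $\ket{x',x}$ or $\ket{x,x'}$ with $x'\ne x$, which is orthogonal to $\calH_-$; hence $P_-V_\rmmain P_-=0$.

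For the main identity I would pull everything back to $\calK$ through $\calE$ and evaluate $\calE^\dagger\big[(V_\rmextra)_{--}-(V_\rmmain)_{-+}H_0^{-1}(V_\rmmain)_{+-}\big]\calE$ on an arbitrary basis vector $\ket{x}$, via the following chain (writing $x\oplus e_i$ for $x$ with bit $i\in[n]$ flipped). First, $(V_\rmmain)_{+-}\calE\ket{x}=V_\rmmain\ket{x,x}=-\sqrt{(M+1)/2}\sum_{i\in[n]}\big(\ket{x\oplus e_i,x}+\ket{x,x\oplus e_i}\big)$, every term of which lies in $\calH_+$ with its two registers at Hamming distance $1$. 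Second, since $H_0$ is diagonal with distance-$1$ entry equal to $-(M+1)a_{x\oplus e_i,x}^{-1}$ and $a_{x\oplus e_i,x}=a_{x,x\oplus e_i}$ ($K$ being Hermitian), applying $H_0^{-1}$ cancels the $(M+1)$ prefactors and gives $\tfrac{1}{\sqrt{2(M+1)}}\sum_{i\in[n]}a_{x,x\oplus e_i}\big(\ket{x\oplus e_i,x}+\ket{x,x\oplus e_i}\big)$. Third, applying $(V_\rmmain)_{-+}=P_-V_\rmmain$: from $\ket{x\oplus e_i,x}$ the only bit flips returning to $\calH_-$ are flipping bit $i$ of either register, yielding $\ket{x,x}$ and $\ket{x\oplus e_i,x\oplus e_i}$ each with weight $-\sqrt{(M+1)/2}$, and likewise from $\ket{x,x\oplus e_i}$; collecting the two symmetric contributions (a factor $2$) together with $\tfrac{1}{\sqrt{2(M+1)}}\cdot\sqrt{(M+1)/2}=\tfrac12$ yields $(V_\rmmain)_{-+}H_0^{-1}(V_\rmmain)_{+-}\ket{x,x}=-\big(\sum_{i\in[n]}a_{x,x\oplus e_i}\big)\ket{x,x}-\sum_{i\in[n]}a_{x,x\oplus e_i}\ket{x\oplus e_i,x\oplus e_i}$.

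Finally, $V_\rmextra$ is diagonal and supported on $\calH_-$, so $(V_\rmextra)_{--}=V_\rmextra$ contributes $\big(a_{x,x}-\sum_{i\in[n]}a_{x,x\oplus e_i}\big)\ket{x,x}$; the $-\sum_i a_{x,x\oplus e_i}$ pieces cancel and, after pulling back through $\calE$, what remains is $a_{x,x}\ket{x}+\sum_{i\in[n]}a_{x,x\oplus e_i}\ket{x\oplus e_i}=\sum_{y\colon\dist(x,y)\le1}a_{x,y}\ket{y}$. Substituting $a_{x,y}=\mel{x}{K}{y}-\tfrac{\epsilon}{2(n+1)}$ and using that $K$ is a hypercube Hamiltonian (so $\mel{x}{K}{y}=0$ for $\dist(x,y)>1$) identifies $\sum_{y\colon\dist(x,y)\le1}\mel{x}{K}{y}\ket{y}=K\ket{x}$ and $\sum_{y\colon\dist(x,y)\le1}\ket{y}=E\ket{x}$; conjugating by $\calE$ then gives $\overline{K}-\tfrac{\epsilon}{2(n+1)}\overline{E}$, which is the claim.

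I do not expect a real obstacle here — the computation is elementary. The points demanding care are the bookkeeping of which single bit flips move a state between $\calH_-$ and $\calH_+$, the consistent use of the symmetry $a_{x,y}=a_{y,x}$, tracking normalizations so that the two $\sqrt{(M+1)/2}$ factors and the $(M+1)a_{x,y}^{-1}$ from $H_0^{-1}$ collapse to $a_{x,y}$, and finally invoking the hypercube structure of $K$. One should also note that $(H_0)_{++}$ is invertible — stoquasticity forces $a_{x,y}\le-\tfrac{\epsilon}{2(n+1)}<0$ for $\dist(x,y)=1$ while the remaining off-$\calH_-$ diagonal entries of $H_0$ are strictly positive — so that the expression appearing in \Cref{lemma:2nd} is well defined.
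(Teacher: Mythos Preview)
Your proposal is correct and takes essentially the same approach as the paper: a direct computation in the computational basis of $\calH_-$, exploiting that $H_0$ is diagonal and that single bit flips move $\ket{x,x}$ into distance-$1$ states of $\calH_+$. The paper organizes the same calculation as matrix elements $\mel{y,y}{(V_\rmmain)_{-+}H_0^{-1}(V_\rmmain)_{+-}}{x,x}$ case by case, while you compute the action on $\ket{x,x}$; the arithmetic and the invocation of the hypercube structure of $K$ are identical.
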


\paragraph{Parameters in the reduction.}
Note that $\norm { \overline{E} } = \norm{E} \leq \norm{E}_1 = n+1$ by \Crefitem{fct:mat-ineq}{itm:fct:mat-ineq_2}.
Therefore all preconditions of \Cref{lemma:2nd} are met.
Applying \Cref{lemma:2nd}, we have that for any $\Delta \geq \poly(1/\epsilon,\Lambda)$, $(H',\calE)$ simulates $K$ with error $\epsilon$, where
\begin{equation}
	\Lambda = \max\{ \norm{V_\rmmain},\norm{V_\rmextra}\} \leq O(nM)=\poly(\nGHV).
\end{equation}
Hence it suffices to ensure $\Delta \geq \poly(1/\epsilon,\nGHV)$.
For $H_0$, it is diagonal and
\begin{equation}
	\norm{H_0} \leq (M+1)\cdot\frac{2(n+1)}{\epsilon}\cdot n \leq \poly(1/\epsilon,n,M)=\poly(1/\epsilon,\nGHV)
\end{equation}
by \Cref{eq:lem:pathB_H0} and \Cref{eq:lem:pathB_a-def}.
Combining these estimates we get $\norm{H'} \leq \poly(1/\epsilon,\nGHV,\Delta)$.

Finally, note that \( H_0 \) and \( V_\rmextra \) are diagonal, while $V_\rmmain=- \sqrt{\frac{M+1}{2}}\sum_{i \in [2n]} X_i$.
Thus, the Hamiltonian \( H' \) can be written in the desired form:
\begin{equation} \label{eq:lem:pathB_desired}
	H' = \XB + D', \quad \XB \coloneqq -\DeltaB \sum_{i \in [2n]} X_i = \Delta^{1/2}V_\rmmain, \quad D' \coloneqq \Delta H_0 + V_\rmextra,
\end{equation}
where we set
\begin{equation} \label{eq:lem:pathB_DeltaBdet}
	\DeltaB = \sqrt{\frac{\Delta(M+1)}{2}}.
\end{equation}
Since \( \epsilon = 1/2 \), the parameter \( \Delta \) determined implicitly in \Cref{eq:lem:pathB_DeltaBdet} can be chosen to satisfy the requirement \( \Delta \geq \poly(1/\epsilon,\nGHV) \), provided that \( \DeltaB \geq \poly(\nGHV) \).

With the spectral property of $K(t)$ from \Crefitem{lem:pathA}{itm:lem:pathA_3},
the fact that $(H',\calE)$ simulates $K$ with error $\epsilon$ gives us the following claim.

\begin{claim} \label{clm:lem:pathB_1}
	$H'(t)$ has spectral gap at least $\Omega(\sqrt{\mGHV})$ and norm $\poly(\nGHV, \DeltaB)$.
\end{claim}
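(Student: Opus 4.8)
The plan is to read both conclusions directly off the simulation guarantee just established, combined with the spectral data of $K(t)$ inherited from \Cref{lem:pathA}. Concretely, the construction above applies \Cref{lemma:2nd} with $\epsilon = 1/2$, so $(H',\calE)$ simulates $K = K(t)$ with error $(\epsilon_{\rm enc},\epsilon) = (1/2,1/2)$ in the sense of \Cref{def:sim}, for every $t \in [0,1]$.

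For the spectral gap, note that by \Crefitem{lem:pathA}{itm:lem:pathA_3} the Hamiltonian $K(t)$ has spectral gap $\delta = \Omega(\sqrt{\mGHV}) > 0$; in particular its ground state is non-degenerate, so \Cref{lem:gsim} applies. It yields that the spectral gap of $H' = H_{\rmsim}$ is at least $\delta - 2\epsilon = \Omega(\sqrt{\mGHV}) - 1$. Since $\sqrt{\mGHV} \to \infty$ as $\nGHV \to \infty$, the additive constant $1$ is absorbed into the asymptotic notation, so the gap of $H'(t)$ is $\Omega(\sqrt{\mGHV})$.

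For the norm, we reuse the estimate from the parameter discussion: expanding $H' = \Delta H_0 + \Delta^{1/2} V_{\rmmain} + V_{\rmextra}$ and applying the triangle inequality gives $\norm{H'} \le \Delta\norm{H_0} + \Delta^{1/2}\norm{V_{\rmmain}} + \norm{V_{\rmextra}}$. With $\epsilon = 1/2$ a constant, \Cref{eq:lem:pathB_H0} and \Cref{eq:lem:pathB_a-def} give $\norm{H_0} \le \poly(\nGHV)$, while $\norm{V_{\rmmain}},\norm{V_{\rmextra}} \le \poly(\nGHV)$ by \Cref{eq:lem:pathB_Vextra}; hence $\norm{H'} \le \poly(\nGHV,\Delta)$. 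Finally \Cref{eq:lem:pathB_DeltaBdet} gives $\Delta = 2\DeltaB^2/(M+1) \le \DeltaB^2$, so $\poly(\nGHV,\Delta) \le \poly(\nGHV,\DeltaB)$, which is the claimed bound.

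No genuine obstacle arises; the only two points deserving care are that \Cref{lem:gsim} requires a non-degenerate ground state for $K(t)$ — which is automatic from its positive spectral gap — and that the $2\epsilon$ loss in the gap is harmless because it is a fixed constant while the gap itself grows like $\sqrt{\mGHV}$.
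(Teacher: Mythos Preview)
Your proof is correct and follows essentially the same approach as the paper: both invoke the simulation guarantee $(H',\calE)$ simulates $K(t)$ with error $\epsilon=1/2$, then apply \Cref{lem:gsim} together with \Crefitem{lem:pathA}{itm:lem:pathA_3} for the gap, and combine the previously derived bounds on $\norm{H_0},\norm{V_\rmmain},\norm{V_\rmextra},\Delta$ for the norm. The paper's text compresses this into a one-line reference, and you have simply unpacked the details.
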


We can also analyze the Lipschitz condition for $H'(t)$.

\begin{claim} \label{clm:lem:pathB_2}
	$H'(t)$ is $\poly(\nGHV,\DeltaB)$-Lipschitz.
\end{claim}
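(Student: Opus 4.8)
The plan is to trace the Lipschitz dependence of each constituent piece of $H'(t) = \Delta H_0(t) + \Delta^{1/2} V_\rmmain(t) + V_\rmextra(t)$ back to the Lipschitz constant of $K(t)$, exactly as was done for \Cref{clm:lem:pathA_2}. First I would record that $K(t)$ is $\poly(\nGHV)$-Lipschitz by \Crefitem{lem:pathA}{itm:lem:pathA_1}, hence by \Crefitem{fct:mat-ineq}{itm:fct:mat-ineq_1} each matrix entry $\mel{x}{K(t)}{y}$ is $\poly(\nGHV)$-Lipschitz, and therefore so is each $a_{x,y}(t) = \mel{x}{K(t)}{y} - \epsilon/(2(n+1))$ from \Cref{eq:lem:pathB_a-def}. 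The key quantitative point, as in the previous section, is that the small offset $\epsilon/(2(n+1))$ keeps $\abs{a_{x,y}(t)}$ bounded below by $\epsilon/(2(n+1)) = 1/\poly(\nGHV)$, so $1/a_{x,y}(t)$ is also $\poly(\nGHV)$-Lipschitz (using $\abs{1/a - 1/a'} = \abs{a-a'}/\abs{aa'} \le \poly(\nGHV)\abs{a-a'}$).

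Next I would handle the three terms separately. The term $\Delta^{1/2} V_\rmmain = \XB = -\DeltaB\sum_i X_i$ is \emph{constant} in $t$, so it contributes $0$ to the Lipschitz constant. For $V_\rmextra(t) = \sum_x (a_{x,x}(t) - \sum_{y\colon\dist(x,y)=1} a_{x,y}(t))\ketbra{x,x}$ from \Cref{eq:lem:pathB_Vextra}: it is diagonal, each diagonal entry is a sum of at most $n+1$ terms each $\poly(\nGHV)$-Lipschitz, so $\norm{V_\rmextra(t) - V_\rmextra(t')} = \norm{V_\rmextra(t) - V_\rmextra(t')}_\rmmax \cdot$ (trivially, since diagonal) $\le \poly(\nGHV)\abs{t-t'}$. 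For $\Delta H_0(t)$ with $H_0$ defined by \Cref{eq:lem:pathB_H0}: it is diagonal with entries in $\{0,\ -(M+1)a_{x,y}(t)^{-1},\ (M+1)(\mGHV+\epsilon/(2(n+1)))^{-1}\dist(x,y)\}$; the first and third families are constant in $t$, and the middle family is $(M+1)\cdot\poly(\nGHV) = \poly(\nGHV)$-Lipschitz since $1/a_{x,y}(t)$ is and $M = \poly(\nGHV)$. Multiplying by $\Delta = \poly(1/\epsilon,\nGHV) = \poly(\nGHV)$ keeps it $\poly(\nGHV)$-Lipschitz; recalling $\DeltaB = \sqrt{\Delta(M+1)/2}$ from \Cref{eq:lem:pathB_DeltaBdet}, this is also $\poly(\nGHV,\DeltaB)$.

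Summing the three contributions, $H'(t)$ is $\poly(\nGHV,\DeltaB)$-Lipschitz, as claimed. The only place requiring care is the inversion step $a_{x,y}\mapsto 1/a_{x,y}$, where the lower bound on $\abs{a_{x,y}}$ furnished by the offset $\epsilon/(2(n+1))$ is exactly what prevents the Lipschitz constant from blowing up; every other step is a routine product/sum estimate, and the fact that the genuinely $t$-dependent data ($H_0$'s off-diagonal-within-$\calH_+$ block and $V_\rmextra$) enters only diagonally makes the operator-norm bookkeeping immediate via $\norm{\cdot} = \norm{\cdot}_\rmmax$ for diagonal matrices.
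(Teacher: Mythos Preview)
Your proposal is correct and follows essentially the same approach as the paper's proof: trace Lipschitz dependence through $a_{x,y}(t)$ and $1/a_{x,y}(t)$ using the offset lower bound, note $V_\rmmain$ is constant in $t$, and bound the diagonal pieces $H_0(t)$ and $V_\rmextra(t)$ entrywise. One small imprecision: you write ``$\Delta = \poly(1/\epsilon,\nGHV)$'', but at this stage $\Delta$ is determined by $\DeltaB$ via \Cref{eq:lem:pathB_DeltaBdet} (so $\Delta = 2\DeltaB^2/(M+1) \le O(\DeltaB^2)$), and the paper uses exactly this to obtain the $\poly(\nGHV,\DeltaB)$ bound; your final sentence recovers this anyway, so the conclusion is unaffected.
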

\begin{proof}
	Consider our reduction $K(t) \mapsto H'(t)$ where $K(t)$ is $\poly(\nGHV)$-Lipschitz by \Crefitem{lem:pathA}{itm:lem:pathA_1}.
	Therefore, $\norm{K(t)}_{\rm max}$ is also $\poly(\nGHV)$-Lipschitz by \Crefitem{fct:mat-ineq}{itm:fct:mat-ineq_1}.
	Hence, recalling \Cref{eq:lem:pathB_a-def}, we know that $a_{x,y}(t)$ is $\poly(\nGHV)$-Lipschitz and $1/a_{x,y}(t)$ is $O(\poly(\nGHV)/\epsilon^2)$-Lipschitz for any $x,y \in \binary^n$.
	Now we analyze the Lipschitz continuity of the individual terms in $H'(t)$.
	\begin{itemize}
		\item $H_0(t)$ is diagonal. Therefore it is $O(M\poly(\nGHV)/\epsilon^2)=\poly(\nGHV)$-Lipschitz by its definition in \Cref{eq:lem:pathB_H0}.
		\item $V_\rmmain$ remains constant with respect to $t$.
		\item $V_\rmextra(t)$ is diagonal, and its Lipschitz constant is $O(n M)=\poly(\nGHV)$ by its definition in \Cref{eq:lem:pathB_Vextra}.
	\end{itemize}
	Finally, since $\Delta \leq O(\DeltaB^2)$ by \Cref{eq:lem:pathB_DeltaBdet}, we conclude that $H'(t) = \Delta H_0(t)+\Delta^{1/2}V_\rmmain+V_\rmextra(t)$ is $\poly(\nGHV,\Delta)=\poly(\nGHV,\DeltaB)$-Lipschitz.
\end{proof}

\subsection{Second Step -- Cleaner Start and End}\label{sec:reduction_II_cleaner}

Inheriting the notation from \Cref{sec:reduction_II_hypercube_to_X}, we first append a linear path at the end of $H'(1)$.
Based on the construction of $H'(t)$ from \Cref{sec:reduction_II_hypercube_to_X}, we write
\begin{equation}
	H'(1) = \Delta H_0+ V \quad\text{and}\quad V \coloneqq \Delta^{1/2}V_\rmmain+V_\rmextra.
\end{equation}
Recall that $K(1) = -\mGHV \ketbra{w}{w}$.
Let $\ket{u} \coloneqq \calE \ket{w}$ where $\calE$ is the encoding isometry from \Cref{eq:lem:pathB_calE}.
The path to be appended is
\begin{equation} \label{eq:lem:pathB_tail}
	H'(1) \pathto H'(1) - 3 \DeltaB \nB \ketbra{u}{u} \pathto \XB - 3 \DeltaB \nB \ketbra{u}{u}. \tag{P4}
\end{equation}

We will show that \Cref{eq:lem:pathB_tail} preserves the spectral gap.
The first half uses Taylor expansion of perturbative reduction (\Cref{lem:taylor}) and the second half is a direct calculation.
The proof of \Cref{clm:lem:pathB_3} is deferred to \Cref{sec:missing_proofs_in_sec:reduction_II}.

\begin{claim} \label{clm:lem:pathB_3}
	Assume $\DeltaB=\poly(\nGHV)$.
	The \apath\ \Cref{eq:lem:pathB_tail} has spectral gap $\Omega(n\DeltaB)$ and norm $\poly(\nGHV)$.
	Consequently, it is $\poly(\nGHV)$-Lipschitz by piecewise linearity.
\end{claim}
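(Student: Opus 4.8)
The plan is to analyze the two linear pieces of \Cref{eq:lem:pathB_tail} separately, reusing the perturbative setup of \Cref{sec:reduction_II_hypercube_to_X}. Recall the relevant facts: $\ket{u}=\calE\ket{w}$ lies in $\calH_-:=\im\calE$, so $\ketbra{u}{u}$ is block-diagonal and supported on $\calH_-$; $H_0$ is diagonal with $(H_0)_{--}=0$ and $(H_0)_{++}$ all eigenvalues $\ge1$ by \Cref{clm:reduction_II_first_step_H_0}, hence $\Delta H_0\succeq0$; $V_\rmextra$ is diagonal, supported on $\calH_-$, with $\|V_\rmextra\|\le\poly(\nGHV)$; $(V_\rmmain)_{--}=0$; and by \Cref{clm:reduction_II_first_step_Vmain_Vextra}, $(V_\rmextra)_{--}-(V_\rmmain)_{-+}H_0^{-1}(V_\rmmain)_{+-}=\calE\bigl(K(1)-\tfrac{\epsilon}{2(n+1)}E\bigr)\calE^\dagger$, where $K(1)=-\mGHV\ketbra{w}{w}$ so $\calE K(1)\calE^\dagger=-\mGHV\ketbra{u}{u}$, and $\|\tfrac{\epsilon}{2(n+1)}\calE E\calE^\dagger\|\le\epsilon/2$. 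Throughout, $\Delta$ (hence $\DeltaB=\sqrt{\Delta(M+1)/2}$) is taken to be a sufficiently large polynomial in $\nGHV$, which is consistent with the hypothesis $\DeltaB=\poly(\nGHV)$ and guarantees $\DeltaB\gg\max\{n,M,\|V_\rmextra\|\}$.

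\emph{First half (Taylor expansion).} For $\lambda\in[0,1]$ write $L_1(\lambda):=H'(1)-3\lambda\DeltaB\nB\ketbra{u}{u}=\Delta H_0+V'$ with $V':=\XB+V_\rmextra-3\lambda\DeltaB\nB\ketbra{u}{u}$ and $\XB=\Delta^{1/2}V_\rmmain$. Since $\|V'\|\le\poly(\nGHV)\cdot\Delta^{1/2}\ll\Delta$, \Cref{lem:taylor} applies to $L_1(\lambda)=\Delta H_0+V'$: there is a Schrieffer--Wolff rotation $U$ with $UL_1(\lambda)U^\dagger$ block-diagonal, its $(++)$-eigenvalues exceeding its $(--)$-eigenvalues, and
\[
(UL_1(\lambda)U^\dagger)_{--}=V'_{--}-\Delta^{-1}V'_{-+}H_0^{-1}V'_{+-}+O(\Delta^{-2}\|V'\|^3).
\]
Because $(\XB)_{--}=0$, $(V_\rmextra)_{-+}=0$, $(\ketbra{u}{u})_{-+}=0$, and $V'_{-+}=\Delta^{1/2}(V_\rmmain)_{-+}$, the right-hand side equals $(V_\rmextra)_{--}-(V_\rmmain)_{-+}H_0^{-1}(V_\rmmain)_{+-}-3\lambda\DeltaB\nB\ketbra{u}{u}+o(1)$ (the cubic remainder is $O(\Delta^{-1/2}\poly(\nGHV))=o(1)$ for $\Delta$ large), which by \Cref{clm:reduction_II_first_step_Vmain_Vextra} is the rank-one operator $-(\mGHV+3\lambda\DeltaB\nB)\ketbra{u}{u}$ plus a perturbation of norm $\le\epsilon/2+o(1)\le1$. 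By \Cref{fct:weyl} this block has a non-degenerate ground state with gap $\ge\mGHV+3\lambda\DeltaB\nB-2$, and since $UL_1(\lambda)U^\dagger$ is block-diagonal with the $(++)$-block on top, this is also the gap of $L_1(\lambda)$. Hence the first half has gap $\Omega(\mGHV+\lambda\DeltaB\nB)$, which is $\Omega(n\DeltaB)$ once $\lambda$ is bounded away from $0$; at $\lambda=0$ it recovers the gap $\Omega(\mGHV)$ of $H'(1)$ from \Cref{clm:lem:pathB_1}.

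\emph{Second half (direct calculation).} For $\lambda\in[0,1]$ write $L_2(\lambda):=\XB+(1-\lambda)D'(1)-3\DeltaB\nB\ketbra{u}{u}$ with $D'(1)=\Delta H_0+V_\rmextra$, and split $L_2(\lambda)=N(\lambda)+(1-\lambda)\Delta H_0$ where $N(\lambda):=\XB+(1-\lambda)V_\rmextra-3\DeltaB\nB\ketbra{u}{u}$ and $(1-\lambda)\Delta H_0\succeq0$. Now $N(\lambda)$ is a depth-$3\DeltaB\nB$ rank-one well perturbed by $\XB+(1-\lambda)V_\rmextra$ of norm $\le\nB\DeltaB+\|V_\rmextra\|$, so by \Cref{fct:weyl} it has a non-degenerate ground state with gap $\ge3\DeltaB\nB-2(\nB\DeltaB+\|V_\rmextra\|)=\DeltaB\nB-o(\DeltaB\nB)$ (recall $\nB=2n$); moreover, writing a unit vector as $\alpha\ket{u}+\beta\ket{\phi}$ with $\ket{\phi}\perp\ket{u}$ and using $\mel{u}{\XB}{u}=0$, $\|\XB\ket{u}\|=\DeltaB\sqrt{\nB}$, a short optimization in $(|\alpha|,|\beta|)$ gives the ground energy bound $\mel{\psi}{N(\lambda)}{\psi}\ge-3\DeltaB\nB-O(\DeltaB)-\|V_\rmextra\|=-3\DeltaB\nB-o(\DeltaB\nB)$. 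Adding back the positive-semidefinite term, eigenvalue monotonicity gives $\mu_2(L_2(\lambda))\ge\mu_2(N(\lambda))\ge\mu_{\min}(N(\lambda))+\mathrm{gap}(N(\lambda))\ge-2\DeltaB\nB-o(\DeltaB\nB)$. On the other hand $\mu_{\min}(L_2(\lambda))\le\mel{u}{L_2(\lambda)}{u}=\mel{u}{\XB}{u}+(1-\lambda)\mel{u}{D'(1)}{u}-3\DeltaB\nB=-3\DeltaB\nB+O(nM)$, since $\mel{u}{D'(1)}{u}=\Delta(H_0)_{u,u}+(V_\rmextra)_{u,u}=0+O(nM)$. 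Subtracting, $\mathrm{gap}(L_2(\lambda))\ge\DeltaB\nB-o(\DeltaB\nB)=\Omega(n\DeltaB)$ uniformly in $\lambda$.

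\emph{Norm, Lipschitz, and the main obstacle.} The norm bound follows by the triangle inequality from $\|H'(1)\|,\|\XB\|,3\DeltaB\nB=\poly(\nGHV)$, and the Lipschitz bound then follows from piecewise linearity and bounded norm exactly as in the other such claims. The delicate step is the ``first half'': one must verify that the reduction of \Cref{sec:reduction_II_hypercube_to_X} survives the addition of the \emph{large} rank-one term $-3\lambda\DeltaB\nB\ketbra{u}{u}$ whose norm grows with $\Delta$, which is why we invoke \Cref{lem:taylor} directly (its hypothesis $\Delta>16\|V'\|$ is linear, unlike the $\Lambda^6$ requirement of \Cref{lemma:2nd}) and must check that the cubic remainder $O(\Delta^{-2}\|V'\|^3)$ stays $o(1)$; this forces $\Delta$ (and $\DeltaB$) to be a large enough polynomial in $\nGHV$. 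The only remaining bookkeeping is that the well depth $3\DeltaB\nB$ dominates the residual $O(nM)$ terms, which holds because $\DeltaB=\sqrt{\Delta(M+1)/2}\gg M$ for $\Delta$ large.
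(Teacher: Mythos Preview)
Your first half is essentially the paper's argument (both invoke \Cref{lem:taylor} and arrive at the effective operator $-(\mGHV+3\lambda\DeltaB\nB)\ketbra{u}{u}$ up to an $O(1)$ perturbation); your remark that the gap is only $\Omega(\mGHV)$ at $\lambda=0$ is in fact more careful than the paper's phrasing.

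For the second half you take a genuinely different decomposition. The paper strips off $\XB$ via \Cref{fct:weyl}: the residual path $\lambda(\Delta H_0+V_\rmextra)-3\DeltaB\nB\ketbra{u}{u}$ is \emph{diagonal}, so its spectral gap $3\DeltaB\nB+\lambda\mGHV$ can be read off by inspection (using $(H_0)_{--}=0$, $(H_0)_{++}\succeq0$, and the explicit form of $V_\rmextra$ at $K(1)$), after which subtracting $2\|\XB\|=2\DeltaB\nB$ yields the claimed $\DeltaB\nB$. You instead strip off the positive semidefinite part $(1-\lambda)\Delta H_0$ via eigenvalue monotonicity and then analyze the non-diagonal residual $N(\lambda)=\XB+(1-\lambda)V_\rmextra-3\DeltaB\nB\ketbra{u}{u}$; this forces you to separately upper-bound $\mu_{\min}(L_2)$ by a test vector and lower-bound $\mu_{\min}(N)$ via the $(\alpha,\beta)$-optimization using $\|\XB\ket{u}\|=\DeltaB\sqrt{\nB}$. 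Your route is correct, but the paper's choice of which term to peel off is more economical precisely because it leaves a diagonal Hamiltonian and avoids that extra ground-energy estimate.
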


To decorate the start, we explicitly calculate $H'(0)$.
Since $K(0) = -\mGHV \sum_{i \in [n]} X_i$, we have
\begin{align}
	H'(0) & = \XB + \Delta H_0(0) + V_\rmextra(0)
	\tag{by \Cref{eq:lem:pathB_desired}}                                                                         \\
	      & = \XB + \Delta \sum_{x,y} \alpha \cdot \dist(x,y) \ketbra{x,y}{x,y} + \sum_x \beta \ketbra{x,x}{x,x}
	\tag{by \Cref{eq:lem:pathB_H0} and \Cref{eq:lem:pathB_Vextra}}                                               \\
	      & = \XB + \frac12 \Delta \alpha \left( nI - \sum_{i \in [n]} Z_i \tensor Z_{n+i} \right)+\beta I_{--},
	\tag{since $Z_i\otimes Z_{n+i}\ket{x,y}=(-1)^{x_i+y_i}\ket{x,y}$}
\end{align}
where
\begin{equation}\label{eq:sec:reduction_II_cleaner_1}
	\alpha =(M+1) \left(\mGHV + \frac{\epsilon}{2(n+1)} \right)^{-1}
	\quad\text{and}\quad
	\beta = n\mGHV + \frac{(n-1)\epsilon}{2(n+1)}.
\end{equation}

We now append the following \apath\  to the beginning of $H'(0)$:
\begin{equation} \label{eq:lem:pathB_head}
	\XB \pathto \XB +\frac12 \Delta \alpha \left( nI - \sum_{i \in [n]} Z_i \tensor Z_{n+i} \right) \pathto H'(0) \tag{P5}
\end{equation}

\begin{claim} \label{clm:lem:pathB_4}
	The \apath\ \Cref{eq:lem:pathB_head} has spectral gap $\Omega(\min\{\mGHV,\DeltaB\})$ and norm $\poly(\nGHV,\DeltaB)$.
	Consequently, it is $\poly(\nGHV,\DeltaB)$-Lipschitz by piecewise linearity.
\end{claim}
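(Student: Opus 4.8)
The path in \Cref{eq:lem:pathB_head} is the concatenation of two linear segments: $S_1\colon\lambda\mapsto\XB+\lambda W$ for $\lambda\in[0,1]$, and $S_2\colon\mu\mapsto\XB+W+\mu\beta I_{--}$ for $\mu\in[0,1]$, where $W\coloneqq\frac12\Delta\alpha\bigl(nI-\sum_{i\in[n]}Z_i\tensor Z_{n+i}\bigr)$ and $\alpha,\beta$ are as in \Cref{eq:sec:reduction_II_cleaner_1}. By the computation preceding \Cref{eq:lem:pathB_head} we have the identifications $W=\Delta H_0(0)$, $\XB=\Delta^{1/2}V_\rmmain$, $\beta I_{--}=V_\rmextra(0)$, and $H'(0)=\XB+W+\beta I_{--}$. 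The norm along each segment is at most that of its endpoints, which is $\poly(\nGHV,\DeltaB)$ by \Cref{clm:lem:pathB_1} together with $\norm{\Delta H_0(0)},\norm{\Delta^{1/2}V_\rmmain},\norm{V_\rmextra(0)}\le\poly(\nGHV,\DeltaB)$, and the Lipschitz bound is then immediate from piecewise linearity. So the only real task is to lower-bound the spectral gap along $S_1$ and $S_2$.

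For $S_1$, the idea is that the pairs $\{i,n+i\}$, $i\in[n]$, partition $[\nB]$ (recall $\nB=2n$), so
\[
\XB+\lambda W=\sum_{i\in[n]}\Bigl(-\DeltaB(X_i+X_{n+i})+\lambda\tfrac12\Delta\alpha(I-Z_i\tensor Z_{n+i})\Bigr)
\]
is a Kronecker sum, and by \Cref{fct:ksum} its spectral gap equals that of the single two-qubit summand $h_\lambda\coloneqq-\DeltaB(X_1+X_2)+\lambda\frac12\Delta\alpha(I-Z_1\tensor Z_2)$. Passing to the pair basis $\frac1{\sqrt2}(\ket{00}\pm\ket{11})$, $\frac1{\sqrt2}(\ket{01}\pm\ket{10})$, one computes that $h_\lambda$ has eigenvalues $0$, $\lambda\Delta\alpha$, and $\frac{\lambda\Delta\alpha}{2}\pm\sqrt{(\frac{\lambda\Delta\alpha}{2})^2+4\DeltaB^2}$; hence its spectral gap is $\sqrt{(\lambda\Delta\alpha/2)^2+4\DeltaB^2}-\lambda\Delta\alpha/2$, which is decreasing in $\lambda$ and therefore minimized at $\lambda=1$. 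Using $\Delta\alpha/2=\DeltaB^2/(\mGHV+\frac{\epsilon}{2(n+1)})\le\DeltaB^2/\mGHV$ and the elementary bound $\sqrt{a^2+b}-a=b/(\sqrt{a^2+b}+a)\ge b/(2a+\sqrt b)$, this minimum is at least $\frac{4\DeltaB^2}{\Delta\alpha+2\DeltaB}\ge\frac{2\DeltaB\mGHV}{\DeltaB+\mGHV}\ge\min\{\mGHV,\DeltaB\}$.

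For $S_2$ a naive spectral perturbation is hopeless, since $\norm{\mu\beta I_{--}}$ can be as large as $\Theta(n\mGHV)$ and $I_{--}$ has full rank, so Weyl's inequality yields nothing — this is the main obstacle. The resolution is to notice that $S_2$ is literally an instance of the second-order perturbative reduction of \Cref{sec:reduction_II_hypercube_to_X}: by the identifications above, $\XB+W+\mu\beta I_{--}=\Delta H_0(0)+\Delta^{1/2}V_\rmmain+\mu V_\rmextra(0)$, with the \emph{same} $H_0(0)$ and $V_\rmmain$ and with $V_\rmextra(0)$ merely rescaled by $\mu\in[0,1]$. Rescaling $V_\rmextra$ by a factor at most $1$ does not increase $\Lambda=\max\{\norm{V_\rmmain},\norm{V_\rmextra}\}$, so the hypothesis of \Cref{lemma:2nd} holds with the same $\Delta$ as in \Cref{sec:reduction_II_hypercube_to_X}. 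Moreover $K(0)=-\mGHV\sum_{i\in[n]}X_i$ has zero diagonal and all distance-$1$ entries equal to $-\mGHV$, so $\beta$ is a constant independent of $x$ and $(V_\rmextra(0))_{--}=\beta\,I_{--}$; feeding this into \Cref{clm:reduction_II_first_step_Vmain_Vextra} shows that the logical target of this reduction is $K(0)+(\mu-1)\beta\,I-\frac{\epsilon}{2(n+1)}E$. Hence $\bigl(\XB+W+\mu\beta I_{--},\calE\bigr)$ simulates this Hamiltonian with error $\epsilon=1/2$.

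It then remains to bound the gap of the target $K(0)+(\mu-1)\beta\,I-\frac{\epsilon}{2(n+1)}E$. The scalar shift $(\mu-1)\beta\,I$ is irrelevant; $\norm{\frac{\epsilon}{2(n+1)}E}\le\frac{\epsilon}{2(n+1)}\norm{E}_1=\frac{\epsilon}{2}$ since every row of $E$ has $n+1$ ones; and the Kronecker sum $K(0)=-\mGHV\sum_iX_i$ has spectral gap $2\mGHV$ by \Cref{fct:ksum}. So \Cref{fct:weyl} gives the target a gap of at least $2\mGHV-\epsilon$, and \Cref{lem:gsim} then gives $S_2$ a spectral gap of at least $2\mGHV-\epsilon-2\epsilon=2\mGHV-\frac32\ge\mGHV\ge\min\{\mGHV,\DeltaB\}$ (for $\nGHV$ large). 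Concatenating $S_1$ and $S_2$ yields the claimed gap $\Omega(\min\{\mGHV,\DeltaB\})$ along \Cref{eq:lem:pathB_head}, and the claim follows. The crucial move, as noted, is the $S_2$ step: rather than perturbing around $\XB+W$, one recognizes $S_2$ inside the perturbative-simulation template with $V_\rmextra$ rescaled, which reduces the gap question to the trivial gap of $K(0)$ — the same strategy as in the proof of \Cref{clm:lem:pathA_tail}, there rescaling $V_\rmmain$ instead.
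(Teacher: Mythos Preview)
Your proof is correct and follows essentially the same approach as the paper. For the first segment you use the Kronecker-sum reduction to a two-qubit Hamiltonian and diagonalize in the Bell basis, exactly as the paper does (you retain the identity shift $\tfrac12\lambda\Delta\alpha\,I$, while the paper drops it; this does not affect the gap). For the second segment you recognize $\XB+W+\mu\beta I_{--}$ as the second-order reduction with $V_\rmextra$ rescaled, so that it simulates a scalar shift of $K(0)$; the paper makes the same move, phrasing it as ``$H'(0)-\lambda\beta I_{--}$ simulates $K(0)-\lambda\beta I$ by the same argument.'' Your write-up is a bit more explicit about why \Cref{lemma:2nd} still applies with the rescaled $V_\rmextra$ and about the monotonicity in $\lambda$ along $S_1$, but the underlying argument is the same.
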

\begin{proof}
	We have shown that $(H'(0), \calE)$ simulates $K(0)$ with error $\epsilon=1/2$ in \Cref{sec:reduction_II_hypercube_to_X}.
	Therefore, every Hamiltonian on the second half of the \apath\ \Cref{eq:lem:pathB_head}, which has format $H'(0)-\lambda \beta I_{--}$, simulates $K(0) - \lambda \beta I$ by the same argument.
	Since $K(0)=-\mGHV\sum_{i\in[n]}X_i$ has spectral gap $\mGHV$, the spectral gap of the second half of \Cref{eq:lem:pathB_head} is at least $\mGHV - 2 \epsilon=\Omega(\mGHV)$ by \Cref{lem:gsim}.

	For the first half of \Cref{eq:lem:pathB_head}, applying \Cref{fct:ksum}, the spectral gap is identical to that of the \apath
	\begin{equation}
		-\DeltaB (X \tensor I + I \tensor X) \pathto -\DeltaB (X \tensor I + I \tensor X) - \frac12 \Delta \alpha (Z \tensor Z).
	\end{equation}
	Let $L(\lambda) \coloneqq -\DeltaB (X \tensor I + I \tensor X) - \lambda \cdot \frac12 \Delta \alpha (Z \tensor Z)$.
	Note that $L(\lambda)$ is block-diagonal in the Bell basis $\{\ket{00}\pm\ket{11},\ket{01}\pm\ket{10}\}$ and its eigenvalues are $\pm \frac12 \lambda \Delta \alpha$ and $\pm \sqrt{\left(\frac12 \lambda \Delta \alpha\right)^2+4 \DeltaB^2}$.
	Therefore, the spectral gap is
	\begin{align}
		\Omega\left(\min \left\{ \DeltaB, \frac{\DeltaB^2}{\lambda \Delta \alpha} \right\}\right)
		 & = \Omega\left(\min \left\{ \DeltaB, \frac{\Delta(M+1)/2}{\lambda \Delta (M+1) \left(\mGHV + \frac{\epsilon}{2(n+1)} \right)^{-1}} \right\}\right)
		\tag{by \Cref{eq:sec:reduction_II_cleaner_1} and \Cref{eq:lem:pathB_DeltaBdet}}                                                                      \\
		 & = \Omega\left(\min \left\{ \DeltaB, \frac{\mGHV}{\lambda} \right\}\right)
		=\Omega\left(\min\{\DeltaB,\mGHV\}\right)
		\tag{since $\lambda\in[0,1]$}
	\end{align}
	as claimed.
\end{proof}

\subsection{Putting Everything Together}\label{sec:reduction_II_together}

Finally we prove \Cref{lem:pathB} by combining the \apath\ in \Cref{sec:reduction_II_hypercube_to_X} and the additional decoration in \Cref{sec:reduction_II_cleaner}.

\begin{proof}[Proof of \Cref{lem:pathB}]
	Recall \apaths\ $H'([0,1])$ from \Cref{sec:reduction_II_hypercube_to_X} and \Cref{eq:lem:pathB_head}, \Cref{eq:lem:pathB_tail} from \Cref{sec:reduction_II_cleaner}.
	We sequentially connect \Cref{eq:lem:pathB_head}, $H'([0,1])$, and \Cref{eq:lem:pathB_tail}.
	This forms a \apath\ $H:[0,3] \to \mathcal{H}$. By rescaling the domain linearly from $[0,3]$ to $[0,1]$, we obtain the final path $H(t)$.

	Set $\DeltaB\in[\mGHV,\poly(\mGHV)]$ and recall that $\mGHV,\nGHV,n$ are polynomially related.
	By \Cref{clm:lem:pathB_1}, \Cref{clm:lem:pathB_2}, \Cref{clm:lem:pathB_3}, and \Cref{clm:lem:pathB_4}, $H(t)$ has spectral gap $\Omega(\sqrt{\mGHV})$, norm $\poly(\nGHV)$, and it is $\poly(\nGHV)$-Lipschitz.
	This verifies \Crefitem{lem:pathB}{itm:lem:pathB_1} and \Crefitem{lem:pathB}{itm:lem:pathB_3}.

	For \Crefitem{lem:pathB}{itm:lem:pathB_2}, we note that any query to $\mel{x}{H(t)}{y}$ reduces to queries of the $(x,y)$ entries of $H_0(t')$ and $V_\rmextra(t')$ for some $t' \coloneqq t'(t)$. By our construction, these can be implemented using $O(1)$ and $O(n)$ queries, respectively, to the entries of $K(t')$.
	Consequently, any $q$-query classical algorithm that finds $\ket{u}$ can also find $\ket{w}$, since $\ket{w} = \calE^\dagger \ket{u}$.
	This transformation results in an $O(qn)$-query classical algorithm with oracle access to $K(t')$.
	By \Crefitem{lem:pathA}{itm:lem:pathA_2}, we obtain $O(qn) \geq \exp(\nGHV^{1/5-o(1)})$.
	Since $n=\nA=\Theta(\mGHV)=\nGHV^{16/5-o(1)}$, this implies that $q\ge\exp(\nGHV^{1/5-o(1)})$.
\end{proof}

\subsection{Deferred Proofs}\label{sec:missing_proofs_in_sec:reduction_II}

\begin{proof}[Proof of \Cref{clm:reduction_II_first_step_Vmain_Vextra}]
	The first term is obvious by the definition of $\calH_-$.
	For the second term, observe that $H_0$ is diagonal and hence
	\begin{equation}
		\mel{x,y}{(H_0^{-1})_{++}}{x,y} = \begin{cases}
			-a_{x,y}\cdot(M+1)^{-1}                                               & \dist(x,y)=1, \\
			\left(\mGHV + \frac{\epsilon}{2(n+1)}\right)(M+1)^{-1}\dist(x,y)^{-1} & \dist(x,y)>1.
		\end{cases}
	\end{equation}
	Therefore
	\begin{align}
		\mel{x,x}{(V_\rmmain)_{-+} H_0^{-1} (V_\rmmain)_{+-}}{x,x}
		 & =\frac{M+1}2\cdot\sum_{\substack{(x',y') \\\dist(x,x')+\dist(x,y')=1}}\mel{x',y'}{(H_0^{-1})_{++}}{x',y'}\\
		 & =-\sum_{y\colon\dist(x,y)=1}a_{x,y}.
	\end{align}
	For $\dist(x,y)=1$, we have
	\begin{align}
		\mel{y,y}{(V_\rmmain)_{-+} H_0^{-1} (V_\rmmain)_{+-}}{x,x}
		 & =\frac{M+1}2\cdot\left(\mel{x,y}{H_0^{-1}}{x,y}+\mel{y,x}{H_0^{-1}}{y,x}\right) \\
		 & =-a_{x,y}.
	\end{align}
	For $\dist(x,y)>1$, we have
	\begin{align}
		\mel{y,y}{(V_\rmmain)_{-+} H_0^{-1} (V_\rmmain)_{+-}}{x,x}=0.
	\end{align}
	In summary, we have
	\begin{align}
		(V_\rmextra)_{--} - (V_\rmmain)_{-+} H_0^{-1} (V_\rmmain)_{+-}
		 & =\sum_{\substack{(x,y)                     \\\dist(x,y)\le1}}a_{x,y}\ketbra{x,x}{y,y}\\
		 & =\sum_{\substack{(x,y)                     \\\dist(x,y)\le1}}\left(\mel xKy-\frac{\epsilon}{2(n+1)}\right)\ketbra{x,x}{y,y}
		\tag{by \Cref{eq:lem:pathB_a-def}}            \\
		 & =\bar K-\frac{\epsilon}{2(n+1)}\cdot\bar E
	\end{align}
	as claimed.
\end{proof}

\begin{proof}[Proof of \Cref{clm:lem:pathB_3}]
	We only analyze the spectral gap of this \apath.
	For the first half of \Cref{eq:lem:pathB_tail}, we define for any $\lambda \in [0,1]$
	\begin{equation}
		L(\lambda) \coloneqq H'(1) - \lambda \cdot 3 \DeltaB \nB \ketbra{u}{u} = \Delta H_0 + (V - \lambda \cdot 3 \DeltaB \nB \ketbra{u}{u}).
	\end{equation}
	Recall from \Cref{eq:lem:pathB_DeltaBdet} that $\DeltaB=\sqrt{\frac{\Delta(M+1)}2}$ and $M=\poly(\nGHV)$.
	Setting $\DeltaB=\poly(\nGHV)$ sufficiently large, we have $\Delta>16\norm{V - \lambda \cdot 3  \DeltaB \nB \ketbra{u}{u}}$.
	By \Cref{lem:taylor}, the low-energy part of $L(\lambda)$ is given by
	\begin{equation}
		\underbrace{(V_{--} - \lambda \cdot 3 \DeltaB \nB \ketbra{u}{u}) - \Delta^{-1}V_{-+} H_0^{-1} V_{+-}}_{G(\lambda)} + O(\Delta^{-2} \norm{V - \lambda \cdot 3 \DeltaB \nB \ketbra{u}{u}}^3).
	\end{equation}
	The $O(\cdot)$ term above is $o(1)$ as $\DeltaB = \poly(\nGHV)$.
	Therefore, the spectral gap of $L(\lambda)$ is $o(1)$-close to the spectral gap of
	\begin{align}
		G(\lambda)
		 & = (V_{--} - \Delta^{-1}V_{-+} H_0^{-1} V_{+-}) - \lambda \cdot 3 \DeltaB \nB \ketbra{u}{u}                                   \\
		 & = \bar{K} - \frac{\epsilon}{2(n+1)} \bar{E} - \lambda \cdot 3 \DeltaB \nB \ketbra{u}{u}
		\tag{by \Cref{clm:reduction_II_first_step_Vmain_Vextra}}                                                                        \\
		 & = -\mGHV\cdot\calE \ketbra{w}{w} \calE^\dagger - \frac{\epsilon}{2(n+1)} \bar{E} - \lambda \cdot 3 \DeltaB \nB \ketbra{u}{u}
		\tag{since $K(1)=-\mGHV\ketbra w$}                                                                                              \\
		 & = -(\mGHV + 3 \lambda \DeltaB \nB) \ketbra{u}{u} - \frac{\epsilon}{2(n+1)} \bar{E}.
		\tag{since $\calE\ket{w} = \ket{u}$}
	\end{align}
	Recall that $E \coloneqq \sum_{x,y \colon \dist(x,y) \leq 1} \ketbra{x}{y}$ and $\norm { \overline{E} } = \norm{E} \leq n+1$.
	Therefore, the spectral gap of $L(\lambda)$, i.e., the first half of the \apath\ \Cref{eq:lem:pathB_tail}, is at least $\mGHV + 3 \lambda \DeltaB \nB - O(1)=\Omega(n\DeltaB)$ by \Cref{fct:weyl}.

	For the second half of \Cref{eq:lem:pathB_tail}, we rewrite it as
	\begin{equation}\label{eq:clm:lem:pathB_3_2}
		\XB + \Delta H_0 + V_\rmextra - 3 \DeltaB \nB \ketbra{u}{u} \pathto \XB - 3 \DeltaB \nB \ketbra{u}{u}.
	\end{equation}
	By \Cref{fct:weyl}, its spectral gap is at least $-2 \norm{\XB}$ plus the spectral gap of the \apath
	\begin{equation}\label{eq:clm:lem:pathB_3_1}
		\Delta H_0 + V_\rmextra - 3 \DeltaB \nB \ketbra{u}{u} \pathto - 3 \DeltaB \nB \ketbra{u}{u},
	\end{equation}
	which is a \apath\ of diagonal Hamiltonians.
	Let $\lambda\in[0,1]$.
	By \Cref{clm:reduction_II_first_step_H_0}, we know that $(\lambda\Delta H_0)_{--}=0$ and eigenvalues of $(\lambda\Delta H_0)_{++}$ are always non-negative.
	By \Cref{eq:lem:pathB_Vextra} and \Cref{eq:lem:pathB_a-def} and the fact that $\bar K=-\mGHV\ketbra u$, we know that
	\begin{equation}
		\lambda V_\rmextra=\lambda\cdot\left(-\mGHV\ketbra u-\frac{\epsilon(n-1)}{2(n+1)}\sum_x\ketbra{x,x}{x,x}\right)=\lambda\cdot\left(-\mGHV\ketbra u-\frac{\epsilon(n-1)}{2(n+1)}\cdot I_{--}\right).
	\end{equation}
	This means the spectral gap of $\lambda\left(\Delta H_0+V_\rmextra\right)-3 \DeltaB\nB\ketbra u$ (i.e., \Cref{eq:clm:lem:pathB_3_1}) is $3 \DeltaB\nB+\lambda \mGHV$.
	Since $\norm{\XB} = \norm{-\DeltaB \sum_{i\in\nB} X_i} = \DeltaB \nB$, \Cref{eq:clm:lem:pathB_3_2} has spectral gap at least $3 \DeltaB\nB-2\DeltaB\nB=\DeltaB\nB$.
	This verifies the spectral gap of the second half of \Cref{eq:lem:pathB_tail} and completes the proof of \Cref{clm:lem:pathB_3}.
\end{proof}     \section{Toy TFI Hamiltonian: Proof of Lemma \ref{lem:linearTFI}}\label{sec:toy_shift}

In this section we prove \Cref{lem:linearTFI} (restated below), which constructs a toy TFI Hamiltonian that gradually shifts ground states.

\lemlinearTFI*

\paragraph{Notation.}
We recall related definitions here.
For each $j=0,1,\ldots,\ell$, we define $\ket{\bs{j}} \coloneqq \ket{1^j 0^{\ell-j}}$.
Let $\epsilonL, \DeltaL>0$ be parameters to be determined.
Define the TFI Hamiltonian $\HL(t) = \QL + \XL + \ZL(t)$ where $\XL = -\sum_{i \in [\ell]} X_i$ and
\begin{equation}\label{eq:sec:toy_shift_1}
	\QL = \DeltaL \left( (\ell-1)I + Z_1 - Z_\ell - \sum_{i \in [\ell-1]} Z_{i}Z_{i+1} \right),
	\quad
	\ZL(t) = \frac{1}{\epsilonL} \sum_{i \in [\ell]} (i-t)(I - Z_i).
\end{equation}

We first verify the easy norm condition.

\begin{proof}[Proof of \Crefitem{lem:linearTFI}{itm:lem:linearTFI_1}]
	By the definition of $\HL(t)$, we have
	\begin{align}
		\norm{\HL(t)}
		 & \le\norm{\QL}+\norm{\XL}+\norm{\ZL(t)}                               \\
		 & \le\DeltaL\cdot10\ell+\ell+\frac2{\epsilonL}\sum_{i\in[\ell]}(i+|t|)
		\tag{by \Cref{eq:sec:toy_shift_1}}                                      \\
		 & \le(|t|+1)\poly(\ell,1/\epsilonL),
	\end{align}
	where we use the fact that $\ell\ge1$, $\epsilonL\in(0,1]$, and $\DeltaL=\poly(\ell,1/\epsilonL)$ for the last inequality.
\end{proof}

To verify other properties, we expand $\HL(\lambda,t)$ as follows:
\begin{align}
	\HL(\lambda,t)
	 & =\QL+\lambda\XL+\ZL(t)                                                                                                                              \\
	 & =-\lambda\sum_{i\in[\ell]}X_i+\DeltaL\left((\ell-1)I+Z_1-Z_\ell-\sum_{i\in[\ell-1]}Z_iZ_{i+1}\right)+\frac1{\epsilonL}\sum_{i\in[\ell]}(i-t)(I-Z_i) \\
	 & =
	-\lambda\sum_{(x,x')\colon\dist(x,x')=1}\ketbra{x}{x'}
	+\sum_{x\in\{0,1\}^\ell}(\alpha_x+\beta_x)\ketbra x,
	\label{eq:sec:toy_shift_2}
\end{align}
where
\begin{align}\label{eq:sec:toy_shift_3}
	\alpha_x
	=\DeltaL\left(\ell-1+(-1)^{x_1}-(-1)^{x_\ell}-\sum_{i\in[\ell-1]}(-1)^{x_i+x_{i+1}}\right)
\end{align}
and
\begin{align}\label{eq:sec:toy_shift_4}
	\beta_x=\frac1{\epsilonL}\sum_{i\in[\ell]}(i-t)\left(1-(-1)^{x_i}\right).
\end{align}
Note that $\HL(t)=\HL(1,t)$.

Recall that $\ket{\bs j}=\ket{1^j0^{\ell-j}}$. Define $\mathcal I=\{\bs0,\bs1,\ldots,\bs\ell\}$.
We also note the following easy claim for $\alpha_x,\beta_x$, the proof of which is deferred to \Cref{sec:missing_proofs_in_sec:toy_shift}.

\begin{claim}\label{clm:toy_shift_tilde_HL_alpha_x}
	For $x=\bs j\in\mathcal I$, we have $\alpha_x=0$ and $\beta_x=j(j+1-2t)/\epsilonL$.
	For $x\notin\mathcal I$, we have $\alpha_x\ge\DeltaL$ and $|\beta_x|\le2(\ell^2+|t|\cdot\ell)/\epsilonL$.
\end{claim}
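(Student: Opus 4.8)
The plan is to handle $\alpha_x$ and $\beta_x$ separately, since both are completely explicit functions of $x$ and $t$.

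For $\beta_x$ I would first note that $1-(-1)^{x_i}$ equals $2$ if $x_i=1$ and $0$ if $x_i=0$, so \Cref{eq:sec:toy_shift_4} becomes $\beta_x=\frac{2}{\epsilonL}\sum_{i\colon x_i=1}(i-t)$. For $x=\bs j=1^j0^{\ell-j}$ the set $\{i\colon x_i=1\}$ is exactly $\{1,\dots,j\}$, and summing the arithmetic progression gives $\beta_{\bs j}=\frac{2}{\epsilonL}\bigl(\tfrac{j(j+1)}{2}-jt\bigr)=\frac{j(j+1-2t)}{\epsilonL}$, as claimed. For the upper bound, the triangle inequality yields $\abs{\beta_x}\le\frac{2}{\epsilonL}\sum_{i=1}^{\ell}(i+\abs{t})=\frac{2}{\epsilonL}\bigl(\tfrac{\ell(\ell+1)}{2}+\abs{t}\ell\bigr)$, and $\tfrac{\ell(\ell+1)}{2}\le\ell^2$ for $\ell\ge1$ finishes it; note this bound in fact holds for every $x$, not just $x\notin\mathcal{I}$.

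For $\alpha_x$ the key step is to rewrite the two-local Ising term in \Cref{eq:sec:toy_shift_3} in terms of the domain-wall count $d\coloneqq\#\{i\in[\ell-1]\colon x_i\ne x_{i+1}\}$. Since $(-1)^{x_i+x_{i+1}}=1$ when $x_i=x_{i+1}$ and $-1$ otherwise, one gets $\sum_{i\in[\ell-1]}(-1)^{x_i+x_{i+1}}=(\ell-1-d)-d=\ell-1-2d$, hence
\[
\alpha_x=\DeltaL\bigl((-1)^{x_1}-(-1)^{x_\ell}+2d\bigr).
\]
I would then conclude with a short case analysis. If $x=\bs j\in\mathcal{I}$: for $j\in\{0,\ell\}$ the string is constant, so $d=0$ and $x_1=x_\ell$, giving $\alpha_x=0$; for $1\le j\le\ell-1$ there is a single $1{\to}0$ wall, so $d=1$, $x_1=1$, $x_\ell=0$, and $\alpha_x=\DeltaL(-2+2)=0$. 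If $x\notin\mathcal{I}$, the plan is to show the integer $(-1)^{x_1}-(-1)^{x_\ell}+2d$ is strictly positive; being even, it is then $\ge2$, so $\alpha_x\ge2\DeltaL\ge\DeltaL$.

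The only point needing real care is this last positivity claim; everything else is a direct computation. For $d\ge2$ it is automatic since $(-1)^{x_1}-(-1)^{x_\ell}\ge-2$. For $d=0$ the string is constant ($x=0^\ell$ or $1^\ell$), already in $\mathcal{I}$. For $d=1$, nonpositivity would force $x_1=1$, $x_\ell=0$, but a string with exactly one domain wall that starts with $1$ must be $1^j0^{\ell-j}$ with $1\le j\le\ell-1$, again in $\mathcal{I}$; here one uses the run decomposition — a string with $d$ walls consists of $d+1$ maximal runs alternating between $0$ and $1$, so $x_\ell=x_1$ iff $d$ is even — to pin down all the would-be boundary cases and see they all lie in $\mathcal{I}$, giving the desired contradiction.
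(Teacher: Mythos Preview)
Your proposal is correct and follows essentially the same approach as the paper: rewrite $\alpha_x$ via the domain-wall count as $\DeltaL\bigl((-1)^{x_1}-(-1)^{x_\ell}+2d\bigr)$ and then do a short case analysis on $d$, while handling $\beta_x$ by the obvious arithmetic-series computation and triangle inequality. The only cosmetic difference is that you use the parity observation (the integer is even, hence $\ge 2$ once positive) to get $\alpha_x\ge 2\DeltaL$, whereas the paper computes the two remaining cases $d\ge 2$ and $x=0^j1^{\ell-j}$ explicitly to reach the same conclusion.
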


In the rest of the section, we verify the rest of the properties in \Cref{lem:linearTFI}.
This is divided into two cases depending on the range of $t$.

\paragraph{Organization.}
In \Cref{sec:toy_shift_unbounded}, we prove \Cref{lem:linearTFI} for the unbounded time regime where $|t|$ is extremely large.
In \Cref{sec:toy_shift_bounded}, we handle the bounded time regime for reasonably large $|t|$.
In \Cref{sec:toy_shift_tridiag}, we prove some useful spectral results of certain tridiagonal matrices, which are used in \Cref{sec:toy_shift_bounded}.
Missing proofs from the section can be found in \Cref{sec:missing_proofs_in_sec:toy_shift}.

\subsection{Unbounded Time Regime}\label{sec:toy_shift_unbounded}

We first handle the unbounded time regime where $|t|\ge10\ell^2$.

In this case, we define
\begin{align}
	\HL'(t)=\sum_{x\in\{0,1\}^\ell}(\alpha_x+\beta_x)\ketbra x.
\end{align}
Since $\HL(\lambda,t)-\HL'(t)=-\lambda\sum_{i\in[\ell]}X_i$. we have the following easy fact.

\begin{claim}\label{clm:toy_shift_unbounded_diff}
	If $\lambda\in[0,2]$, then $\norm{\HL(\lambda,t)-\HL'(t)}=\lambda\ell$.
\end{claim}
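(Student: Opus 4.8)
The plan is to recognize the claimed identity as essentially immediate from the diagonalization of $\HL(\lambda,t)$ already carried out in \Cref{eq:sec:toy_shift_2}. There, the diagonal part of $\HL(\lambda,t)$ — the operator $\sum_{x\in\{0,1\}^\ell}(\alpha_x+\beta_x)\ketbra{x}$ — is precisely $\HL'(t)$, and crucially neither $\alpha_x$ nor $\beta_x$ depends on $\lambda$ (see \Cref{eq:sec:toy_shift_3} and \Cref{eq:sec:toy_shift_4}). All $\lambda$-dependence of $\HL(\lambda,t)$ therefore lies in the transverse term $\lambda\XL=-\lambda\sum_{i\in[\ell]}X_i$, which is the same as the off-diagonal part $-\lambda\sum_{(x,x')\colon\dist(x,x')=1}\ketbra{x}{x'}$ in \Cref{eq:sec:toy_shift_2} since each $X_i$ flips the $i$th bit. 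Hence $\HL(\lambda,t)-\HL'(t)=-\lambda\sum_{i\in[\ell]}X_i$, exactly as recorded just above the claim.

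It then remains to compute the operator norm of $\sum_{i\in[\ell]}X_i$. Here I would invoke \Cref{fct:ksum}: this operator is the Kronecker sum of $\ell$ copies of the single-qubit Pauli $X$, whose spectrum is $\{-1,+1\}$; therefore its spectrum is the $\ell$-fold Minkowski sum $\{-\ell,-\ell+2,\dots,\ell-2,\ell\}$, and in particular $\norm{\sum_{i\in[\ell]}X_i}=\ell$. Since the hypothesis $\lambda\in[0,2]$ guarantees $\lambda\ge 0$, we conclude $\norm{\HL(\lambda,t)-\HL'(t)}=\lambda\norm{\sum_{i\in[\ell]}X_i}=\lambda\ell$, as claimed.

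There is no substantive obstacle: the statement is an algebraic identity followed by an elementary spectral fact, and indeed the text flags it as an "easy fact." The one point requiring (trivial) care is that the claim asserts an equality rather than a bound, which uses $\lambda\ge 0$; a crude triangle-inequality estimate would only give $\norm{\HL(\lambda,t)-\HL'(t)}\le\lambda\ell$, but the exact value is what will be convenient when $\HL(\lambda,t)$ is compared perturbatively with the purely diagonal $\HL'(t)$ in the unbounded time regime of \Cref{sec:toy_shift_unbounded}.
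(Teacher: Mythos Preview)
Your proof is correct and follows essentially the same approach as the paper: the paper simply records the identity $\HL(\lambda,t)-\HL'(t)=-\lambda\sum_{i\in[\ell]}X_i$ just before the claim and declares the result an easy fact, while you additionally spell out the norm computation via \Cref{fct:ksum}.
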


Since $\HL'(t)$ is diagonal, its spectrum is also easy to analyze.

\begin{claim}\label{clm:toy_shift_unbounded_HL'}
	Assume $\ell\ge1$ and $\DeltaL\ge100\ell^4/\epsilonL$.
	\begin{itemize}
		\item It $t\le-10\ell^2$, then $\HL'(t)$ has spectral gap at least $4\ell/\epsilonL$ and $\ket{\bs0}$ is its ground state.
		\item If $t\ge10\ell^2$, then $\HL'(t)$ has spectral gap at least $4\ell/\epsilonL$ and $\ket{\bs\ell}$ is its ground state.
	\end{itemize}
\end{claim}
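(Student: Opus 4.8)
The plan is to use that $\HL'(t)$ is diagonal, so its spectrum is literally the multiset of diagonal entries $\alpha_x+\beta_x$ over $x\in\{0,1\}^\ell$, and to show that in the unbounded regime exactly one entry drops far below all the rest. I will invoke \Cref{clm:toy_shift_tilde_HL_alpha_x} for the facts about $\alpha_x,\beta_x$: namely $\alpha_{\bs j}=0$ and $\beta_{\bs j}=j(j+1-2t)/\epsilonL$ for $x=\bs j\in\mathcal I$, and $\alpha_x\ge\DeltaL\ge0$ for $x\notin\mathcal I$, so that $\alpha_x\ge0$ for every $x$. I will also record the closed form $\beta_x=\frac2{\epsilonL}\sum_{i\in S_x}(i-t)$, where $S_x=\{i:x_i=1\}$ is the support of $x$ with Hamming weight $w_x=|S_x|$; this is immediate from \Cref{eq:sec:toy_shift_4} since $1-(-1)^{x_i}=2x_i$.

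In the regime $t\le-10\ell^2$ I would argue as follows. The entry at $x=\bs0=0^\ell$ is $\alpha_{\bs0}+\beta_{\bs0}=0$. Any other $x$ has $w_x\ge1$, hence $\sum_{i\in S_x}i\ge1$ and $-w_xt\ge10\ell^2$, so $\beta_x\ge\frac2{\epsilonL}(1+10\ell^2)\ge20\ell^2/\epsilonL$; together with $\alpha_x\ge0$ this gives $\alpha_x+\beta_x\ge20\ell^2/\epsilonL\ge4\ell/\epsilonL$, using $\ell\ge1$. Thus $0$ is the strictly smallest eigenvalue, attained only by $\ket{\bs0}$, and the next eigenvalue exceeds it by at least $4\ell/\epsilonL$, which is the claimed gap and ground state.

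In the regime $t\ge10\ell^2$ I would argue dually. The only weight-$\ell$ string is $\bs\ell=1^\ell$, whose entry is $\alpha_{\bs\ell}+\beta_{\bs\ell}=\ell(\ell+1-2t)/\epsilonL$. Any other $x$ satisfies $w_x\le\ell-1$, so $\beta_x-\beta_{\bs\ell}=\frac2{\epsilonL}\big(\sum_{i\in S_x}i-\tfrac{\ell(\ell+1)}2+t(\ell-w_x)\big)\ge\frac2{\epsilonL}(-\ell^2+10\ell^2)=18\ell^2/\epsilonL$, using $\sum_{i\in S_x}i\ge0$, $\ell(\ell+1)/2\le\ell^2$, and $t(\ell-w_x)\ge10\ell^2$. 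With $\alpha_x\ge0$ this yields $\alpha_x+\beta_x\ge\beta_{\bs\ell}+18\ell^2/\epsilonL\ge\beta_{\bs\ell}+4\ell/\epsilonL$, so $\ket{\bs\ell}$ is the ground state and the spectral gap is at least $4\ell/\epsilonL$.

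Both cases collapse to elementary inequalities once the diagonal entries are expanded, so I do not anticipate a real obstacle; the only care needed is the bookkeeping for $\beta_x$ and the nonnegativity of $\alpha_x$, both of which come from \Cref{clm:toy_shift_tilde_HL_alpha_x}. As a sanity check, the hypothesis $\DeltaL\ge100\ell^4/\epsilonL$ is not actually needed for this claim (only $\DeltaL\ge0$ is used); it is kept for uniformity with the companion statements. Alternatively one could deduce the $t\ge10\ell^2$ case from the $t\le-10\ell^2$ case via the reverse-and-flip symmetry $x\mapsto(\bar x_\ell,\dots,\bar x_1)$, under which $\HL(\lambda,t)$ becomes $\HL(\lambda,\ell+1-t)$ up to an additive multiple of the identity, but the direct argument above avoids the minor threshold mismatch that symmetry would introduce.
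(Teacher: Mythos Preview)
Your argument is correct and takes a somewhat different route from the paper. The paper partitions the non-ground-state strings into $x\in\mathcal I$ and $x\notin\mathcal I$: for $x=\bs j\in\mathcal I$ it compares the explicit values $\beta_{\bs j}=j(j+1-2t)/\epsilonL$ across $j$, while for $x\notin\mathcal I$ it combines $\alpha_x\ge\DeltaL$ with the crude bound $|\beta_x|\le2(\ell^2+|t|\ell)/\epsilonL$ from \Cref{clm:toy_shift_tilde_HL_alpha_x}, and this is exactly where the hypothesis $\DeltaL\ge100\ell^4/\epsilonL$ is spent. You instead treat every $x$ other than the designated ground state uniformly via the Hamming-weight expansion $\beta_x=\tfrac{2}{\epsilonL}\sum_{i\in S_x}(i-t)$, so that the needed inequalities follow directly from $w_x\ge1$ (resp.\ $w_x\le\ell-1$) together with only $\alpha_x\ge0$. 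Your approach is more elementary and, as you correctly observe, does not use the $\DeltaL$ lower bound at all; the paper's approach has the minor advantage of quoting \Cref{clm:toy_shift_tilde_HL_alpha_x} as a black box rather than re-expanding $\beta_x$, at the cost of needing the extra hypothesis.
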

\begin{proof}
	By \Cref{clm:toy_shift_tilde_HL_alpha_x}, we first note that for $x\notin\mathcal I$, we have
	\begin{align}\label{eq:clm:toy_shift_unbounded_HL'_1}
		\alpha_x+\beta_x\ge\DeltaL-\frac{2(\ell^2+|t|\cdot\ell)}{\epsilonL}\ge\frac{50\ell^4}{\epsilonL}
		\quad\text{for $|t|\le10\ell^2$.}
	\end{align}

	Assume $t\le-10\ell^2$ and apply \Cref{clm:toy_shift_tilde_HL_alpha_x}.
	For $x=\bs0$, we have $\alpha_x+\beta_x=0$.
	For $x=\bs j$ where $j\in[\ell]$, we have $\alpha_x+\beta_x=j(j+1-2t)/\epsilonL\ge4\ell/\epsilonL$.
	Combining with \Cref{eq:clm:toy_shift_unbounded_HL'_1}, we know that the spectral gap of $\HL'(t)$ is at least $4\ell/\epsilonL$ and $\ket{\bs0}$ is its ground state.

	Assume $t\ge10\ell^2$ and the calculation is similar.
	For $x=\bs\ell$, we have $\alpha_x+\beta_x=\ell(\ell+1-2t)/\epsilonL<0$.
	For $x=\bs j$ where $j=0,1,\ldots,\ell-1$, we have
	\begin{align}
		(\alpha_x+\beta_x)-(\alpha_{\bs\ell}+\beta_{\bs\ell})
		=\frac{j(j+1-2t)-\ell(\ell+1-2t)}{\epsilonL}
		\ge\frac{2t-\ell^2-\ell}{\epsilonL}
		\ge\frac{4\ell}{\epsilonL}.
	\end{align}
	Combining with \Cref{eq:clm:toy_shift_unbounded_HL'_1}, we know that the spectral gap of $\HL'(t)$ is at least $4\ell/\epsilonL$ and $\ket{\bs\ell}$ is its ground state.
\end{proof}

Now we can complete the proof of \Cref{lem:linearTFI} in this unbounded time regime.
Note that \Crefitem{lem:linearTFI}{itm:lem:linearTFI_3} does not apply in this regime and hence we only verify \Crefitem{lem:linearTFI}{itm:lem:linearTFI_2} and \Crefitem{lem:linearTFI}{itm:lem:linearTFI_4}.

\begin{proof}[Proof of \Crefitem{lem:linearTFI}{itm:lem:linearTFI_2} and \ref{itm:lem:linearTFI_4}: the $|t|\ge10\ell^2$ Case]
	For \Crefitem{lem:linearTFI}{itm:lem:linearTFI_2}, by \Cref{fct:weyl}, the spectral gap of $\HL(t)=\HL(1,t)$ is at least the spectral gap of $\HL'(t)$ minus $2\norm{\HL(1,t)-\HL'(t)}$.
	By \Cref{clm:toy_shift_unbounded_HL'} and \Cref{clm:toy_shift_unbounded_diff}, this is at least $(4\ell/\epsilonL)-2\ell\ge2\ell\ge 2$.

	For \Crefitem{lem:linearTFI}{itm:lem:linearTFI_4}, we apply \Cref{fct:davis-kahan} with \Cref{clm:toy_shift_unbounded_diff} and \Cref{clm:toy_shift_unbounded_HL'}, which implies that the ground states of $\HL(\lambda,t)$ and $\HL'(t)$ are within distance
	\begin{equation}
		\frac{2^{3/2}\cdot1\cdot2\ell}{4\ell/\epsilonL}\le\epsilonL
	\end{equation}
	in the $\ell_2$ norm.
\end{proof}

\subsection{Bounded Time Regime}\label{sec:toy_shift_bounded}

Now we handle the bounded time regime where $|t|\le10\ell^2$.

We will show that $\HL(\lambda,t)$ embeds a smaller Hamiltonian of dimension $\ell+1$.
To specify this Hamiltonian, we denote $\calH$ by the Hilbert space associated with $\HL(\lambda,t)$ and let $\calK$ be the Hilbert space spanned by $\ket{0},\ket1,\ldots,\ket\ell$.
Define the encoding isometry $\calE\colon\calK\to\calH$ by
\begin{equation}\label{eq:sec:toy_shift_5}
	\calE\ket j=\ket{\bs j}=\ket{1^j0^{\ell-j}}.
\end{equation}
Let $\calH_-:=\Im\calE$ and we write $\calH=\calH_-\oplus\calH_+$.
Define Hamiltonian $\tilde\HL(\lambda,t)$ by
\begin{align}\label{eq:sec:toy_shift_6}
	\tilde\HL(\lambda,t) \coloneqq
	-\lambda\sum_{j\in[\ell]}\left(\ketbra{j-1}{j}+\ketbra{j}{j-1}\right)
	+\frac1{\epsilonL}\sum_{j=0}^\ell\beta_{\bs j}\ketbra j.
\end{align}

We now show by first-order reduction (\Cref{lemma:1st}) that $\HL(\lambda,t)$ simulates $\tilde\HL(\lambda,t)$.

\begin{lemma}\label{lem:toy_shift_HL_to_tilde_HL}
	Assume $|t|\le10\ell^2$, $\lambda\in[0,2]$, and $\DeltaL=\poly(\ell,1/\epsilonL)$.
	Then $\HL(\lambda,t)$ and $\calE$ simulate $\tilde\HL(\lambda,t)$ with error $\epsilonL$.
\end{lemma}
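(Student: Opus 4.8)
The plan is to apply the first-order perturbative reduction \Cref{lemma:1st} with low-energy subspace $\calH_-=\im(\calE)$ and target Hamiltonian $H_\rmtarget=\tilde\HL(\lambda,t)$. First I would use the computational-basis expansion \Cref{eq:sec:toy_shift_2} to split
\[
\HL(\lambda,t)=H_0+V,\quad H_0\coloneqq\sum_{x\in\{0,1\}^\ell}\alpha_x\ketbra{x}{x},\quad V\coloneqq-\lambda\sum_{x,x'\colon\dist(x,x')=1}\ketbra{x}{x'}+\sum_{x\in\{0,1\}^\ell}\beta_x\ketbra{x}{x},
\]
so that $H_0$ carries only the large Ising penalty and $V$ is the transverse field together with the linear tilt. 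Because $H_0$ is diagonal and $\calH_-$ is spanned by the computational basis states $\ket{\bs0},\dots,\ket{\bs\ell}$, the operator $H_0$ is block-diagonal with respect to $\calH_-\oplus\calH_+$. By \Cref{clm:toy_shift_tilde_HL_alpha_x} we have $\alpha_{\bs j}=0$ for all $j$, so $(H_0)_{--}=0$ (hence $\zeta=0$ in the notation of \Cref{lemma:1st}), while $\alpha_x\ge\DeltaL$ for every $x\notin\mathcal I$, so all eigenvalues of $(H_0)_{++}$ are at least $\DeltaL$; accordingly I take $\Delta=\DeltaL$.

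Next I would verify the closeness hypothesis \eqref{1st} of \Cref{lemma:1st}, which in fact holds with left-hand side $0$. The elementary observation is that $\dist(\bs j,\bs k)=1$ if and only if $\abs{j-k}=1$, since $\bs j=1^j0^{\ell-j}$ and $\bs k=1^k0^{\ell-k}$ differ in exactly $\abs{j-k}$ coordinates. Therefore $V_{--}=P_-VP_-$ is supported on $\spn\{\ket{\bs0},\dots,\ket{\bs\ell}\}$ and equals $-\lambda\sum_{j\in[\ell]}(\ketbra{\bs{j-1}}{\bs j}+\ketbra{\bs j}{\bs{j-1}})+\sum_{j=0}^{\ell}\beta_{\bs j}\ketbra{\bs j}{\bs j}$, so $\calE^\dagger V_{--}\calE$ matches the definition of $\tilde\HL(\lambda,t)$ in \Cref{eq:sec:toy_shift_6}. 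Since $(H_0)_{--}=0$, this gives $\norm{\overline{H}_\rmtarget-(H_0)_{--}-V_{--}}=0\le\epsilonL/2$.

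It then remains to check the quantitative requirement $\Delta\ge O(\epsilon^{-1}\norm{V}^2+\epsilon_{\rm enc}^{-1}\norm{V})$ with $\epsilon=\epsilon_{\rm enc}=\epsilonL$. The off-diagonal block of $V$ is $-\lambda\sum_{i\in[\ell]}X_i$, of norm at most $\lambda\ell\le2\ell$, while its diagonal block has norm $\max_x\abs{\beta_x}\le 2(\ell^2+\abs{t}\ell)/\epsilonL\le\poly(\ell,1/\epsilonL)$ by \Cref{clm:toy_shift_tilde_HL_alpha_x} and the hypothesis $\abs{t}\le10\ell^2$; hence $\norm{V}\le\poly(\ell,1/\epsilonL)$ and the right-hand side above is $\poly(\ell,1/\epsilonL)$. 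Choosing $\DeltaL=\poly(\ell,1/\epsilonL)$ large enough makes the inequality hold, and \Cref{lemma:1st} then yields that $H_\rmsim=H_0+V=\HL(\lambda,t)$ and $\calE$ simulate $\tilde\HL(\lambda,t)$ with error $(\epsilonL,\epsilonL)$, i.e., with error $\epsilonL$. I do not expect a genuine obstacle: this is a direct instance of first-order reduction, and the only points requiring care are confirming that $\calE^\dagger V_{--}\calE$ reproduces $\tilde\HL(\lambda,t)$ exactly, and that the polynomials bounding $\norm{V}$ do not themselves involve $\DeltaL$, so that the apparently self-referential choice of $\DeltaL$ is legitimate.
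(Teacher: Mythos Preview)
Your proof is correct and follows the same approach as the paper: apply the first-order reduction \Cref{lemma:1st} with $\calH_-=\spn\{\ket{\bs j}\}_{j=0}^\ell$, and use \Cref{clm:toy_shift_tilde_HL_alpha_x} together with $|t|\le 10\ell^2$ for the eigenvalue and norm estimates. Your decomposition is in fact slightly cleaner than the paper's---you keep $H_0=\sum_x\alpha_x\ketbra{x}{x}$ purely diagonal and put all transverse-field terms into $V$, whereas the paper places the off-diagonal hops between states of $\calH_+$ into $H_0$ and must then invoke Weyl's inequality to recover the lower bound $\DeltaL/2$ on the eigenvalues of $(H_0)_{++}$.
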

\begin{proof}
	Fix $\lambda,t$ and shorthand $\HL=\HL(\lambda,t),\tilde\HL=\tilde\HL(\lambda,t)$.
	In light of \Cref{lemma:1st}, define
	\begin{equation}
		V=-\lambda\sum_{\substack{(x,x')\colon\dist(x,x')=1\\x\in\mathcal I~\lor~x'\in\mathcal I}}\ketbra{x}{x'}+\sum_{x\in\{0,1\}^\ell}\beta_x\ketbra x
	\end{equation}
	and
	\begin{align}
		H_0
		 & =\tilde\HL-V=-\lambda\sum_{\substack{(x,x')\colon\dist(x,x')=1 \\x\notin\mathcal I~\land~x'\notin\mathcal I}}\ketbra{x}{x'}
		+\sum_{x\in\{0,1\}^\ell}\alpha_x\ketbra x                         \\
		 & =-\lambda\sum_{\substack{(x,x')\colon\dist(x,x')=1             \\x\notin\mathcal I~\land~x'\notin\mathcal I}}\ketbra{x}{x'}
		+\sum_{x\notin\mathcal I}\alpha_x\ketbra x.
		\tag{by \Cref{clm:toy_shift_tilde_HL_alpha_x}}
	\end{align}
	Hence $H_0$ is block-diagonal and $(H_0)_{--}=0$.
	In addition by \Cref{fct:weyl} and \Cref{clm:toy_shift_tilde_HL_alpha_x}, the minimal eigenvalue of $(H_0)_{++}$ is at least
	\begin{align}
		\min_{x\notin\mathcal I}\alpha_x-2\lambda\ell
		\ge\DeltaL-2\lambda\ell\ge\DeltaL/2,
	\end{align}
	where we recall $\lambda\in[0,2]$ and $\DeltaL=\poly(\ell,1/\epsilonL)$ for the last inequality.
	To apply \Cref{lemma:1st}, it remains to show $\DeltaL/2\ge O(\epsilonL^{-1}\norm{V}^2)$.
	To this end, we simply note that
	\begin{align}
		\norm{V}
		 & \le\lambda\ell+\max_x|\beta_x|
		\le\poly(|t|,\lambda,\ell,1/\epsilonL)
		\tag{by \Cref{clm:toy_shift_tilde_HL_alpha_x}} \\
		 & =\poly(\ell,1/\epsilonL).
		\tag{since $|t|\le10\ell^2$ and $\lambda\in[0,2]$}
	\end{align}
	This completes the proof of \Cref{lem:toy_shift_HL_to_tilde_HL}.
\end{proof}

By \Cref{lem:toy_shift_HL_to_tilde_HL}, we aim to study the spectrum of $\tilde\HL(\lambda,t)$, which is essentially the low energy spectrum of $\HL(\lambda,t)$.
Since $\tilde\HL(\lambda,t)$ is a tridiagonal matrix, we need the following result which will be proved in \Cref{sec:toy_shift_tridiag}.

\begin{lemma}\label{lem:convex}
	Assume $0<\epsilonL \leq 1/10$ and $\lambda\in[0,2]$. Then $\tilde\HL(\lambda,t)$ satisfies the following properties.
	\begin{enumerate}[label=(\roman*)]
		\item\label{itm:lem:convex_3}
		For $t \in [0,\ell+1]$, let $\ket{\psi}$ be a ground state of $\tilde\HL(1,t)$. Then the spectral gap of $\tilde\HL(1,t)$ is at least $3/4$ and $\ket{\psi}$ is $O(\epsilonL)$-close to a normalized state that is a linear combination of $\ket{a}$ and $\ket{b}$, where $a \coloneqq \max\{\lfloor t-1/2 \rfloor,0\}$ and $b \coloneqq \min\{\lceil t-1/2 \rceil,\ell\}$.
		\item\label{itm:lem:convex_4}
		For any $\lambda \in [0,2]$ and $t \leq 0$ (resp., $t \geq \ell+1$), let $\ket{\psi}$ be a ground state of $\tilde\HL(\lambda ,t)$. Then the spectral gap of $\tilde\HL(1,t)$ is at least $3/4$ and $\ket{\psi}$ is $O(\epsilonL)$-close to $\ket{0}$ (resp., $\ket{\ell}$).
	\end{enumerate}
\end{lemma}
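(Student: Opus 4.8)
The plan is to analyze $\tilde\HL(\lambda,t)$ directly as the $(\ell+1)\times(\ell+1)$ real symmetric tridiagonal matrix it is, namely $\tilde\HL(\lambda,t)=D_t+\lambda W$, where $D_t:=\sum_{j=0}^{\ell}d_j(t)\ketbra{j}$ collects the diagonal entries $d_j(t):=\mel{j}{\tilde\HL(\lambda,t)}{j}$ (a positive multiple of $j(j+1-2t)$ by \Cref{clm:toy_shift_tilde_HL_alpha_x} and \Cref{eq:sec:toy_shift_6}) and $W:=-\sum_{j\in[\ell]}(\ketbra{j-1}{j}+\ketbra{j}{j-1})$ is minus the adjacency matrix of a path, so that $\norm W<2$ and $\norm{\lambda W}\le4$ for $\lambda\in[0,2]$. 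The elementary first step is to record that $(d_j(t))_j$ is discretely convex: its second difference $\kappa:=d_{j+1}(t)-2d_j(t)+d_{j-1}(t)$ is a positive constant with $\kappa\ge2\epsilonL^{-1}$, its first difference is $d_j(t)-d_{j-1}(t)=\kappa(j-t)$, and hence it decreases then increases, with continuous vertex at $j=t-\tfrac12$. A short case analysis on the sign of $j-t$ — covering also the clamped regimes where $a=b\in\{0,\ell\}$, in particular $t\le0$ and $t\ge\ell+1$ — then gives
\[
  \min_{0\le j\le\ell}d_j(t)=\min\{d_a(t),d_b(t)\}
  \qquad\text{and}\qquad
  d_j(t)\ge\min\{d_a(t),d_b(t)\}+\tfrac{\kappa}{2}\quad\text{for every }j\notin\{a,b\},
\]
the only point being that the single steps $d_{a-1}-d_a=\kappa(t-a)$ and $d_{b+1}-d_b=\kappa(b+1-t)$ are each $\ge\kappa/2$ since $a\le t-\tfrac12\le b$, after which convexity propagates the bound to all further $j$. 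This is the ``convexity'' content the lemma's name refers to, and since $\kappa/2\ge\epsilonL^{-1}\ge10\gg\norm{\lambda W}$ it is exactly the separation that insulates the bottom of the spectrum.

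\smallskip\noindent\emph{Eigenvector localization.} Let $P$ be the orthogonal projector onto $\spn\{\ket a,\ket b\}$ (dimension $2$ if $b=a+1$, dimension $1$ if $a=b$). Since $\mel{a}{\tilde\HL(\lambda,t)}{a}=d_a(t)$ and likewise for $b$, the ground energy satisfies $\mu_1\le\min\{d_a(t),d_b(t)\}$; on the other hand, by \Cref{fct:weyl} applied to the principal submatrix $P^\perp\tilde\HL(\lambda,t)P^\perp$ — diagonal above $\min\{d_a,d_b\}+\kappa/2$, off-diagonal of norm $\le\norm{\lambda W}\le4$ — every one of its eigenvalues exceeds $\min\{d_a,d_b\}+\kappa/2-4$. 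I would then project the ground-state equation $\tilde\HL(\lambda,t)\ket\psi=\mu_1\ket\psi$ onto the range of $P^\perp$, obtaining $\big(P^\perp\tilde\HL(\lambda,t)P^\perp-\mu_1\big)P^\perp\ket\psi=-P^\perp\tilde\HL(\lambda,t)P\,P\ket\psi$; the operator on the left is invertible with inverse of norm $\le(\kappa/2-4)^{-1}=O(\epsilonL)$, and the coupling on the right has at most two nonzero entries, each $-\lambda$, so $\norm{P^\perp\ket\psi}\le O(\epsilonL)\cdot\sqrt2\,\lambda=O(\epsilonL)$. Hence $\ket\psi$ is $O(\epsilonL)$-close to $P\ket\psi/\norm{P\ket\psi}\in\spn\{\ket a,\ket b\}$, which is the localization assertion of (i); in the clamped case it reads ``$O(\epsilonL)$-close to $\ket0$ (resp.\ $\ket\ell$)'', so it covers (ii). This step only used $\lambda\in[0,2]$.

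\smallskip\noindent\emph{The spectral gap.} If $a=b$, then by \Cref{fct:weyl} the second-smallest eigenvalue of $\tilde\HL(1,t)=D_t+W$ is at least (the second-smallest $d_j(t)$)$\,-\norm W\ge d_a(t)+\kappa/2-2\ge d_a(t)+\tfrac34$, while $\mu_1\le d_a(t)$; so the gap is $\ge\tfrac34$, in fact $\Omega(\epsilonL^{-1})$ (and again $\Omega(\epsilonL^{-1})$, with room to spare, in the $t\le0$, $t\ge\ell+1$ cases). If $b=a+1$, I would compare the bottom $2\times2$ block $M_P:=P\tilde\HL(1,t)P$, with diagonal $(d_a(t),d_b(t))$ and off-diagonal $-1$, whose eigenvalues $\nu_-<\nu_+$ obey $\nu_+-\nu_-=2\sqrt{\big(\tfrac{d_a-d_b}{2}\big)^2+1}\ge2$; the Schur-complement identity applied inside $P$ shows $\mu_1\le\nu_-$ and that $\ket\psi$ is $O(\epsilonL)$-close to the $\nu_-$-eigenvector $\ket{\phi_-}$ of $M_P$. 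Writing $\mu_2\ge\min_{\ket\phi\perp\ket\psi,\,\norm{\ket\phi}=1}\mel\phi{\tilde\HL(1,t)}\phi$ and expanding any such $\ket\phi$ in the orthonormal basis consisting of the two eigenvectors of $M_P$ together with a basis of the range of $P^\perp$, the constraint $\langle\psi|\phi\rangle=0$ forces the $\ket{\phi_-}$-component of $\ket\phi$ to be $O(\epsilonL)$, and a quadratic-form estimate — retaining half of the $\ge\kappa/2-2=\Omega(\epsilonL^{-1})$ energy gap carried by $P^\perp\tilde\HL(1,t)P^\perp$ to absorb the $O(1)$-sized off-block couplings against that large denominator — gives $\mel\phi{\tilde\HL(1,t)}\phi\ge\mu_1+(\nu_+-\nu_-)-O(\epsilonL)\ge\mu_1+\tfrac34$. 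Together with the localization step this proves both parts of \Cref{lem:convex}.

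\smallskip\noindent\emph{Main obstacle.} The delicate point is the $b=a+1$ gap estimate in the near-degenerate regime $d_a(t)\approx d_b(t)$: there the internal splitting $\nu_+-\nu_-$ of the bottom block is only $\Theta(\lambda)$, comparable to the off-block couplings, so one cannot simply absorb the bottom block into a ``perturbation'' $V$ and invoke \Cref{lem:taylor} or \Cref{lemma:1st} — the hypotheses $\Delta>16\norm V$, resp.\ $\Delta\ge O(\epsilon^{-1}\norm V^2)$, fail because $\norm V$ would itself be $\Theta(|d_a-d_b|)$, which can be as large as $\Theta(\kappa)$ — and a black-box appeal to \Cref{fct:davis-kahan} is likewise too lossy, its bound being controlled by the small $\Theta(\lambda)$ splitting rather than by the large $\Theta(\epsilonL^{-1})$ gap to the rest of the spectrum. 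The resolution is the hands-on Schur-complement and variational argument above, which divides only by the large gap $\kappa/2-\norm{\lambda W}=\Omega(\epsilonL^{-1})$ and keeps the $2\times2$ block exact; the bookkeeping one must do carefully is checking that the collected lower-order terms are genuinely $O(\epsilonL)$ (so that ``$\ge3/4$'' survives under $\epsilonL\le1/10$) and that the same estimates hold uniformly for all $\lambda\in[0,2]$. It is natural to package the diagonal-convexity bound together with this perturbative step as a standalone fact about Jacobi matrices with convex diagonal, which is presumably what \Cref{sec:toy_shift_tridiag} does.
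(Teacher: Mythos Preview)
Your approach is correct and genuinely different from the paper's. Both proofs start from the same observation---that the diagonal entries $d_j(t)$ have a unique or two-adjacent minima well-separated (by $\Theta(\epsilonL^{-1})$) from the rest---but they diverge in how they extract the $3/4$ gap in the near-degenerate case $b=a+1$.

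The paper packages the analysis into two standalone tridiagonal lemmas and, for the two-minimum case, uses a specific perturbation trick rather than a Schur-complement argument: it subtracts a matrix $E$ that simultaneously \emph{severs} the couplings $(a{-}1,a)$ and $(b,b{+}1)$ \emph{and adds} $-\lambda$ to the middle coupling $(a,b)$, so that $A-E$ is block-diagonal with the middle $2\times2$ block carrying off-diagonal $-2\lambda$ (gap $\ge 4\lambda$) and $\norm{E}\le\tfrac{\sqrt5+1}{2}\lambda$; one application of Weyl then gives gap $\ge(4-\sqrt5-1)\lambda\ge 3\lambda/4$. Your route keeps the $2\times2$ block untouched and instead controls the second eigenvalue by a variational/quadratic-form estimate, trading the single clever stroke for more explicit bookkeeping. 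What your approach buys is that no trick is needed---everything follows from convexity plus the large gap $\kappa/2-O(1)$ to $P^\perp$---and the same mechanism handles localization and gap uniformly; what the paper's trick buys is that once you see the right $E$, the gap drops out of Weyl in one line with no case-dependent constants to track. Your final displayed inequality ``$\ge\mu_1+(\nu_+-\nu_-)-O(\epsilonL)$'' is slightly optimistic as stated (the cross term costs an absolute $O(1)$ rather than $O(\epsilonL)$ before the large $P^\perp$-gap absorbs it), but a clean version of your estimate gives $\mu_2-\mu_1\ge 2-O(\epsilonL)$ in the worst case $d_a=d_b$, comfortably above $3/4$ for $\epsilonL\le 1/10$, so the conclusion stands.
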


Given \Cref{lem:convex} and \Cref{lem:toy_shift_HL_to_tilde_HL}, we prove \Cref{lem:linearTFI} in this bounded time regime.
Together with the analysis in the unbounded time regime (\Cref{sec:toy_shift_unbounded}), this completes the whole proof of \Cref{lem:linearTFI}.

\begin{proof}[Proof of \Crefitem{lem:linearTFI}{itm:lem:linearTFI_2}, \ref{itm:lem:linearTFI_3}, and \ref{itm:lem:linearTFI_4}: the $|t|\le10\ell^2$ Case]
	Recall from \Cref{lem:toy_shift_HL_to_tilde_HL} that ${\HL}(\lambda,t)$ and $\calE$ simulate $\tilde{\HL}(\lambda,t)$ with error $\epsilonL$, where $\calE \colon \ket{j} \mapsto \ket{\bs{j}}$.

	The spectral gap of $\HL(t) = \HL(1,t)$ is thus at least that of $\tilde{\HL}(1,t)$ minus $2 \epsilonL$ by \Cref{fct:weyl}, which is $3/4 - 2\epsilonL \geq 1/2$ by \Crefitem{lem:convex}{itm:lem:convex_3} and \ref{itm:lem:convex_4}, which concludes the proof of \Crefitem{lem:linearTFI}{itm:lem:linearTFI_2} and the spectral gap part of \Crefitem{lem:linearTFI}{itm:lem:linearTFI_4}.

	Now we turn to ground states.
	By \Crefitem{lem:convex}{itm:lem:convex_3}, the ground state of $\tilde{\HL}(1,t)$ is $O(\epsilonL)$-close to a normalized state that is a linear combination of $\ket{a}$ and $\ket{b}$.
	Thus by \Cref{lem:gsim}, the ground state of $\HL(t) = \HL(1,t)$ is then $(O(\epsilonL)+\epsilonL)$-close to that of $\ket{\bs{a}}$ and $\ket{\bs{b}}$, as claimed in \Crefitem{lem:linearTFI}{itm:lem:linearTFI_3}.
	Similarly, \Crefitem{lem:convex}{itm:lem:convex_4} implies the ground state part of \Crefitem{lem:linearTFI}{itm:lem:linearTFI_4}.
\end{proof}

\subsection{Properties of Certain Tridiagonal Matrices}\label{sec:toy_shift_tridiag}

To prove \Cref{lem:convex}, we introduce \Cref{lem:tridiag_eigengap_single} and \Cref{lem:tridiag_eigengap_double}.
Both lemmas analyze the smallest eigenvalues, and corresponding eigenspaces, of the tridiagonal matrix
\begin{equation} \label{eq:sec:toy_shift_Adef}
	A \coloneqq \begin{bmatrix}
		p_0      & -\lambda &          &            &            \\
		-\lambda & p_1      & -\lambda &            &          & \\
		         & \ddots   & \ddots   & \ddots     &            \\
		         &          & -\lambda & p_{\ell-1} & -\lambda   \\
		         &          &          & -\lambda   & p_\ell
	\end{bmatrix}
	\quad\text{where}\quad\lambda > 0
\end{equation}
but under slightly different assumptions.

\begin{lemma}\label{lem:tridiag_eigengap_single}
	Let $ \ket{\psi} \in\mathbb R^{\ell+1}$ be a ground state of $A$.
	Assume for some $i \in \{0,1,\dots,\ell\}$ and $C>0$, we have $p_j\ge p_i+C$ for all $j\in\{0,1,\dots,\ell\}\setminus\{i\}$.
	Then the spectral gap of $A$ is at least $C-4 \lambda$, and $\ket{\psi}$ is $(8\lambda/C)$-close to $\alpha\ket{i}$ for some $\abs{\alpha}=1$.
\end{lemma}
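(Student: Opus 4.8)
The plan is to treat $A$ as a diagonal matrix subjected to a small off-diagonal perturbation, and then invoke Weyl's inequality and the Davis--Kahan theorem. Concretely, write $A = D + \lambda N$, where $D \coloneqq \mathrm{diag}(p_0,p_1,\dots,p_\ell)$ and $N$ is the $(\ell+1)\times(\ell+1)$ tridiagonal matrix with zeros on the diagonal and $-1$ on both sub/superdiagonals. Since every row and column of $N$ has at most two nonzero entries, each of absolute value $1$, \Crefitem{fct:mat-ineq}{itm:fct:mat-ineq_2} gives $\norm{N}\le\norm{N}_1=\norm{N}_\infty\le2$, so $\norm{\lambda N}\le2\lambda$. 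The hypothesis $p_j\ge p_i+C$ for all $j\ne i$ says that $D$ has a non-degenerate smallest eigenvalue $p_i$, with eigenvector $\ket{i}$, and that every other eigenvalue of $D$ is at least $p_i+C$; in particular the spectral gap of $D$ is at least $C$.

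For the spectral gap of $A$, I would simply apply \Cref{fct:weyl} to $A = D+\lambda N$: the spectral gap of $A$ is at least the spectral gap of $D$ minus $2\norm{\lambda N}\ge C-4\lambda$. If $C\le4\lambda$ this bound is vacuous (spectral gaps are nonnegative), and otherwise it is exactly the claimed $C-4\lambda$.

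For the ground-state closeness, I would first dispose of the trivial regime: if $8\lambda/C\ge\sqrt2$, then any unit vector is within distance $\sqrt2\le 8\lambda/C$ of $\alpha\ket{i}$ after an optimal choice of global phase $\alpha$ (indeed $\min_{|\alpha|=1}\norm{u-\alpha\ket i}^2 = 2-2|\langle u,\ket i\rangle|\le 2$), so the statement holds. Otherwise $C>4\lambda>0$, so the gap of $D$ is positive and the Davis--Kahan $\sin\theta$ theorem (\Cref{fct:davis-kahan}) applies with $d=1$, matrices $D$ and $D+\lambda N = A$, perturbation $B=\lambda N$, and $\delta\ge C$. It produces a $1\times1$ unitary, i.e.\ a phase $\alpha$ with $|\alpha|=1$, such that
\begin{equation}
	\norm{\alpha\ket{i}-\ket{\psi}} \;\le\; \frac{2^{3/2}\cdot\sqrt{1}\cdot\norm{\lambda N}}{\delta} \;\le\; \frac{2^{3/2}\cdot 2\lambda}{C} \;=\; \frac{2^{5/2}\lambda}{C} \;\le\; \frac{8\lambda}{C},
\end{equation}
using $2^{5/2}\le 8$. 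This is precisely the claimed bound; note that in this regime the gap of $A$ is positive, so its ground state is non-degenerate and ``the'' ground state $\ket{\psi}$ is well defined up to phase, consistent with the Davis--Kahan conclusion.

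Since the whole argument reduces to Weyl plus Davis--Kahan applied to an explicit diagonal-plus-small-off-diagonal splitting, there is no substantial obstacle. The only points demanding a bit of care are the norm bound $\norm{N}\le2$ (via the induced $1$/$\infty$-norms) and the arithmetic of constants together with the trivial regimes, so that the exact thresholds $C-4\lambda$ and $8\lambda/C$ come out as stated.
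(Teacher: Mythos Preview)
Your proof is correct and follows essentially the same approach as the paper: split $A$ into its diagonal part $D$ and the off-diagonal perturbation of norm at most $2\lambda$, then apply Weyl's inequality for the gap and Davis--Kahan with $d=1$ for the ground-state closeness. The paper does not separately handle the trivial regime (it simply applies Davis--Kahan, which is valid since $C>0$ guarantees the gap of $D$ is positive), but your extra care there is harmless.
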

\begin{proof}
	Let $D=\mathrm{diag}(p_0,\ldots,p_\ell)$.
	Then $A=D+E$ where $E$ contains the off-diagonal terms in $A$ and $\norm{E}\le2\lambda$.
	Hence by \Cref{fct:weyl}, the spectral gap of $A$ is lower bounded by the spectral gap of $D$ (which is $-p_i+\min_{j\ne i}p_j$) minus $2\cdot\norm{E}\le4\lambda$.
	Since $p_j\ge p_i+C$, the spectral gap of $A$ is at least $C-4\lambda$.

	Now we argue about $v_i \coloneqq \braket{i}{\psi}$. First assume without loss of generality $v_i\ge0$.
	Then observe that the ground space of $D$ is spanned by $\ket{i}$.
	Hence by \Cref{fct:davis-kahan}, we have
	\begin{equation}
		\norm{\alpha \ket{i} - \ket{\psi}} \le\frac{2^{3/2}\norm{E}}C\le\frac{8\lambda}C
	\end{equation}
	as desired.
\end{proof}

\begin{lemma}\label{lem:tridiag_eigengap_double}
	Let $ \ket{\psi} \in\mathbb R^{\ell+1}$ be a ground state of $A$.
	Assume for some $i \in [\ell]$, we have $|p_i-p_{i-1}|\le\epsilon$ and $p_j\ge\max\{p_{i-1},p_{i}\}+C$ for all $j\in \{0,1,\dots,\ell\} \setminus\{i-1,i\}$, where $C\ge4\lambda+\epsilon$.
	Then the spectral gap of $A$ is at least $3\lambda/4$, and $\ket{\psi}$ is $(8\lambda/C)$-close to $\alpha \ket{i-1} + \beta \ket{i}$ for some $\alpha,\beta$ such that $\abs{\alpha}^2 + \abs{\beta}^2=1$.
\end{lemma}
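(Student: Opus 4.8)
The idea is that the two smallest diagonal entries $p_{i-1},p_i$ of $A$ are separated from every other diagonal entry by $C$, so the two lowest eigenvectors of $A$ approximately live in $W:=\mathrm{span}\{\ket{i-1},\ket{i}\}$, and the low-energy physics is governed by an effective $2\times 2$ problem that \emph{keeps} the off-diagonal coupling $-\lambda$. Concretely, write $A=D+E$ with $D=\mathrm{diag}(p_0,\dots,p_\ell)$ and $E$ the off-diagonal part, so $\norm{E}\le 2\lambda$ by \Crefitem{fct:mat-ineq}{itm:fct:mat-ineq_2}. Under the hypothesis, the two smallest eigenvalues of $D$ are $p_{i-1},p_i$ with eigenspace exactly $W$, while the third smallest exceeds $\max\{p_{i-1},p_i\}$ by at least $C$. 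Applying \Cref{fct:davis-kahan} with $d=2$ gives a unitary $U$ with the two lowest eigenvectors of $A$ within $\frac{2^{3/2}\sqrt2\cdot 2\lambda}{C}=\frac{8\lambda}{C}$ (Frobenius, hence spectral) of $VU$, where $V$ has columns $\ket{i-1},\ket{i}$; in particular $\ket\psi$ is $(8\lambda/C)$-close to a unit vector $\alpha\ket{i-1}+\beta\ket i\in W$, which is the eigenvector claim.

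\textbf{Setting up the gap estimate.} Let $P$ project onto $W$ and decompose $A$ into blocks $W\oplus W^\perp$. The structural point is that deleting rows/columns $i-1,i$ disconnects the tridiagonal chain into a left block $A_<$ on $\{0,\dots,i-2\}$ and a right block $A_>$ on $\{i+1,\dots,\ell\}$, so $A_{W^\perp}=A_<\oplus A_>$ with $\inf\sigma(A_{W^\perp})\ge \max\{p_{i-1},p_i\}+C-2\lambda$; moreover the only $W\!\to\!W^\perp$ couplings are $-\lambda\ket{i-1}\bra{i-2}$ and $-\lambda\ket i\bra{i+1}$ (omitting any leg whose index is out of range). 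Since $\langle v|A|v\rangle=\langle v|A_W|v\rangle$ for $v\in W$, the min--max principle gives $\mu_1(A)\le \lambda_{\max}(A_W)\le \max\{p_{i-1},p_i\}+\lambda+\epsilon/2$, while $\mu_2(A)\ge\inf\sigma(A_{W^\perp})$; using $C\ge 4\lambda+\epsilon$ one checks $\mu_1(A)<\inf\sigma(A_{W^\perp})$, so $\mu_0(A),\mu_1(A)$ are exactly the two eigenvalues of $A$ lying below $\sigma(A_{W^\perp})$.

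\textbf{The effective $2\times 2$ problem.} By the Schur complement (Feshbach) identity, these two eigenvalues are precisely the $\mu$ solving $\mu\in\sigma(H_{\mathrm{eff}}(\mu))$, where $H_{\mathrm{eff}}(\mu)=A_W-B(A_{W^\perp}-\mu)^{-1}B^\dagger$ and $B$ is the $W\!\to\!W^\perp$ coupling. Because $A_{W^\perp}=A_<\oplus A_>$ and the two legs of $B$ land in different summands, $B(A_{W^\perp}-\mu)^{-1}B^\dagger$ is \emph{diagonal} in $\{\ket{i-1},\ket i\}$, say $\mathrm{diag}(\sigma_{i-1}(\mu),\sigma_i(\mu))$; for $\mu\le\mu_1(A)$ one has $\mathrm{dist}(\mu,\sigma(A_{W^\perp}))\ge\lambda$, whence $0\le \sigma_{i-1},\sigma_i\le\lambda$ and $0\le\sigma_{i-1}',\sigma_i'\le 1$. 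Thus $H_{\mathrm{eff}}(\mu)$ has diagonal entries $p_{i-1}-\sigma_{i-1}(\mu),\ p_i-\sigma_i(\mu)$ and off-diagonal entries $-\lambda$, so its eigenvalues $g_-(\mu)<g_+(\mu)$ always satisfy $g_+(\mu)-g_-(\mu)\ge 2\lambda$, and by Hellmann--Feynman $g_\pm'(\mu)\in[-1,0]$. Now $\mu_0$ solves $\mu=g_-(\mu)$ and $\mu_1$ solves $\mu=g_+(\mu)$; since $g_+(\mu_0)-\mu_0\ge g_-(\mu_0)+2\lambda-\mu_0=2\lambda$ and $\mu\mapsto g_+(\mu)-\mu$ has derivative in $[-2,-1]$, integrating from $\mu_0$ to $\mu_1$ gives $2\lambda\le 2(\mu_1-\mu_0)$, i.e. the spectral gap of $A$ is $\mu_1-\mu_0\ge\lambda\ge 3\lambda/4$.

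\textbf{Main obstacle.} The delicate part is the gap: a naive Weyl or first/second-order perturbation estimate moves each of the two lowest eigenvalues by as much as $\lambda$, which is the same order as the $2\lambda$ gap of $A_W$ and would be fatal. The fix is the observation that excising the two special sites disconnects the chain, so the Feshbach self-energy is diagonal and cannot touch the $-\lambda$ off-diagonal of the effective $2\times 2$ problem; one must then run the self-consistency/monotonicity argument carefully to transport the robust $2\lambda$ gap of $H_{\mathrm{eff}}$ down to a genuine $\ge\lambda$ gap of $A$ itself.
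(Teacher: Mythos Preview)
Your argument is correct but takes a genuinely different route from the paper for the spectral-gap part.

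\textbf{Eigenvector claim.} Both you and the paper apply \Cref{fct:davis-kahan} with $d=2$, but with different reference operators. You use the bare diagonal $\mathrm{diag}(p_0,\dots,p_\ell)$, whose $2$--$3$ eigenvalue gap is exactly $\ge C$ and whose perturbation has norm $\le 2\lambda$, giving $8\lambda/C$ immediately. The paper instead uses its specially constructed block-diagonal $D$ as the reference; your choice is the simpler of the two here.

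\textbf{Spectral gap.} The paper's trick is to choose a perturbation $E$ that simultaneously severs the $i{-}2\,|\,i{-}1$ and $i\,|\,i{+}1$ links \emph{and} doubles the $i{-}1\,|\,i$ coupling to $-2\lambda$. The resulting reference $D=A-E$ then has a middle $2\times2$ block with gap $\ge4\lambda$, and a case analysis shows $\norm{E}\le\frac{\sqrt5+1}{2}\lambda$; Weyl's inequality finishes with gap $\ge4\lambda-(\sqrt5+1)\lambda>3\lambda/4$. Your approach instead keeps $A$ intact and passes to the Feshbach--Schur effective Hamiltonian on $W$. The key structural observation---that deleting the two adjacent sites disconnects the chain, forcing the self-energy to be diagonal---is exactly what the paper exploits via its choice of $E$, but you use it differently: it guarantees the off-diagonal $-\lambda$ of $H_{\mathrm{eff}}(\mu)$ survives intact, so the effective gap is $\ge2\lambda$ uniformly in $\mu$. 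Your monotonicity/fixed-point argument (each of $g_\pm(\mu)-\mu$ is strictly decreasing with slope in $[-2,-1]$, so each has a unique zero and they are $\ge\lambda$ apart) is clean and in fact yields the stronger bound $\mu_1-\mu_0\ge\lambda$.

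In short: the paper's proof is more elementary (only Weyl and Davis--Kahan, no Schur complement), at the price of a somewhat magical perturbation; yours is more operator-theoretic, explains transparently \emph{why} the $-\lambda$ coupling is protected, and gives a sharper constant.
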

\begin{proof}
	Define matrix $E$ to have the following non-zero entries: $\mel{i-2}{E}{i-1} = \mel{i-1}{E}{i-2}=\mel{i}{E}{i+1} = \mel{i+1}{E}{i}=-\lambda$ and $\mel{i-1}{E}{i} = \mel{i}{E}{i-1} =\lambda$ (omit if indices run out of range).
	Then we have the following cases:
	\begin{itemize}
		\item if $i=1$, then $E$ is $\begin{bmatrix}&\lambda\\\lambda & & -\lambda\\&-\lambda\end{bmatrix}$ padded with zeros, which has $\norm{E}=\sqrt2\cdot\lambda$;
		\item if $i=\ell$, then $E$ is $\begin{bmatrix}&-\lambda\\-\lambda & &\lambda\\&\lambda\end{bmatrix}$ padded with zeros, which has $\norm{E}=\sqrt2\cdot\lambda$;
		\item otherwise $2\le i\le \ell-1$, then $E$ is $\begin{bmatrix}&-\lambda\\-\lambda & &\lambda\\&\lambda&&-\lambda\\&&-\lambda\end{bmatrix}$ padded with zeros, which has $\norm{E}=\frac{\sqrt5+1}2\cdot\lambda$.
	\end{itemize}
	Let $D=A-E$.
	Then $D$ is block-diagonal of tridiagonal matrices:
	\begin{equation}
		D=\begin{bmatrix}
			p_0    & -\lambda                                                                   \\
			\ddots & \ddots   & \ddots                                                          \\
			       & -\lambda & p_{i-2}                                                         \\
			       &          &         & p_{i-1}   & -2\lambda                                 \\
			       &          &         & -2\lambda & p_{i}                                     \\
			       &          &         &           &           & p_{i+1} & -\lambda            \\
			       &          &         &           &           & \ddots  & \ddots   & \ddots   \\
			       &          &         &           &           &         & -\lambda & p_{\ell}
		\end{bmatrix},
	\end{equation}
	where the middle block $\begin{bmatrix}p_{i-1}&-2\lambda\\-2\lambda&p_{i}\end{bmatrix}$ has eigenvalues
	\begin{equation}
		\mu_1=\frac12\left(p_{i-1}+p_{i}-\sqrt{16\lambda^2+(p_{i-1}-p_{i})^2}\right)
		\quad\text{and}\quad
		\mu_2=\frac12\left(p_{i-1}+p_{i}+\sqrt{16\lambda^2+(p_{i-1}-p_{i})^2}\right).
	\end{equation}
	Note that
	\begin{equation}
		\mu_1\le\frac{p_{i-1}+p_{i}}2-2\lambda<\frac{p_{i-1}+p_{i}}2+2\lambda\le\mu_2\le\frac{p_{i-1}+p_{i}}2+2\lambda+\frac\epsilon2.
	\end{equation}
	Whereas the eigenvalues of $D$ outside the middle block are, by \Cref{fct:weyl}, at least $\min_{j\neq i-1,i}\{p_j\}-2\lambda\ge\max\{p_{i-1},p_{i}\}+C-2\lambda$.
	Since $C\ge4\lambda+\epsilon$, we know that $\mu_1,\mu_2$ are the smallest two eigenvalues of $D$.
	Hence by \Cref{fct:weyl} again, the spectral gap of $A$ is lower bounded by the spectral gap of $D$ (which is $\mu_2-\mu_1\ge4\lambda$) minus $2\cdot\norm{E}\le(\sqrt5+1)\lambda$.
	Since $\sqrt5+1<3.25$, the spectral gap of $A$ is at least $3\lambda/4$ as claimed.

	To argue about the eigenvector $\ket{\psi}$, notice that $\{\ket{i-1},\ket{i}\}$ span the ground space of $D$.
	Let $\ket{\phi}$ be a eigenvector of $A$ corresponding to the second smallest eigenvalue.
	By \Cref{fct:davis-kahan}, there is a $2 \times 2$ unitary $U$ such that
	\begin{equation}
		\norm{R}_{\rm F} \leq \frac{2^{3/2}\cdot2^{1/2}\cdot\norm{E}}{C} \le\frac{2\cdot(\sqrt5+1)\cdot\lambda}{C}\le\frac{8\lambda}{C},
	\end{equation}
	where $R \coloneqq (\ketbra{i-1}{\texttt{1}} + \ketbra{i}{\texttt{2}})U - (\ketbra{\psi}{\texttt{1}} + \ketbra{\phi}{\texttt{2}})$.
	Therefore, the desired bound can be obtained via
	\begin{equation}
		\norm{ \mel{\texttt{1}}{U}{\texttt{1}} \ket{i-1} + \mel{\texttt{2}}{U}{\texttt{1}} \ket{i} - \ket{\psi}  }_{\rm F} = \norm{R \ket{\texttt{1}}}_{\rm F} \leq \norm{R}_{\rm F} \norm{\ket{\texttt{1}}} = \norm{R}_{\rm F},
	\end{equation}
	where the last inequality follows from the sub-multiplicativity of Frobenius norm.
	Setting $\alpha=\mel{\texttt{1}}{U}{\texttt{1}}$ and $\beta=\mel{\texttt{2}}{U}{\texttt{1}}$ completes the proof of \Cref{lem:tridiag_eigengap_double}.
\end{proof}

Now we are ready to prove \Cref{lem:convex}.

\begin{proof}[Proof of \Cref{lem:convex}]
	Recall from \Cref{eq:sec:toy_shift_6} and \Cref{clm:toy_shift_tilde_HL_alpha_x} that
	\begin{equation}
		\tilde\HL(\lambda,t) = -\lambda\sum_{j\in[\ell]}\left(\ketbra{j-1}{j}+\ketbra{j}{j-1}\right) +\frac1{\epsilonL}\sum_{j=0}^\ell j(j+1-2t) \ketbra j.
	\end{equation}
	We write $p_j \coloneqq \mel{j}{\tilde\HL(\lambda,t)}{j}$ for $j=0,1,\dots,\ell$ and therefore $\tilde\HL(\lambda,t)$ is exactly $A$ defined in \Cref{eq:sec:toy_shift_Adef}.

	For \Cref{itm:lem:convex_4}, we consider two cases separately:
	\begin{itemize}
		\item If $t \leq 0$, we apply \Cref{lem:tridiag_eigengap_single} with $(C,i)=(2/\epsilonL,0)$ to get the spectral gap lower bound $12$ and that $\ket{\psi}$ is $O(1/\epsilonL)$-close to $\ket{0}$ since $\epsilonL \leq 1/10$ and $\lambda \leq 2$ by assumption.
		\item If $t \geq \ell + 1$, similarly we apply  \Cref{lem:tridiag_eigengap_single} with $(C,i)=(2/\epsilonL,\ell + 1)$.
	\end{itemize}

	To prove \Cref{itm:lem:convex_3},
	we divide $t$ into separate cases and frequently use the following equation:
	\begin{equation}
		\epsilonL(p_i - p_j) = i(i+1-2t) - j(j+1-2t) = (i-j)(i+j+1-2t).
	\end{equation}
	\begin{itemize}
		\item Assume $t\in[0,3/4]$.
		      Then for any $i\ge 1$,
		      \begin{equation}
			      \epsilonL(p_i - p_0) =  i(i+1-2t) \underbrace{\ge}_{\text{set }i=1}2-2t\ge\frac12.
		      \end{equation}
		      Hence we can apply \Cref{lem:tridiag_eigengap_single} with $C=1/(2\epsilonL)$.
		\item Assume $t\in[\ell+1/4,\ell+1]$.
		      Then for any $i\le\ell-1$,
		      \begin{equation}
			      \epsilonL(p_i - p_\ell)= (\ell-i)(2t-\ell-1-i) \underbrace{\ge}_{\text{set }i=\ell-1}2t - 2\ell \ge\frac12.
		      \end{equation}
		      Hence we can apply \Cref{lem:tridiag_eigengap_single} with $C=1/(2\epsilonL)$.
		\item Assume $t\in[i-1/4,i+1/4]$ for some integer $i=1,2,\ldots,\ell-1$.
		      Then $\abs{\epsilonL(p_i - p_{i-1})}=\abs{2i-2t}\le1/2$.
		      For $j\ne i-1,i$,
		      \begin{equation}
			      \epsilonL(p_j - p_i) \underbrace{\ge}_{j=i-2\text{ or }i+1}\min\{2(2t+1-2i),2(i+1-t)\}\ge 1,
		      \end{equation}
		      and
		      \begin{equation}
			      \epsilonL(p_j - p_{i-1}) \underbrace{\ge}_{j=i-2\text{ or }i+1}\min\{2(t+1-i),2(2i+1-2t)\}\ge 1.
		      \end{equation}
		      Hence we can apply \Cref{lem:tridiag_eigengap_double} with $(\epsilon,C)=(1/(2\epsilonL), 1/\epsilonL)$.
		\item Otherwise $t\in[i+1/4,i+3/4]$ for some integer $i=1,2,\ldots,\ell-1$.
		      Then for any $j\ne i$,
		      \begin{equation}
			      \epsilonL(p_j - p_i) \underbrace{\ge}_{j=i-1\text{ or }i+1}\min\{2(t-i),2(i+1-t)\}\ge\frac12.
		      \end{equation}
		      Hence we can apply \Cref{lem:tridiag_eigengap_single} with $C=1/(2\epsilonL)$.
		      \qedhere
	\end{itemize}
\end{proof}

\subsection{Deferred Proofs}\label{sec:missing_proofs_in_sec:toy_shift}

\begin{proof}[Proof of \Cref{clm:toy_shift_tilde_HL_alpha_x}]
	We do a direct calculation based on \Cref{eq:sec:toy_shift_2} and \Cref{eq:sec:toy_shift_3}.
	If $x=\bs j$, then
	\begin{equation}
		\sum_{i\in[\ell-1]}(-1)^{x_i+x_{i+1}}=\begin{cases}
			\ell-1, & j=0,\ell,     \\
			\ell-3, & j\in[\ell-1],
		\end{cases}
		\quad\text{and}\quad
		(-1)^{x_1}-(-1)^{x_\ell}=\begin{cases}
			0,  & j=0,\ell,     \\
			-2, & j\in[\ell-1],
		\end{cases}
	\end{equation}
	which implies $\alpha_x=0$.
	We also have
	\begin{equation}
		\sum_{i\in[\ell]}(i-t)\left(1-(-1)^{x_i}\right)
		=\sum_{1\le i\le j}2(i-t)=j(j+1-2t),
	\end{equation}
	which implies $\beta_x=j(j+1-2t)/\epsilonL$.

	Now assume $x\notin\mathcal I$.
	By \Cref{eq:sec:toy_shift_4}, we have
	\begin{align}
		|\beta_x|\le\frac1{\epsilonL}\sum_{i\in[\ell]}(i+|t|)\cdot 2=\frac{\ell(\ell+1)+2|t|\cdot\ell}{\epsilonL}\le\frac{2(\ell^2+|t|\cdot\ell)}{\epsilonL},
	\end{align}
	where we use $\ell\ge1$ for the last inequality.
	Regarding $\alpha_x$, define $s=\abs{\{i\in[\ell-1]\colon x_i\ne x_{i+1}\}}$.
	Since $x\notin\mathcal I$, we either have (1) $s\ge2$ or (2) $s=1$ but $x=\ket{0^j1^{\ell-j}}$ for some $1\le j\le\ell-1$.
	In case (1), we have
	\begin{align}
		\alpha_x
		\ge\DeltaL\left(\ell-1-2-(-s+\ell-1-s)\right)\ge2\DeltaL(s-1)\ge\DeltaL.
	\end{align}
	In case (2), we have
	\begin{align}
		\alpha_x=\DeltaL\left(\ell-1+1+1-(-s+\ell-1-s)\right)=4\DeltaL\ge\DeltaL.
	\end{align}
	Hence we always have $\alpha_x\ge\DeltaL$.
\end{proof}     \section{Reduction III: Proof of Lemma \ref{lem:pathC}}\label{sec:reduction_III}

We prove \Cref{lem:pathC} (restated below) in this section.

\lempathC*

The proof idea is to view the shifts of the $\ell+1$ Hamiltonians as the ground state shifts of $\HL(t)$ in \Cref{lem:linearTFI}, which is formalized via a first-order perturbative reduction (\Cref{lemma:1st}).

\paragraph{Notation.}
We (re)define some notation.
Let $n \coloneqq \nB$, $\Delta \coloneqq \DeltaC$, and $\epsilon \coloneqq \epsilonL$.
For $\tau \in \binary^n$, define
\begin{equation}
	K_\tau \coloneqq
	\begin{cases}
		\XB + D_j & \tau=\bs{j}= 1^j0^{\ell-j} \text{ for some $j$}, \\
		\XB       & \text{otherwise.}
	\end{cases}
\end{equation}
Note that by \Crefitem{lem:pathB2}{itm:lem:pathB_3} the \apath
\begin{equation} \label{eq:lem:pathC_K}
	K_{\bs{0}} \pathto K_{\bs{1}}\pathto \cdots \pathto K_{\bs{\ell}}
\end{equation}
has spectral gap at least $\delta := \Omega(\sqrt{\mGHV})$ and norm $\poly(\nGHV)$.
Denote $M \coloneqq \max_{\tau \in \binary^\ell} \norm{K_\tau}$ and we have $M \leq \poly(\nGHV)$ by \Crefitem{lem:pathB2}{itm:lem:pathB_3} and the bound $\DeltaB \leq \poly(\nGHV)$ in \Cref{lem:pathB}.

Let $t$ be fixed.
We shorthand $\HL \coloneqq \HL(t)$, $H_0 \coloneqq \Delta \cdot I_{2^n} \tensor\HL(t)$, and $V \coloneqq \XB \tensor I_{2^\ell} + \DC$.
The Hamiltonian of interest $\HC\coloneqq\HC(t)$ in \Cref{eq:HC} then becomes
\begin{equation}
	\HC =  H_0 + V,
	\quad\text{where}\quad
	H_0 = \Delta \cdot I_{2^n} \tensor \HL,\quad
	V = \sum_{\tau \in \binary^\ell} K_\tau \tensor \ketbra{\tau}{\tau}.
\end{equation}
We also denote $a \coloneqq \max\{\lfloor t-1/2 \rfloor,0\}$ and $b \coloneqq \min\{\lceil t-1/2 \rceil,\ell\}$, as in \Crefitem{lem:linearTFI}{itm:lem:linearTFI_3}.

Now we prove \Cref{lem:pathC}.

\begin{proof}[Proof of \Cref{lem:pathC}]
	Let $\calH$ be the Hilbert space associated with $H_0$, $\HC$, and $V$.
	Let $\calH_{-}$ be the ground space of $H_0$.
	Let $\zeta$ be the ground energy of $\HL$ and let $\ket v$ be a ground state of $\HL$.
	By definition, $\calH_{-}$ is spanned by $\ket{\psi} \tensor \ket{v}$ for any $n$-qubit state $\ket{\psi}$.
	We explicitly write out $\ket{v}$ as $\ket{v} = \sum_{\tau \in \binary^\ell} v_\tau \ket{\tau}$.
	Then by \Crefitem{lem:linearTFI}{itm:lem:linearTFI_3}, we have
	\begin{equation} \label{eq:lem:pathC_vtau}
		\sum_{\tau \notin \{\bs{a},\bs{b}\}} v_\tau^2 \leq O(\epsilon^2).
	\end{equation}

	Let $ P_{-} \coloneqq I_{2^n} \tensor \ketbra{v} $ be the projection onto $\calH_{-}$.
	Then
	\begin{equation}
		(H_0)_{--} = P_-H_0P_-= \Delta \zeta \cdot P_-= \Delta \zeta \cdot I_{2^n}\otimes\ketbra v
	\end{equation}
	and
	\begin{equation}
		V_{--} = P_-VP_-=\left(\sum_{\tau \in \binary^\ell }v_\tau^2\cdot K_\tau\right)\otimes\ketbra v.
	\end{equation}

	Define
	\begin{align}
		H_\rmtarget \
		 & \coloneqq (H_0)_{--} + V_{--}
		= \left(  \Delta \zeta \cdot I_{2^n} + \sum_{\tau}v_\tau^2\cdot K_\tau \right) \tensor \ketbra{v}                                                       \\
		 & =\left(  \Delta \zeta \cdot I_{2^n} + v_{\bs a}^2K_a+v_{\bs b}^2K_b+\sum_{\tau\notin\{\bs a,\bs b\}}v_\tau^2\cdot K_\tau \right) \tensor \ketbra{v}.
	\end{align}
	Observe that the spectrum of $H_\rmtarget$ is the same to that of $\sum_{\tau}v_\tau^2\cdot K_\tau$ up to an additive $\Delta \zeta$ shift.
	In addition, since $K_{\bs{a}}\pathto K_{\bs{b}}$ is part of \Cref{eq:lem:pathC_K} and has minimal spectral gap $\delta$, the spectral gap of $\frac{v_{\bs{a}}^2K_{\bs{a}}+v_{\bs{b}}^2K_{\bs{b}}}{v_{\bs{a}}^2+v_{\bs{b}}^2}$ is at least $\delta$.
	By \Cref{fct:weyl}, the spectral gap of $H_\rmtarget$ is
	\begin{align}
		\delta_\rmtarget
		 & \ge\left(v_{\bs{a}}^2+v_{\bs{b}}^2\right)\cdot\delta
		-2\cdot\norm{\sum_{\tau \notin \{\bs{a},\bs{b}\}}v_\tau^2\cdot K_\tau}
		\\
		 & \ge\left(v_{\bs{a}}^2+v_{\bs{b}}^2\right)\cdot\delta-2\cdot(1-v_{\bs{a}}^2-v_{\bs{b}}^2)\cdot M
		\tag{since $\norm{K_\tau}\le M$}                                                                   \\
		 & \ge(1-O(\epsilon^2))\cdot\delta-O(\epsilon^2M)
		\tag{by \Cref{eq:lem:pathC_vtau}}                                                                  \\
		 & \ge0.99\cdot\delta,
		\label{eq:lem:pathC_deltatgt}
	\end{align}
	where we assume $\epsilon\le1/\poly(\nGHV)\le O(\sqrt{\delta/M})$ for the last inequality.

	Now, write $\calH = \calH_- \oplus \calH_+$ and regard $P_{-}$ as an encoding isometry from $\calH_{-}$ to $\calH$.
	Let $P_+$ be the projection onto $\calH_+$.
	Since $\calH_-$ is defined to be the ground space of $H_0$, the minimal eigenvalue of $(H_0)_{++} = P_+ H_0 P_+$ is the second smallest eigenvalue of $H_0$.
	Since $H_0=\Delta \cdot I\otimes \HL$ and by \Crefitem{lem:linearTFI}{itm:lem:linearTFI_2}, we know that eigenvalues of $(H_0)_{++}$ are greater than those of $(H_0)_{--}$ by at least $\Delta/2$.
	Hence by \Cref{lemma:1st}, ($\HC$,$P_-$) simulates $H_\rmtarget$ with error $1/2$, provided that $\Delta \geq \poly(M)$ as $\|V\| = \max_\tau \|K_\tau\| = M$.
	By \Cref{lem:gsim}, the spectral gap of $\HC$ is at least $\delta_\rmtarget - 1 = \Omega(\delta) = \Omega(\sqrt{\mGHV})$.
	This verifies the spectral gap property of \Cref{lem:pathC}.

	To check the norm condition of \Cref{lem:pathC}, we recall \Crefitem{lem:linearTFI}{itm:lem:linearTFI_1} that $\norm{\HL}\le\poly(\nGHV,1/\epsilon)$ as $t\in[0,\ell+1]$ and $\ell=\poly(\nGHV)$ from \Cref{lem:pathB2}.
	Hence
	\begin{equation}
		\norm{\HC} \leq \norm{H_0} + \norm{V} = \Delta \norm{\HL} + M \leq \poly(\nGHV,1/\epsilon)
	\end{equation}
	as claimed.
\end{proof}

\section{Reduction IV: Proof of Lemma \ref{lem:pathD}}\label{sec:reduction_IV}

We prove \Cref{lem:pathD} (restated below) in this section.

\lempathD*

The proof of \Cref{lem:pathD} is divided into two parts.
The first part handles the middle \apath\ $\HD(0)\pathto\HD(\ell+1)$, which is essentially \Cref{lem:pathC}.
The second part handles the start $\HD(-1/\epsilonD)\pathto\HD(0)$ and the end $\HD(\ell+1)\pathto\HD(1/\epsilonD)$, which relies on the first-order perturbative reduction (\Cref{lemma:1st}) similar to the proof of \Cref{lem:pathC}.

\paragraph{Notation.}
We (re)define some notation.
Let $t\in[-1/\epsilonD,1/\epsilonD]$ and let $n \coloneqq \nB$, $\Delta \coloneqq \DeltaC$, and $\epsilon \coloneqq \epsilonL$.
Recall from \Cref{eq:HD} that we have
\begin{equation} \label{eq:HD2}
	\HD(t) = \Delta I_{2^{n}} \tensor \HL(1-t\epsilonD,t) + (1- t\epsilonD) \XB \tensor I_{2^\ell} + (1+ t\epsilonD)\DC,
\end{equation}
where $\HL(1-t\epsilonD,t) =  \QL + (1- t \epsilonD)\XL + \ZL(t) $ as defined in \Crefitem{lem:linearTFI}{itm:lem:linearTFI_4}.

For $\tau \in \binary^n$, define
\begin{equation} \label{eq:Ktaut}
	K_\tau(t) \coloneqq
	\begin{cases}
		(1-t \epsilonD) \XB + (1+t\epsilonD) D_j
		 & \tau=\bs{j} \coloneqq 1^j0^{\ell-j} \text{ for some $j$}, \\
		(1-t \epsilonD) \XB
		 & \text{otherwise.}
	\end{cases}
\end{equation}
Note that we have
\begin{equation} \label{eq:D0Dl}
	D_0 = 0 \quad \text{and} \quad D_\ell = -3\DeltaB n \ketbra{u}{u}
\end{equation}
for some $u \in \binary^n$ and $\DeltaB \in [\mGHV,\poly(\mGHV)]$ by \Cref{lem:pathB} and \Cref{lem:pathB2}.
We also note the following easy observation.
\begin{claim} \label{clm:lem:pathD_1}
	The spectral gap of $K_{\bs{0}}(t)$ is at least $(1-t\epsilonD)\cdot \DeltaB n$. The spectral gap of $K_{\bs{\ell}}(t)$  is at least $(1 + 5t\epsilonD)\cdot \DeltaB n$.
\end{claim}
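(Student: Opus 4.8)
The plan is to read both bounds directly off the explicit structure of these two $\nB$-qubit Hamiltonians, exploiting that the transverse-field block $\XB=-\DeltaB\sum_{i\in[\nB]}X_i$ has a completely explicit spectrum and that $D_0,D_\ell$ are as simple as possible, namely $D_0=0$ and $D_\ell=-3\DeltaB\nB\ketbra{u}{u}$ by \cref{eq:D0Dl}. Throughout one keeps in mind that $t$ ranges over $[-1/\epsilonD,1/\epsilonD]$, so both coefficients $1-t\epsilonD$ and $1+t\epsilonD$ lie in $[0,2]$; in particular $t\epsilonD\le1$, which is the single inequality invoked at the very end of the $K_{\bs{\ell}}(t)$ estimate.

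First I would handle $K_{\bs{0}}(t)$. Since $D_0=0$ we simply have $K_{\bs{0}}(t)=(1-t\epsilonD)\,\XB$, so it suffices to understand $\XB$. By \Cref{fct:ksum}, $\XB$ is the Kronecker sum of $\nB$ single-qubit operators $-\DeltaB X$, each with spectrum $\{-\DeltaB,+\DeltaB\}$; hence $\XB$ has spectrum $\DeltaB\cdot\{-\nB,-\nB+2,\dots,\nB\}$ with a \emph{unique} ground state $\ket{+}^{\otimes\nB}$. Multiplying by the nonnegative scalar $1-t\epsilonD$ leaves the eigenvectors unchanged and scales every eigenvalue gap by $1-t\epsilonD$, from which the spectral-gap bound asserted for $K_{\bs{0}}(t)$ follows at once.

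Next I would handle $K_{\bs{\ell}}(t)=(1-t\epsilonD)\,\XB-3(1+t\epsilonD)\DeltaB\nB\,\ketbra{u}{u}$, obtaining its gap from a matching pair of one-sided eigenvalue estimates. From above, the Rayleigh quotient at the computational basis state $\ket{u}$ gives $\lambda_{\min}\bigl(K_{\bs{\ell}}(t)\bigr)\le\mel{u}{K_{\bs{\ell}}(t)}{u}=-3(1+t\epsilonD)\DeltaB\nB$, using $\mel{u}{X_i}{u}=0$. From below, the Courant--Fischer min-max characterization of the second eigenvalue, applied with the codimension-one test subspace orthogonal to $\ket{u}$ --- on which the rank-one term is inert and $\XB\succeq-\DeltaB\nB\,I$ --- gives $\lambda_2\bigl(K_{\bs{\ell}}(t)\bigr)\ge-(1-t\epsilonD)\DeltaB\nB$. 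Subtracting yields a gap of at least $(2+4t\epsilonD)\DeltaB\nB$, which is in turn at least $(1+5t\epsilonD)\DeltaB\nB$ because $t\epsilonD\le1$ (alternatively one may substitute \Cref{fct:weyl} and the interlacing of rank-one perturbations for the Courant--Fischer step). There is no genuine obstacle here; both statements are self-contained linear algebra with routine arithmetic, and the only point worth flagging is precisely this $K_{\bs{\ell}}(t)$ bound, where one must control the \emph{second}, not merely the smallest, eigenvalue of a transverse field perturbed by a large negative rank-one diagonal term --- passing to the orthogonal complement of $\ket{u}$, where that term vanishes, is the clean device, and the closing comparison $2+4t\epsilonD\ge1+5t\epsilonD$ is exactly where the range constraint $t\le1/\epsilonD$ enters.
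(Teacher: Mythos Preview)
Your treatment of $K_{\bs{\ell}}(t)$ is correct and, modulo the level of detail, is the same argument the paper has in mind: the paper's proof is a single sentence pointing to \Cref{eq:Ktaut} and \Cref{eq:D0Dl}, and your Courant--Fischer/rank-one step is exactly the natural way to unpack that.

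There is, however, a genuine gap in the $K_{\bs{0}}(t)$ case. You correctly compute that $\XB$ has spectrum $\DeltaB\cdot\{-n,-n+2,\dots,n\}$, but the spectral gap this yields is $2\DeltaB$, not $\DeltaB n$. After scaling by $1-t\epsilonD$ one gets a gap of $2(1-t\epsilonD)\DeltaB$, which is strictly smaller than the claimed $(1-t\epsilonD)\DeltaB n$ whenever $n>2$. So the step ``from which the spectral-gap bound asserted for $K_{\bs{0}}(t)$ follows at once'' does not go through; in fact the bound as literally stated appears to be a slip in the paper as well (its one-line proof does not compute anything either). This is harmless for the paper's downstream use: in \Cref{clm:lem:pathD_tout} the claim is invoked only with $a=0$ when $t\le 0$, and all that is ultimately needed there is a gap of order $\Omega(\sqrt{\mGHV})$, which the true bound $2(1-t\epsilonD)\DeltaB\ge 2\DeltaB\ge 2\mGHV$ comfortably delivers. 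But as a proof of the claim \emph{as stated}, your $K_{\bs{0}}$ argument is incomplete and cannot be completed.
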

\begin{proof}
	Recall that $\XB=-\DeltaB\sum_{i\in[n]}X_i$. The statement follows directly from \Cref{eq:Ktaut} and \Cref{eq:D0Dl}.
\end{proof}

We inherit the value of $\Delta$ and the dependency $\DeltaL = \poly(\ell,1/\epsilon)$ from \Cref{lem:pathC}.
Therefore we only need to show that $\epsilon,\epsilonD\ge1/\poly(\nGHV)$ suffices for our purposes.

\begin{claim} \label{clm:lem:pathD_tmid}
	Assume $\epsilon,\epsilonD \leq 1/\poly(\nGHV)$. $\HD(t)$ has spectral gap $\Omega(\sqrt{\mGHV})$ for $t \in [0,\ell+1]$.
\end{claim}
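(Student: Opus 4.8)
The plan is to reprise the first-order perturbative reduction from the proof of \Cref{lem:pathC} almost verbatim, now applied to the tilted Hamiltonian $\HD(t)$ of \cref{eq:HD2}, checking that the small tilt parameter $\epsilonD$ spoils none of the estimates by more than lower-order terms. To this end, first rewrite $\HD(t) = H_0(t) + V(t)$ where $H_0(t) \coloneqq \DeltaC \cdot I_{2^{\nB}} \tensor \HL(1-t\epsilonD, t)$ and $V(t) \coloneqq \sum_{\tau \in \binary^{\nB}} K_\tau(t) \tensor \ketbra{\tau}{\tau}$ with $K_\tau(t)$ as in \cref{eq:Ktaut} — exactly the decomposition used for $\HC(t)$ in \Cref{lem:pathC}, but with $\HL(t)$ replaced by $\HL(1-t\epsilonD, t)$ and each $K_\tau$ by $K_\tau(t)$. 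For $t \in [0, \ell+1]$ and $\epsilonD \leq 1/\poly(\nGHV)$ small enough we have $1 - t\epsilonD \in [1/2, 1]$ and $1 + t\epsilonD \in [1, 2]$, so $\|V(t)\| = \max_\tau \|K_\tau(t)\| \leq 2(\|\XB\| + \|\DC\|) \leq \poly(\nGHV)$.

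Two ingredients take the place of their \Cref{lem:pathC} analogues. First, since $\|\HL(1-t\epsilonD, t) - \HL(1,t)\| = t\epsilonD \|\XL\| = t\epsilonD\,\ell$, which is $\leq 1/100$ once $\epsilonD$ is small enough, \Cref{fct:weyl} together with \Crefitem{lem:linearTFI}{itm:lem:linearTFI_2} gives that $\HL(1-t\epsilonD, t)$ still has spectral gap $\geq 1/4$ (hence $(H_0)_{++}$ exceeds $(H_0)_{--}$ by at least $\DeltaC/4$), while \Cref{fct:davis-kahan} combined with \Crefitem{lem:linearTFI}{itm:lem:linearTFI_3} shows that the ground state $\ket{v} = \sum_\tau v_\tau \ket{\tau}$ of $\HL(1-t\epsilonD, t)$ is $\epsilon'$-close to a normalized linear combination of $\ket{\bs{a}}$ and $\ket{\bs{b}}$ with $\epsilon' = O(\epsilonL + \ell^2 \epsilonD) \leq 1/\poly(\nGHV)$, whence $\sum_{\tau \notin \{\bs{a},\bs{b}\}} v_\tau^2 \leq O(\epsilon'^2)$. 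Second, since $\|K_{\bs{j}}(t) - (\XB + D_j)\| = t\epsilonD \|{-}\XB + D_j\| \leq (\ell+1)\epsilonD\,\poly(\nGHV) \leq 1$ for $\epsilonD$ small enough, \Cref{fct:weyl} and \Cref{lem:pathB2} imply that every convex combination of two consecutive $K_{\bs{j}}(t)$'s — in particular $\tfrac{v_{\bs{a}}^2 K_{\bs{a}}(t) + v_{\bs{b}}^2 K_{\bs{b}}(t)}{v_{\bs{a}}^2 + v_{\bs{b}}^2}$ — retains spectral gap $\delta \coloneqq \Omega(\sqrt{\mGHV})$.

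With these in hand the rest of the argument mirrors \Cref{lem:pathC}: letting $\calH_-$ be the ground space of $H_0(t)$ with projector $P_- = I_{2^{\nB}} \tensor \ketbra{v}$ and $\zeta$ the ground energy of $\HL(1-t\epsilonD,t)$, set $H_\rmtarget \coloneqq (H_0)_{--} + V_{--} = \bigl(\DeltaC \zeta\, I_{2^{\nB}} + \sum_\tau v_\tau^2 K_\tau(t)\bigr) \tensor \ketbra{v}$; its spectral gap equals that of $\sum_\tau v_\tau^2 K_\tau(t)$, which by \Cref{fct:weyl} is at least $(v_{\bs{a}}^2 + v_{\bs{b}}^2)\delta - 2(1 - v_{\bs{a}}^2 - v_{\bs{b}}^2)\max_\tau\|K_\tau(t)\| \geq (1 - O(\epsilon'^2))\delta - O(\epsilon'^2 \poly(\nGHV)) \geq 0.99\delta$ since $\epsilon' \leq 1/\poly(\nGHV)$. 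As $\DeltaC = \poly(\nGHV)$ is inherited from \Cref{lem:pathC} and may be taken $\geq \poly(\|V(t)\|)$, \Cref{lemma:1st} shows $(\HD(t), P_-)$ simulates $H_\rmtarget$ with error $1/2$, and \Cref{lem:gsim} then yields that $\HD(t)$ has spectral gap at least $0.99\delta - 1 = \Omega(\sqrt{\mGHV})$, as claimed.

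The only genuine work — hence the "main obstacle", such as it is — will be bookkeeping: exhibiting a single pair of polynomials witnessing $\epsilon, \epsilonD \leq 1/\poly(\nGHV)$ that simultaneously forces $1 - (\ell+1)\epsilonD \geq 1/2$, keeps the gap of $\HL(1-t\epsilonD, t)$ above $1/4$, keeps each $\|K_{\bs{j}}(t) - (\XB + D_j)\| \leq 1$, and drives $\epsilon' = O(\epsilonL + \ell^2 \epsilonD)$ below the threshold $O(\sqrt{\delta/\max_\tau\|K_\tau(t)\|})$ needed for the $0.99\delta$ bound. Because $\ell$, $\DeltaB$, $\|\DC\|$, and $\max_\tau\|K_\tau(t)\|$ are all $\poly(\nGHV)$ while $\delta = \Omega(\sqrt{\mGHV})$, such polynomials exist, and no essentially new difficulty arises beyond the perturbation theory already deployed for \Cref{lem:pathC}.
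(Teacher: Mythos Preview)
Your approach is correct, but it is considerably more laborious than the paper's. The paper's proof is essentially a one-liner: it does not redo the first-order perturbative reduction at all, but simply observes that
\[
\HD(t)-\HC(t)=t\epsilonD\bigl(\DC-\DeltaC\,I_{2^{\nB}}\tensor\XL-\XB\tensor I_{2^\ell}\bigr),
\]
so $\norm{\HD(t)-\HC(t)}\le(\ell+1)\epsilonD\cdot\poly(\nGHV)\le1$ once $\epsilonD\le1/\poly(\nGHV)$; since \Cref{lem:pathC} already gives $\HC(t)$ spectral gap $\Omega(\sqrt{\mGHV})$ on $[0,\ell+1]$, Weyl's inequality (\Cref{fct:weyl}) immediately yields the same for $\HD(t)$.

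You instead reopen the machinery of \Cref{lem:pathC} and rerun the first-order reduction with the tilted $\HL(1-t\epsilonD,t)$ and $K_\tau(t)$ in place of $\HL(t)$ and $K_\tau$, checking via Davis--Kahan and Weyl that the ground state of $\HL(1-t\epsilonD,t)$ is still concentrated on $\ket{\bs a},\ket{\bs b}$ and that the tilted $K_{\bs j}(t)$'s still have the $\Omega(\sqrt{\mGHV})$ gap. This works, and the bookkeeping you outline is correct (modulo the typo $\tau\in\binary^{\nB}$, which should be $\tau\in\binary^\ell$). What the paper's route buys is brevity: it treats \Cref{lem:pathC} as a black box and perturbs its \emph{conclusion} rather than its \emph{proof}. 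What your route buys is a demonstration that the perturbative argument is itself robust to the tilt---useful intuition, but not needed here since the untilted conclusion is already in hand.
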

\begin{proof}
	For $t \in [0,\ell+1]$, we have
	\begin{equation}
		\norm{\HD(t) - \HC(t)} = \norm{t\epsilonD \left( \DC - \Delta I_{2^n} \otimes \XL - \XB \otimes I_{2^\ell} \right) } \leq \epsilonD\cdot \poly(\nGHV)\le1,
	\end{equation}
	where we set $\epsilonD\le1/\poly(\nGHV)$ sufficiently small.
	Since $\epsilon\le1/\poly(\nGHV)$, we can apply \Cref{lem:pathC} to get an $\Omega(\sqrt{\mGHV})$ spectral gap lower bound for $\HC(t)$.
	Then the claim follows from \Cref{lem:gsim}.
\end{proof}

\begin{claim} \label{clm:lem:pathD_tout}
	Assume $\epsilon,\epsilonD \leq 1/\poly(\nGHV)$. $\HD(t)$ has spectral gap $\Omega(\sqrt{\mGHV})$ for $t \in [-1/\epsilonD,1/\epsilonD] \setminus [0,\ell+1]$.
\end{claim}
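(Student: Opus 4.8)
The plan is to reuse the first-order perturbative reduction of \Cref{lem:pathC} almost verbatim, exploiting that for $t\in[-1/\epsilonD,1/\epsilonD]\setminus[0,\ell+1]$ the toy Hamiltonian $\HL(1-t\epsilonD,t)$ has its ground state \emph{localized} on a single basis state by \Crefitem{lem:linearTFI}{itm:lem:linearTFI_4}, so that the effective logical Hamiltonian collapses to a tiny perturbation of the single operator $K_{\bs{0}}(t)$ or $K_{\bs{\ell}}(t)$, whose spectral gap is $\Omega(\mGHV)$ by \Cref{clm:lem:pathD_1}. By symmetry I would treat only $t\le0$; the case $t\ge\ell+1$ is identical after replacing $\bs{0}$ by $\bs{\ell}$ and invoking the $K_{\bs{\ell}}(t)$ bound of \Cref{clm:lem:pathD_1}. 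First I would set $\lambda\coloneqq1-t\epsilonD$ and note that $t\in[-1/\epsilonD,1/\epsilonD]$ forces $\lambda\in[0,2]$, so \Crefitem{lem:linearTFI}{itm:lem:linearTFI_4} applies: $\HL(\lambda,t)$ has spectral gap at least $1/2$ and a unique ground state $\ket{v}=\sum_\tau v_\tau\ket{\tau}$ that is $O(\epsilonL)$-close to $\ket{\bs{0}}$, in particular $\sum_{\tau\neq\bs{0}}v_\tau^2\le O(\epsilonL^2)$.

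Next I would decompose $\HD(t)=H_0+V$ with $H_0\coloneqq\DeltaC\cdot I_{2^{\nB}}\tensor\HL(\lambda,t)$ and $V\coloneqq(1-t\epsilonD)\,\XB\tensor I_{2^\ell}+(1+t\epsilonD)\,\DC$, and observe via \Cref{eq:HC} and \Cref{eq:Ktaut} that $V=\sum_{\tau\in\binary^\ell}K_\tau(t)\tensor\ketbra{\tau}{\tau}$. Let $\zeta_0$ be the ground energy of $\HL(\lambda,t)$ and let $\calE\colon\ket{\psi}\mapsto\ket{\psi}\tensor\ket{v}$ be the encoding onto the ground space $\calH_-$ of $H_0$. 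Exactly as in \Cref{lem:pathC}, $(H_0)_{--}=\DeltaC\zeta_0\cdot P_-$, the eigenvalues of $(H_0)_{++}$ exceed those of $(H_0)_{--}$ by at least $\DeltaC/2$ (using the $\ge1/2$ gap of $\HL(\lambda,t)$), and a direct computation identifies $\calE^\dagger\big((H_0)_{--}+V_{--}\big)\calE$ with the logical target $H_\rmtarget\coloneqq\DeltaC\zeta_0\cdot I_{2^{\nB}}+\sum_\tau v_\tau^2\,K_\tau(t)$; hence the approximation hypothesis of \Cref{lemma:1st} holds with error $0$.

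I would then lower-bound the gap of $H_\rmtarget$. Since $1\pm t\epsilonD\in[0,2]$ for $t\in[-1/\epsilonD,1/\epsilonD]$, and $\|\XB\|=\DeltaB\nB$ and $\max_i\|D_i\|$ are both $\poly(\nGHV)$ by \Cref{lem:pathB} and \Cref{lem:pathB2}, we have $M\coloneqq\max_\tau\|K_\tau(t)\|\le\poly(\nGHV)$; combined with $\sum_{\tau\neq\bs{0}}v_\tau^2\le O(\epsilonL^2)$ this gives $\big\|\sum_\tau v_\tau^2 K_\tau(t)-K_{\bs{0}}(t)\big\|\le O(\epsilonL^2)\,M$. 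By \Cref{clm:lem:pathD_1}, $K_{\bs{0}}(t)$ has spectral gap at least $(1-t\epsilonD)\DeltaB\nB\ge\DeltaB\nB\ge\mGHV$ since $t\le0$ and $\DeltaB\ge\mGHV$; so choosing $\epsilonL\le1/\poly(\nGHV)$ small enough that $O(\epsilonL^2)M\le\DeltaB\nB/4$, \Cref{fct:weyl} gives $\sum_\tau v_\tau^2 K_\tau(t)$, and hence $H_\rmtarget$ (which differs only by the additive scalar $\DeltaC\zeta_0$), a spectral gap at least $\DeltaB\nB/2=\Omega(\mGHV)$ and in particular a non-degenerate ground state. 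Finally, taking $\DeltaC\ge\poly(\nGHV)$ large enough — compatible with the range inherited from \Cref{lem:pathC} — so that $\DeltaC/2\ge O(M^2)$, \Cref{lemma:1st} shows $(\HD(t),\calE)$ simulates $H_\rmtarget$ with error $1/2$, whence \Cref{lem:gsim} gives $\HD(t)$ a spectral gap at least $\Omega(\mGHV)-1=\Omega(\sqrt{\mGHV})$ as required; the norm bound $\poly(\nGHV)$ then follows from $\|\HD(t)\|\le\DeltaC\|\HL(\lambda,t)\|+\|V\|$ together with \Crefitem{lem:linearTFI}{itm:lem:linearTFI_1} and $|t|\le1/\epsilonD\le\poly(\nGHV)$.

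The one genuine obstacle is conceptual rather than computational: applying \Cref{fct:weyl} crudely to $H_0+V$ is useless, because $V$ acts nontrivially inside the $2^{\nB}$-fold degenerate ground space of $H_0$ and could \emph{a priori} close the ground-state gap of $\HD(t)$ entirely; the content is precisely that first-order perturbation theory pins the splitting of that degeneracy to the effective operator $\sum_\tau v_\tau^2 K_\tau(t)$, and it is the localization of $\ket v$ from \Crefitem{lem:linearTFI}{itm:lem:linearTFI_4} that renders this a negligible perturbation of a single $K_{\bs{0}}(t)$ (or $K_{\bs{\ell}}(t)$) with an $\Omega(\mGHV)$ gap. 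Everything else is routine bookkeeping — checking $\lambda=1-t\epsilonD\in[0,2]$ so that \Crefitem{lem:linearTFI}{itm:lem:linearTFI_4} is available, and absorbing the factor-two mismatch between the $\DeltaC/2$ gap of $(H_0)_{++}$ and the parameter $\Delta$ in \Cref{lemma:1st} exactly as in the proof of \Cref{lem:pathC}.
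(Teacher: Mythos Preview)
Your proposal is correct and follows essentially the same approach as the paper's proof: decompose $\HD(t)=H_0+V$ with $H_0=\DeltaC\cdot I\tensor\HL(1-t\epsilonD,t)$ and $V=\sum_\tau K_\tau(t)\tensor\ketbra{\tau}{\tau}$, use \Crefitem{lem:linearTFI}{itm:lem:linearTFI_4} to localize the ground state of $\HL$ on $\ket{\bs{0}}$ or $\ket{\bs{\ell}}$, apply the first-order reduction \Cref{lemma:1st} to reduce to the effective target $\DeltaC\zeta_0 I+\sum_\tau v_\tau^2 K_\tau(t)$, and finish with \Cref{clm:lem:pathD_1} plus Weyl and \Cref{lem:gsim}. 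The only cosmetic difference is that the paper handles both signs at once via an index $a\in\{0,\ell\}$ and bounds the gap of $H_\rmtarget$ as $v_{\bs a}^2\delta-2(1-v_{\bs a}^2)M$, whereas you bound $\big\|\sum_\tau v_\tau^2 K_\tau(t)-K_{\bs 0}(t)\big\|$ directly; these are equivalent.
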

\begin{proof}
	Fix $t \in [-1/\epsilonD,1/\epsilonD] \setminus [0,\ell+1]$.
	Let $a \coloneqq 0$ if $t \leq 0$ and $a \coloneqq \ell$ if $t \geq \ell+1$.
	We shorthand $K_\tau\coloneqq K_\tau(t)$, $\HL \coloneqq \HL(1-t\epsilonD, t)$, $H_0 \coloneqq \Delta \cdot I_{2^n} \tensor \HL$, and $V \coloneqq (1-t \epsilonD) \XB \tensor I_{2^\ell} + (1 + t \epsilonD)\DC$.
	The Hamiltonian of interest $\HD:=\HD(t)$ in \Cref{eq:HD2} then becomes
	\begin{equation}
		\HD =  H_0 + V,
		\quad\text{where}\quad
		H_0 = \Delta \cdot I_{2^n} \tensor \HL,
		\quad V = \sum_{\tau \in \binary^\ell} K_\tau(t) \tensor \ketbra{\tau}{\tau}.
	\end{equation}
	Denote $M \coloneqq \max_{\tau \in \binary^\ell} \norm{K_\tau}$ and, by \Cref{lem:pathB} and \Cref{lem:pathB2}, we have
	\begin{equation} \label{eq:Mbound}
		M \leq(1 + t \epsilonD ) \poly(n) + 2 t \epsilonD \DeltaB n\leq \poly(\nGHV).
	\end{equation}

	Our analysis here is similar to the proof of \Cref{lem:pathC}.
	Let $\calH$ be the Hilbert space associated with $H_0$, $\HD$, and $V$.
	Let $\calH_{-}$ be the ground space of $H_0$.
	Let $\zeta$ be the ground energy of $\HL$ and $\ket{v}$ be a ground state of $\HL$.
	By definition, $\calH_{-}$ is spanned by $\ket{\psi} \tensor \ket{v}$ for any $n$-qubit state $\ket{\psi}$.
	We explicitly write out $\ket{v}=\sum_{\tau \in \binary^\ell} v_\tau \ket{\tau}$.
	Then by \Crefitem{lem:linearTFI}{itm:lem:linearTFI_4}, we have
	\begin{equation} \label{eq:lem:pathD_vtau}
		\sum_{\tau \neq \bs{a}} v_\tau^2 \leq O(\epsilon^2).
	\end{equation}

	Let $ P_{-} \coloneqq I_{2^n} \tensor \ketbra{v} $ be the projection onto $\calH_{-}$.
	Then
	\begin{equation}
		(H_0)_{--} = P_-H_0P_-= \Delta \zeta \cdot P_-= \Delta \zeta \cdot I_{2^n}\otimes\ketbra v
	\end{equation}
	and
	\begin{equation}
		V_{--} \coloneqq P_-VP_-=\left(\sum_{\tau \in \binary^\ell }v_\tau^2\cdot K_\tau\right)\otimes\ketbra v.
	\end{equation}

	Define
	\begin{align}
		H_\rmtarget
		 & \coloneqq (H_0)_{--} + V_{--}
		= \left(  \Delta \zeta \cdot I_{2^n} + \sum_{\tau}v_\tau^2\cdot K_\tau \right) \tensor \ketbra{v}                                 \\
		 & =\left(  \Delta \zeta \cdot I_{2^n} + v_{\bs a}^2K_{\bs a}+ \sum_{\tau\ne\bs a}v_\tau^2\cdot K_\tau \right) \tensor \ketbra{v}
	\end{align}
	Observe that the spectrum of $H_\rmtarget$ is the same to that of $\sum_{\tau}v_\tau^2\cdot K_\tau$ up to an additive $\Delta \zeta$ shift.
	In addition, $K_{\bs{a}}$ has minimal spectral gap at least $\delta \coloneqq \DeltaB n$ by \Cref{clm:lem:pathD_1}.
	By \Cref{fct:weyl}, the spectral gap of $H_\rmtarget$ is
	\begin{align}
		\delta_\rmtarget
		 & \ge v_{\bs{a}}^2 \cdot\delta
		-2\cdot\norm{\sum_{\tau \neq \bs{a}}v_\tau^2\cdot K_\tau(t)}
		\\
		 & \ge v_{\bs{a}}^2 \cdot\delta-2\cdot(1-v_{\bs{a}}^2)\cdot M
		\tag{since $\norm{K_\tau(t)}\le M$}                           \\
		 & \ge(1-O(\epsilon^2))\cdot\delta-O(\epsilon^2M)
		\tag{by \Cref{eq:lem:pathD_vtau}}                             \\
		 & \ge0.99\cdot\delta,
		\label{eq:lem:pathD_deltatgt}
	\end{align}
	where we assume $\epsilon\le1/\poly(\nGHV)\le O(\sqrt{\delta/M})$ for the last inequality.

	Now, write $\calH = \calH_- \oplus \calH_+$ and regard $P_{-}$ as an encoding isometry from $\calH_{-}$ to $\calH$.
	Let $P_+$ be the projection onto $\calH_+$.
	Since $\calH_-$ is defined to be the ground space of $H_0$, the minimal eigenvalue of $(H_0)_{++} = P_+ H_0 P_+$ is the second smallest eigenvalue of $H_0$.
	Since $H_0=\Delta \cdot I\otimes \HL$ and by \Crefitem{lem:linearTFI}{itm:lem:linearTFI_4}, we know that eigenvalues of $(H_0)_{++}$ are greater than those of $(H_0)_{--}$ by at least $\Delta/2$.
	Hence by \Cref{lemma:1st}, ($\HD$,$P_-$) simulates $H_\rmtarget$ with error $1/2$, provided that $\Delta \geq \poly(M)$ as $\|V\| = \max_\tau \|K_\tau\| = M$.
	Recall \Cref{eq:Mbound} that $M\le\poly(\nGHV)$. Hence it suffices to assume $\Delta=\poly(\nGHV)$.
	Finally by \Cref{lem:gsim}, the spectral gap of $\HD$ is at least $\delta_\rmtarget - 1 = \Omega(\delta) = \Omega(\sqrt{\mGHV})$.
	This completes the proof of \Cref{clm:lem:pathD_tout}.
\end{proof}

Now we wrap up the analysis of \Cref{lem:pathD}.

\begin{proof}[Proof of \Cref{lem:pathD}]
	The spectral gap condition follows directly from \Cref{clm:lem:pathD_tmid} and \Cref{clm:lem:pathD_tout}.

	To verify the norm condition, we recall \Cref{eq:HD} and perform a direct calculation
	\begin{align}
		\norm{\HD(t)}
		 & \leq\Delta\cdot\norm{\HL(1-t\epsilonD,t)} + |1-t\epsilonD|\cdot\norm{\XB}+|1+t\epsilonD|\cdot\norm{\DC}  \\
		 & =\poly(\nGHV,\norm{\HL(1-t\epsilonD,t)},\norm{\DC})
		\tag{since $\Delta,\norm{\XB}\le\poly(\nGHV)$}                                                              \\
		 & =\poly(\nGHV,\norm{\HL(1-t\epsilonD,t)})
		\tag{since $\norm{\DC}=\max_{0\le i\le\ell}\norm{D_i}\le\poly(\nGHV)$ by \Cref{eq:HC} and \Cref{lem:pathB}} \\
		 & =\poly(\nGHV),
	\end{align}
	where we use \Cref{lem:linearTFI} and $\ell=\poly(\nGHV),\epsilon\ge1/\poly(\nGHV)$ for the last inequality.
\end{proof}     { \renewcommand{\ket}[1]{|#1\rangle}
\renewcommand{\bra}[1]{\langle #1|}
\renewcommand{\ketbra}[2]{|#1\rangle\langle #2|}
\renewcommand{\braket}[2]{\langle#1|#2\rangle}
\renewcommand{\mel}[3]{\langle #1|#2|#3\rangle}

\section{Extension to Continuous Optimization: Theorem~\ref{thm:qhd_informal}} \label{sec:qhd}

In this section, we prove an exponential quantum-classical separation for continuous optimization. We consider the following box-constrained minimization problem
\begin{equation}
	\argmin_{x\in \calX} f(x), \tag{P}\label{tag:p}
\end{equation}
where we assume the objective function $f$ is Lipschitz continuous in a box-shaped domain $\calX = \{(x_1,\dots,x_n) \in \R^n \colon -J_j \le x_j \le J_j,~\forall j \in [n]\}$, where $J_j \leq \poly(n)$ for any $j\in [n]$.
We denote the global minimizer of $f$ in $\calX$ as $x^*$.

This section is organized as follows.
In \Cref{sec:qhd-1}, we present the main result of this section, \Cref{thm:qhd},
whose proof is detailed in \Cref{sec:proof-of-qhd}.
The proof is based on \Cref{lem:tosdg} that reduces TFD Hamiltonians (as in~\eqref{eqn:tfd-defn}) to a family of (unbounded) Hamiltonian operators defined over the box $\calX$.
Then, in \Cref{sec:proof-of-tosdg}, we prove \Cref{lem:tosdg} by leveraging the properties of the $\hatX$ operator introduced in \cite{zheng2024computational} and incorporating new insights.
Some missing proofs from \Cref{sec:proof-of-tosdg} are deferred to \Cref{sec:missing-proof-qhd} for better readability and flow.

\subsection{Hamiltonian Formulation of Continuous Optimization} \label{sec:qhd-1}

\begin{definition}[Hamiltonian path in real space]\label{defn:real-space-path}
	Let $n$ be a positive integer and $\calX$ is a compact set in $\R^n$.
	We call a 1-parameter family $\{H(t): t \ge 0\}$ a Hamiltonian path over $\calX$ if for each $t \ge 0$, we have
	\begin{equation}
		H(t) = - \Delta + V(t,x),
	\end{equation}
	where $\Delta = \sum_{j \in [n]}\frac{\partial^2}{\partial^2_{x_j}}$ is the Laplace operator and $V(t,x)\colon [0,\infty)\times \calX \to \R$ is a time-dependent potential function.
\end{definition}

For each fixed $t$, $H(t)$ is a Hamiltonian operator that acts on quantum states in $\calX$. In particular, for a smooth wave function $\Psi\colon \calX \to \mathbb{C}$ with unit $L^2$-norm (i.e., a test function), we have
\begin{equation}
	H(t) \Psi(x) = - \Delta \Psi(x) + V(t,x)\Psi(x).
\end{equation}
This is a natural generalization of the finite-dimensional Hamiltonian path (as in Section \ref{sec:ham-path}) to unbounded Hamiltonians.
Similar to the finite-dimensional cases, the continuous-space Hamiltonian path $H(t)$ can lead to a polynomial-time quantum algorithm to solve the minimization problem~\eqref{tag:p} if the following conditions holds: (1) the ground state of $H(0)$ can be efficiently prepared, (2) $H(t)$ admits a polynomial small spectral gap, (3) $\dot{V}(t,x)$ and $\ddot{V}(t,x)$ is polynomially bounded, (4) the minimizer of $f(x^*)$ is ``encoded'' in the ground state of $H(t)$ for some large $t$, and (5) the Hamiltonian $H(t)$ can be efficiently simulated.
When all the requirements are satisfied, we can implement an adiabatic-type quantum simulation for $H(t)$ and extract the information of $x^*$ by measuring the ground state of $H(t)$ for some large $t$. The details of this protocol and its computational complexity can be found in~Apppendix~\ref{sec:qat}.

In particular,~\Cref{defn:real-space-path} encompasses the quantum Hamiltonian descent algorithm as a special case. Let $f$ be an objective function over $\calX$. With an appropriate time rescaling, QHD is governed by the evolution of the following quantum Hamiltonian:
\begin{equation}
	H(t) = - \Delta + s(t) f(x),
\end{equation}
where $s(t)$ is certain time-dependent functions. In other words, QHD follows a Hamiltonian path over $\calX$ with $V(t,x) = s(t)f(x)$, which is separable in $t$ and $x$.

\begin{theorem}[Formal version of \Cref{thm:qhd_informal}] \label{thm:qhd}
	For any $n$, there exists a family of continuous functions $\calM_n = \{f\colon \calX \to \R\}$ with a fixed box domain $\calX \subset \R^n$ where the diameter of $\calX$ is at most $\poly(n)$. For any $f \in \calM_n$, we can construct a Hamiltonian path
	\begin{equation}\label{eq:ham-path-qhd}
		H(t) \coloneqq - \Delta + V(t,x),\quad V(t,x) \coloneqq \nu(t) g(x) + tf(x), \quad \forall t \ge 0, x \in \calX
	\end{equation}
	where the auxiliary function $g(x)$ and $\nu(t)$ are efficiently computable (see~\Cref{eq:defn-g}, \Cref{eq:nu-def}, and \Cref{eq:from-lambda-to-t}) and they only depend on $n$.
	Moreover, $g(x)$ is bounded within $\pm \poly(n)$, and $\nu(t) = e^{-\Theta(\sqrt{t})}$.
	The Hamiltonian path $\{H(t)\}_{t\ge 0}$ exhibits an exponential quantum advantage given oracle access to $f$:
	\begin{enumerate}[label=(\roman*)]
		\item\label{itm:thm:qhd_1} Any classical algorithm requires at least $\exp(n^{\Omega(1)})$ queries to find an $x$ such that
		$f(x) - f(x^*) \leq 1/\poly(n)$, with success probability greater than $\exp(-n^{\Omega(1)})$.
		\item\label{itm:thm:qhd_2}
		Let $\phi(t,x)$ be the instantaneous ground state of $H(t)$ for all $t \ge 0$. The following properties hold for $H(t)$:
		\begin{itemize}
			\item The initial ground state $\phi(0,x)$ is $n^{-\Omega(1)}$-close to a product state of the form $\bigotimes_{i \in [n]} \ket{\vartheta_i}$, with each $\ket{\vartheta_i}$ being efficiently preparable.
			\item For any $t \geq 0$, $H(t)$ has spectral gap $1/\poly(n)$, and $\| \dot{H}(t) \|, \| \ddot{H}(t) \| \leq \poly(n)$.
			\item For any $\eta>0$, there exists $\tend = \poly(n,1/\eta)$ such that $\langle \phi(\tend) |f|\phi(\tend)\rangle - f(x^\ast) \leq \eta/10$.
		\end{itemize}
		Consequently, simulating the \sdg\ dynamics for $0 \le t \le \tend$:
		\begin{equation}
			i \frac{\dee}{\dee t} \ket{\psi(t)}= T H(t) \ket{\psi(t)}, \quad \ket{\psi(0)} = \bigotimes_{i \in [n]} \ket{\vartheta_i}, \quad T = \poly(n),
		\end{equation}
		and measuring the final state $\ket{\psi(\tend)}$ using the position (i.e., computational) basis, we obtain an $x \in \calX$ with query and gate complexity $\poly(n,1/\eta)$ such that $f(x) - f(x^*) \le  \eta$ with probability at least $9/10 - n^{-\Omega(1)}$.
	\end{enumerate}
\end{theorem}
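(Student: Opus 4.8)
The plan is to deduce \Cref{thm:qhd} from \Cref{thm:main} by carrying the linear TFD \apath\ $H_{\rm TFI}\pathto D$ over into continuous space, using and sharpening the TFI--\sdg\ correspondence of \cite{zheng2024computational}. The workhorse will be \Cref{lem:tosdg}: given an $n$-qubit TFD Hamiltonian $H=\sum_u h_u X_u + D$ and a sufficiently large $\Lambda=\poly(n)$, it applies the replacement $-X_u\mapsto\Lambda\hatX_u$ (with $\hatX=-\tfrac{\dee^2}{\dee x^2}+\lambda^2\dwfunc$, $\lambda=\Theta(\log n)$, acting on the $u$-th coordinate; see \Cref{eq:replacement-rule}) together with $D\mapsto\hatD$, where $\hatD\colon[-1,1]^n\to\R$ equals $\mel{x}{D}{x}$ on the orthant indexed by $x\in\binary^n$ and vanishes on inter-orthant boundaries (\Cref{def:hatD}, \Cref{fig:D-heatmap}); the resulting unbounded operator simulates $H$ on its low-energy subspace --- in the natural extension of \Cref{def:sim} --- with error $\poly(n)\,\Lambda^{-\Omega(1)}$, and each query to $\hatD$ reduces to one query to $D$. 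I would establish \Cref{lem:tosdg} first; granting it, the theorem follows by a packaging argument in the spirit of \Cref{sec:reduction_IV}.

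With \Cref{lem:tosdg} in hand, I would apply it along the \apath\ $H_{\rm TFI}\pathto D$ of \Cref{thm:main}. Every Hamiltonian $(1-s)H_{\rm TFI}+sD$ on the path ($s\in[0,1]$) is TFD (the $Z$-terms of $H_{\rm TFI}$ are diagonal), so the replacement rule applies throughout: its transverse-field part, whose coefficient varies as $(1-s)$, maps to a Laplacian plus a double-well potential of the same varying strength, and its diagonal part maps to (a scaling of) $\hatD$. After a fixed spatial rescaling putting the box in the form of \Cref{thm:qhd} and a reparametrization of the schedule, this produces exactly the family $H(t)=-\Delta+\nu(t)g(x)+tf(x)$ of \Cref{eq:ham-path-qhd}: $f$ is the rescaled image $\hatD$ of $D$, so a query to $f$ is a query to $D$; $g$ is the $f$-independent scaffolding built from $\dwfunc$ over the $n$ coordinates; and $\nu(t)$ is the (decaying) double-well prefactor along the reparametrized path, which one checks equals $e^{-\Theta(\sqrt t)}$, with the explicit forms of $g$, $\nu$, and the change of variables pinned down by \Cref{eq:defn-g}, \Cref{eq:nu-def}, and \Cref{eq:from-lambda-to-t}. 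I would then read off the endpoints: $H(0)=-\Delta+g(x)$ is a Kronecker sum of single-coordinate double wells, whose ground state is $\bigotimes_{i\in[n]}\ket{\vartheta_i}$ with each $\ket{\vartheta_i}$ the symmetric double-well ground state $\ket{\chi_0}$, efficiently preparable; and as $t\to\infty$ one has $t^{-1}H(t)\to f$, so the plateau of $tf$ over the orthant of the discrete minimizer $u$ becomes arbitrarily deep and the instantaneous ground state $\phi(t)$ localizes there, giving $\langle\phi(\tend)|f|\phi(\tend)\rangle-f(x^\ast)\le\eta/10$ for some $\tend=\poly(n,1/\eta)$.

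Next I would verify the quantitative claims of \Crefitem{thm:qhd}{itm:thm:qhd_2}. For the spectral gap: since the \sdg\ operator at time $t$ simulates the $1/\poly(n)$-gapped discrete $H(t)$ with error $\poly(n)\Lambda^{-\Omega(1)}$, the unbounded analogue of \Cref{lem:gsim} gives it a $1/\poly(n)$ gap once $\Lambda$ is a large enough polynomial; the bounds $\|\dot H(t)\|,\|\ddot H(t)\|\le\poly(n)$ follow because $f,g$ are $\poly(n)$-bounded while $\nu,\dot\nu,\ddot\nu$ are controlled by $e^{-\Theta(\sqrt t)}$ and the $tf(x)$ term has constant derivative. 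For classical hardness \Crefitem{thm:qhd}{itm:thm:qhd_1}: a query to $f$ reduces to one query to $\hatD$, hence to $D$ by \Cref{lem:tosdg}; rounding any returned $x\in\calX$ to its orthant $v\in\binary^n$ and using that distinct diagonal values of $D$ in the construction of \Cref{thm:main} are $\Omega(1)$-separated (so that $f(x)-f(x^\ast)\le 1/\poly(n)$ forces $v=u$, the discrete minimizer), an efficient classical solver for the continuous problem would yield an efficient classical solver for the discrete one, contradicting \Crefitem{thm:main}{itm:thm:main_1}. Finally, feeding the five ingredients --- efficiently preparable $\phi(0)$, inverse-polynomial gap, polynomially bounded $\dot V,\ddot V$, minimizer encoded in $\phi(\tend)$, and efficient digital simulation of the \sdg\ dynamics over the box $\calX$ via \cite{Childs2022} --- into the adiabatic simulation protocol of \Cref{sec:qat} yields a $\poly(n,1/\eta)$-query, $\poly(n,1/\eta)$-gate algorithm; the output $\ket{\psi(\tend)}$ is $n^{-\Omega(1)}$-close to $\phi(\tend)$, so a position-basis measurement combined with Markov's inequality applied to $\langle\phi(\tend)|f|\phi(\tend)\rangle-f(x^\ast)\le\eta/10$ returns an $x$ with $f(x)-f(x^\ast)\le\eta$ with probability at least $9/10-n^{-\Omega(1)}$.

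The main obstacle is \Cref{lem:tosdg} itself, and within it the error analysis of the $D\mapsto\hatD$ step. The correspondence of \cite{zheng2024computational} (see \Cref{clm:properties-of-X}) bounds the action of a single Pauli-$Z$ tensor through the $\Lambda^{-\Omega(1)}$-concentration of $\ket{\hatzero},\ket{\hatone}$ on a half-line, but $D$ is nonlocal and its expansion into $Z$-tensors has up to $2^n$ terms, so a term-by-term estimate collapses to the vacuous bound $2^{(1-o(1))n}\Lambda^{-\Omega(1)}$. To beat this I would exploit a \emph{symmetry} of the double-well eigenstates --- that $\ket{\chi_0}$ and $\ket{\chi_1}$ carry, up to sign and an $\Lambda^{-\Omega(1)}$ error, equal mass on $\{x\ge0\}$ and $\{x\le0\}$, the property recorded in \Cref{eq:key-sym} --- which lets one bound $\hatD$ acting directly on product states $\ket{\hatx_1}\cdots\ket{\hatx_n}$ and their low-lying neighbours \emph{without} ever expanding $D$, so that the only loss is a $\poly(n)$ factor from leakage across the $O(n)$ orthant boundaries rather than a $2^n$ blow-up. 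Making this leakage bookkeeping precise --- showing that the cross-orthant tails contribute at most $\poly(n)\Lambda^{-\Omega(1)}$ to $\|H-\tcalE^\dagger H_{\rm sim}\tcalE\|$ uniformly along the reparametrized path, in the presence of the time-dependent strengths on $-\Delta$, $g$, and $\hatD$ --- is where the real work lies; once it is done, the remaining steps are essentially reruns of \Cref{lem:gsim}, \Cref{fct:weyl}, and the continuous adiabatic theorem of \Cref{sec:qat}.
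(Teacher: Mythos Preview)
Your high-level strategy is right---reduce from \Cref{thm:main} via \Cref{lem:tosdg}, and the heart of \Cref{lem:tosdg} is indeed the symmetry-based error analysis of \Cref{eq:key-sym} that avoids the $2^n$ blow-up. The final steps (adiabatic theorem, Markov, classical hardness by orthant rounding) are also the paper's.

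But your allocation of terms between $f$ and $g$ is wrong in a way that breaks the construction. You put the double-well potential only in $g$ and identify $f$ with $\hatD$ alone. This cannot work: under the replacement $-\Lambda^{-1}X\mapsto\hatX(\lambda)=-\partial^2+\lambda^2\dwfunc$, the double-well strength $\lambda^2$ must \emph{grow} as the transverse field $\Lambda(\lambda)^{-1}$ shrinks (that is how the $\hatX$ gap $\eigen_2-\eigen_1=\Theta(\lambda)$ stays large enough to protect the low-energy subspace $\frakS$). With the double-well living only in $g$ and its coefficient $\nu(t)$ decaying, $\hatX$ would degenerate along the path and \Cref{lem:tosdg} would cease to apply. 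In the paper's construction (\Cref{eq:qhd-obj-defn}, \Cref{eq:defn-g}) the double-well sits in \emph{both} $f$ and $g$, and the coefficients are tuned so that $\nu(t)\cdot\lambda_0^2+\varphi(\lambda(t))=\lambda(t)^2$ exactly; the schedule is driven not by a linear $(1-s)$ but by letting $\lambda$ (hence $\Lambda(\lambda)$) vary, which is what produces $\nu(t)=e^{-\Theta(\sqrt t)}$ after the change of variables \Cref{eq:from-lambda-to-t}.

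Relatedly, $g$ is not just the double-well scaffolding: it must carry $\hatD_{\rm TFI}$, the continuous image of the diagonal ($Z$ and $ZZ$) part of $H_{\rm TFI}$. Without it, $\hatH(0)$ would simulate only $-\sum_i a_i X_i$ rather than $H_{\rm TFI}$, and the initial ground state would not be the one promised by \Cref{thm:main}. For the same reason your description of $\phi(0)$ as $\ket{\chi_0}^{\otimes n}$ is off: the discrete ground state is $\ket{+}^{\otimes(n-\ell)}\ket{0}^{\otimes\ell}$, so under $\calE$ the product factors are a mix of $\ket{\chi_0}$'s and $\ket{\hatzero}$'s, and the closeness to this product uses \Crefitem{lem:tosdg}{itm:lem:tosdg_3} (ground-state transfer), not just that $\hatH(0)$ is a Kronecker sum.
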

\begin{remark}
	We refer readers to \Cref{prop:merit_dtc} for more quantitative properties regarding $f(x)$ and $g(x)$.
\end{remark}

The proof of \Cref{thm:qhd} is established by constructing a Hamiltonian path from the time-dependent TFD Hamiltonian path
\begin{equation}\label{eq:qhd-pre-path}
	H_{\rm TFI} \pathto H_{\rm prob}
\end{equation}
in \Cref{thm:main}\footnote{Note that we use $H_{\rm prob}$ to refer the original diagonal Hamiltonian $D$ in \Cref{thm:main}. The symbol $D$ is reserved for a different purpose in this section.} via a key reduction lemma formulated in~\Cref{lem:tosdg}.
Here, $H_{\rm TFI}$ denotes a TFD Hamiltonian of the form:
\begin{equation} \label{eq:HTFI-def}
	H_{\rm TFI} = -\sum_i a_i X_i + D_{\rm TFI},
\end{equation}
and $H_{\rm prob}$ represents a diagonal problem Hamiltonian.
As in~\Cref{thm:main}, the ground state of $H_{\rm TFI}$ is a simple product state $\ket{+}^{\otimes (n-\ell)} \ket{0}^{\otimes \ell}$ for some $\ell$, up to a small error, and the linear Hamiltonian path \Cref{eq:qhd-pre-path} admits an $n^{\Omega(1)}$ spectral gap.

We now give an explicit construction of the objective function $f$. The proof of \Cref{thm:qhd} and the key reduction lemma, including the details on $g$ and $\nu$, will be provided in Section~\ref{sec:proof-of-qhd}.

\begin{definition}[Construction of $\hatD$] \label{def:hatD}
	First, we define a helper function $\theta \colon [-1,1]\to \{0,1\}$:
	$$\theta(\xi) \coloneqq \bs{1}_{\xi < 0},\quad \xi \in [-1,1],$$
	which can be naturally extended to multi-dimensional settings $\theta \colon [-1,1]^n \to \{0,1\}^n$:
	\begin{equation} \label{eq:theta-def}
		\theta(\xi) = \theta(\xi_1,\dots,\xi_n) \coloneqq (\theta(\xi_1),\dots,\theta(\xi_n)).
	\end{equation}
	Suppose that $D$ is an arbitrary $n$-qubit diagonal Hamiltonian. We define a function $\hatD\colon [-1,1]^n \to \R$ by
	\begin{equation}\label{eq:hatd-def}
		\hatD(\xi_1, \dots, \xi_n) \coloneqq \min\{1, \xi_{\rm min}/w\} \cdot  \mel{\theta(\xi)}{D}{\theta(\xi)},\quad \xi_{\rm min} \coloneqq \min_{i \in [n]} \{ \abs{\xi_i} \},
	\end{equation}
	where $w \ll 1$ is a universal constant that will be specified later (see~\Crefitem{clm:properties-of-X}{itm:clm:properties-of-X_3}).
	When the context is unambiguous, we will also use $\hatD$ to denote the diagonal operator defined by the function $\hatD(\xi)$, i.e., for any test function $\phi\colon [-1,1]^n\to \mathbb{R}$,
	$$(\hatD \phi)(\xi) = \hatD(\xi) \phi(\xi).$$
\end{definition}

We refer the readers to~\Cref{fig:D-heatmap} for an intuitive illustration of our construction of the continuous function $\hatD$.

\begin{fact} \label{fct:hatD}
	The function $\hatD$ satisfies the following properties:
	\begin{enumerate}[label=(\roman*)]
		\item\label{itm:fct:hatD_1} $\hatD$ is continuous in $[-1,1]^n$ and $O(\norm{D})$-Lipschitz.
		\item\label{itm:fct:hatD_2} The global minimum of $\hatD$ is $\min\{\eigen_0(D),0\}$, where $\mu_0(D)$ denotes the ground energy (i.e., minimal diagonal element) of $D$.
	\end{enumerate}
\end{fact}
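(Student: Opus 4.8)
The plan is to verify the two claimed properties directly from the explicit formula in \Cref{eq:hatd-def}, namely
\[
  \hatD(\xi) = \min\{1,\xi_{\rm min}/w\}\cdot\mel{\theta(\xi)}{D}{\theta(\xi)},\qquad \xi_{\rm min}=\min_{i\in[n]}\abs{\xi_i}.
\]
For the continuity and Lipschitz claim in \ref{itm:fct:hatD_1}, I would first observe that the function $\xi\mapsto\mel{\theta(\xi)}{D}{\theta(\xi)}$ is piecewise constant: it takes the value $\mel{b}{D}{b}$ on the open orthant $O_b=\{\xi:\sgn(\xi_i)=(-1)^{b_i}\}$ for each $b\in\binary^n$, and is (say) $0$ on the coordinate hyperplanes where some $\xi_i=0$. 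This function is discontinuous across orthant boundaries, but the prefactor $m(\xi):=\min\{1,\xi_{\rm min}/w\}$ vanishes exactly on those boundaries (since $\xi_{\rm min}=0$ there), so the product is continuous. Quantitatively, $m$ is $1/w$-Lipschitz (it is a minimum of two $1/w$-Lipschitz functions, $\xi_{\rm min}/w$ being $1/w$-Lipschitz as a min of coordinate absolute values), takes values in $[0,1]$, and the piecewise-constant factor is bounded by $\norm{D}_{\rm max}\le\norm{D}$. The only subtlety is bounding $\abs{\hatD(\xi)-\hatD(\xi')}$ when $\xi,\xi'$ lie in different orthants; here I would use the fact that any path from $\xi$ to $\xi'$ must cross a hyperplane $\{\eta_i=0\}$, on which $m(\eta)=0$ and hence $\hatD(\eta)=0$, so one can route the estimate through that crossing point $\eta$: $\abs{\hatD(\xi)-\hatD(\xi')}\le\abs{\hatD(\xi)-\hatD(\eta)}+\abs{\hatD(\eta)-\hatD(\xi')}$, and within a single closed orthant $\hatD$ is the product of a $1/w$-Lipschitz function bounded by $1$ and a constant bounded by $\norm{D}$, hence $(\norm{D}/w)$-Lipschitz; choosing the crossing point to minimize the detour gives an $O(\norm{D})$-Lipschitz bound overall (absorbing the universal constant $1/w$ into the $O(\cdot)$).

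For \ref{itm:fct:hatD_2}, I would compute the global minimum directly. Since $m(\xi)\in[0,1]$ and $\mel{\theta(\xi)}{D}{\theta(\xi)}$ ranges exactly over the diagonal entries of $D$ as $\xi$ ranges over the orthants, we have $\hatD(\xi)=m(\xi)\cdot\mel{\theta(\xi)}{D}{\theta(\xi)}$. If $\mu_0(D)=\min_b\mel{b}{D}{b}\ge 0$, then $\hatD(\xi)\ge 0$ everywhere, and the value $0$ is attained (e.g.\ on any coordinate hyperplane, where $m=0$), so $\min\hatD=0=\min\{\mu_0(D),0\}$. If $\mu_0(D)<0$, pick $b^*$ achieving it; then for $\xi$ deep inside the orthant $O_{b^*}$ with $\xi_{\rm min}\ge w$ we get $m(\xi)=1$ and $\hatD(\xi)=\mu_0(D)$, while for any $\xi$ the value $\hatD(\xi)=m(\xi)\mel{\theta(\xi)}{D}{\theta(\xi)}\ge m(\xi)\mu_0(D)\ge\mu_0(D)$ since $m(\xi)\in[0,1]$ and $\mu_0(D)<0$; hence $\min\hatD=\mu_0(D)=\min\{\mu_0(D),0\}$. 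Combining the two cases yields $\min\hatD=\min\{\mu_0(D),0\}$.

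I do not anticipate a genuine obstacle here — this is a routine verification. The only point requiring mild care is the cross-orthant Lipschitz estimate in \ref{itm:fct:hatD_1}, where one must exploit the vanishing of the prefactor on orthant boundaries rather than naively bounding the difference of the discontinuous piecewise-constant factor; once that routing argument is set up, the bound $O(\norm{D})$ follows immediately (the implied constant depending only on the universal constant $w$).
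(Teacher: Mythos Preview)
The paper states this as a \emph{Fact} without proof, so there is no argument in the paper to compare against. Your verification is correct and is exactly the kind of routine check the authors presumably had in mind: the prefactor $m(\xi)=\min\{1,\xi_{\rm min}/w\}$ kills the orthant-boundary discontinuities of the piecewise-constant factor, and the cross-orthant Lipschitz estimate is handled by routing through a boundary point where $\hatD$ vanishes. One small sharpening of your routing argument: take $\eta$ and $\eta'$ to be the \emph{first} and \emph{last} hyperplane crossings along the segment $[\xi,\xi']$; then each of the legs $[\xi,\eta]$ and $[\eta',\xi']$ lies in a single closed orthant, $\hatD(\eta)=\hatD(\eta')=0$, and $\|\xi-\eta\|+\|\eta'-\xi'\|\le\|\xi-\xi'\|$, giving Lipschitz constant exactly $\|D\|/w$ with no extra factor.
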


\paragraph{Rescaling the path in \Cref{thm:main}.}
Without loss of generality, we assume that all diagonal elements of $H_{\rm prob}$ lie in $[-1,0)$ for the linear Hamiltonian path from \Cref{thm:main},
\begin{equation} \label{eq:modified-main-path}
	H_{\rm TFI} = -\sum_{i \in [n]} a_i X_i + D_{\rm TFI} \pathto H_{\rm prob},
\end{equation}
by an observation that the path $A \pathto \alpha B + \beta I$ has spectral gap at least $\alpha \delta$ if the path $A \pathto B$ has spectral gap $\delta$ for $\alpha \leq 1$.
This way, the spectral gap of \Cref{eq:modified-main-path} is at least $1/\poly(n)$ and all other claims from \Cref{thm:main} still hold.

\paragraph{Construction of the objective function $f$.}
We define
\begin{equation}\label{eq:qhd-obj-defn}
	f(\xi) = \hatH_{\rm prob} + \sum_{i \in [n]}a_i f_{\rm dw}(\xi_i),\quad f_{\rm dw}(z) = (z-1/2)^2(z+1/2)^2.
\end{equation}
Note that $\dwfunc(z)$ is a quartic double well function with global minima at $z = \pm 1/2$, as shown in \Cref{fig:dwspec}.
In other words, the objective function $f$ is essentially a continuous-space interpolation of the problem Hamiltonian $H_{\rm prob}$. Due to the double well function $f_{\rm dw}$, the local minima of $f$ are attained at the points $\{\pm 1/2\}^n$.
Since all the diagonal elements in $H_{\rm prob}$ are within $[-1,0)$, so by \Crefitem{fct:hatD}{itm:fct:hatD_2} applied on $\hatH_{\rm prob}$,
\begin{equation}
	\min_{\xi \in [-1,1]^n} f(\xi) = f(\xi^*) = \mu_0(H_{\rm prob}) < 0,\quad \xi^* = \argmin_{\xi \in [-1,1]^n}f(\xi).
\end{equation}
The negativity of the global minimum of $f(\xi)$ follows from the fact that all diagonal elements in $H_{\rm prob}$ are negative.

\begin{remark}
	The actual potential function $f(x)$ in~\Cref{eq:ham-path-qhd} is a (spatially) rescaled version of $f(\xi)$. The details can be found in the proof of~\Cref{thm:qhd}; cf.~\Cref{sec:proof-of-qhd}.
\end{remark}

\subsection{Proof of the Theorem} \label{sec:proof-of-qhd}

\paragraph{Notation and Convention.}
From here to the end of this section, we adopt the following notational conventions and assumptions.
Denote a Hamiltonian acting on position space by a hat, e.g., $\hatH$.
All wave functions introduced are \emph{real}-valued.\footnote{This assumption is valid because, for example, any \sdg\ operator is a real differential operator, and thus all its eigenfunctions can be chosen to be real.}
We will frequently use $\xi$ to denote an element in $\mathbb{R}$ or $\mathbb{R}^n$ while reserving $x$ for binary strings in $\binary^n$.
For a state $\ket{\phi}$, we denote its corresponding wave function by $\Psi_{\phi}(\cdot)$ or $\phi(\cdot)$.
Let $\eigen_j(\hatH)$ be the $(j+1)$th smallest eigenvalue of $\hatH$, counting multiplicity.\footnote{$\eigen_j(\cdot)$ is well defined for any self-adjoint operator that is bounded from below; see \cite[Lemma 4.26]{zheng2024computational}.}

First, we present the key reduction lemma.
We define an operator $\hatX$ in the interval $[-1,1]$ (with vanishing boundary condition):
\begin{equation} \label{eq:hatX-def}
	\hatX(\lambda) = - \frac{\dee^2 }{\dee \xi^2}  + \lambda^2 \dwfunc(\xi),
\end{equation}
parameterized by some $\lambda > 0$.
Let $\ket{\chi_0}$ and $\ket{\chi_1}$ be the ground and first-excited state of $\hatX$, respectively; and we assume $\chi_0(\xi),\chi_1(\xi) \geq 0$ for $\xi\geq 0$ by fixing their global phases; see \Cref{fig:dwspec_0}.
Let $\mathfrak{T} \coloneqq \spn(\ket{\chi_0},\ket{\chi_1})$ be the low-energy subspace of $\hatX$.
Let $\frakS \coloneqq \mathfrak{T}^{\tensor n}$ and we define an isometry $\calE \colon \mathbb{C}^{2^n} \to \frakS$:
\begin{equation} \label{eq:cale-def}
	\calE \coloneqq \left( \ketbra{\chi_0}{+} + \ketbra{\chi_1}{-}\right)^{\otimes n}, \quad \text{where } \ket{\pm} \coloneqq \frac{1}{\sqrt{2}}( \ket{0} \pm \ket{1} ).
\end{equation}

\begin{lemma} \label{lem:tosdg}
	Let $\{a_i \ge 0\}_i$ be a set of fixed coefficients and $D$ be an arbitrary $n$-qubit diagonal Hamiltonian.
	For any $\lambda > 0$, we define two operators:
	\begin{itemize}
		\item The operator $\hatH(\lambda)$ is a Hamiltonian defined over the box $[-1,1]^n \subset \R^n$:
		      \begin{equation}
			      \hatH(\lambda) \coloneqq \sum_{i \in [n]} a_i \hatX_i(\lambda) + \hatD, \quad \text{\normalfont where } \hatX_i \coloneqq - \frac{\dee^2 }{\dee \xi_i^2}  + \lambda^2 \dwfunc(\xi_i) \text{\normalfont \ for } i\in[n],
		      \end{equation}
		\item The operator $H(\lambda)$ is an $n$-qubit Hamiltonian:
		      \begin{equation} \label{eq:lem:tosdg_H}
			      H(\lambda) \coloneqq - \sum \frac{a_i}{ \Lambda(\lambda)} X_i + D, \quad \text{\normalfont where } \frac{\dee^k \Lambda}{\dee \lambda^k} = \left(e^{\lambda/6}\right)^{1\pm o_k(1)}  \text{\normalfont \ for } k \in \mathbb{N}.\footnote{$\Lambda(\lambda)$ is an efficiently computable function being half the inverse of the spectral gap of $\hatX$; see~\Cref{clm:properties-of-X}.}
		      \end{equation}
	\end{itemize}
	Then, there exists $\epsilon \leq O(\sqrt{n}\Lambda^{-1/6}\norm{D})$ such that the followings hold:
	\begin{enumerate}[label=(\roman*)]
		\item\label{itm:lem:tosdg_1} The ground energy of $\hatH(\lambda)$, denoted by $\eigen_0(\hatH)$, is no more than
		\begin{equation}
			\eigen_0(H) + \sum_{i \in [n]} a_i \cdot \Theta(\lambda) + \epsilon.
		\end{equation}
		\item\label{itm:lem:tosdg_2} The spectral gap of $\hatH(\lambda)$, denoted by $\delta \coloneqq \eigen_1(\hatH)-\eigen_0(\hatH)$, is at least
		\begin{equation}
			\delta \coloneqq \min\left\{\eigen_1(H), \min_{i \in [n]} \{ a_i\} \cdot \Theta(\lambda) + \eigen_0(D) \right\} - \eigen_0(H) - \epsilon.
		\end{equation}
		\item\label{itm:lem:tosdg_3} If $\delta > 0$, the ground state of $\hatH$ is $O(\epsilon/\delta)$-close to $\calE \ket{\psi_0}$, where $\ket{\psi_0}$ denotes the ground state of $H(\lambda)$.
	\end{enumerate}
\end{lemma}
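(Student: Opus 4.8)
The plan is to use the isometry $\calE$ of \Cref{eq:cale-def} as an \emph{approximate} block-diagonalizing map for $\hatH(\lambda)$, with logical block $\frakS = \mathfrak T^{\tensor n}$ and complement $\frakS^\perp$, and then to read items~\ref{itm:lem:tosdg_1}--\ref{itm:lem:tosdg_3} off the resulting block decomposition using Weyl's inequality (\Cref{fct:weyl}) and the Davis--Kahan theorem (\Cref{fct:davis-kahan}). Unlike a full Hamiltonian simulation, I do not need $\frakS$ to be the \emph{entire} low-energy subspace of $\hatH$ --- the ground state and the spectral gap are already pinned down by the $2\times 2$ block structure, which is exactly why conclusion \ref{itm:lem:tosdg_2} takes the form of a minimum of two quantities.

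First I would dispatch the kinetic part $\hatH_X \coloneqq \sum_{i\in[n]} a_i\hatX_i(\lambda)$ exactly. Since $\mathfrak T = \spn(\ket{\chi_0},\ket{\chi_1})$ is spanned by eigenvectors of $\hatX$, the operator $\hatH_X$ is genuinely block-diagonal with respect to $\frakS \oplus \frakS^\perp$; writing $P$ for the projector onto $\frakS$ and using $\calE\ket 0 = \ket{\hatzero}$, $\calE\ket 1 = \ket{\hatone}$ on each mode, a direct computation gives $\calE^\dagger\hatH_X\calE = c_0 I - \Lambda^{-1}\sum_i a_i X_i$ with $c_0 = \tfrac12\big(\eigen_0(\hatX)+\eigen_1(\hatX)\big)\sum_i a_i$, using that $\Lambda$ is a fixed constant times the inverse spectral gap $\eigen_1(\hatX)-\eigen_0(\hatX)$ and that $\eigen_0(\hatX) = \Theta(\lambda)$, both recorded in \Cref{clm:properties-of-X}; hence $c_0 = \Theta(\lambda)\sum_i a_i$. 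By \Cref{fct:ksum} the spectrum of $\hatH_X$ is the Minkowski sum $\sum_i a_i\,\sigma(\hatX_i)$, so every eigenvalue of $\hatH_X$ that lies in $\frakS^\perp$ is at least $\sum_i a_i\eigen_0(\hatX) + \min_i a_i\cdot\big(\eigen_2(\hatX)-\eigen_0(\hatX)\big) \ge c_0 + \min_i a_i\cdot\Theta(\lambda) - \Lambda^{-1}\textstyle\sum_i a_i$, using $\eigen_2(\hatX)-\eigen_0(\hatX) = \Theta(\lambda)$ from \Cref{clm:properties-of-X}.

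The crux is a pair of estimates on the multiplication operator $\hatD$, both with error $\epsilon = O(\sqrt n\,\Lambda^{-1/6}\norm D)$: \textbf{(a)} $\norm{\calE^\dagger(P\hatD P)\calE - D} \le \epsilon$, and \textbf{(b)} $\norm{P^\perp\hatD P} \le \epsilon$. For \textbf{(a)} I would expand $\mel{x}{\calE^\dagger\hatD\calE}{y} = \int_{[-1,1]^n}\big(\textstyle\prod_i \hatx_i(\xi_i)\haty_i(\xi_i)\big)\,\hatD(\xi)\,\dee\xi$, where $\hatx_i(\cdot)$ is the wave function of the $i$-th mode state in $\{\ket{\hatzero},\ket{\hatone}\}$. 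By \Cref{def:hatD}, $\hatD(\xi)$ equals the orthant-constant value $\mel{\theta(\xi)}{D}{\theta(\xi)}$ outside an $O(w)$-neighborhood of the coordinate hyperplanes, while by the concentration property in \Cref{clm:properties-of-X} each $\hatx_i^2$ places all but $\Lambda^{-\Omega(1)}$ of its mass on $\{\sgn\xi_i = (-1)^{x_i}\}$ and away from $0$; there $\theta(\xi) = x$, giving $\mel{x}{\calE^\dagger\hatD\calE}{x} = \mel{x}{D}{x} \pm n\Lambda^{-\Omega(1)}\norm D$. For $x\ne y$, every mismatched coordinate $i$ contributes $\hatzero(\xi_i)\hatone(\xi_i) = \tfrac12\big(\chi_0(\xi_i)^2 - \chi_1(\xi_i)^2\big)$, which is $\Lambda^{-\Omega(1)}$ in absolute value by the newfound symmetry $\chi_0^2\approx\chi_1^2$ of \Cref{eq:key-sym}; hence $\abs{\mel{x}{\calE^\dagger\hatD\calE}{y}} \le \norm D\,(\Lambda^{-\Omega(1)})^{\dist(x,y)}$, and the row sum $\sum_{k}\binom{n}{k}(\Lambda^{-\Omega(1)})^k = (1+\Lambda^{-\Omega(1)})^n - 1 = \poly(n)\Lambda^{-\Omega(1)}$ controls the $\ell_1$-deviation of each row of the error matrix, so \textbf{(a)} follows from \Crefitem{fct:mat-ineq}{itm:fct:mat-ineq_2} and the row/column symmetry of that matrix. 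For \textbf{(b)}, I would use $\norm{P^\perp\hatD P}^2 = \norm{\calE^\dagger\hatD(I-P)\hatD\calE} = \norm{\calE^\dagger\hatD^2\calE - (\calE^\dagger\hatD\calE)^2}$, and since $\hatD^2$ is again orthant-constant (with values $\mel{b}{D}{b}^2$), the identical computation applied to $\hatD^2$, combined with \textbf{(a)}, bounds this by $\poly(n)\Lambda^{-\Omega(1)}\norm D^2$. This step --- replacing the hopeless $2^n\Lambda^{-\Omega(1)}$ blowup one gets from the naive sum-of-$Z$-tensor-products expansion of $D$ used in \cite{zheng2024computational} by a $\poly(n)$ bound --- is where the real work is, and it genuinely requires both the orthant-constant design of $\hatD$ in \Cref{def:hatD} and the pointwise symmetry \Cref{eq:key-sym}.

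With \textbf{(a)}--\textbf{(b)} in hand, write $\hatH = \hatH^{\mathrm{bd}} + E$ with $\hatH^{\mathrm{bd}} \coloneqq P\hatH P + P^\perp\hatH P^\perp$ and $E \coloneqq P\hatD P^\perp + P^\perp\hatD P$, so $\norm E\le\epsilon$. On $\frakS$ we have $\calE^\dagger\hatH^{\mathrm{bd}}\calE = c_0 I + \big({-}\Lambda^{-1}\sum_i a_i X_i + D\big) + R = c_0 I + H(\lambda) + R$ with $\norm R\le\epsilon$, so $\hatH^{\mathrm{bd}}|_\frakS$ has eigenvalues within $\epsilon$ of $c_0+\eigen_j(H)$; on $\frakS^\perp$, the Kronecker-sum bound together with $\hatD\succeq\eigen_0(\hatD)I$ and $\eigen_0(\hatD)=\min\{\eigen_0(D),0\}$ (\Crefitem{fct:hatD}{itm:fct:hatD_2}) gives $\hatH^{\mathrm{bd}}|_{\frakS^\perp}\succeq\big(c_0 + \min_i a_i\Theta(\lambda) + \eigen_0(D)\big)I$, up to lower-order corrections. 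Conclusion \ref{itm:lem:tosdg_1} is then immediate from the Rayleigh bound $\eigen_0(\hatH) \le \mel{\psi_0}{\calE^\dagger\hatH\calE}{\psi_0} \le c_0 + \eigen_0(H) + \epsilon$ with $\ket{\psi_0}$ the ground state of $H(\lambda)$ and $c_0 = \Theta(\lambda)\sum_i a_i$. For \ref{itm:lem:tosdg_2}: the two smallest eigenvalues of $\hatH^{\mathrm{bd}}$ satisfy $\eigen_0(\hatH^{\mathrm{bd}})\le c_0+\eigen_0(H)+\epsilon$ and $\eigen_1(\hatH^{\mathrm{bd}})\ge\min\{c_0+\eigen_1(H),\,c_0+\min_i a_i\Theta(\lambda)+\eigen_0(D)\}-\epsilon$ (using $\eigen_0(H)\le\eigen_0(D)$ up to lower order to see the ground state lies in the $\frakS$-block), and adding $E$ back loses at most $2\norm E$ by \Cref{fct:weyl}; absorbing constants and lower-order terms into $\epsilon$ yields the stated gap. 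For \ref{itm:lem:tosdg_3} (assuming $\delta>0$), the ground state of $\hatH^{\mathrm{bd}}|_\frakS$ is $O\big(\epsilon/(\eigen_1(H)-\eigen_0(H))\big)$-close to $\calE\ket{\psi_0}$ by \Cref{fct:davis-kahan} applied to $R$, and the ground state of $\hatH$ is $O(\epsilon/\delta)$-close to it by a second application of \Cref{fct:davis-kahan} to $E$; since $\delta\le\eigen_1(H)-\eigen_0(H)$, both contributions are $O(\epsilon/\delta)$. Apart from estimates \textbf{(a)}--\textbf{(b)}, everything here is routine block-perturbation bookkeeping.
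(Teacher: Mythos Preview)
Your block-decomposition setup and final assembly via variational/Weyl/Davis--Kahan are exactly the paper's approach (cf.\ \Cref{clm:tosdg-bound} and the proof in \Cref{sec:proof-of-tosdg}). The divergence---and the real problem---is in how you justify estimate \textbf{(a)}. You read \Cref{eq:key-sym} as a pointwise approximation ``$\chi_0^2 \approx \chi_1^2$'' and conclude that $\hatzero(\xi)\hatone(\xi) = \tfrac12(\chi_0^2 - \chi_1^2)$ is pointwise $\Lambda^{-\Omega(1)}$. That is not what \Cref{eq:key-sym} says: it is the \emph{exact half-interval identity} $\braket{\hatzero}{\hatone}_{[0,1]} = \braket{\hatzero}{\hatone}_{[-1,0]} = 0$, and pointwise $\chi_0^2 - \chi_1^2$ is not small (take $\xi = 0$, where $\chi_1$ vanishes but $\chi_0$ does not). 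The paper's use of \Cref{eq:key-sym} is entirely different: after splitting $\hatD = \hatDp + \hatDr$, it decomposes $\mel{\hatx}{\hatDp}{\haty}$ orthant-by-orthant into terms $\mel{z}{D}{z}\,\braket{\hatx}{\haty}_{\calI(z)}$, each of which vanishes \emph{exactly} because one factor equals $\braket{\hatzero}{\hatone}_{[0,1]}$ or $\braket{\hatzero}{\hatone}_{[-1,0]}$ (see \Cref{clm:properties-of-hatdp}). Your entrywise bound $|\mel{\hatx}{\hatD}{\haty}| \le \|D\|\,(\Lambda^{-\Omega(1)})^{\dist(x,y)}$ can be rescued without \Cref{eq:key-sym}---use $\int_{-1}^1|\hatzero\hatone| \le 2\,\|\hatzero\|_{L^2[0,1]}\|\hatone\|_{L^2[0,1]} = O(\Lambda^{-1/3})$ from concentration (\Crefitem{clm:properties-of-X}{itm:clm:properties-of-X_3}) and Cauchy--Schwarz---but then your row-sum yields only $O(n\Lambda^{-1/3})\|D\|$ for \textbf{(a)}, a factor $\sqrt n$ short of the stated $\epsilon = O(\sqrt n\,\Lambda^{-1/6}\|D\|)$ when $\Lambda$ is not assumed large relative to $n$.

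By contrast, your square-root trick for \textbf{(b)}---the identity $\|P_{\frakS^\perp}\hatD P_\frakS\|^2 = \|\calE^\dagger\hatD^2\calE - (\calE^\dagger\hatD\calE)^2\|$ and then reapplying \textbf{(a)} to the orthant-constant function $\hatD^2$---is a clean and genuinely different route from the paper's direct expansion of $\|\hatDp\ket\psi\|^2$ in the proof of \Crefitem{clm:properties-of-hatdp}{itm:clm:properties-of-hatdp_2}, and (once \textbf{(a)} is patched as above) it recovers the correct $O(\sqrt n\,\Lambda^{-1/6}\|D\|)$ bound without ever invoking \Cref{eq:key-sym}. So the two executions trade off: the paper uses \Cref{eq:key-sym} to get exact off-diagonal cancellation in \textbf{(a)} and then does an orthantwise computation for \textbf{(b)}; your route handles \textbf{(b)} more elegantly but would need the paper's exact-orthogonality argument (and the $\hatDp/\hatDr$ split, cf.\ \Cref{clm:properties-of-hatdr}) to get the sharp constant in \textbf{(a)}.
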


Now, we are ready to prove~\Cref{thm:qhd}.
It is worth noting that, in the proof, we use $\hatH(t)$ to denote the Hamiltonian path in the real space to avoid confusion with~\Cref{lem:tosdg}, which is slightly different from the theorem statement.

\begin{proof}[Proof of~\Cref{thm:qhd}]
	\Cref{itm:thm:qhd_1} follows directly from our construction of the objective function $f$. Specifically, by the definition as in~\eqref{eq:qhd-obj-defn}, $f$ can be efficiently computed from a given linear Hamiltonian path $H(t)$, assuming oracle access to the $n$-qubit diagonal Hamiltonian $D$. Consequently, any classical query algorithm that minimizes $f$ would also effectively solve the discrete optimization problem defined by $D$, incurring at most a polynomial overhead. Thus, the classical query lower bound established in~\Crefitem{thm:main}{itm:thm:main_1} still applies.

	Next, we turn to \Cref{itm:thm:qhd_2}, showing the quantum advantage of the Hamiltonian path~\eqref{eq:ham-path-qhd}.
	Let $a_i$, $D_{\rm TFI}$, and $H_{\rm prob}$ be the same as in a chosen Hamiltonian path $H_{\rm TFI} \pathto H_{\rm prob}$.
	Let $\Lambda(\lambda)$ be the same as in~\Cref{lem:tosdg}, and $\lambda_0 > 0$ be a fixed number that will be specified later.
	We denote $\Lambda_0 = \Lambda(\lambda_0)$ and define
	\begin{align}
		g(\xi) \coloneqq \frac{1}{\Lambda_0} \hatD_{\rm TFI}(\xi) + \lambda^2_0 \sum^{n}_{i} a_i f_{\rm dw}(\xi_i),\label{eq:defn-g} \\
		\nu(\lambda) \coloneqq \frac{\Lambda_0}{\Lambda(\lambda)},\quad \varphi(\lambda) \coloneqq \lambda^2 - \nu(\lambda) \lambda_0^2. \label{eq:nu-def}
	\end{align}
	By~\Cref{lem:tosdg}, we have that $\nu(\lambda) = O(e^{-c\lambda})$ for some absolute constant $c> 0$. For all $\lambda \ge \lambda_0$, we have $\varphi(\lambda) \ge 0$ and $\varphi(\lambda) = \Theta(\lambda^2)$.

	As discussed earlier, the eigenvalues of $H_{\rm diag}$ from \Cref{eq:modified-main-path} lie in $[-1,0)$.
	Let $f(\xi)$ be the objective function described in~\eqref{eq:qhd-obj-defn}. We define the following Hamiltonian path for any $\lambda > \lambda_0$:
	\begin{align}
		\hatH(\lambda) & \coloneqq - \sum_i a_i \frac{\partial^2}{\partial^2_{\xi_i}} +  \nu(\lambda) g(\xi) + \varphi(\lambda) f(\xi)                 \\
		               & = - \sum_i a_i \hatX_i(\lambda) + \left(\frac{1}{\Lambda(\lambda)}\hatD_{\rm TFI} + \varphi(\lambda) \hatH_{\rm prob}\right).
	\end{align}
	According to~\Cref{lem:tosdg}, the spectral properties of the real-space Hamiltonian path $\hatH(\lambda)$ can be studied by investigating the following $n$-qubit Hamiltonian:
	\begin{align}
		H(\lambda) & \coloneqq - \frac{1}{\Lambda(\lambda)}\sum_i a_i X_i + \left(\frac{1}{\Lambda(\lambda)}D_{\rm TFI} + \varphi(\lambda) H_{\rm prob}\right) \\
		           & = \frac{1}{\Lambda(\lambda)} H_{\rm TFI} + \varphi(\lambda) H_{\rm prob}.
	\end{align}
	We note that the ground state of the initial Hamiltonian $H(\lambda_0)$ is equivalent to that of $H_{\rm TFI}$. By \Crefitem{thm:main}{itm:thm:main_2}, the ground state of $H_{\rm TFI}$ can be approximated by a product state up to an error $n^{-\Omega(1)}$. Therefore, by~\Crefitem{lem:tosdg}{itm:lem:tosdg_3}, the ground state of $\hatH(\lambda_0)$ can be efficiently prepared up to an error $n^{-\Omega(1)} + O(\epsilon/\delta) \leq n^{-\Omega(1)}$, where $\epsilon$ and $\delta=\delta(\lambda)$ are specified later in \Cref{eq:eps-and-delta}.

	By~\Cref{thm:main}, the Hamiltonian $H(\lambda)$ has a spectral gap:
	\begin{equation}
		\mu_1(H(\lambda)) - \mu_0(H(\lambda)) \ge \left(\varphi(\lambda) + \frac{1}{\Lambda(\lambda)}\right) \cdot \frac{1}{\poly(n)} = \Theta\left( \frac{\lambda^{2}}{\poly(n)} \right).
	\end{equation}
	For notational convenience, we denote $D(\lambda) \coloneqq \Lambda^{-1}(\lambda)D_{\rm TFI} + \varphi(\lambda) H_{\rm prob}$. By Weyl's inequality\footnote{This is the usual Weyl's inequality~\cite{wiki:weyl} generalized to semi-bounded self-adjoint operators; a proof is implicit in \cite[Lemma 4.27]{zheng2024computational}.}, we have
	\begin{equation}
		\mu_0(D(\lambda)) - \mu_0(H(\lambda)) \ge - \frac{1}{\Lambda(\lambda)}\sum_i a_i \ge - \frac{\poly(n)}{\Lambda(\lambda)}.
	\end{equation}
	Since $\Lambda(\lambda)$ decays exponentially fast, we can choose $\lambda_0 = \log(\poly(n))$ so that for $\lambda \ge \lambda_0$,
	\begin{equation}
		\min_{i}\{a_i\}\cdot \Theta(\lambda) + \mu_0(D) - \mu_0(H) \ge \Theta(\lambda),
	\end{equation}
	where we used the fact $\min_{i}\{a_i\} \geq 1$ by examining \Cref{eq:HD-path}, the explicit form of the path in \Cref{thm:main}.
	The spectral gap of $\hatH(t)$ can be deduced from the above results and~\Crefitem{lem:tosdg}{itm:lem:tosdg_2}:
	\begin{equation} \label{eq:eps-and-delta}
		\delta(\lambda) \ge \min\left(\frac{\varphi(\lambda) + \frac{1}{\Lambda(\lambda)}}{\poly(n)},\Theta(\lambda)\right) - \epsilon, \quad \epsilon \le O(\sqrt{n}\|D\|e^{-c \lambda/6}).
	\end{equation}
	Since $\|D\|$ is continuous in $\lambda$ and tends to infinity when $\lambda \to +\infty$, there exists a $\lambda' = \Theta(1)$ such that $\|D(\lambda)\| \ge 1$ for all $\lambda \ge \lambda'$. Therefore, we may choose
	\begin{equation} \label{eq:lambda0-bound}
		\lambda_0 = \max\left(1, \lambda', \log(\poly(n))\right),
	\end{equation}
	such that
	\begin{equation}
		\delta(\lambda) \ge \frac{1}{\poly(n)} \quad \text{and} \quad \epsilon \leq \frac{\delta(\lambda)}{n^{\Omega(1)}} ,\quad  \forall \lambda \ge \lambda_0.
	\end{equation}

	To recover the Hamiltonian path as described in~\Cref{thm:qhd}, we need to apply both spatial and temporal rescalings.
	First, we consider the change of (spatial variable):
	\begin{equation}
		z_i = \frac{1}{\sqrt{a_i}} \xi_i, \quad \forall i\in [n].
	\end{equation}
	Then, by the chain rule, $\frac{\partial}{\partial_{\xi_i}} = \frac{\partial z_i }{\partial {\xi_i}}\frac{\partial}{\partial_{z_i}} = {a}^{-1/2}_i \frac{\partial}{\partial_{z_i}}$. Now, we consider the rescaled box $\calX = \bigtimes^n_{i=1} [-a^{-1/2}_i, a^{-1/2}_i]$ and the new Hamiltonian operator:
	\begin{equation}
		\hatH(\lambda, z) = - \Delta + \nu(\lambda) g(\xi(z)) + \varphi(\lambda) f(\xi(z)), \quad z \in \calX.
	\end{equation}
	Since all eigenstates can be rescaled in the same way, this new Hamiltonian has a $1/\poly(n)$ spectral gap, as discussed before. Moreover, the diameter of the new box $\calX$ is at most $\poly(n)$.

	Next, we consider temporal scaling.
	Note that by $\lambda_0 \geq \log(\poly(n))$ from \Cref{eq:lambda0-bound}, the function $\varphi(\lambda)$ is  monotonically increasing in $\lambda$ with $\varphi(\lambda_0) = 0$.
	We can thus introduce a new variable:
	\begin{equation} \label{eq:from-lambda-to-t}
		t \coloneqq \varphi^{-1}(\lambda) \in [0, \infty],\quad \lambda \in [\lambda_0, \infty].
	\end{equation}
	This change of variable leads to the Hamiltonian path:
	\begin{equation}
		\hatH(t) = -\Delta + \nu(t) g(z) + t f(z),
	\end{equation}
	where $\nu(t) = \nu(\varphi^{-1}(\lambda))$. Since $\varphi(\lambda) = \Theta(\lambda^2)$, the function $\nu(t)$ decays at a (sub)exponential rate: $\nu(t) = e^{-\Theta(\sqrt{t})}$.
	Note that both $\dot{\nu}(t)$ and $\ddot{\nu}(t)$ decay at a (sub)exponential rate that is \emph{independent of} $n$ by \Cref{eq:nu-def} and \Cref{eq:lem:tosdg_H}.
	Moreover, $g$ and $f$ are bounded within $\poly(n)$ by construction.
	It follows that $\|\dot{\hatH}\|$ and $\|\ddot{\hatH}\|$ is also bounded within $\poly(n)$ for all $t \ge 0$.

	Therefore, by~\Cref{lem:qatsim2}, we can simulate the \sdg\ dynamics for $0 \le t \le \tend$ and obtain a final state $\ket{\psi(\tend)}$ that is $n^{-\Omega(1)}$-close to $\ket{\phi(\tend)}$.
	Define the projector $\hatP$ that projects a wave function onto the subspace supported on the region where $f(\xi)$ is $\eta$-close to the global minimum $f(\xi^*)$:
	\begin{equation}
		(\hatP \psi)(\xi) = \psi(\xi) \cdot \bs{1} \{ f(\xi) - f(\xi^*) \leq \eta\}.
	\end{equation}

	Now, we claim $\langle \phi(\tend) |f|\phi(\tend)\rangle - f(\xi^\ast) \le \eta/10$ for some $\tend = \poly(n, 1/\eta)$. Recall that $\lambda = \varphi(t)$. By the positivity of the operator $-\frac{\partial^2}{\partial^2_{\xi} }$ and~\Crefitem{lem:tosdg}{itm:lem:tosdg_1}, we have
	\begin{equation}
		\varphi(\lambda) \langle \phi(t) |f|\phi(t) \rangle \le \langle \phi(t) |\hatH(t)|\phi(t) \rangle \le \mu_0(H(\lambda)) + \sum_i a_i \cdot \Theta(\lambda) + \epsilon,
	\end{equation}
	where $\epsilon \le O(\sqrt{n}\|D\|e^{-c \lambda})$ is exponentially small in $\lambda$. Also, we have that $H(\lambda) = -\frac{1}{\Lambda(\lambda)} H_{\rm TFI} + \varphi(\lambda) H_{\rm prob}$, and $1/\Lambda(\lambda)$ is again exponentially small in $\lambda$. By Weyl's inequality,
	\begin{equation}
		\langle \phi(t) |f|\phi(t)\rangle \le \mu_0(H_{\rm prob}) + \sum_i a_i \cdot \Theta(\lambda^{-1}) + \epsilon',
	\end{equation}
	where $\epsilon'$ is exponentially small in $\lambda$. By construction, we have $\mu_0(H_{\rm prob}) = f(\xi^*)$; meanwhile, $\sum_i a_i \le \poly(n)$ by the norm condition of the path in \Crefitem{thm:main}{itm:thm:main_2}. Therefore, for any $\eta>0$, we can choose $\lambda' = \poly(n, 1/\eta)$ and $\tend = \varphi^{-1}(\lambda') = \poly(n, 1/\eta)$ such that
	\begin{equation} \label{eq:pre-markov}
		\langle \phi(\tend) |f|\phi(\tend) \rangle - f(\xi^*) \le \eta/10.
	\end{equation}
	Finally, by invoking Markov's inequality on \Cref{eq:pre-markov}, we have
	\begin{equation}
		\mel{\phi(\tend)}{\hatP}{\phi(\tend)} = \Pr_{\xi \sim|\phi(\tend)|^2}\left[f(\xi) - f(\xi^*) \le \eta \right] \geq 1 - \frac{1}{10} = 0.9.
	\end{equation}
	As a result,
	\begin{align*}
		 & \phantom{=}\ \Pr_{\xi \sim|\psi(\tend)|^2}\left[f(\xi) - f(\xi^*) \le \eta \right]                                                                                \\
		 & = \mel{\psi(\tend)}{\hatP}{\psi(\tend)}                                                                                                                           \\
		 & = \mel{\phi(\tend)}{\hatP}{\phi(\tend)} + \bra{\phi(\tend)}\hatP (\ket{\psi(\tend)}-\ket{\phi(\tend)}) + (\bra{\psi(\tend)} - \bra{\phi(\tend)}) \hatP \ket{\psi} \\
		 & \geq 0.9 - 2 \cdot \norm{ \ket{\psi(\tend)}-\ket{\phi(\tend)} } \cdot \norm{P}                                                                                    \\
		 & = 0.9 - n^{-\Omega(1)},
	\end{align*}
	as stated in \Crefitem{thm:qhd}{itm:thm:qhd_2}.
\end{proof}

\subsection{Simulating TFD Hamiltonians with \sdg\ Operators} \label{sec:proof-of-tosdg}

To prove \Cref{lem:tosdg}, we first introduce the required definitions and notation.

\paragraph{Effective computational basis.}
Recall that $\ket{\chi_0}$ and $\ket{\chi_1}$ are the ground state and the first excited state of $\hatX$ defined in \Cref{eq:hatX-def}.
We define the following states which serve as analogues of the standard qubit states $\ket{0}$ and $\ket{1}$ in the context of $\hatX$:
\begin{equation} \label{eq:hatzeroone-def}
	\ket{\hatzero} \coloneqq \frac{1}{\sqrt{2}} \left( \ket{\chi_0} + \ket{\chi_1} \right) \quad \text{and} \quad \ket{\hatone} \coloneqq \frac{1}{\sqrt{2}} \left( \ket{\chi_0} - \ket{\chi_1} \right).
\end{equation}
Moreover, for any $x = x_1x_2\dots x_n \in \binary^n$, define the effective computational state
\begin{equation}
	\ket{\hatx} \coloneqq \ket{\hatx_1} \tensor \ket{\hatx_2} \tensor \cdots \tensor \ket{\hatx_n}.
\end{equation}
Recall the definition of the encoding isometry $\calE$ from \Cref{eq:cale-def}.
We have $\calE = \left( \ketbra{\chi_0}{+} + \ketbra{\chi_1}{-}\right)^{\otimes n} = ( \ketbra{\hatzero}{0} + \ketbra{\hatone}{1} )^{\otimes n}$.
Therefore, for any $x \in \binary^n$ it holds that
\begin{equation}
	\calE \ket{x} = \ket{\hatx}.
\end{equation}
Let $\mathfrak{T} \coloneqq \spn(\ket{\chi_0},\ket{\chi_1}) = \spn(\ket{\hatzero},\ket{\hatone})$ be the low-energy subspace of $\hatX$, and let $\frakS \coloneqq \mathfrak{T}^{\tensor n}$ be the space spanned by $\{ \ket{\hatx} \}_{x \in \binary^n}$.

\paragraph{Restricted inner product.}
We use the following notation for the inner product restricted to the region $\calI$:
\begin{equation}
	\braket{\psi}{\phi}_{\calI} \coloneqq \int_{\calI} \psi(\xi)\phi(\xi) \dee \xi.
\end{equation}

We present a key claim about $\hatX$ based on \cite[Theorem 4.22]{zheng2024computational} and its proof.\footnote{Note that in \cite{zheng2024computational}, the parameters are given as $\lambda = 1/h$, $S_0 = 1/6$, and $a = \pm 1/2$. Here, $S_0$ denotes the \emph{Agmon distance} between $-a$ and $a$ with respect to the potential function $\dwfunc$ at energy $E = 0$.}

\begin{claim}[Properties of $\hatX$] \label{clm:properties-of-X}
	The following properties hold for $\hatX$:
	\begin{enumerate}[label=(\roman*)]
		\item\label{itm:clm:properties-of-X_1} $\eigen_0(\hatX) = \Theta(\lambda)$, $\eigen_1(\hatX) - \eigen_0(\hatX) = 2/\Lambda$, and $\eigen_2(\hatX) - \eigen_1(\hatX) = \Theta(\lambda)$, where $\Lambda = \Lambda(\lambda)$ satisfies \Cref{eq:lem:tosdg_H}.
		\item\label{itm:clm:properties-of-X_2} $\ket{\hatzero}$ and $\ket{\hatone}$ are symmetric under reflection, i.e., $\Psi_{\hatzero}(\xi) = \Psi_{\hatone}(-\xi)$.
		Consequently,
		\begin{equation} \label{eq:key-sym}
			\braket{\hatzero}{\hatone}_{[0,1]} = \braket{\hatzero}{\hatone}_{[-1,0]} = \frac12 \braket{\hatzero}{\hatone}=  0.
		\end{equation}
		\item\label{itm:clm:properties-of-X_3} There exists $w \in (0,1/2)$ such that the wave function $\Psi_{\hatzero}(\cdot)$ is concentrated within the interval $(w,1]$ in the sense that
		\begin{equation}
			\braket{\hatzero}{\hatzero}_{[-1,w]} \leq O(\Lambda^{-2/3}),
		\end{equation}
		Similarly, $\braket{\hatone}{\hatone}_{[-w,1]} \leq O(\Lambda^{-2/3})$ due to the symmetry from \Cref{itm:clm:properties-of-X_2}.
	\end{enumerate}
\end{claim}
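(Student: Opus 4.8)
The plan is to derive the claim almost entirely from the semiclassical analysis of the one-dimensional symmetric double well already established in \cite[Theorem 4.22]{zheng2024computational}, supplying only one genuinely new (and elementary) observation, the parity statement in item~\ref{itm:clm:properties-of-X_2}. First I would fix the parameter dictionary: take $\lambda = 1/h$ in the notation of \cite{zheng2024computational}, identify the two non-degenerate wells of $\dwfunc(\xi) = (\xi^2 - \tfrac14)^2$ as $\xi = \pm\tfrac12$, and record the relevant Agmon distance between them at energy $0$, namely $S_0 = \int_{-1/2}^{1/2}\sqrt{\dwfunc(\xi)}\,\dee\xi = \int_{-1/2}^{1/2}(\tfrac14 - \xi^2)\,\dee\xi = \tfrac16$, which is the quantity controlling the tunnelling rate. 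With this dictionary, items~\ref{itm:clm:properties-of-X_1} and~\ref{itm:clm:properties-of-X_3} are instances of that theorem.

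For item~\ref{itm:clm:properties-of-X_1}, I would invoke the harmonic approximation at each well: $\dwfunc''(\pm\tfrac12) = 2 > 0$, so the local oscillator frequency is $\Theta(\lambda)$, which gives $\eigen_0(\hatX) = \Theta(\lambda)$ and, since $\ket{\chi_2}$ corresponds to one additional vibrational quantum at the wells, $\eigen_2(\hatX) - \eigen_1(\hatX) = \Theta(\lambda)$. The remaining splitting $\eigen_1(\hatX) - \eigen_0(\hatX)$ is the exponentially small tunnelling gap, which we \emph{define} to be $2/\Lambda(\lambda)$; the required control \Cref{eq:lem:tosdg_H} on $\dee^k\Lambda/\dee\lambda^k$ follows because the splitting admits a convergent semiclassical expansion $e^{-\lambda/6}$ times an asymptotic series in $1/\lambda$, whose term-by-term $\lambda$-derivatives keep the same leading exponential. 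For item~\ref{itm:clm:properties-of-X_3}, I would use the Agmon-localized single-well ground states $\phi_\pm$ concentrated at $\pm\tfrac12$: from \cite{zheng2024computational}, $\ket{\chi_0} = \tfrac{1}{\sqrt2}(\phi_+ + \phi_-) + O(\Lambda^{-1})$ and $\ket{\chi_1} = \tfrac{1}{\sqrt2}(\phi_+ - \phi_-) + O(\Lambda^{-1})$, hence $\ket{\hatzero} = \phi_+ + O(\Lambda^{-1})$ is concentrated near $+\tfrac12$; choosing $w \in (0,\tfrac12)$ appropriately and applying the Agmon exponential-decay bound for $\phi_+$ on $[-1,w]$ yields $\braket{\hatzero}{\hatzero}_{[-1,w]} \le O(\Lambda^{-2/3})$, and the analogous bound for $\ket{\hatone}$ then follows from item~\ref{itm:clm:properties-of-X_2} by reflection.

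The new ingredient, item~\ref{itm:clm:properties-of-X_2}, I would prove directly. Since $\dwfunc$ is even and the interval $[-1,1]$ with vanishing boundary conditions is reflection-symmetric, $\hatX$ commutes with the reflection $R\psi(\xi) = \psi(-\xi)$, so its eigenfunctions may be chosen with definite parity: $\ket{\chi_0}$ is nodeless, hence even, while $\ket{\chi_1}$, being the first excited state of a one-dimensional Schrödinger operator with even potential, has exactly one node (at $\xi = 0$ by parity) and hence is odd. With the phase convention $\chi_0(\xi),\chi_1(\xi) \ge 0$ for $\xi \ge 0$, this gives $\chi_0(-\xi) = \chi_0(\xi)$ and $\chi_1(-\xi) = -\chi_1(\xi)$, and substituting into \Cref{eq:hatzeroone-def} yields $\Psi_{\hatone}(-\xi) = \tfrac{1}{\sqrt2}(\chi_0(-\xi) - \chi_1(-\xi)) = \tfrac{1}{\sqrt2}(\chi_0(\xi) + \chi_1(\xi)) = \Psi_{\hatzero}(\xi)$, which is the claimed reflection symmetry. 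For the consequence \Cref{eq:key-sym}: expanding in the orthonormal pair $\{\ket{\chi_0},\ket{\chi_1}\}$ gives $\braket{\hatzero}{\hatone} = \tfrac12(1 - 0 + 0 - 1) = 0$; and the change of variables $\xi \mapsto -\xi$ combined with $\Psi_{\hatzero}(-\xi) = \Psi_{\hatone}(\xi)$ shows $\braket{\hatzero}{\hatone}_{[-1,0]} = \braket{\hatzero}{\hatone}_{[0,1]}$, so each half-line integral equals $\tfrac12\braket{\hatzero}{\hatone} = 0$.

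I expect the main obstacle to be bookkeeping rather than mathematics: verifying that the hypotheses of \cite[Theorem 4.22]{zheng2024computational} are actually met by $\lambda^2\dwfunc$ on $[-1,1]$ (a smooth confining potential with two equal-depth non-degenerate wells separated by a single barrier), and — more importantly — checking that the quantitative conclusions it provides are strong enough simultaneously for the $O(\Lambda^{-2/3})$ concentration in item~\ref{itm:clm:properties-of-X_3} and for the uniform-in-$\lambda$ derivative estimates on $\Lambda$ demanded downstream in \Cref{lem:tosdg}, including pinning down the admissible choice of $w$. The parity argument for item~\ref{itm:clm:properties-of-X_2}, by contrast, is a couple of lines and is where the real leverage of this claim comes from.
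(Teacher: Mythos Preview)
Your proposal is correct and matches the paper's treatment: the paper simply states this claim as a consequence of \cite[Theorem~4.22]{zheng2024computational} together with the parameter dictionary $\lambda=1/h$, $S_0=1/6$, $a=\pm 1/2$ (exactly the one you set up), and does not supply any further proof. Your added parity derivation for item~\ref{itm:clm:properties-of-X_2} is the natural elementary argument, and the paper implicitly relies on the same symmetry without spelling it out.
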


Using \Cref{clm:properties-of-X} and the definition of $\hatD$ from \Cref{eq:hatd-def}, we establish the following claims about $\hatX$ and $\hatD$. Their proofs, which involve nontrivial technical arguments for $\hatD$, are deferred to \Cref{sec:missing-proof-qhd}.

\begin{claim}[$\hatX$ effectively implements $X$] \label{clm:estimates-of-X}
	Denote $\mineng \coloneqq \eigen_0(\hatX)$. \begin{enumerate}[label=(\roman*)]
		\item\label{itm:clm:estimates-of-X_1}
		$\hatX_i$ acts on subspace $\frakS$ the same way $- \Lambda^{-1} X_i$ acts on the $n$-qubit Hilbert space, modulo an overall energy shift $\mineng +\Lambda^{-1}$:
		\begin{equation}
			P_{\mathfrak{S}} \left( \hatX_i - \calE \left( - \Lambda^{-1} X_i + \left(\mineng + \Lambda^{-1}\right)I \right) \calE^\dagger \right)P_{\mathfrak{S}}= 0.
		\end{equation}
		\item\label{itm:clm:estimates-of-X_leak} $P_{\frakS^\perp} \hatX_i P_{\frakS} = 0$.
		\item\label{itm:clm:estimates-of-X_2} The minimum eigenvalue of $\hatX_i$ restricted on subspace $\frakS^\perp$ is $\mineng + \Theta(\lambda)$:
		\begin{equation}
			\eigen_0 \left(  P_{\frakS^\perp} \hatX_i P_{\frakS^\perp} \right) = \mineng + \Theta(\lambda).
		\end{equation}
	\end{enumerate}
\end{claim}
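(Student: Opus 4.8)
The plan is to reduce all three items to the one-dimensional spectral data of $\hatX$ recorded in \Cref{clm:properties-of-X}, using that $\hatX_i=I^{\otimes(i-1)}\otimes\hatX\otimes I^{\otimes(n-i)}$ acts nontrivially only on the $i$-th tensor factor of $L^2([-1,1]^n)$. I would prove (ii) first: since $\ket{\chi_0},\ket{\chi_1}$ are eigenvectors of $\hatX$ (and lie in its operator domain, so no domain subtlety arises), the subspace $\mathfrak{T}=\spn(\ket{\chi_0},\ket{\chi_1})$ is $\hatX$-invariant, hence $\hatX_i$ maps $\frakS=\mathfrak{T}^{\otimes n}$ into $\mathfrak{T}^{\otimes(i-1)}\otimes\hatX(\mathfrak{T})\otimes\mathfrak{T}^{\otimes(n-i)}\subseteq\frakS$, which is exactly $P_{\frakS^\perp}\hatX_iP_{\frakS}=0$. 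Self-adjointness then also gives $P_{\frakS}\hatX_iP_{\frakS^\perp}=0$, so $\hatX_i$ is block-diagonal with respect to $\frakS\oplus\frakS^\perp$ and $\hatX_i|_{\frakS}=P_{\frakS}\hatX_iP_{\frakS}$.

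For (i), I would substitute $\ket{\chi_0}=\tfrac1{\sqrt2}(\ket{\hatzero}+\ket{\hatone})$ and $\ket{\chi_1}=\tfrac1{\sqrt2}(\ket{\hatzero}-\ket{\hatone})$ (from \Cref{eq:hatzeroone-def}) into $P_{\mathfrak{T}}\hatX P_{\mathfrak{T}}=\eigen_0(\hatX)\ketbra{\chi_0}{\chi_0}+\eigen_1(\hatX)\ketbra{\chi_1}{\chi_1}$; the diagonal part becomes $\tfrac{\eigen_0(\hatX)+\eigen_1(\hatX)}2\,P_{\mathfrak{T}}$ and the off-diagonal part $\tfrac{\eigen_0(\hatX)-\eigen_1(\hatX)}2\bigl(\ketbra{\hatzero}{\hatone}+\ketbra{\hatone}{\hatzero}\bigr)$, and by \Crefitem{clm:properties-of-X}{itm:clm:properties-of-X_1} the gap $\eigen_1(\hatX)-\eigen_0(\hatX)$ is $2/\Lambda$, so this equals $(\mineng+\Lambda^{-1})P_{\mathfrak{T}}-\Lambda^{-1}\bigl(\ketbra{\hatzero}{\hatone}+\ketbra{\hatone}{\hatzero}\bigr)$. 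Since $\calE$ sends $\ket0\mapsto\ket{\hatzero}$ and $\ket1\mapsto\ket{\hatone}$ on each factor, $\calE X_i\calE^\dagger=\ketbra{\hatzero}{\hatone}+\ketbra{\hatone}{\hatzero}$ on factor $i$ and $\calE I\calE^\dagger=P_{\frakS}$; tensoring the previous identity with $P_{\mathfrak{T}}$ on the remaining factors gives $P_{\frakS}\hatX_iP_{\frakS}=\calE\bigl(-\Lambda^{-1}X_i+(\mineng+\Lambda^{-1})I\bigr)\calE^\dagger$, which together with (ii) is exactly the identity asserted in (i).

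For (iii), the content is the single-particle gap above the low-energy subspace: $\hatX$ restricted to $\mathfrak{T}^\perp=\spn(\ket{\chi_j})_{j\ge2}$ has minimal eigenvalue $\eigen_2(\hatX)$, and by \Crefitem{clm:properties-of-X}{itm:clm:properties-of-X_1} this is $\eigen_0(\hatX)+(\eigen_1(\hatX)-\eigen_0(\hatX))+(\eigen_2(\hatX)-\eigen_1(\hatX))=\mineng+2/\Lambda+\Theta(\lambda)=\mineng+\Theta(\lambda)$, since $\Lambda=\Lambda(\lambda)\to\infty$. Lifting to $L^2([-1,1]^n)$, a state whose $i$-th mode is orthogonal to $\mathfrak{T}$ costs at least $\eigen_2(\hatX)=\mineng+\Theta(\lambda)$ of energy from $\hatX_i$; this is the content of (iii), and summed over $i$ with $\min_i a_i\ge1$ (read off from \Cref{eq:HD-path}) it supplies the $\Theta(\lambda)$ energy penalty for leaving $\frakS$ that \Cref{lem:tosdg} needs when invoking the first-order reduction (\Cref{lemma:1st}).

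There is no substantial obstacle in \Cref{clm:estimates-of-X}: every item is a direct consequence of \Cref{clm:properties-of-X} plus tensor-product bookkeeping, the only care being that all estimates are uniform in $n$ and in $\lambda$ above the threshold $\lambda_0$. The genuinely delicate part of this subsection is instead the companion analysis of $\hatD$ — the $\poly(n)\,\Lambda^{-\Omega(1)}$ error estimate obtained by exploiting the symmetry in \Cref{eq:key-sym} — which is handled separately.
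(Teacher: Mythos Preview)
Your proof is correct and follows essentially the same approach as the paper: reduce to the one-dimensional spectral data of $\hatX$ from \Cref{clm:properties-of-X}, use that $\mathfrak{T}$ is $\hatX$-invariant (giving (ii) and the block-diagonal structure), and compare the $2\times2$ matrix of $\hatX$ on $\mathfrak{T}$---the paper writes it in the $\{\ket{\chi_0},\ket{\chi_1}\}$ basis versus $-\Lambda^{-1}X$ in the $\{\ket{+},\ket{-}\}$ basis, while you equivalently work in $\{\ket{\hatzero},\ket{\hatone}\}$. One minor inaccuracy in your contextual remark: \Cref{lem:tosdg} is not proved via the first-order reduction \Cref{lemma:1st} but directly through the block decomposition \Cref{eq:hatH-block}, Weyl, and Davis--Kahan.
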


\begin{claim}[$\hatD$ effectively implements $D$] \label{clm:estimates-of-D}
	\ \begin{enumerate}[label=(\roman*)]
		\item\label{itm:clm:estimates-of-D_1} $\norm{P_{\frakS} \left( \hatD - \calE D \calE^\dagger \right)P_{\frakS}} \leq O(\sqrt{n} \Lambda^{-1/3} \norm{D} )$.
		\item\label{itm:clm:estimates-of-D_2} $\norm{P_{\frakS^\perp} \hatD P_{\frakS}} \leq O(\sqrt{n}\Lambda^{-1/6}\norm{D})$.
		\item\label{itm:clm:estimates-of-D_3} $\eigen_0 \left(  P_{\frakS^\perp} \hatD P_{\frakS^\perp} \right) \geq \eigen_0(\hatD) = \eigen_0(D)$.
	\end{enumerate}
\end{claim}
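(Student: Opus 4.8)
The plan is to exploit the near‑product structure of $\hatD$ together with the \emph{exact} orthogonality relations in \Cref{eq:key-sym}, so that what naively looks like a sum of $2^n$ error terms collapses to a controllable operator. Write $\hatD=\hatD_{\mathrm{step}}+\hatD_{\mathrm{soft}}$, where $\hatD_{\mathrm{step}}(\xi)\coloneqq\mel{\theta(\xi)}{D}{\theta(\xi)}=\sum_{z\in\binary^n}\mel{z}{D}{z}\prod_{i\in[n]}h_{z_i}(\xi_i)$ is a \emph{product} step function over orthants ($h_0\coloneqq\bs1_{[0,1]}$, $h_1\coloneqq\bs1_{[-1,0)}$, so $h_0+h_1=I$), and $\hatD_{\mathrm{soft}}(\xi)\coloneqq(\min\{1,\xi_{\rm min}/w\}-1)\mel{\theta(\xi)}{D}{\theta(\xi)}$ is supported on the thin boundary region $\mathsf{B}\coloneqq\bigcup_{i\in[n]}S_i$ with $S_i\coloneqq\{\xi\colon\abs{\xi_i}<w\}$ and obeys $\norm{\hatD_{\mathrm{soft}}}_\infty\le\norm{D}$. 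From \Crefitem{clm:properties-of-X}{itm:clm:properties-of-X_2} and \Crefitem{clm:properties-of-X}{itm:clm:properties-of-X_3} I first record two one‑dimensional facts: (a) $\mel{\widehat{b'}}{h_b}{\widehat{b''}}$ equals $1-O(\Lambda^{-2/3})$ if $b'=b''=b$, equals $O(\Lambda^{-2/3})$ if $b'=b''\ne b$, and is \emph{exactly} $0$ if $b'\ne b''$ — the last because it is a half‑interval overlap $\braket{\hatzero}{\hatone}_{[0,1]}$ or $\braket{\hatzero}{\hatone}_{[-1,0]}$, which vanishes by \Cref{eq:key-sym}; (b) $\braket{\hatzero}{\hatzero}_{(-w,w)},\braket{\hatone}{\hatone}_{(-w,w)}$, and hence $\abs{\braket{\hatzero}{\hatone}_{(-w,w)}}$, are all $O(\Lambda^{-2/3})$. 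Finally $\braket{\hatzero}{\hatone}=0$ makes $\{\ket{\hatx}\}_{x\in\binary^n}$ orthonormal, so $\calE$ is an isometry and $P_\frakS=\calE\calE^\dagger$.

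For \Crefitem{clm:estimates-of-D}{itm:clm:estimates-of-D_1}, since $\calE$ is an isometry it suffices to bound $\norm{\calE^\dagger\hatD\calE-D}$. Its step contribution has $(x,y)$‑entry $\sum_z\mel{z}{D}{z}\prod_i\mel{\hatx_i}{h_{z_i}}{\haty_i}$; if $x\ne y$, pick $i_0$ with $x_{i_0}\ne y_{i_0}$, and then the factor $\mel{\hatx_{i_0}}{h_{z_{i_0}}}{\haty_{i_0}}$ vanishes by (a) for \emph{every} $z$ — so $\calE^\dagger\hatD_{\mathrm{step}}\calE$ is \emph{exactly diagonal}. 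For $x=y$, the identity $h_0+h_1=I$ gives $\sum_z\prod_i\mel{\hatx_i}{h_{z_i}}{\hatx_i}=1$ with the $z=x$ term at least $1-nO(\Lambda^{-2/3})$ by (a), so $\abs{\mel{\hatx}{\hatD_{\mathrm{step}}}{\hatx}-\mel{x}{D}{x}}\le O(n\Lambda^{-2/3})\norm{D}$; thus $\calE^\dagger\hatD_{\mathrm{step}}\calE$ is diagonal and $O(n\Lambda^{-2/3}\norm{D})$‑close to $D$. For the soft part, write $\hatD_{\mathrm{soft}}=\sum_iM_i$ with $M_i$ the restriction of $\hatD_{\mathrm{soft}}$ to $S_i\setminus(S_1\cup\cdots\cup S_{i-1})$ — multiplication by a function supported on $S_i$ of sup‑norm $\le\norm{D}$ — so $\norm{\calE^\dagger M_i\calE}\le\norm{P_{S_i}\calE}^2\norm{D}$. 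Grouping $\calE\eta$ by its $i$‑th tensor factor as $\Psi_{\hatzero}(\xi_i)\Phi_0(\xi_{-i})+\Psi_{\hatone}(\xi_i)\Phi_1(\xi_{-i})$ with $\norm{\Phi_0}^2+\norm{\Phi_1}^2=1$, fact (b) yields $\norm{P_{S_i}\calE\eta}^2\le O(\Lambda^{-2/3})$ \emph{uniformly in $n$}, hence $\norm{\calE^\dagger\hatD_{\mathrm{soft}}\calE}\le O(n\Lambda^{-2/3})\norm{D}$. Combining (and using the trivial $\norm{\calE^\dagger\hatD\calE-D}\le2\norm{D}$ to dispose of the case $\sqrt n\Lambda^{-1/3}\ge1$) proves \Crefitem{clm:estimates-of-D}{itm:clm:estimates-of-D_1}.

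For \Crefitem{clm:estimates-of-D}{itm:clm:estimates-of-D_2}, the soft part leaks little: $\norm{P_{\frakS^\perp}\hatD_{\mathrm{soft}}P_\frakS}\le\norm{\hatD_{\mathrm{soft}}P_\frakS}\le\norm{D}\,\norm{P_{\mathsf{B}}\calE}\le O(\sqrt n\Lambda^{-1/3})\norm{D}$ since $\norm{P_{\mathsf{B}}\calE}^2\le\sum_i\norm{P_{S_i}\calE}^2\le O(n\Lambda^{-2/3})$. For the step part I estimate $\norm{P_{\frakS^\perp}\hatD_{\mathrm{step}}\calE\eta}=\dist(\hatD_{\mathrm{step}}\calE\eta,\frakS)$ by comparing against $\calE D\calE^\dagger\eta\in\frakS$: on the orthant $\{\theta(\xi)=z\}$ the difference equals $\bs1_{\theta(\cdot)=z}\sum_{x\ne z}(\mel{z}{D}{z}-\mel{x}{D}{x})\alpha_x\Psi_{\hatx}$, and $\norm{\bs1_{\theta(\cdot)=z}\Psi_{\hatx}}=\prod_i\norm{h_{z_i}\ket{\hatx_i}}\le O(\Lambda^{-1/3})^{\dist(x,z)}$ by (a); summing over orthants and applying Cauchy--Schwarz in the index $x$ (the tail $\sum_{x\ne z}O(\Lambda^{-1/3})^{\dist(x,z)}=(1+O(\Lambda^{-1/3}))^n-1=O(n\Lambda^{-1/3})$ in the regime $n\Lambda^{-1/3}=O(1)$, the complement again trivial) gives $\norm{\hatD_{\mathrm{step}}\calE\eta-\calE D\calE^\dagger\eta}^2\le O(n\Lambda^{-1/3})\norm{D}^2$, i.e.\ distance $O(\sqrt n\Lambda^{-1/6})\norm{D}$ to $\frakS$, which is \Crefitem{clm:estimates-of-D}{itm:clm:estimates-of-D_2}. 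Part \Crefitem{clm:estimates-of-D}{itm:clm:estimates-of-D_3} is the min--max inequality: for the bounded multiplication operator $\hatD$, $\eigen_0(P_{\frakS^\perp}\hatD P_{\frakS^\perp})=\inf_{\psi\in\frakS^\perp,\,\norm{\psi}=1}\mel{\psi}{\hatD}{\psi}\ge\inf_{\psi}\mel{\psi}{\hatD}{\psi}=\eigen_0(\hatD)$, and $\eigen_0(\hatD)=\min\{\eigen_0(D),0\}$ by \Crefitem{fct:hatD}{itm:fct:hatD_2}, which equals $\eigen_0(D)$ whenever $D$ has non‑positive ground energy, as it does in our application.

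The one genuinely delicate point is \Crefitem{clm:estimates-of-D}{itm:clm:estimates-of-D_1}: a priori $\calE^\dagger\hatD\calE-D$ has $2^n$ off‑diagonal entries, each of magnitude $\Theta(\Lambda^{-\Omega(1)})$, which is exactly the ``$2^n\Lambda^{-\Omega(1)}$'' blow‑up the error analysis of \cite{zheng2024computational} cannot circumvent here. Everything rests on \Cref{eq:key-sym} being an \emph{identity} rather than an approximation: because each single‑coordinate half‑interval overlap $\braket{\hatzero}{\hatone}_{[0,1]}$ is \emph{zero}, the product step matrix $\calE^\dagger\hatD_{\mathrm{step}}\calE$ is exactly diagonal, so only its diagonal needs an entrywise estimate, and the sole truly approximate ingredient $\hatD_{\mathrm{soft}}$ is confined to the $n$ thin slabs $S_i$, which $\frakS$ avoids with $\norm{P_{S_i}\calE}^2=O(\Lambda^{-2/3})$ \emph{independent of $n$}. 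I expect establishing this $n$‑independent slab bound, together with correctly propagating the Cauchy--Schwarz square‑root loss that turns $\Lambda^{-1/3}$ into $\Lambda^{-1/6}$ for the $\frakS\to\frakS^\perp$ leakage in \Crefitem{clm:estimates-of-D}{itm:clm:estimates-of-D_2}, to be the parts demanding the most care.
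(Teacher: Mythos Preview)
Your proposal is correct and follows essentially the same approach as the paper: your decomposition $\hatD_{\mathrm{step}}+\hatD_{\mathrm{soft}}$ is exactly the paper's $\hatDp+\hatDr$, and both hinge on the exact half-interval orthogonality $\braket{\hatzero}{\hatone}_{[0,1]}=0$ from \Cref{eq:key-sym} to force $\calE^\dagger\hatD_{\mathrm{step}}\calE$ to be exactly diagonal, together with the slab concentration bound for the soft part. The only cosmetic difference is in \Crefitem{clm:estimates-of-D}{itm:clm:estimates-of-D_2}: the paper controls $\norm{P_{\frakS^\perp}\hatDp\psi}^2$ via the Pythagorean identity $\norm{\hatDp\psi}^2-\norm{P_\frakS\hatDp\psi}^2$ (reusing part~(i)), whereas you upper-bound the distance to $\frakS$ by comparing against the explicit element $\calE D\eta\in\frakS$; both routes rest on the same orthant-by-orthant analysis and the same orthogonality $\braket{\hatx}{\haty}_{\calI(z)}=0$ for $x\ne y$.
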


Now, express $\hatH$ in block form with respect to the decomposition $\frakS \oplus \frakS^\perp$ of the Hilbert space $\calW([-1,1]^n)$:
\begin{equation} \label{eq:hatH-block}
	\hatH = \left[
		\begin{array}{cc}
			P_{\frakS} \hatH P_{\frakS}       & P_{\frakS} \hatH P_{\frakS^\perp}       \\
			P_{\frakS^\perp} \hatH P_{\frakS} & P_{\frakS^\perp} \hatH P_{\frakS^\perp}
		\end{array}
		\right] \eqqcolon \left[
		\begin{array}{cc}
			\hatH_{\frakS} & R^\dagger            \\
			R              & \hatH_{\frakS^\perp}
		\end{array}
		\right].
\end{equation}
We are able to analyze $\hatH_{\frakS}$, $\hatH_{\frakS^\perp}$, and $R$, as follows.
\begin{itemize}
	\item {\bf Analysis of $\hatH_{\frakS}$.}
	      Define
	      \begin{equation}
		      \baseeng \coloneqq \sum_{i \in [n]} a_i \left(\eigen_0(\hatX) + \Lambda^{-1} \right) = \sum_{i \in [n]} a_i \cdot \Theta(\lambda),
	      \end{equation}
	      where the last equality follows from \Crefitem{clm:properties-of-X}{itm:clm:properties-of-X_1}.
	      Then, the following holds when restricted to the subspace $\frakS$:
	      \begin{align}
		       & \hspace{1.35em} \hatH_{\frakS} - \calE (H + \baseeng I) \calE^\dagger                                                                                                                                                                                                                                  \\
		       & =  \hatH - \calE ( H + \baseeng I ) \calE^\dagger                                                                                                                                                                                                                                                      \\
		       & =  \underbrace{\sum_{i \in [n]} a_i \hatX_i + \hatD}_{\hatH}  -\, \calE \left( \underbrace{-\sum_{i \in [n]} \frac{a_i}{\Lambda}X_i + D}_{H} +\, \baseeng I \right) \calE^\dagger                                                                                                                      \\
		       & =  \sum_{i \in [n]} a_i \hatX_i - \calE \left( -\sum_{i \in [n]} \frac{a_i}{\Lambda}X_i + \underbrace{\sum_{i \in [n]} a_i \left(\eigen_0(\hatX) +  \Lambda^{-1} \right)}_{\baseeng} I \right) \calE^\dagger + \left( \hatD - \calE D \calE^\dagger \right)                                            \\
		       & =  \sum_{i \in [n]}  a_i \left( \underbrace{\hatX_i - \calE \left( - \frac{1}{\Lambda} X_i +  \left( \eigen_0(\hatX) + \Lambda^{-1} \right) I \right) \calE^\dagger}_{=\,0 \text{ by \Crefitem{clm:estimates-of-X}{itm:clm:estimates-of-X_1}} } \right) + \left( \hatD - \calE D \calE^\dagger \right) \\
		       & = \hatD - \calE D \calE^\dagger.
	      \end{align}
	      Consequently, by \Crefitem{clm:estimates-of-D}{itm:clm:estimates-of-D_1},
	      \begin{equation} \label{eq:hatHS-est}
		      \norm{ \hatH_{\frakS} - \calE (H + \baseeng I) \calE^\dagger} = \norm{P_{\frakS} \left( \hatD - \calE D \calE^\dagger \right) P_{\frakS} } \leq O(\sqrt{n} \Lambda^{-1/3} \norm{D}).
	      \end{equation}

	\item {\bf Analysis of $\hatH_{\frakS^\perp}$.}
	      We now analyze the smallest eigenvalue of $\hatH_{\frakS^\perp}$. Recall that
	      $
		      \hatH = \sum_i a_i \hatX_i + \hatD.
	      $
	      By the property of the Kronecker sum\footnote{This is \Cref{fct:ksum} generalized to (unbounded) self-adjoint operators; see \cite[Corollary 7.25]{schmudgen2012unbounded}.} and \Crefitem{clm:estimates-of-X}{itm:clm:estimates-of-X_2}, the spectrum\footnote{We use the fact that the spectrum of the \sdg\ operator $\hatX$ is purely discrete; see \cite[Remark 18]{leng2025qhd}.} of $\sum_{i} a_i \hatX_i$ is given by
	      \begin{equation} \label{eq:decompose-sumaihatxi}
		      \sigma \Bigl(\sum_{i \in [n]} a_i \hatX_i\Bigr)
		      = \sigma(a_1\hatX) + \cdots + \sigma(a_n\hatX)
		      = \left\{ \sum_{i \in [n]} a_i \eigen_{k_i}(\hatX) \colon k_1, \dots, k_n \in \mathbb{N} \right\}.
	      \end{equation}
	      Since $\frakS = \mathfrak{T}^{\otimes n}$, where $\mathfrak{T}$ is spanned by the ground state $\ket{\chi_0}$ and the first excited state $\ket{\chi_1}$ of $\hatX$, any eigenstate of $\sum_{i} a_i \hatX_i$ not in $\frakS$ must correspond to at least one index $k_i\notin\{0,1\}$ in \Cref{eq:decompose-sumaihatxi}. Consequently, the minimum eigenvalue among those eigenstates is
	      \begin{equation}
		      \sum_{i \in [n]} a_i \eigen_0(\hatX)
		      + \min_{i \in [n]}\{a_i\} \Bigl( \eigen_2(\hatX) - \eigen_0(\hatX) \Bigr)
		      \ge \baseeng + \min_{i \in [n]}\{a_i\} \cdot \Theta(\lambda).
	      \end{equation}
	      By Weyl's inequality and \Crefitem{clm:estimates-of-D}{itm:clm:estimates-of-D_3}, the minimum eigenvalue of $\hatH$ restricted to $\frakS^\perp$ is at least
	      \begin{equation} \label{eq:hatHS2-est}
		      \eigen_0 \Bigl( P_{\frakS} \sum_{i \in [n]} a_i \hatX_i P_{\frakS} \Bigr) + \eigen_0(P_{\frakS}\hatD P_{\frakS}) \geq \baseeng + \min_{i \in [n]}\{a_i\} \cdot \Theta(\lambda) + \eigen_0(D).
	      \end{equation}

	\item {\bf Analysis of $R$.}
	      It is straightforward from \Crefitem{clm:estimates-of-X}{itm:clm:estimates-of-X_leak} and \Crefitem{clm:estimates-of-D}{itm:clm:estimates-of-D_2} that
	      \begin{equation} \label{eq:hatHS3-est}
		      \norm{R} = \norm{P_{\frakS^\perp} \left( \sum_{i \in [n]} a_i \hatX_i + \hatD \right) P_{\frakS}} = \norm{{\sum_{i \in [n]} a_i \underbrace{P_{\frakS^\perp}  \hatX_i P_\frakS}_{=\,0} + P_{\frakS^\perp} \hatD P_{\frakS}}} \leq O(\sqrt{n} \Lambda^{-1/6} \norm{D}).
	      \end{equation}
\end{itemize}

Summarizing \Cref{eq:hatHS-est}, \Cref{eq:hatHS2-est}, and \Cref{eq:hatHS3-est} we have the subsequent claim.

\begin{claim}[Bounds on $\hatH_{\frakS},\hatH_{\frakS^\perp},R$] \label{clm:tosdg-bound}
	There exists $\epsilon \leq O(\sqrt{n} \Lambda^{-1/6} \norm{D})$ such that the following holds:
	\begin{enumerate}[label=(\roman*)]
		\item\label{itm:clm:tosdg-bound_1} $\hatH_{\frakS}$ is $(\epsilon/2)$-close to $\calE (H + \baseeng I) \calE^\dagger$ in operator norm distance.
		\item\label{itm:clm:tosdg-bound_2} $\eigen_0(\hatH_{\frakS^\perp}) \geq \baseeng + \min_{i}\{a_i\} \cdot \Theta(\lambda) + \eigen_0(D)$.
		\item\label{itm:clm:tosdg-bound_3} $\norm{R} \leq \epsilon/4$.
	\end{enumerate}
\end{claim}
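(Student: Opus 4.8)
The plan is to simply collect the three operator estimates already established in the bulleted analysis immediately preceding the claim and repackage them into a single error parameter $\epsilon$. Recall that \Cref{eq:hatHS-est} gives $\norm{\hatH_{\frakS} - \calE(H+\baseeng I)\calE^\dagger} \le O(\sqrt n\,\Lambda^{-1/3}\norm D)$; \Cref{eq:hatHS2-est} gives $\eigen_0(\hatH_{\frakS^\perp}) = \eigen_0(P_{\frakS^\perp}\hatH P_{\frakS^\perp}) \ge \baseeng + \min_i\{a_i\}\cdot\Theta(\lambda) + \eigen_0(D)$; and \Cref{eq:hatHS3-est} gives $\norm R \le O(\sqrt n\,\Lambda^{-1/6}\norm D)$. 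Item~\ref{itm:clm:tosdg-bound_2} is then immediate, since it is a verbatim restatement of the second estimate together with the block identification $\hatH_{\frakS^\perp} = P_{\frakS^\perp}\hatH P_{\frakS^\perp}$ from \Cref{eq:hatH-block}.

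First I would observe that in the relevant parameter regime $\Lambda = \Lambda(\lambda) \ge 1$ — indeed $\Lambda \to \infty$ as $\lambda\to\infty$ by \Crefitem{clm:properties-of-X}{itm:clm:properties-of-X_1} together with \Cref{eq:lem:tosdg_H} — so that $\Lambda^{-1/3} \le \Lambda^{-1/6}$, and hence the bound in \Cref{eq:hatHS-est} is itself $O(\sqrt n\,\Lambda^{-1/6}\norm D)$. Then I would set $\epsilon \coloneqq C\sqrt n\,\Lambda^{-1/6}\norm D$ for a sufficiently large absolute constant $C$, chosen so that the predetermined constants hidden in the $O(\cdot)$'s of \Cref{eq:hatHS-est} and \Cref{eq:hatHS3-est} are at most $C/4$. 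With this choice $\epsilon \le O(\sqrt n\,\Lambda^{-1/6}\norm D)$ as required, item~\ref{itm:clm:tosdg-bound_1} follows from \Cref{eq:hatHS-est} (the bound there being at most $\epsilon/2$), and item~\ref{itm:clm:tosdg-bound_3} follows from \Cref{eq:hatHS3-est} (at most $\epsilon/4$).

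The main point is that there is no genuine obstacle in this step: all the substantive work sits in the ingredient claims \Cref{clm:estimates-of-X} and, especially, \Cref{clm:estimates-of-D}, whose $\hatD$ bounds rely on the reflection symmetry \Cref{eq:key-sym} to sidestep the naive $2^n$-term blow-up in expressing $D$ as a sum of $Z$-tensor products. The only subtlety here is the mismatch of exponents ($\Lambda^{-1/3}$ for $\hatH_\frakS$ versus $\Lambda^{-1/6}$ for $R$) across the three estimates, which is resolved by the elementary observation $\Lambda \ge 1$; after that, the claim is pure bookkeeping and the uniform $\epsilon$ is simply the worst of the three rates, trivially absorbed into the $\Lambda^{-1/6}$ bound.
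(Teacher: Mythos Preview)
Your proposal is correct and matches the paper's approach exactly: the paper introduces the claim with the sentence ``Summarizing \Cref{eq:hatHS-est}, \Cref{eq:hatHS2-est}, and \Cref{eq:hatHS3-est} we have the subsequent claim,'' i.e., it treats the claim as a direct repackaging of those three estimates with no further argument. Your added remark that $\Lambda\ge1$ (so $\Lambda^{-1/3}\le\Lambda^{-1/6}$) makes explicit the only minor bookkeeping step needed to unify the three bounds under a single $\epsilon$.
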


We are now in position to prove \Cref{lem:tosdg}.

\begin{proof}[Proof of {\Cref{lem:tosdg}}]
	Recall from \Cref{eq:hatH-block} that we write
	$
		\hatH =
		\begin{bmatrix}
			\hatH_{\frakS} & R^\dagger            \\
			R              & \hatH_{\frakS^\perp}
		\end{bmatrix}
	$.
	The ground energy of $\hatH$ is thus at most the smallest eigenvalue of $\hatH_{\frakS}$, which, by \Crefitem{clm:tosdg-bound}{itm:clm:tosdg-bound_1}, is in turn no more than
	\begin{equation}
		\eigen_0\left(\calE (H + \baseeng I) \calE^\dagger\right) + \epsilon = \eigen_0(H) + \baseeng + \epsilon = \eigen_0(H) + \sum_{i \in [n]}a_i \cdot \Theta(\lambda) + \epsilon,
	\end{equation}
	as stated in \Cref{itm:lem:tosdg_1}.

	For \Cref{itm:lem:tosdg_2}, observe that the spectral gap of $\hatH$ is at least
	\begin{align}
		 & \hspace{1.15em} \min \left\{ \eigen_1(\hatH_{\frakS}), \eigen_0(\hatH_{\frakS^\perp}) \right\} - \eigen_0(\hatH_{\frakS})                                                                                                                                                                                                                                                                                                                              \\
		 & \geq  \min \left\{ \underbrace{\eigen_1(H) + \baseeng - \epsilon/2}_{\text{by \Crefitem{clm:tosdg-bound}{itm:clm:tosdg-bound_1}}}, \underbrace{\baseeng + \min_{i \in [n]}\{a_i\} \cdot \Theta(\lambda) + \eigen_0(D)}_{\text{by \Crefitem{clm:tosdg-bound}{itm:clm:tosdg-bound_2}}} \right\} - \left( \underbrace{\eigen_0(H) + \baseeng + \epsilon/2}_{\text{by \Crefitem{clm:tosdg-bound}{itm:clm:tosdg-bound_1}}} \right)  \label{eq:weyl-is-used} \\
		 & \geq \min\left\{\eigen_1(H), \min_{i \in [n]} \{ a_i\} \cdot \Theta(\lambda) + \eigen_0(D) \right\} - \eigen_0(H) - \epsilon \eqqcolon \delta, \label{eq:tosdg-delta-bound}
	\end{align}
	where we used Weyl's inequality in \Cref{eq:weyl-is-used}.

	Regarding \Cref{itm:lem:tosdg_3}, denote by $\ket{\psi}$ the ground state of $\hatH$.
	We apply \Cref{fct:davis-kahan}\footnote{The Davis--Kahan $\sin \theta$ theorem works for unbounded self-adjoint operators; see~\cite[Section 6]{davis1970}.} with $d=1$ and
	\begin{equation}
		A =
		\begin{bmatrix}
			\hatH_{\frakS} & 0                    \\
			0              & \hatH_{\frakS^\perp}
		\end{bmatrix}, \quad B =
		\begin{bmatrix}
			0 & R^\dagger \\
			R & 0
		\end{bmatrix}
	\end{equation}
	to obtain
	\begin{equation}
		\norm{\calE \ket{g} - \ket{\psi}} \leq \frac{2^{3/2}\norm{B}}{\delta} = \frac{2^{3/2}\norm{R}}{\delta} \leq \epsilon/\delta, \tag{$\norm{R} \leq \epsilon$ by \Crefitem{clm:tosdg-bound}{itm:clm:tosdg-bound_3}}
	\end{equation}
	by noting that the ground state of $A$ coincides with that of $\hatH_{\frakS}$ when $\delta > 0$ since $\eigen_0(\hatH_{\frakS^\perp}) - \eigen_0(\hatH_{\frakS}) \geq \delta$ from \Cref{eq:tosdg-delta-bound}.
\end{proof}

\subsection{Deferred Proofs} \label{sec:missing-proof-qhd}

\begin{proof}[Proof of {\Cref{clm:estimates-of-X}}]
	Note that $\hatX_i$ is diagonal on the basis $\{ \ket{\chi_0},\ket{\chi_1} \}^{\otimes n} $ when restricted to subspace $\frakS$, since $\ket{\chi_0}$ and $\ket{\chi_1}$ are eigenstates of $\hatX$:
	\begin{equation}
		[\hatX ]_{ \{ \ket{\chi_0},\ket{\chi_1} \} } =
		\begin{bmatrix}
			\eigen_0(\hatX) & 0               \\
			0               & \eigen_1(\hatX)
		\end{bmatrix}
		\xlongequal{\text{\Crefitem{clm:properties-of-X}{itm:clm:properties-of-X_1}}}
		\begin{bmatrix}
			\mineng & 0                   \\
			0       & \mineng + 2/\Lambda
		\end{bmatrix}.
	\end{equation}
	Comparing it to
	\begin{equation}
		[-\Lambda^{-1}X]_{\{\ket{+},\ket{-}\}} =
		\begin{bmatrix}
			-\Lambda^{-1} & 0            \\
			0             & \Lambda^{-1}
		\end{bmatrix},
	\end{equation}
	we conclude the proof of \Cref{itm:clm:estimates-of-X_1}.
	The correctness of \Cref{itm:clm:estimates-of-X_leak} is obvious by diagonalizing $\hatX$, and \Cref{itm:clm:estimates-of-X_2} follows from \Crefitem{clm:properties-of-X}{itm:clm:properties-of-X_1} directly.
\end{proof}

To prepare for the proof of \Cref{clm:estimates-of-D}, we introduce some necessary notation and concepts.

\paragraph{Wave function truncation.}
In light of \Crefitem{clm:properties-of-X}{itm:clm:properties-of-X_3}, we define truncated states $\ket{\tildezero}$ and $\ket{\tildeone}$, supported only on the ``right'' and ``left'' halves of the domain $[-1,1]$, aligned with where $\ket{\hatzero}$ and $\ket{\hatone}$ are concentrated, respectively:
\begin{equation} \label{eq:tildezero-def}
	\Psi_{\tildezero}(\xi) = \left\{
	\begin{array}{ll}
		\Psi_{\hatzero}(\xi), & \xi \in (0,1],  \\
		0,                    & \xi \in [-1,0],
	\end{array}
	\right.
	\quad
	\Psi_{\tildeone}(\xi) = \left\{
	\begin{array}{ll}
		\Psi_{\hatone}(\xi), & \xi \in [-1,0), \\
		0,                   & \xi \in [0,1].
	\end{array}
	\right.
\end{equation}
Define $\ket{\remzero} \coloneqq \ket{\hatzero} - \ket{\tildezero}$ and $\ket{\remone} \coloneqq \ket{\hatone} - \ket{\tildeone}$.
\Crefitem{clm:properties-of-X}{itm:clm:properties-of-X_3} can then be relaxed and  translated to
\begin{equation} \label{eq:remzero-est}
	\braket{\remzero}{\remzero} = \braket{\remone}{\remone} \leq O(\Lambda^{-2/3}).
\end{equation}
Moreover, for any $x = x_1 x_2 \dots x_n \in \binary^n$, we define
\begin{equation} \label{eq:tildex}
	\ket{\tildex} \coloneqq \ket{\tildex_1} \tensor \ket{\tildex_2} \tensor \cdots \tensor \ket{\tildex_n}, \quad \ket{\remx} \coloneqq \ket{\hatx} - \ket{\tildex}.
\end{equation}

Note that $\ket{\hatx_i} = \ket{\tildex_i} + \ket{\remx_i}$ is an orthogonal decomposition.
Hence we have the following orthogonal decomposition:
\begin{align} \label{eq:hatx-orth}
	\underbrace{\ket{\hatx_1}\ket{\hatx_2}\cdots \ket{\hatx_n}}_{\ket{\hatx}} & = \underbrace{\sum_{i \in [n]} \ket{\tildex_1} \cdots \ket{\tildex_{i-1}} \ket{\remx_i} \ket{\hatx_{i+1}} \cdots \ket{\hatx_n}}_{\ket{\remx}} + \underbrace{\ket{\tildex_1} \ket{\tildex_2}\cdots \ket{\tildex_n}}_{\ket{\tildex}}.
\end{align}
Therefore, by the Pythagorean theorem,

\begin{align} \label{eq:remx-est}
	\braket{\remx}{\remx} \leq \sum_{i \in [n]}  \braket{\remx_i}{\remx_i} \leq O(n \Lambda^{-2/3}), \quad \braket{\tildex}{\tildex} = 1 - \braket{\remx}{\remx} \geq 1 - O(n \Lambda^{-2/3}).
\end{align}

\paragraph{Decomposition of \(\hatD\).}
Recall the definition of \(\theta(\cdot)\) from \Cref{eq:theta-def}.
To analyze \(\hatD\), we decompose it into a principal part and a remainder:
\begin{equation}
	\hatD = \hatDp + \hatDr,
\end{equation}
where the principal part \(\hatDp\) is defined by
\begin{equation} \label{eq:hatdp-def}
	\hatDp(\xi) = \hatDp(\xi_1,\dots,\xi_n) \coloneqq \mel{x}{D}{x}, \quad x \coloneqq \theta(\xi).
\end{equation}

We now examine \(\hatDp\).

\begin{claim}[Properties of $\hatDp$] \label{clm:properties-of-hatdp}
	The following properties hold for $\hatDp$.
	\begin{enumerate}[label=(\roman*)]
		\item\label{itm:clm:properties-of-hatdp_1} $\norm{P_{\frakS} \left( \hatDp - \calE D \calE^\dagger \right)P_{\frakS}} \leq O(\sqrt{n} \Lambda^{-1/3} \norm{D} )$.
		\item\label{itm:clm:properties-of-hatdp_2} $\norm{P_{\frakS^\perp} \hatDp P_{\frakS}} \leq O(\sqrt{n}\Lambda^{-1/6}\norm{D})$.
	\end{enumerate}
\end{claim}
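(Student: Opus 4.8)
The plan is to treat $\hatDp$ as the piecewise-constant multiplication operator it is. Write $\Pi_x$ for multiplication by the indicator of the orthant $\mathcal{O}_x \coloneqq \{\xi \in [-1,1]^n : \theta(\xi) = x\}$, so that $\hatDp = \sum_{x \in \binary^n}\mel{x}{D}{x}\,\Pi_x$ (the orthants tile the cube up to a null set), and note that $\hatDp\ket{\tildex} = \mel{x}{D}{x}\ket{\tildex}$ because $\ket{\tildex}$ is supported in $\mathcal{O}_x$ up to a null set. Everything is then run through the orthogonal splitting $\ket{\hatx} = \ket{\tildex} + \ket{\remx}$, the concentration bound $\braket{\remx}{\remx} \le O(n\Lambda^{-2/3})$ from \Cref{eq:remx-est}, and --- for item \ref{itm:clm:properties-of-hatdp_1} only --- the reflection symmetry \Cref{eq:key-sym}. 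The one real obstacle is exactly the one flagged after \Cref{clm:estimates-of-D}: a naive triangle inequality over the $2^n$ orthants produces the worthless bound $2^{(1-o(1))n}\Lambda^{-\Omega(1)}$, so the whole argument is about replacing sums of $2^n$ norms by operator norms of block-diagonal operators.

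For item \ref{itm:clm:properties-of-hatdp_1} I first work one coordinate at a time. Since $\ket{\hatzero},\ket{\hatone} \in \mathfrak{T}$, the compression $\pi_c^{(i)} \coloneqq P_{\mathfrak{T}}\Pi_c^{(i)}P_{\mathfrak{T}}$ onto $\mathfrak{T}$ (for $c \in \binary$, where $\Pi_0^{(i)},\Pi_1^{(i)}$ multiply by $\bs{1}_{\xi_i \ge 0},\bs{1}_{\xi_i < 0}$) has matrix entries $\mel{\hat b}{\pi_0^{(i)}}{\hat a} = \braket{\hat a}{\hat b}_{[0,1]}$ and $\mel{\hat b}{\pi_1^{(i)}}{\hat a} = \braket{\hat a}{\hat b}_{[-1,0]}$; by \Cref{eq:key-sym} the off-diagonal ($a\ne b$) entries vanish \emph{exactly}, and the diagonal entries are $1 - \braket{\remzero}{\remzero}$ and $\braket{\remzero}{\remzero} \le O(\Lambda^{-2/3})$ in the appropriate slots. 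Hence $\pi_{x_i}^{(i)} = \ketbra{\hat{x_i}}{\hat{x_i}} + (-1)^{x_i}E_i$ for a single correction $E_i$ of norm $O(\Lambda^{-2/3})$ (the relation $\pi_0^{(i)}+\pi_1^{(i)} = I_{\mathfrak{T}}$ forces the sign structure). Tensoring, $P_{\frakS}\Pi_x P_{\frakS}$ restricted to $\frakS$ equals $\bigotimes_i \pi_{x_i}^{(i)}$; expanding $\bigotimes_i\big(\ketbra{\hat{x_i}}{\hat{x_i}} + (-1)^{x_i}E_i\big)$ over subsets $S \subseteq [n]$ and summing $\sum_x \mel{x}{D}{x}(\cdot)$, the $S = \emptyset$ term contributes exactly $\calE D \calE^\dagger$, while each nonempty $S$ contributes $F_S \otimes \big(\sum_{y \in \binary^{[n]\setminus S}} c_y\,\ketbra{\hat y}{\hat y}\big)$, with $F_S \coloneqq \bigotimes_{i\in S}E_i$ of norm $\le \big(O(\Lambda^{-2/3})\big)^{|S|}$ and $c_y$ a $\pm$-signed partial sum of $2^{|S|}$ diagonal entries of $D$, so $|c_y| \le 2^{|S|}\norm{D}$. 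The second tensor factor is diagonal in the orthonormal family $\{\ket{\hat y}\}$, hence has operator norm $\le 2^{|S|}\norm{D}$, so the $S$-term has operator norm $\le \big(O(\Lambda^{-2/3})\big)^{|S|}\norm{D}$; summing over $S$ gives $\norm{P_{\frakS}(\hatDp - \calE D \calE^\dagger)P_{\frakS}} \le \norm{D}\big[(1 + O(\Lambda^{-2/3}))^n - 1\big]$. When $\Lambda \ge \Omega(n^{3/2})$ this is $O(n\Lambda^{-2/3}\norm{D}) \le O(\sqrt n\,\Lambda^{-1/3}\norm{D})$; when $\Lambda \le O(n^{3/2})$ the trivial estimate $\norm{\hatDp}+\norm{\calE D \calE^\dagger} \le 2\norm{D}$ already beats $O(\sqrt n\,\Lambda^{-1/3}\norm{D})$, so the bound holds in all cases.

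For item \ref{itm:clm:properties-of-hatdp_2} I start from $\hatDp\ket{\hatx} = \mel{x}{D}{x}\ket{\hatx} - \mel{x}{D}{x}\ket{\remx} + \hatDp\ket{\remx}$; projecting onto $\frakS^\perp$ annihilates the first term, leaving $P_{\frakS^\perp}\hatDp\ket{\hatx} = -\mel{x}{D}{x}P_{\frakS^\perp}\ket{\remx} + P_{\frakS^\perp}\hatDp\ket{\remx}$. Both surviving pieces are controlled through the orthogonal decomposition $\ket{\remx} = \sum_{i \in [n]}\ket{\remx^{(i)}}$ of \Cref{eq:hatx-orth}, where $\ket{\remx^{(i)}} \coloneqq \ket{\tildex_1}\cdots\ket{\tildex_{i-1}}\ket{\remx_i}\ket{\hatx_{i+1}}\cdots\ket{\hatx_n}$: for a fixed $i$, the vectors $\{\ket{\remx^{(i)}}\}_{x \in \binary^n}$ are pairwise orthogonal (any two differ in a tensor slot carrying orthogonal ``$0$''/``$1$'' states) and each has norm $\le O(\Lambda^{-1/3})$. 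Thus for a unit vector $\sum_x c_x\ket{\hatx} \in \frakS$ one gets $\big\|\sum_x c_x \mel{x}{D}{x}\ket{\remx^{(i)}}\big\| \le O(\Lambda^{-1/3}\norm{D})$ and $\big\|\sum_x c_x\ket{\remx^{(i)}}\big\| \le O(\Lambda^{-1/3})$; summing over $i$ and using $\norm{\hatDp} \le \norm{D}$, $\norm{P_{\frakS^\perp}} \le 1$ gives $\norm{P_{\frakS^\perp}\hatDp P_{\frakS}} \le O(n\Lambda^{-1/3}\norm{D})$, which is $\le O(\sqrt n\,\Lambda^{-1/6}\norm{D})$ once $\Lambda \ge \Omega(n^3)$ and follows from the trivial bound $\norm{\hatDp} \le \norm{D}$ otherwise. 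The step I expect to need the most care is the bookkeeping in item \ref{itm:clm:properties-of-hatdp_1} --- verifying that the per-coordinate corrections $(-1)^{x_i}E_i$ really are orthant-uniform (so the whole expansion collapses to the clean block-diagonal form) and tracking the coefficients $c_y$; conceptually, this is precisely where \Cref{eq:key-sym} is indispensable, since without the \emph{exact} vanishing of the off-diagonal entries one would only get $\norm{E_i} = O(\Lambda^{-1/3})$, degrading item \ref{itm:clm:properties-of-hatdp_1} to $O(n\Lambda^{-1/3}\norm{D})$ and forfeiting the stated $\sqrt{n}$.
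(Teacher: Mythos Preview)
Your proof is correct. For item~\ref{itm:clm:properties-of-hatdp_1} your tensor-subset expansion and the paper's direct matrix-element computation are two packagings of the same calculation: the paper simply shows $\mel{\hatx}{\hatDp}{\haty}=0$ for $x\neq y$ (exactly your observation that the off-diagonal entries of each $\pi^{(i)}_{c}$ vanish by \Cref{eq:key-sym}) and bounds the diagonals via $\ket{\hatx}=\ket{\tildex}+\ket{\remx}$ and Cauchy--Schwarz. Your intermediate bound $O(n\Lambda^{-2/3}\norm{D})$ is actually sharper than the paper's $O(\sqrt{n}\,\Lambda^{-1/3}\norm{D})$ in the relevant regime $\Lambda\ge n^{3/2}$. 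One cosmetic point: the $S$-term carries the factor $2^{|S|}\cdot(O(\Lambda^{-2/3}))^{|S|}$ rather than just $(O(\Lambda^{-2/3}))^{|S|}$, but the $2$ is of course absorbed into the $O(\cdot)$.

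For item~\ref{itm:clm:properties-of-hatdp_2} your route is genuinely different. The paper estimates $\norm{\hatDp\ket{\psi}}^2$ by summing over orthants (again using \Cref{eq:key-sym} via \Cref{eq:orthant-orth}) and then subtracts $\norm{P_{\frakS}\hatDp\ket{\psi}}^2$ via the Pythagorean theorem together with item~\ref{itm:clm:properties-of-hatdp_1}, obtaining an $O(rs+s^2)$ cross-term with $r=\norm{D\calE^\dagger\ket{\psi}}$ and $s=\sqrt{n}\,\Lambda^{-1/3}\norm{D}$. You instead decompose $P_{\frakS^\perp}\hatDp\ket{\hatx}$ directly through the telescoping family $\{\ket{\remx^{(i)}}\}_{i\in[n]}$ and exploit that, for each fixed $i$, the $2^n$ vectors $\{\ket{\remx^{(i)}}\}_{x}$ are pairwise orthogonal. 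This avoids the Pythagorean subtraction entirely, does not invoke item~\ref{itm:clm:properties-of-hatdp_1}, and yields the sharper intermediate bound $O(n\Lambda^{-1/3}\norm{D})$ before your case split. The paper's approach has the advantage of not needing the fine structure of $\ket{\remx}$, while yours is more direct and self-contained.
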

\begin{proof}
	Define intervals $\calI_0 \coloneqq (0,1]$ and $\calI_1 \coloneqq [-1,0)$.
	Note that the supports of wave functions $\Psi_{\tildezero}(\cdot)$ and $\Psi_{\tildeone}(\cdot)$ are $\calI_0$ and $\calI_1$, respectively.
	We further define
	\begin{equation}
		\calI(x) \coloneqq \calI_{x_1} \times \calI_{x_2} \times \cdots \times \calI_{x_n}.
	\end{equation}
	It is immediate that the support of wave function $\Psi_{\tildex}(\cdot)$ is $\calI(x)$.
	Also, by definition \Cref{eq:hatdp-def}, we have
	\begin{equation} \label{eq:hatdpbeingc}
		\hatDp(\xi) = \mel{x}{D}{x} \quad  \text{for} \quad  \xi \in \calI(x).
	\end{equation}
	Therefore,
	\begin{equation} \label{eq:txhatdptx}
		\mel{\tildex}{\hatDp}{\tildex} = \mel{x}{D}{x} \braket{\tildex}{\tildex}.
	\end{equation}
	For any $x \in \binary^n$, we have
	\begin{align}
		\mel{\hatx}{\hatDp}{\hatx} & = \underbrace{\mel{\tildex}{\hatDp}{\tildex} + \mel{\tildex}{\hatDp}{\remx}}_{\mel{\tildex}{\hatDp}{\hatx}} + \mel{\remx}{\hatDp}{\hatx} \tag{by \Cref{eq:tildex}} \\
		                           & = \mel{x}{D}{x} \braket{\tildex}{\tildex} + \mel{\tildex}{\hatDp}{\remx}+ \mel{\remx}{\hatDp}{\hatx} \tag{by \Cref{eq:txhatdptx}}                                  \\
		                           & \leq \mel{x}{D}{x} \braket{\tildex}{\tildex} - 2 \cdot \norm{\ket{\remx}} \cdot \norm{D} \tag{$\|\hatDp\| = \norm{D}$ by \Cref{eq:hatdp-def}}                      \\
		                           & \leq \mel{x}{D}{x} (1 - O(n \Lambda^{-2/3})) - \norm{D} \cdot O(\sqrt{n} \Lambda^{-1/3}) \tag{by \Cref{eq:remx-est}}                                               \\
		                           & = \mel{x}{D}{x} - \norm{D} \cdot O(\sqrt{n} \Lambda^{-1/3}) \tag{$\mel{x}{D}{x} \leq \norm{D}$}.
	\end{align}
	On the other hand, for any $x \neq y$, assume that $x_j \neq y_j$ for some $j$.
	We have for any $z \in \binary^n$ that
	\begin{equation} \label{eq:orthant-orth}
		\braket{\hatx}{\haty}_{\calI(z)} = \prod_{i \in [n]} \braket{\hatx_i}{\haty_i}_{\calI_{z_i}} = 0,
	\end{equation}
	since $\braket{\hatx_j}{\haty_j}_{\calI_{z_j}}=0$ by \Crefitem{clm:properties-of-X}{itm:clm:properties-of-X_2}.
	Therefore,
	\begin{align}
		\mel{\hatx}{\hatDp}{\haty} & = \sum_{z \in \binary^n} \mel{\hatx}{\hatDp}{\haty}_{\calI(z)}                                         \\
		                           & = \sum_{z \in \binary^n} \mel{z}{D}{z} \braket{\hatx}{\haty}_{\calI(z)} \tag{by \Cref{eq:hatdpbeingc}} \\
		                           & = \sum_{z \in \binary^n} \mel{z}{D}{z} \cdot 0 = 0. \tag{by \Cref{eq:orthant-orth}}
	\end{align}
	\Cref{itm:clm:properties-of-hatdp_1} thus follows from our estimates of $\mel{\hatx}{\hatDp}{\hatx}$ and $\mel{\hatx}{\hatDp}{\haty}$ above.

	For \Cref{itm:clm:properties-of-hatdp_2}, let $\ket{\psi} = \sum_{x \in \binary^n} \alpha_x \ket{\hatx}$ be an arbitrary state in $\frakS$.
	We first estimate $\norm{\hatDp \ket{\psi}}$:
	\begin{align}
		\norm{\hatDp \ket{\psi}}^2 & = \sum_{z \in \binary^n} \mel{z}{D}{z}^2 \braket{\psi}{\psi}_{\calI(z)} \tag{by \Cref{eq:hatdpbeingc}}                                                             \\
		                           & = \sum_z \mel{z}{D}{z}^2 \sum_x \alpha_x^2 \braket{\hatx}{\hatx}_{\calI(z)} \tag{by \Cref{eq:orthant-orth}}                                                        \\
		                           & = \sum_z \mel{z}{D}{z}^2 \alpha_z^2 \braket{\hatz}{\hatz}_{\calI(z)} + \sum_z \mel{z}{D}{z}^2 \sum_{x \colon x \neq z} \alpha_x^2 \braket{\hatx}{\hatx}_{\calI(z)} \\
		                           & \leq \sum_z \mel{z}{D}{z}^2 \alpha_z^2 + \norm{D}^2 \sum_x \alpha_x^2 \sum_{z \colon z \neq x} \braket{\hatx}{\hatx}_{\calI(z)}                                    \\
		                           & = \sum_z \mel{z}{D}{z}^2 \alpha_z^2 + \norm{D}^2  \sum_x \alpha_x^2  \braket{\remx}{\remx} \label{eq:remx-trick}                                                   \\
		                           & = \norm{D \calE^\dagger \ket{\psi}}^2 + \norm{D}^2 \cdot O(n \Lambda^{-2/3}), \label{eq:hatdppsi-est}
	\end{align}
	where the last (in)equality follows from \Cref{eq:remx-est}, and the equality $\sum_{z \colon z \neq x} \braket{\hatx}{\hatx}_{\calI(z)} \leq \braket{\remx}{\remx}$ used in \Cref{eq:remx-trick} comes from the fact that the disjoint union of $\calI(z)$ is exactly the support of $\Psi_{\remx}(\cdot)$, which is in turn simply $\Psi_{\hatx}(\cdot)$ truncated.

	For notational convenience, define
	\begin{equation} \label{eq:rs-def}
		r \coloneqq \norm{D \calE^\dagger \ket{\psi}} = \norm{P_{\frakS} \calE D \calE^\dagger P_{\frakS} \ket{\psi}} \leq \norm{D} \quad \text{and} \quad s \coloneqq \sqrt{n} \Lambda^{-1/3} \norm{D}.
	\end{equation}
	Finally, by the Pythagorean theorem,
	\begin{align}
		\norm{P_{\frakS^\perp} \hatDp \ket{\psi}}^2 & =\norm{\hatDp \ket{\psi}}^2 - \underbrace{\norm{P_{\frakS} \hatDp \ket{\psi}}^2}_{= \norm{P_{\frakS}  \hatDp P_{\frakS} \ket{\psi}}^2}                                                             \\
		                                            & = \norm{\hatDp \ket{\psi}}^2 - \left( \norm{P_{\frakS} \calE D \calE^\dagger P_{\frakS} \ket{\psi}} - \norm{P_{\frakS} \left( \hatDp - \calE D \calE^\dagger \right)P_{\frakS} \ket{\psi}} \right) \\
		                                            & \leq \left( r^2 + O(s^2) \right) - \left( r - O(s) \right)^2 \tag{by \Cref{eq:hatdppsi-est}, \Cref{eq:rs-def}, and \Cref{itm:clm:properties-of-hatdp_1}}                                           \\
		                                            & = O(rs + s^2)                                                                                                                                                                                      \\
		                                            & \leq O(\norm{D}\cdot \sqrt{n}\Lambda^{-1/3}\norm{D} + n\Lambda^{-2/3}\norm{D}^2) \tag{by \Cref{eq:rs-def}}                                                                                         \\
		                                            & \leq O(n \Lambda^{-1/3} \norm{D}^2),
	\end{align}
	as claimed in \Cref{itm:clm:properties-of-hatdp_2}.
\end{proof}

Next, we examine \(\hatDr\).
Since $\hatDr = \hatD - \hatDp$, by definitions \Cref{eq:hatd-def} and \Cref{eq:hatdp-def} we have
\begin{equation} \label{eq:hatdr-def}
	\hatDr(\xi) = \hatDr(\xi_1,\dots,\xi_n) \coloneqq \min\{0, \xi_{\rm min}/w-1\} \cdot  \mel{x}{D}{x},\quad \xi_{\rm min} \coloneqq \min_{i \in [n]} \{ \abs{\xi_i} \}, \quad x \coloneqq \theta(\xi).
\end{equation}

\begin{claim}[Property of $\hatDr$] \label{clm:properties-of-hatdr}
	$\norm{\hatDr P_{\frakS}} \leq O(\sqrt{n}\Lambda^{-1/3} \norm{D})$.
\end{claim}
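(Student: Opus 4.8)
The plan is to bound $\norm{\hatDr P_{\frakS}}$ by estimating $\norm{\hatDr\ket{\psi}}$ for an arbitrary normalized $\ket{\psi}=\sum_{x\in\binary^n}\alpha_x\ket{\hatx}\in\frakS$, in the same style as the proof of \Crefitem{clm:properties-of-hatdp}{itm:clm:properties-of-hatdp_2}. The key structural observation is that $\hatDr(\xi)$, as given in \Cref{eq:hatdr-def}, is supported only on the region where $\xi_{\rm min}<w$, i.e.\ where at least one coordinate $\xi_i$ has $\abs{\xi_i}<w$. On the ``bulk'' orthant interior $\calI(x)=\calI_{x_1}\times\cdots\times\calI_{x_n}$ the wave function $\Psi_{\tildex}(\cdot)$ (hence the bulk of $\Psi_{\hatx}(\cdot)$) is concentrated, but by \Crefitem{clm:properties-of-X}{itm:clm:properties-of-X_3} each single-qubit factor $\Psi_{\hatx_i}(\cdot)$ has only $O(\Lambda^{-2/3})$ mass on $\{\abs{\xi_i}\le w\}$. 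Therefore the mass of $\ket{\hatx}$ on the set $\{\xi_{\rm min}<w\}$, which by a union bound over the $n$ coordinates is at most $\sum_{i\in[n]}\braket{\hatx_i}{\hatx_i}_{[-w,w]}\le O(n\Lambda^{-2/3})$, controls everything: since $\abs{\hatDr(\xi)}\le\abs{\mel{x}{D}{x}}\le\norm{D}$ pointwise and $\hatDr$ vanishes outside $\{\xi_{\rm min}<w\}$, we get $\norm{\hatDr\ket{\hatx}}^2\le\norm{D}^2\cdot O(n\Lambda^{-2/3})$.

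Concretely, I would first record that $\hatDr$ is a diagonal (multiplication) operator with $\norm{\hatDr}\le\norm{D}$ and $\operatorname{supp}(\hatDr)\subseteq\{\xi\colon\xi_{\rm min}<w\}=\bigcup_{i\in[n]}\{\abs{\xi_i}<w\}$. Next, for a single $\ket{\hatx}$, write $\norm{\hatDr\ket{\hatx}}^2=\int_{\xi_{\rm min}<w}\hatDr(\xi)^2\abs{\Psi_{\hatx}(\xi)}^2\dee\xi\le\norm{D}^2\braket{\hatx}{\hatx}_{\{\xi_{\rm min}<w\}}$, and bound $\braket{\hatx}{\hatx}_{\{\xi_{\rm min}<w\}}\le\sum_{i\in[n]}\braket{\hatx}{\hatx}_{\{\abs{\xi_i}<w\}}=\sum_{i\in[n]}\braket{\hatx_i}{\hatx_i}_{[-w,w]}\le O(n\Lambda^{-2/3})$ using \Crefitem{clm:properties-of-X}{itm:clm:properties-of-X_3} (applied to $\ket{\hatzero}$ or $\ket{\hatone}$ depending on $x_i$, noting $\braket{\hatzero}{\hatzero}_{[-1,w]}\le O(\Lambda^{-2/3})$ hence a fortiori on $[-w,w]$; symmetrically for $\ket{\hatone}$). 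Then for general $\ket{\psi}=\sum_x\alpha_x\ket{\hatx}$, I would expand $\norm{\hatDr\ket{\psi}}^2$ and split into diagonal and off-diagonal terms. The off-diagonal terms $\mel{\hatx}{\hatDr^2}{\haty}$ for $x\ne y$ want to be handled via the orthogonality $\braket{\hatx}{\haty}_{\calI(z)}=0$ of \Cref{eq:orthant-orth}; since $\hatDr^2$ is constant ($=\mel{z}{D}{z}^2\cdot(\text{scalar factor})$) on each orthant interior and the residual region $\{\xi_{\rm min}<w\}$ is exactly a union of the boundary-neighborhood slabs, one can either argue directly or absorb the cross terms into the $O(n\Lambda^{-2/3})$ ``leakage'' bound as done around \Cref{eq:remx-trick}. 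Summing, $\norm{\hatDr\ket{\psi}}^2\le\norm{D}^2\cdot O(n\Lambda^{-2/3})$, giving $\norm{\hatDr P_{\frakS}}\le O(\sqrt{n}\Lambda^{-1/3}\norm{D})$, which is the claim.

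The main obstacle I anticipate is the bookkeeping of the off-diagonal cross terms $\alpha_x\alpha_y\mel{\hatx}{\hatDr^2}{\haty}$ with $x\ne y$: unlike $\hatDp$, the operator $\hatDr$ is not locally constant on each full orthant (its scalar prefactor $\min\{0,\xi_{\rm min}/w-1\}$ varies), so the clean vanishing argument of \Cref{eq:orthant-orth} does not apply verbatim on the support of $\hatDr$. The fix is to note that $\hatDr$ is still supported inside $\bigcup_i\{\abs{\xi_i}<w\}$, a set on which the \emph{total} mass of each $\ket{\hatx}$ is $O(n\Lambda^{-2/3})$; applying Cauchy--Schwarz to the cross terms and summing $\sum_x\abs{\alpha_x}^2=1$ yields the same $O(n\Lambda^{-2/3})\norm{D}^2$ bound without needing exact orthogonality. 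This is a routine but slightly delicate estimate, entirely parallel to the $\ket{\remx}$ argument already used for $\hatDp$, so no genuinely new idea is required beyond recognizing that the ``small region'' for $\hatDr$ is the boundary slab rather than the complement of an orthant.
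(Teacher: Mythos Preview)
Your overall strategy is correct: $\hatDr$ is supported on the union of slabs $\calJ=\bigcup_{i}\{\abs{\xi_i}<w\}$, and the mass of any single $\ket{\hatx}$ on $\calJ$ is $O(n\Lambda^{-2/3})$. The gap is in the passage from single basis vectors to a general $\ket{\psi}=\sum_x\alpha_x\ket{\hatx}\in\frakS$. Your proposed fix---``applying Cauchy--Schwarz to the cross terms and summing $\sum_x\abs{\alpha_x}^2=1$''---does not close as stated: Cauchy--Schwarz gives $\abs{\mel{\hatx}{\hatDr^2}{\haty}}\le O(n\Lambda^{-2/3})\norm{D}^2$ for \emph{each} pair, but summing over all pairs with weights $\abs{\alpha_x}\abs{\alpha_y}$ produces a factor $(\sum_x\abs{\alpha_x})^2$, which can be as large as $2^n$. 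The analogy with the $\ket{\remx}$ argument for $\hatDp$ is misleading, since that argument relied precisely on the orthant orthogonality \Cref{eq:orthant-orth} that you have (correctly) noted fails for $\hatDr$.

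The paper sidesteps the basis expansion entirely. After the pointwise bound $\norm{\hatDr\ket{\psi}}^2\le\norm{D}^2\sum_i\braket{\psi}{\psi}_{\calJ_i}$, each slab term is handled by decomposing $\ket{\psi}$ along the $i$-th tensor factor only: since $\frakS=\mathfrak{T}^{\otimes n}$, one writes $\ket{\psi}=\alpha\,\ket{\hatzero}_i\ket{\psi_0}+\beta\,\ket{\hatone}_i\ket{\psi_1}$ with $\alpha^2+\beta^2=1$ and normalized $\ket{\psi_0},\ket{\psi_1}$, whence $\braket{\psi}{\psi}_{\calJ_i}\le 2\alpha^2\braket{\hatzero}{\hatzero}_{[-w,w]}+2\beta^2\braket{\hatone}{\hatone}_{[-w,w]}\le O(\Lambda^{-2/3})$ by \Crefitem{clm:properties-of-X}{itm:clm:properties-of-X_3}. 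Summing over $i$ gives the claim. The point is that each $\calJ_i$ is a product set constraining a single coordinate, so only a $2\times 2$ block (the $i$-th copy of $\mathfrak{T}$) ever enters, and no sum over $2^n$ strings appears. Equivalently, if you insist on expanding in $\{\ket{\hatx}\}$, first pass to $\braket{\psi}{\psi}_{\calJ_i}$: the cross term $\braket{\hatx}{\haty}_{\calJ_i}$ vanishes unless $x$ and $y$ agree on all coordinates $k\neq i$ (by full-interval orthogonality $\braket{\hatx_k}{\haty_k}=0$), so only pairs $(x,x^{(i)})$ survive and their total weight is at most $1$.
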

\begin{proof}
	Define $\calJ \coloneqq \calJ_1 \cup \calJ_2 \cup \cdots \cup \calJ_n$, where
	\begin{equation}
		\calJ_i \coloneqq [-1,1]^{i-1} \times [-w,w] \times [-1,1]^{n-i}.
	\end{equation}
	It is immediate from \Cref{eq:hatdr-def} that $\|\hatDr\| = \norm{D}$ and $\hatDr(\xi)=0$ for $\xi \in [-1,1]^n \setminus \calJ$.
	Therefore, for \emph{any} $\ket{\psi} \in \calW([-1,1]^n)$ we have
	\begin{equation} \label{eq:property-of-hatdr-1}
		\norm{\hatDr \ket{\psi}}^2 \leq \norm{\hatDr^2} \braket{\psi}{\psi}_{\calJ} = \norm{D}^2 \braket{\psi}{\psi}_{\calJ} \leq \norm{D}^2 \sum_{i \in [n]} \braket{\psi}{\psi}_{\calJ_i}.
	\end{equation}

	Now, assume $\ket{\psi} \in \frakS$ so that we can write $\ket{\psi} = \alpha \ket{\hatzero}\ket{\psi_0} + \beta \ket{\hatone}\ket{\psi_1}$ where $\alpha^2 + \beta^2 = 1$.
	We have
	\begin{align}
		\frac{1}{2} \braket{\psi}{\psi}_{\calJ_1} & \leq \alpha^2 \braket{\hatzero,\psi_0}{\hatzero,\psi_0}_{\calJ_1} + \beta^2 \braket{\hatone,\psi_1}{\hatone,\psi_1}_{\calJ_1} \tag{triangle inequality with $\braket{\cdot}{\cdot}_{\calJ_1}$} \\
		                                          & = \alpha^2 \braket{\hatzero}{\hatzero}_{[-w,w]} \braket{\psi_0}{\psi_0} + \beta^2 \braket{\hatone}{\hatone}_{[-w,w]} \braket{\psi_1}{\psi_1}                                                   \\
		                                          & \leq \alpha^2 \braket{\hatzero}{\hatzero}_{[-1,w]} \cdot 1 + \beta^2 \braket{\hatone}{\hatone}_{[-w,1]} \cdot 1                                                                                \\
		                                          & \leq O(\Lambda^{-2/3}). \tag{by \Crefitem{clm:properties-of-X}{itm:clm:properties-of-X_3}}
	\end{align}
	Similar estimates work for every $\braket{\psi}{\psi}_{\calJ_i}$ in \Cref{eq:property-of-hatdr-1}.
	Therefore, we have
	\begin{equation}
		\mel{\psi}{\hatDr^2}{\psi} \leq \norm{D}^2 \cdot n \cdot O(\Lambda^{-2/3}) = O(n \Lambda^{-2/3} \norm{D}^2),
	\end{equation}
	as desired.
\end{proof}

With \Cref{clm:properties-of-hatdp} and \ref{clm:properties-of-hatdr}, we can now prove \Cref{clm:estimates-of-D}.

\begin{proof}[Proof of {\Cref{clm:estimates-of-D}}]
	For \Cref{itm:clm:estimates-of-D_1}, when restricted on subspace $\frakS$, we have
	\begin{equation}
		\norm{ \hatD - \calE D \calE^\dagger } = \norm{\hatDp + \hatDr - \calE D \calE^\dagger} \leq \norm{\hatDp - \calE D \calE^\dagger} + \norm{\hatDr}  \leq O(\sqrt{n} \Lambda^{-1/3} \norm{D} ),
	\end{equation}
	where the final bound uses \Crefitem{clm:properties-of-hatdp}{itm:clm:properties-of-hatdp_1} and \Cref{clm:properties-of-hatdr}.

	Similarly, for \Cref{itm:clm:estimates-of-D_2} we have
	\begin{equation}
		\norm{P_{\frakS^\perp} \hatD P_{\frakS}} \leq  \norm{P_{\frakS^\perp} \hatDp P_{\frakS}} + \underbrace{\norm{P_{\frakS^\perp} \hatDr P_{\frakS}}}_{ \leq \norm{ \hatDr P_{\frakS}} }  \leq  O(\sqrt{n} \Lambda^{-1/6} \norm{D} ),
	\end{equation}
	where the last inequality follows from \Crefitem{clm:properties-of-hatdp}{itm:clm:properties-of-hatdp_2} and \Cref{clm:properties-of-hatdr}.

	Now we consider the statement $\eigen_0 \left(  P_{\frakS^\perp} \hatD P_{\frakS^\perp} \right) \geq \eigen_0(\hatD) = \eigen_0(D)$ in \Cref{itm:clm:estimates-of-D_3}.
	The inequality follows from the variational principle: projecting to a subspace can only increase the ground energy.
	The equality uses the assumption in \Cref{lem:tosdg} that \(\eigen_0(D) \leq 0\), together with \Crefitem{fct:hatD}{itm:fct:hatD_2}.
\end{proof}

}

\printbibliography

\appendix
\section{Adiabatic Quantum Computing} \label{sec:qat}

In \Crefitem{thm:main}{itm:thm:main_2} and \Crefitem{thm:qhd}{itm:thm:qhd_2}, we used the following lemmas:
\begin{lemma} \label{lem:qatsim}
	Let \( H(t) \) be an explicitly \( s \)-sparse \apath\ for \( t \in [0,1] \). Suppose that $\|H(t)\|$, \( \|\dot{H}(t)\| \), and \( \|\ddot{H}(t)\| \) are all upper bounded by \( M \) and that the spectral gap of \( H(t) \) is at least \( \delta \) for all \( t \). Furthermore, assume that the ground state of \( H(0) \) is \(\eta\)-close to an efficiently preparable state \(\ket{\phi}\). Then, there is a quantum algorithm, given oracle access to $H(t)$, that simulates the \sdg\  dynamics
	\begin{equation} \label{eq:a1-sdg}
		i \frac{\dee}{\dee t} \ket{\psi(t)} = T H(t) \ket{\psi(t)}, \quad \ket{\psi(0)} = \ket{\phi}, \quad T = \Theta\left(\frac{M^2}{\delta^3 \eta}\right),
	\end{equation}
	producing a final state \(\ket{\psi(1)}\) that is \(2\eta\)-close to the ground state of \( H(1) \) with query and gate complexity
	$
		(n s T M)^{1+o(1)}
	$.
\end{lemma}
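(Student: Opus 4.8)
The plan is to prove \Cref{lem:qatsim} by combining two off-the-shelf ingredients: a quantitative form of the quantum adiabatic theorem, which controls the ideal continuous evolution in \Cref{eq:a1-sdg}, and an efficient digital simulation of time-dependent sparse Hamiltonians, which realizes that evolution as a quantum circuit using only the oracle access to $H(t)$.

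First I would analyze the ideal evolution. Let $\ket{g(t)}$ denote the (non-degenerate, by the gap hypothesis) instantaneous ground state of $H(t)$, and let $U \coloneqq \mathcal{T}\exp\!\big(-iT\int_0^1 H(t)\,\dee t\big)$ be the unitary generated by \Cref{eq:a1-sdg}. Applying a quantitative adiabatic theorem (for instance the Jansen--Ruskai--Seiler bound, whose error estimate depends only on $\sup_t\norm{\dot H(t)}$, $\sup_t\norm{\ddot H(t)}$, and the uniform gap lower bound), one gets that $U\ket{g(0)}$ is, up to a global phase, $\epsilon_{\rm ad}$-close to $\ket{g(1)}$ with
\begin{equation}
	\epsilon_{\rm ad} \;\le\; \frac{C}{T}\left(\frac{M}{\delta^2} + \frac{M^2}{\delta^3}\right) \;=\; O\!\left(\frac{M^2}{T\delta^3}\right),
\end{equation}
where $C$ is an absolute constant and the last equality uses that $\norm{H(t)}\le M$ forces the gap to satisfy $\delta\le 2M$, so that $M/\delta^2 = O(M^2/\delta^3)$. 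Choosing $T = \Theta(M^2/(\delta^3\eta))$ with a sufficiently large constant makes $\epsilon_{\rm ad}\le\eta/2$. Since $U$ is unitary and $\ket{\phi}$ is $\eta$-close to $\ket{g(0)}$, the state $U\ket{\phi}$ is $\eta$-close to $U\ket{g(0)}$, hence within $\eta+\tfrac{\eta}{2} = \tfrac{3\eta}{2}$ of $\ket{g(1)}$ by the triangle inequality.

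Next I would realize $U$ digitally. The propagator $U$ is the time-ordered exponential of the time-dependent Hamiltonian $T H(t)$; after rescaling time, this is the evolution under the explicitly $s$-sparse Hamiltonian $H(\cdot)$ of norm at most $M$ over total time $T$, which is accessible through the Interaction Constraint Oracle and the (time-aware) Matrix Element Oracle of \Cref{sec:prelim}. Invoking a time-dependent sparse Hamiltonian simulation algorithm (e.g.~\cite{berry2007efficient,berry2020time} and its $L^1$-norm refinements) with target precision $\eta/2$ produces a circuit that, starting from the efficiently preparable $\ket{\phi}$, prepares a state $\tfrac{\eta}{2}$-close to $U\ket{\phi}$ using $\widetilde{O}(sTM + \log(1/\eta))$ oracle queries; accounting for the $O(n)$-qubit registers and the arithmetic inside the sparse-access subroutine adds only $\poly(n)$ and polylogarithmic overhead, so the total query and gate complexity is $(nsTM)^{1+o(1)}$, with $\log(1/\eta)$---which is polylogarithmic in $n$, $M$, and $1/\delta$---absorbed into the $o(1)$. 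Combining with the previous paragraph by the triangle inequality, the output state is $\tfrac{3\eta}{2} + \tfrac{\eta}{2} = 2\eta$-close to $\ket{g(1)}$, as claimed.

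I expect the main obstacle to be bookkeeping rather than any new idea: one must fix a version of the adiabatic theorem whose error bound is stated uniformly in $\sup_t\norm{\dot H(t)}$, $\sup_t\norm{\ddot H(t)}$, and $\delta$ alone (several textbook versions instead carry boundary derivative terms or require control of third derivatives), reconcile its time requirement with the stated $T=\Theta(M^2/(\delta^3\eta))$, and quote a time-dependent sparse-simulation theorem whose oracle model matches \Cref{sec:prelim} and whose cost collapses to $(nsTM)^{1+o(1)}$ after accounting for $n$-qubit registers. The remaining care is the error allocation, splitting the budget $2\eta$ into the \emph{given} initial error $\eta$, an adiabatic error $\le\eta/2$, and a simulation error $\le\eta/2$.
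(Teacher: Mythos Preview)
Your proposal is correct and matches the paper's own argument essentially verbatim: the paper proves \Cref{lem:qatsim} in one sentence by invoking the quantitative adiabatic theorem (stated there as \Cref{thm:adiabatic}, the Jansen--Ruskai--Seiler/Teufel bound giving $T\ge O(M^2/(\epsilon\delta^3))$) together with the time-dependent Dyson-series simulator of~\cite{berry2020time} for the $(nsTM)^{1+o(1)}$ cost. Your added error-splitting and triangle-inequality bookkeeping is exactly the detail the paper leaves implicit.
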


\begin{lemma} \label{lem:qatsim2}
	Let \( H(t) = -\Delta + V(t,x) \) be a \apath\ for \( t \in [0,\tend] \). Suppose that $ |V(t,x)|$, \( \|\dot{H}(t)\| \leq \max_x | \dot{V}(t,x) | \), and \( \|\ddot{H}(t)\| \leq \max_x | \ddot{V}(t,x) | \) are all upper bounded by \( M \) and that the spectral gap of \( H(t) \) is at least \( \delta \) for all \( t \). Furthermore, assume that the ground state of \( H(0) \) is \(\eta\)-close to an efficiently preparable state \(\ket{\phi}\). Then, there is a quantum algorithm, given oracle access to $V(t,x)$, that simulates the \sdg\  dynamics
	\begin{equation} \label{eq:a2-sdg}
		i \frac{\dee}{\dee t} \ket{\psi(t)} = T H(t) \ket{\psi(t)}, \quad \ket{\psi(0)} = \ket{\phi}, \quad T = \Theta\left(\frac{M^2}{\delta^3 \eta}\right),
	\end{equation}
	producing a final state \(\ket{\psi(\tend)}\) that is \(2\eta\)-close to the ground state of \( H(\tend) \) with query and gate complexity
	$
		(n T M)^{1+o(1)}
	$.
\end{lemma}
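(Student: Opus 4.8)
The plan is to combine two ingredients, in direct parallel with the proof of \Cref{lem:qatsim} (for explicitly sparse paths): a quantitative adiabatic theorem that certifies the slow Schr\"odinger evolution tracks the instantaneous ground state, and an efficient digital simulation of that evolution. The one genuinely new feature is that $H(t) = -\Delta + V(t,x)$ is an \emph{unbounded} self-adjoint operator on $L^2(\calX)$ rather than a finite-dimensional sparse Hamiltonian, so the simulation step is carried out in real space and the attendant spatial-discretization error must be controlled.

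\textbf{Step 1 (adiabatic tracking).} First I would invoke the rigorous adiabatic theorem — the same quantitative version, with bounds on $\norm{\dot H}$ and $\norm{\ddot H}$, underlying the adiabatic algorithm of~\cite{gilyen2021sub}; it applies verbatim to semi-bounded self-adjoint operators, since it uses only the spectral gap together with the first two time-derivative norms, and here the tracked ground space is one-dimensional. The hypotheses hold: $\dot H(t)$ and $\ddot H(t)$ are the bounded multiplication operators by $\dot V(t,x)$ and $\ddot V(t,x)$, of norm $\le M$ uniformly on $[0,\tend]$, and the gap is $\ge \delta$ throughout. Hence evolving under $TH(t)$ from the exact ground state of $H(0)$ leaves the state within $O(M^2/(\delta^3 T))$ of the ground state of $H(\tend)$; taking $T = \Theta(M^2/(\delta^3\eta))$ makes this at most $\eta/2$. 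Since the actual initial state $\ket{\phi}$ is $\eta$-close to the true ground state of $H(0)$ and the map $i\frac{\dee}{\dee t}\ket{\psi(t)} = TH(t)\ket{\psi(t)}$ is unitary hence distance-preserving, the exact continuous output is within $\eta + \eta/2$ of the ground state of $H(\tend)$; absorbing a further $\eta/2$ for the digital simulation below yields the claimed $2\eta$.

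\textbf{Step 2 (digital implementation).} Because $\calX$ is a box of diameter $\poly(n)$ and $\mu_0(H(t)) = O(M)$ bounds the oscillation scale of the tracked eigenfunction, I would place a uniform grid on $\calX$ of spacing $1/\poly(n,M)$, encoded in $\poly(n,\log M)$ qubits; replace $-\Delta$ by the finite-difference Laplacian $-\Delta_{\mathrm{disc}}$, an $O(n)$-sparse operator of norm $\poly(n,M)$; and replace $V(t,x)$ by its restriction to grid points, a diagonal operator answered by a single query to the $V$-oracle per entry. The resulting path $H_{\mathrm{disc}}(t) = -\Delta_{\mathrm{disc}} + V(t,\cdot)$ is then an explicitly $O(n)$-sparse \apath, so \Cref{lem:qatsim} — equivalently, the time-dependent sparse Hamiltonian simulation of~\cite{berry2020time} in the real-space formulation of~\cite{Childs2022} — simulates $i\frac{\dee}{\dee t}\ket{\psi(t)} = TH_{\mathrm{disc}}(t)\ket{\psi(t)}$ with query and gate complexity $(nTM)^{1+o(1)}$, up to the $\poly(n)$ discretization overhead that we do not track (and that is subsumed in all our applications, where $M = \poly(n)$). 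For this to be legitimate one additionally needs that $H_{\mathrm{disc}}(t)$ has spectral gap $\Omega(\delta)$, that its instantaneous ground state is $1/\poly(n)$-close to that of $H(t)$ uniformly in $t$, and that the discretized initial ground state is still $\eta$-close to the preparable product state $\bigotimes_i \ket{\vartheta_i}$ (the last point following by discretizing each factor).

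\textbf{Main obstacle.} The hard part will be exactly this certification of the discretization, uniform in $t$. It rests on a priori regularity and decay estimates for the instantaneous ground states $\phi(t,\cdot)$: one must show they are smooth with derivatives bounded by $\poly(n,M)$ and essentially supported away from $\partial\calX$, so that (i) truncation to the box with vanishing boundary condition costs $1/\poly(n)$, (ii) $\norm{(\Delta-\Delta_{\mathrm{disc}})\phi(t)} = 1/\poly(n)$ on a $1/\poly(n,M)$-spaced grid, and (iii) these bounds are uniform over $t\in[0,\tend]$. For the Schr\"odinger operators arising here such estimates follow from elliptic regularity together with Agmon-type exponential decay bounds, in the spirit of~\cite{zheng2024computational,leng2025qhd}; pushing them through Weyl's inequality (\Cref{fct:weyl}, in its semi-bounded form) and the Davis--Kahan theorem (\Cref{fct:davis-kahan}) then transfers the gap and ground-state guarantees from $H(t)$ to $H_{\mathrm{disc}}(t)$, completing the argument.
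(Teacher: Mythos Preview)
Your Step~1 matches the paper exactly: invoke the quantitative adiabatic theorem (\Cref{thm:adiabatic}), which the paper explicitly notes applies to unbounded $H(t)$, to get the $T = \Theta(M^2/(\delta^3\eta))$ bound and the $2\eta$ closeness via unitarity.

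Your Step~2 diverges. The paper does not discretize by hand and then re-verify gap and ground-state closeness for $H_{\mathrm{disc}}(t)$; it simply cites \cite[Theorem~3]{Childs2022} as a black box that simulates the \emph{continuous} real-space dynamics \Cref{eq:a2-sdg} to the required accuracy with query and gate complexity $(nTM)^{1+o(1)}$. Because the adiabatic theorem is already applied to the continuous path (where the gap $\delta$ is given by hypothesis), the only remaining task is accurate simulation of that continuous evolution---and the discretization analysis is entirely internal to \cite{Childs2022}. Your route (discretize first, then run the adiabatic theorem on $H_{\mathrm{disc}}$) is not wrong, but it forces you to re-establish the spectral gap and ground-state proximity for the discretized operator uniformly in $t$, which is exactly the ``main obstacle'' you flag. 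The paper's ordering sidesteps this obstacle completely: there is nothing to prove about the gap of $H_{\mathrm{disc}}$ because $H_{\mathrm{disc}}$ never appears in the argument.
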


The correctness of \Cref{lem:qatsim} directly follows from combining the \emph{quantum adiabatic theorem} (\Cref{thm:adiabatic}) and the rescaled Dyson-series algorithm~\cite{berry2020time} that simulates \Cref{eq:a1-sdg} with query and gate complexity \( (nsTM)^{1+o(1)}\).

Similarly, The correctness of \Cref{lem:qatsim2} follows from \Cref{thm:adiabatic} and an algorithm~\cite[Theorem 3]{Childs2022} that simulates \Cref{eq:a2-sdg} with query and gate complexity \( (nTM)^{1+o(1)}\).

\subsection{Quantum Adiabatic Theorem}

Given a time-dependent Hamiltonian \( H(t) \) for \( t \in [0,1] \), denote by \( \ket{\phi(t)} \) the ground state of \( H(t) \).
The \emph{quantum adiabatic theorems} provide conditions under which the evolution remains close to \( \ket{\phi(t)} \).
Specifically, consider the following Schr\"odinger equation:
\begin{equation} \label{eq:adiabatic}
	i \frac{\dee}{\dee t} \ket{\psi(t)} = T H(t) \ket{\psi(t)}, \quad t \in [0,1],
\end{equation}
with an initial state \( \ket{\psi(0)} \coloneqq \ket{\phi(0)} \) and a sufficiently large evolution time \( T \).
This dynamics is considered adiabatic because it can be rewritten via the change of variable \( s \coloneqq tT \), often referred to as a \emph{time dilation} argument:
\begin{equation}
	i \frac{\dee}{\dee s} \ket{{\psi}(s/T)} = \tilde{H}(s) \ket{{\psi}(s/T)}, \quad s \in [0,T],
\end{equation}
where \( \tilde{H}(s) \coloneqq H(s/T) \). In this formulation, \( \tilde{H}(s) \) varies infinitesimally slowly as \( T \to \infty \), ensuring adiabatic evolution.
Below, we introduce a version of the theorem that establishes a sufficient bound on \( T \) based on the operator norm of \( \dot{H}(t) \) and \( \ddot{H}(t) \) and the spectral gap of \( H(t) \), following a standard operator-theoretic approach.
For an elementary proof with slightly weaker bounds but offering more intuition, we refer the reader to~\cite{ambainis2004elementary}.

\begin{theorem}[\cite{Teufel2003,jansen2007bounds,childs2025lecture}] \label{thm:adiabatic}
	Suppose \( H(t) \) is piecewise twice differentiable and has a nondegenerate ground state for all \( t \in [0,1] \).
	Furthermore, \( \|\dot{H}(t)\| \) and \( \|\ddot{H}(t)\| \) are upper bounded by $M$ and the spectral gap of $H(t)$ is at least $\delta$ for all $t$.
	Then \Cref{eq:adiabatic} produces a final state \( \ket{\psi(1)} \) satisfying
	\begin{equation}
		\|\ket{\psi(1)} - \ket{\phi(1)}\| \leq \epsilon,
	\end{equation}
	provided that
	\begin{equation}
		T \geq \frac{1}{\epsilon} \cdot  O\left( \frac{M^2}{\delta^3} \right).
	\end{equation}
\end{theorem}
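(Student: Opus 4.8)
\Cref{thm:adiabatic} is the standard quantum adiabatic theorem with the sharp $M^2/\delta^3$ scaling, and the plan is to prove it by the Kato reduced-resolvent (``commutator trick'') method, as in~\cite{jansen2007bounds,childs2025lecture}. Write $P(t)=\ketbra{\phi(t)}{\phi(t)}$ for the ground-state projector, $E_0(t)$ for the (nondegenerate) ground energy, and $S(t)\coloneqq(H(t)-E_0(t)I)^{-1}(I-P(t))$ for the reduced resolvent, so that $\|S(t)\|\le 1/\delta$ and $S(t)(H(t)-E_0(t)I)=I-P(t)$. Let $U_T(t)$ be the solution of \Cref{eq:adiabatic} with $U_T(0)=I$, and let $U_A(t)$ be the ideal adiabatic (Kato parallel transport) propagator solving $i\dot U_A(t)=(TH(t)+K(t))U_A(t)$, $U_A(0)=I$, with $K(t)\coloneqq i[\dot P(t),P(t)]$; Kato's intertwining lemma gives $U_A(t)P(0)U_A(t)^\dagger=P(t)$ exactly, so $U_A(1)\ket{\phi(0)}=e^{i\gamma}\ket{\phi(1)}$ for some phase $\gamma$. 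The goal is to bound $\|U_T(1)-U_A(1)\|$, since applying both operators to $\ket{\phi(0)}$ then bounds $\|\ket{\psi(1)}-e^{i\gamma}\ket{\phi(1)}\|$.

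\textbf{Key estimates.} First I would record the perturbative bounds that feed the argument. Differentiating $H P=E_0 P$ and $P^2=P$ yields $\dot P=-S\dot H P-(S\dot H P)^\dagger$, hence $\|\dot P\|\le 2M/\delta$; differentiating once more, together with $\|\dot S\|=O(M/\delta^2)$ and the first-order bounds $|\dot E_0|\le M$, $|\ddot E_0|=O(M+M^2/\delta)$, gives $\|\ddot P\|=O(M^2/\delta^2)$. Set $X\coloneqq S\dot P P-P\dot P S$ and $G\coloneqq i(XP-PX)$. A direct computation (using $HS=(I-P)+E_0 S$ and $SH=(I-P)+E_0 S$, so that the $E_0$-terms cancel) shows $[H,X]=\dot P$ and then $[H,G]=i[\dot P,P]=K$, the Kato generator; and one reads off $\|G\|=O(M/\delta^2)$ and, from the derivative bounds above, $\|\dot G\|=O(M^2/\delta^3)$, while $\|K\|=O(M/\delta)$.

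\textbf{Integration by parts and conclusion.} A short calculation gives $\tfrac{\dee}{\dee t}\!\left(U_T^\dagger U_A\right)=-iU_T^\dagger K U_A=-iU_T^\dagger[H,G]U_A$. Using $H U_T=\tfrac{i}{T}\dot U_T$ and $H U_A=\tfrac{i}{T}\dot U_A-\tfrac1T K U_A$, the operator $U_T^\dagger[H,G]U_A$ can be rewritten as $-\tfrac{i}{T}\tfrac{\dee}{\dee t}\!\left(U_T^\dagger G U_A\right)+\tfrac{i}{T}U_T^\dagger\dot G U_A+\tfrac1T U_T^\dagger G K U_A$. Integrating over $[0,1]$: the boundary term is $O(\|G\|/T)$, and the two remaining integrals are $O(\|\dot G\|/T)$ and $O(\|G\|\|K\|/T)$; by the estimates above all three are $O(M^2/(T\delta^3))$. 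Hence $\|U_T(1)-U_A(1)\|=O(M^2/(T\delta^3))\le\epsilon$ provided $T\ge\tfrac1\epsilon\,O(M^2/\delta^3)$, and absorbing $e^{i\gamma}$ into $\ket{\phi(1)}$ completes the proof. When $H(t)$ is only piecewise twice differentiable, I would run the same argument on each smooth piece and sum, picking up one extra $O(\|G\|/T)$ boundary term per breakpoint.

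\textbf{Main obstacle.} The only real work is the derivative bookkeeping in the second step: verifying that $\|\dot S\|$, $\|\ddot P\|$, $\|\dot X\|$, and ultimately $\|\dot G\|$ are all $O(M^2/\delta^3)$ --- this is precisely where the cubic power of $1/\delta$ arises --- and checking that the nonzero ground energy $E_0(t)$ enters harmlessly (the $E_0$-dependent terms in $HS$, $SH$, $[H,X]$ and $[H,G]$ all cancel, and $\dot E_0,\ddot E_0$ are controlled by $M$ and $\delta$). Everything else is a handful of short operator identities.
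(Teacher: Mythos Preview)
The paper does not give its own proof of \Cref{thm:adiabatic}; it simply cites the result from~\cite{Teufel2003,jansen2007bounds,childs2025lecture} and moves on. Your proposal is precisely the standard Kato reduced-resolvent argument from those references (in particular~\cite{jansen2007bounds}), and the sketch is correct: the identities $[H,X]=\dot P$ and $[H,G]=K$ check out, the integration-by-parts step is right, and the derivative bookkeeping that produces the $M^2/\delta^3$ scaling is accurate. One small nit: the boundary term is $O(M/(T\delta^2))$ rather than $O(M^2/(T\delta^3))$, so strictly the error is $O\bigl(\tfrac{1}{T}(\tfrac{M}{\delta^2}+\tfrac{M^2}{\delta^3})\bigr)$; this is dominated by $M^2/\delta^3$ whenever $M\ge\delta$, which is the only regime of interest here.
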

\begin{remark}
	Note that $H(t)$ itself can be unbounded in \Cref{thm:adiabatic}.
\end{remark}

\section{Explicit Technical Building Blocks}\label{sec:intro_merit}

We present clean statements of the main technical building blocks employed impiclitly in the proofs of \Cref{thm:main}.
These results have been modified to meet the specific requirements of our approach and are therefore (re)stated here for clarity and potential reuse.

\medskip

The following proposition reduces any stoquastic sparse Hamiltonian to a TFD Hamiltonian.
Its proof is implicit in those of \Cref{lem:pathA} and \ref{lem:pathB}.

\begin{proposition} \label{prop:merit_sparse}
	Let \( n, s, M \) be sufficiently large, and let \( \eta, \epsilon > 0 \) be sufficiently small.
	Then, there exist some $a>0$ and a mapping that transforms any \( n \)-qubit $s$-sparse stoquastic Hamiltonian \( H \) satisfying \( \|H\| \leq M \) into a \((6n+4s)\)-qubit stoquastic TFD Hamiltonian
	\begin{equation}
		K = -a \sum_{i\in[6n+4s]} X_i + D
	\end{equation}
	with $D$ diagonal, such that the following holds.

	\begin{enumerate}[label=(\roman*)]
		\item $(K,\calE)$ simulates $H$ with error $(\eta,\epsilon)$, where $\calE \colon \ket{x} \mapsto \ket{x,x,0^s,0^s,0^n,x,x,0^s,0^s,0^n}$ is the encoding isometry.
		\item $\norm{K} \leq \poly(1/\eta,1/\epsilon,n,s,M)$.
		\item An oracle for \( K \) can be efficiently implemented: a query to this oracle can be simulated using \( O(ns + s^2) \) queries to an oracle for \( H \), with gate complexity \( \poly(n, s) \).
	\end{enumerate}
\end{proposition}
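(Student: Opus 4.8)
\textbf{Proof plan for \Cref{prop:merit_sparse}.}
The plan is to simply assemble the two-step reduction already carried out in the proofs of \Cref{lem:pathA} and \Cref{lem:pathB}, but applied to a single static Hamiltonian rather than to a time-dependent GHV path, and then to track the extra bookkeeping needed to state the result with explicit, generic parameters $(n,s,M,\eta,\epsilon)$ instead of the polynomial-in-$\nGHV$ bounds used in the body. First I would recall that \Cref{lem:pathA}'s first step (\Cref{sec:reduction_I_first_step}) is already a one-to-one, \emph{static} reduction: it takes any $n$-qubit $s$-sparse stoquastic Hamiltonian $K$ (there called $K(t)$, but the construction never uses continuity in $t$) and produces a stoquastic hypercube Hamiltonian $H'$ on $\nA = 3n+2s$ qubits via the second-order perturbative reduction \Cref{lemma:2nd}, with encoding isometry $\calE_1\colon\ket x\mapsto\ket{x,x,0^s,0^s,0^n}$, so that $(H',\calE_1)$ simulates $K$ with any prescribed error. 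The only change from the body is that there $\epsilon$ was fixed to $1/2$; here I would carry it as a free parameter and invoke \Cref{lemma:2nd} with the generic precondition $\Delta \ge O(\epsilon^{-2}\Lambda^6 + \eta^{-2}\Lambda^2)$ where $\Lambda = \max\{\|V_\rmmain\|,\|V_\rmextra\|\} \le \poly(n,s,M)$, giving $(H',\calE_1)$ simulating $K$ with error $(\eta,\epsilon)$ and $\|H'\| \le \poly(1/\eta,1/\epsilon,n,s,M)$.

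Next I would compose this with the first step of \Cref{lem:pathB} (\Cref{sec:reduction_II_hypercube_to_X}), which is again a static one-to-one reduction: it takes an $m$-qubit hypercube stoquastic Hamiltonian and produces an $2m$-qubit TFD Hamiltonian $H'' = \XB + D''$ with $\XB = -\DeltaB\sum_i X_i$, via another application of \Cref{lemma:2nd}, with encoding $\calE_2\colon\ket y\mapsto\ket{y,y}$. Applying this with $m = 3n+2s$ yields a TFD Hamiltonian on $2(3n+2s) = 6n+4s$ qubits, matching the claimed qubit count. The composite encoding is $\calE = \calE_2\circ\calE_1\colon\ket x\mapsto\ket{x,x,0^s,0^s,0^n,x,x,0^s,0^s,0^n}$, exactly as stated in item (i). Simulation composes: if $(H',\calE_1)$ simulates $K$ with error $(\eta',\epsilon')$ and $(H'',\calE_2)$ simulates $H'$ with error $(\eta'',\epsilon'')$, then $(H'',\calE)$ simulates $K$ with error roughly $(\eta'+\eta'', \epsilon'+\epsilon'' + O(\text{error propagation}))$ — here I would either cite a composition lemma for \Cref{def:sim} or, more safely, just choose the internal error parameters $(\eta',\epsilon')$ and $(\eta'',\epsilon'')$ each to be a small constant multiple of the target $(\eta,\epsilon)$ and absorb cross terms, using that $\|H'\|$ is polynomially bounded so the intermediate Davis--Kahan / Weyl estimates (\Cref{fct:weyl}, \Cref{fct:davis-kahan}) are all controlled. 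Setting $a \coloneqq \DeltaB$ and $D \coloneqq D''$ gives the stated form $K = -a\sum_{i\in[6n+4s]}X_i + D$. The norm bound in item (ii) follows since $\|H''\| \le \poly(1/\eta,1/\epsilon, \text{params of }H') \le \poly(1/\eta,1/\epsilon,n,s,M)$, chaining the two polynomial bounds. For item (iii), the oracle accounting is already done in the proofs of \Cref{lem:pathA} and \Cref{lem:pathB}: a matrix-element query to $H'$ costs $O(s)$ queries to $K$ (to read off a row of the sparse matrix and locate ranks), and a query to $H''$ costs $O(m) = O(n+s)$ queries to $H'$; composing gives $O((n+s)\cdot s) = O(ns + s^2)$ queries to $K$, with $\poly(n,s)$ gates for the index arithmetic (computing $\dist$, ranks, path labels as in \Cref{def:sparese_to_hypercube_path_P_xy}).

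The main obstacle is not conceptual but organizational: making the error composition for \Cref{def:sim} fully rigorous. The body only ever needs the \emph{ground-state} consequence (\Cref{lem:gsim}) and fixes $\epsilon = 1/2$, so there is no clean ``simulation composes'' lemma stated in the excerpt. I would address this by proving a short auxiliary claim: if $(H_1,\calE_1)$ simulates $H$ with error $(\eta_1,\epsilon_1)$ via witness isometry $\tcalE_1$, and $(H_2,\calE_2)$ simulates $H_1$ with error $(\eta_2,\epsilon_2)$ via witness $\tcalE_2$, then $(H_2, \calE_2\calE_1)$ simulates $H$ with error $(\eta_1+\eta_2+\eta_1\|H_2\|\cdot O(1),\ \epsilon_1 + \|\tcalE_1\|_{\mathrm{op}}\epsilon_2 + \dots)$ — the honest version requires care because condition (i) of \Cref{def:sim} (image of the witness equals the low-energy subspace) is not obviously preserved under composition when the two low-energy subspaces have slightly different dimensions. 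In our setting this is fine because $\calE_1$ already has image exactly $\calH_-$ (it is an exact encoding, not merely approximate) in both reductions — i.e. the perturbative gadgets of \Cref{lemma:2nd} only perturb the \emph{witness} $\tcalE$, while $\calE$ itself is the clean computational-basis embedding — so the composition of the \emph{exact} encodings is again exact and only the witnesses pick up the $\eta$-errors. Flagging and exploiting this ``$\im\calE = \calH_-$ exactly'' property is the one place where I would need to be careful rather than just cite the body; everything else is a direct transcription of \Cref{sec:reduction_I_first_step} and \Cref{sec:reduction_II_hypercube_to_X} with $\epsilon,\eta$ promoted to parameters.
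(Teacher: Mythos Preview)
Your proposal is correct and is exactly the approach the paper intends: the paper itself gives no explicit proof of \Cref{prop:merit_sparse} beyond the sentence ``Its proof is implicit in those of \Cref{lem:pathA} and \ref{lem:pathB},'' and you have correctly identified that only the \emph{first steps} (\Cref{sec:reduction_I_first_step} and \Cref{sec:reduction_II_hypercube_to_X}) are needed, with the composite encoding $\calE_2\circ\calE_1$ matching the stated isometry and the qubit count $2(3n+2s)=6n+4s$ and query count $O((n+s)\cdot s)=O(ns+s^2)$ falling out directly. One small simplification: the composition-of-simulations lemma you flag as the main organizational obstacle is in fact a standard result in the Hamiltonian-simulation framework the paper already cites (\cite{bravyi2017complexity,cubitt2018universal}), so you need not reprove it from \Cref{def:sim}; you can simply invoke it and choose the inner errors to be, say, $(\eta/2,\epsilon/2)$ at each stage.
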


The following proposition linearizes a Hamiltonian $K(t)$ to a Hamiltonian $H(t)$ that is linear in $t$. Note that the reduction \emph{preserves stoquasticity}. Its proof is implicit in those of \Cref{lem:pathB2}, \ref{lem:linearTFI}, and \ref{lem:pathC}.

\begin{proposition} \label{prop:merit_linearize}
	Let $K(t)$ be a $n$-qubit \apath\ that is explicitly $s$-sparse.
	Assume that for any $t\in[0,1]$, $K(t)$ has norm $\norm{K(t)}\le M$ and spectral gap at least $\delta$.
	Assume in addition that $K(t)$ is $L$-Lipschitz in $t$.

	Let $\lambda\le O(\delta/M)$ be arbitrary.
	For some $\ell=\Theta(L/\delta)$ and $\Delta\ge\poly(\ell,1/\lambda)$, define
	\begin{align}
		H(t)=\frac{\Delta}{\lambda}\left(\sum_{i=0}^\ell i(i+1-2t)\ketbra{i}{i}-\lambda\sum_{i=0}^{\ell-1}\left(\ketbra{i}{i+1}+\ketbra{i+1}{i}\right)\right) + \sum_{i=0}^\ell K(i/\ell)\tensor\ketbra{i}{i}.
	\end{align}
	Then the following holds.
	\begin{enumerate}[label=(\roman*)]
		\item For $t\in[0,1]$, $H(t)$ has spectral gap $\delta/4$, norm $\poly(1/\lambda,\ell)$, and sparsity at most $s+2$.
		\item Let $\ket{\psi}$ be a ground state of $K(0)$. Then $\ket{\psi} \tensor \ket{0}$ is $O(\lambda)$-close to a ground state of $H(0)$.
		\item Let $\ket{\psi'}$ be a ground state of $K(1)$. Then $\ket{\psi'}\tensor\ket{\ell}$ is $O(\lambda)$-close to a ground state of $H(1)$.
		\item An oracle for $H(t)$ can be efficiently implemented: a query to this oracle can be simulated using $O(1)$ queries to an oracle for $K(t)$, with gate complexity $\poly(n,s)$.
	\end{enumerate}
\end{proposition}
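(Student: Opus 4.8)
The plan is to reproduce, in this black‑box setting, the three moves already behind \Cref{lem:pathB2}, \Cref{lem:linearTFI} (via \Cref{lem:convex}), and \Cref{lem:pathC}: (a) replace the curve $K(t)$ by the piecewise‑linear path through the grid points $K(0/\ell),K(1/\ell),\dots,K(\ell/\ell)$; (b) recognize the first summand of $H(t)$---with $I_{2^n}\tensor$ understood, so that $H(t)$ acts on $\mathbb C^{2^n}\tensor\mathbb C^{\ell+1}$---as $\Delta$ times the toy clock Hamiltonian $\tilde\HL(1,t)$ of \Cref{lem:convex} with its parameter $\epsilonL$ set to $\lambda$; and (c) glue the two by a first‑order perturbative reduction (\Cref{lemma:1st}). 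After the obvious affine reparametrization of $t$ sending the clock's active window onto the stated domain, $H(t)$ remains linear in $t$ and every structural claim follows.

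For step (a), take $\ell=\Theta(L/\delta)$ large enough that $\norm{K(i/\ell)-K((i{+}1)/\ell)}\le L/\ell$ is a sufficiently small fraction of $\delta$; then, exactly as in the proof of \Cref{lem:pathB2}, every point of $K(0/\ell)\pathto\cdots\pathto K(\ell/\ell)$ is within that fraction of some $K(j/\ell)$, so by \Cref{fct:weyl} this path has spectral gap at least $\delta/2$ and norm at most $M$. For steps (b)+(c), note $\epsilonL=\lambda\le1/10$ since $\lambda\le O(\delta/M)$ is a small constant, so \Crefitem{lem:convex}{itm:lem:convex_3} and \Crefitem{lem:convex}{itm:lem:convex_4} apply: $\tilde\HL(1,t)$ has spectral gap $\ge 3/4$, its ground state $\ket v=\sum_i v_i\ket i$ obeys $\sum_{i\notin\{a,b\}}v_i^2\le O(\lambda^2)$ for the one or two adjacent grid indices $a\le b$ attached to the current time, and $\ket v$ is $O(\lambda)$‑close to $\ket 0$ (resp.\ $\ket\ell$) at the two ends. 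Writing $H(t)=H_0+V$ with $H_0$ the clock term and $V=\sum_i K(i/\ell)\tensor\ketbra{i}{i}$, the ground space of $H_0$ is $\mathbb C^{2^n}\tensor\ket v$ and the logical target is $\sum_i v_i^2\,K(i/\ell)$ up to an overall energy shift; since $v_a^2K(a/\ell)+v_b^2K(b/\ell)$ is, after normalization, a point of the path from step (a) (gap $\ge\delta/2$) while the leakage $\sum_{i\notin\{a,b\}}v_i^2 K(i/\ell)$ has norm $\le O(\lambda^2)M=O(\delta)$ (using $\lambda\le O(\delta/M)$ and $\delta\le M$), \Cref{fct:weyl} gives the logical target spectral gap $\ge 0.4\delta$. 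The clock gap $\ge 3\Delta/4$ dominates $\norm V\le M$, so \Cref{lemma:1st} applies once $\Delta$ is a sufficiently large polynomial in $\ell,1/\lambda$ (with the implicit polynomial also absorbing $M$ and $1/\delta$), with simulation error $(\epsilon_{\rm enc},\epsilon)$ chosen small enough that $2\epsilon\le\delta/10$ and $\epsilon_{\rm enc}+O(\delta^{-1}\epsilon)=O(\lambda)$. By \Cref{lem:gsim}, $H(t)$ then has spectral gap $\ge 0.4\delta-\delta/10\ge\delta/4$ and a ground state $O(\lambda)$‑close to the image under the encoding $\ket\psi\mapsto\ket\psi\tensor\ket v$ of the ground state of the logical target; at the two time endpoints, where $\ket v$ is $O(\lambda)$‑close to $\ket0$ (resp.\ $\ket\ell$) and the logical target is $O(\lambda)$‑close to $K(0)$ (resp.\ $K(1)$), a short chain of $O(\lambda)$ estimates gives (ii) and (iii). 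The norm bound is immediate from $\norm{H_0}\le(\Delta/\lambda)\poly(\ell)$ and $\norm V\le M$; the sparsity bound $s+2$ is by inspection, the clock contributing the two neighbours $(x,i{\pm}1)$ to each row; and (iv) holds because a query to an entry or adjacency list of $H(t)$ reduces to a single query for $K(i/\ell)$ together with $\poly(n,s)$ arithmetic.

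The main obstacle I anticipate is parameter bookkeeping rather than any new idea. Unlike \Cref{lem:pathC}, where the $\Omega(\sqrt{\mGHV})$ gap absorbed a crude simulation error of $1/2$, here $\delta$ and $\lambda$ are arbitrarily small, so one must check that the first‑order reduction error can be driven below both $\Theta(\delta)$ and $\Theta(\lambda)$ (via $\delta^{-1}\epsilon$) while keeping $\Delta$ polynomial, and that every $O(\lambda^2)M$, $O(\lambda^2)\delta$, and $O(\delta^{-1}\epsilon)$ term genuinely fits inside the advertised $\delta/4$ and $O(\lambda)$ slacks---which it does precisely because $\lambda\le O(\delta/M)$ and $\delta\le M$.
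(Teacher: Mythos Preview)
Your proposal is correct and takes essentially the same approach as the paper---it is precisely the combination of \Cref{lem:pathB2}, \Cref{lem:convex}, and the first-order reduction argument of \Cref{lem:pathC} that the paper itself cites. Your direct use of \Cref{lem:convex} (rather than the qubit-level \Cref{lem:linearTFI}) is appropriate since the proposition's clock register is already $(\ell{+}1)$-dimensional, and you correctly flag both the affine reparametrization needed to map the clock's active window $[0,\ell{+}1]$ onto $[0,1]$ and the parameter-bookkeeping caveat that the stated $\poly(\ell,1/\lambda)$ bound on $\Delta$ must implicitly absorb $M$ and $1/\delta$.
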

\begin{remark}
	We note that the statement can be strengthened such that $H(t)$ simulates $K(\tau)$ for some monotonically increasing function $\tau = \tau(t)$ satisfying $\tau(0)=0$ and $\tau(1)=1$, although certain parameter dependencies need to be adjusted.
\end{remark}

The following proposition transforms any hard instances exhibiting quantum advantage via adiabatic quantum optimization to hard instances exhibiting quantum advantage via Quantum Hamiltonian Descent (with the help of a quick-vanishing term $\nu(t)g(x)$).
Its proof is implicit in that of \Cref{thm:qhd}.

\begin{proposition} \label{prop:merit_dtc}
	Let $n,M,q$ be sufficiently large and $\epsilon,\delta>0$ be sufficiently small.
	Let $H_{\rm TFD}$ be an $n$-qubit TFD Hamiltonian such that $\norm{H_{\rm TFD}} \leq M$, and that all coefficients of $X_i$ in $H_{\rm TFD}$ are no more than $-1$.
	Let $D$ be an $n$-qubit diagonal Hamiltonian such that its eigenvalues lie in $[-1,0)$.
	Suppose that the \apath\ $H(t)$ of the form
	\begin{equation}
		H_{\rm TFD} \pathto D
	\end{equation}
	has quantum advantage given oracle access to $D$:
	\begin{enumerate}[label=(\roman*)]
		\item Any classical algorithm requires at least $q$ queries to find $u \in \binary^n$ that minimizes $\mel{u}{D}{u}$, with success probability greater than $1/q$.
		\item $H(t)$ has spectral gap $\delta$, norm $M$, and the ground state of $H(0) = H_{\rm TFD}$ is $\epsilon$-close to a product state.
	\end{enumerate}
	Then, there exist functions $g(x)$ and $\nu(t)$, and a family of functions $f \colon \mathcal{X} \to \mathbb{R}$ with a fixed box domain $\mathcal{X} \subset \mathbb{R}^n$ where the diameter of $\mathcal{X}$ is at most $\poly(M)$, such that the \apath\ $\hatH(t)$ of the form
	\begin{equation}
		\hatH(t) \coloneqq -\Delta + V(t,x),\quad V(t,x) \coloneqq \nu(t)g(x) + t f(x), \quad \forall t \geq 0, x \in \mathcal{X}
	\end{equation}
	has quantum advantage given oracle access to $f(\cdot)$:
	\begin{enumerate}[label=(\roman*)]
		\item Any classical algorithm requires at least $q$ queries to find an $x$ such that $f(x) - f(x^\ast) \leq \delta$, with success probability greater than $1/q$.
		\item $\hatH(t)$ has spectral gap at least $\frac{1}{\poly(n,1/\epsilon,1/\delta)}$, and the ground state of $\hatH(0)$ is $2\epsilon$-close to a product state.
		      Moreover, for any $\eta > 0$, there exists $\tend = \poly(M,1/\eta)$ such that $\langle \phi(\tend) |f|\phi(\tend)\rangle - f(x^\ast) \leq \eta$, where $\ket{\phi(t)}$ is the ground state of $\hatH(t)$.
		\item $f(x)$, $g(x)$, and $\nu(t)$ satisfy the following:
		      \begin{itemize}
			      \item $f(x)$ is bounded within $\pm \poly(n,M)$ and it is $\poly(n,M)$-Lipschitz. Moreover, $f(x)$ has $2^n$ local minima, and the function values at these minima correspond to the diagonal entries of the diagonal matrix $D$.
			      \item $g(x)$ is bounded within $\pm \poly(n,M, \log \frac{1}{\epsilon},\log \frac{1}{\delta})$ and it is $\poly(n,M, \log \frac{1}{\epsilon},\log \frac{1}{\delta})$-Lipschitz.
			      \item $\dfrac{\dee^k \nu}{\dee t^k} = e^{-\Theta(\sqrt{t})}$ for any $k$.
		      \end{itemize}
	\end{enumerate}

\end{proposition}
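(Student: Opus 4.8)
The plan is to isolate, from the proof of \Cref{thm:qhd}, precisely the argument that uses only hypotheses (i) and (ii) of the proposition and nothing particular to \Cref{thm:main}. Write $H_{\rm TFD} = -\sum_{i\in[n]} a_i X_i + D_{\rm TFD}$ with each $a_i\ge 1$, as licensed by the hypothesis on the $X_i$-coefficients. The engine is the reduction lemma \Cref{lem:tosdg}: under the replacement rule $-X_i/\Lambda(\lambda)\mapsto \hatX_i(\lambda) = -\frac{\dee^2}{\dee\xi_i^2} + \lambda^2\dwfunc(\xi_i)$ together with the orthant-wise embedding $D\mapsto\hatD$ of \Cref{def:hatD} (and likewise $D_{\rm TFD}\mapsto\hatD_{\rm TFD}$), the low-energy spectrum of the continuous-space operator $\sum_i a_i\hatX_i(\lambda) + \hatD$ mirrors that of the $n$-qubit operator $-\sum_i(a_i/\Lambda(\lambda))X_i + D$, up to an overall shift $\sum_i a_i\cdot\Theta(\lambda)$ and an error $\epsilon_\lambda \le O(\sqrt n\,\Lambda(\lambda)^{-1/6}\|D\|)$ with $\Lambda(\lambda) = e^{\Theta(\lambda)}$; the spectral gaps and ground states correspond whenever $\Lambda(\lambda)$ is polynomially large. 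I would apply this not to $D$ alone but to the interpolating family $\Lambda(\lambda)^{-1}H_{\rm TFD} + \varphi(\lambda)D$ that re-traces the given path $H_{\rm TFD}\pathto D$.

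Concretely, set $\dwfunc(z) = (z-\tfrac12)^2(z+\tfrac12)^2$, fix a threshold $\lambda_0 = \log\poly(n,M,1/\epsilon,1/\delta)$ (pinned down in the next step), put $\Lambda_0 = \Lambda(\lambda_0)$, and define, mirroring \Cref{eq:qhd-obj-defn}, \Cref{eq:defn-g}, \Cref{eq:nu-def},
\[
	f(\xi)\coloneqq \hatD(\xi) + \sum_{i\in[n]} a_i\,\dwfunc(\xi_i),\qquad
	g(\xi)\coloneqq \tfrac1{\Lambda_0}\hatD_{\rm TFD}(\xi) + \lambda_0^2\sum_{i\in[n]} a_i\,\dwfunc(\xi_i),
\]
\[
	\nu(\lambda)\coloneqq \Lambda_0/\Lambda(\lambda),\qquad \varphi(\lambda)\coloneqq \lambda^2 - \nu(\lambda)\lambda_0^2,
\]
with time reparametrized by $t = \varphi^{-1}(\lambda)$, ranging over $[0,\infty)$ since $\varphi(\lambda_0)=0$ and $\varphi$ is increasing on $[\lambda_0,\infty)$. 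By \Cref{fct:hatD} the functions $\hatD,\hatD_{\rm TFD}$ are bounded within $\pm\poly(M)$ and $O(M)$-Lipschitz, and $\dwfunc$ is a fixed quartic on $[-1,1]$; hence $f$ is bounded within $\pm\poly(n,M)$, is $\poly(n,M)$-Lipschitz, and has $2^n$ local minima at $\{\pm\tfrac12\}^n$ whose values equal the diagonal entries of $D$ (as $\dwfunc$ vanishes there, and the global minimum is $\mu_0(D)$ by \Crefitem{fct:hatD}{itm:fct:hatD_2}), while $g$ obeys the same bounds with the additional $\log\tfrac1\epsilon,\log\tfrac1\delta$ dependence entering through $\lambda_0$, and $\tfrac{\dee^k\nu}{\dee t^k} = e^{-\Theta(\sqrt t)}$ for every $k$ by \Cref{eq:nu-def} and \Cref{eq:lem:tosdg_H} together with $t = \Theta(\lambda^2)$. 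This yields the third bullet of the conclusion. For the classical hardness, a query to $f$ (or to $g$, or an evaluation of $\nu$) reduces by \Cref{def:hatD} to $O(1)$ queries to the diagonal entries of $D$ and $D_{\rm TFD}$ plus $\poly(n)$ gates; since the global minimizer of $f$ over $\calX$ sits at the corner of $\{\pm\tfrac12\}^n$ encoding the minimizer of $D$, any $q'$-query classical algorithm for the continuous problem yields a $\poly(n)\cdot q'$-query classical algorithm for the discrete one, and hypothesis (i) then forces $q'\ge q/\poly(n)$, which is conclusion (i).

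Next I would verify the spectral side. The $n$-qubit family $H(\lambda)\coloneqq\Lambda(\lambda)^{-1}H_{\rm TFD} + \varphi(\lambda)D$ inherits from hypothesis (ii) a spectral gap $\Omega\big((\varphi(\lambda)+\Lambda(\lambda)^{-1})\,\delta\big)$, using the elementary fact that $A\pathto\alpha B+\beta I$ retains a $\tfrac1\alpha$-proportion (for $\alpha\le 1$) of the gap of $A\pathto B$. A Weyl-inequality estimate gives $\mu_0\big(\Lambda^{-1}D_{\rm TFD}+\varphi D\big) - \mu_0(H(\lambda)) \ge -\Lambda(\lambda)^{-1}\sum_i a_i \ge -\poly(n)/\Lambda(\lambda)$, and since $\Lambda$ grows exponentially I can pick $\lambda_0$ large enough that for all $\lambda\ge\lambda_0$ both terms inside the $\min$ in \Crefitem{lem:tosdg}{itm:lem:tosdg_2} are at least $1/\poly(n,1/\epsilon,1/\delta)$ and dominate $\epsilon_\lambda \le O(\sqrt n\|D\|e^{-c\lambda/6})$; hence $\hatH(\lambda)$ has spectral gap $\ge 1/\poly(n,1/\epsilon,1/\delta)$. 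Since $H(\lambda_0)$ shares its ground state with $H_{\rm TFD}$, which by (ii) is $\epsilon$-close to a product state, \Crefitem{lem:tosdg}{itm:lem:tosdg_3} makes the ground state of $\hatH(\lambda_0)$ be $(\epsilon + O(\epsilon/\delta))$-close to $\calE$ applied to that product state, which is again a product state $\bigotimes_i\ket{\vartheta_i}$ with each single-mode factor in $\{\ket{\hatzero},\ket{\hatone}\}$ (hence efficiently preparable); absorbing constants gives $2\epsilon$-closeness. After the spatial rescaling $z_i = \xi_i/\sqrt{a_i}$ (which leaves the spectrum invariant and makes $\calX$ have diameter $\poly(M)$) and the temporal change $t=\varphi^{-1}(\lambda)$, the path reads $\hatH(t) = -\Delta + \nu(t)g(z) + tf(z)$ with $\nu(t) = e^{-\Theta(\sqrt t)}$, and $\|\dot{\hatH}(t)\|,\|\ddot{\hatH}(t)\|\le\poly(n,M)$ because $f,g$ are $\poly(n,M)$-bounded and the derivatives of $\nu$ decay. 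Finally, the energy-convergence claim is \Cref{eq:pre-markov}: positivity of $-\Delta$, \Crefitem{lem:tosdg}{itm:lem:tosdg_1}, and two uses of Weyl's inequality give $\langle\phi(t)|f|\phi(t)\rangle \le \mu_0(D) + \sum_i a_i\cdot\Theta(1/\lambda) + e^{-\Theta(\lambda)}$ with $\mu_0(D) = f(x^\ast)$ and $\sum_i a_i\le\poly(M)$, so taking $\lambda' = \poly(M,1/\eta)$ and $\tend = \varphi^{-1}(\lambda') = \poly(M,1/\eta)$ forces $\langle\phi(\tend)|f|\phi(\tend)\rangle - f(x^\ast) \le\eta$.

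The step I expect to be the genuine obstacle --- the new content that the proof of \Cref{lem:tosdg} must supply, beyond what \cite{zheng2024computational} provides --- is the error control of the $\hatD$-embedding for a \emph{generic, nonlocal} diagonal $D$: expanding $D$ over $Z$-tensor products uses up to $2^n$ terms, so a term-by-term transfer of the $\Lambda^{-\Omega(1)}$ single-qubit error would blow up to $2^{(1-o(1))n}$. The fix is to reason with the orthant-wise function $\hatD$ directly --- decompose it as $\hatDp+\hatDr$ (the piecewise-constant part of \Cref{eq:hatdp-def} plus the boundary correction \Cref{eq:hatdr-def}), truncate the near-computational states $\ket{\hatzero},\ket{\hatone}$ to $\ket{\tildezero},\ket{\tildeone}$ supported on single half-lines, and exploit the reflection symmetry $\Psi_{\hatzero}(\xi)=\Psi_{\hatone}(-\xi)$ of \Crefitem{clm:properties-of-X}{itm:clm:properties-of-X_2}, which forces the cross terms $\langle\hatx|\hatDp|\haty\rangle$ to vanish \emph{exactly} for $x\ne y$ (\Cref{eq:orthant-orth}). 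The leakage of $\hatD$ out of $\frakS$ then collapses to a sum over merely $n$ thin slabs, each of mass $O(\Lambda^{-2/3})$, for a total error $\poly(n)\,\Lambda^{-\Omega(1)}$. Everything else is bookkeeping: propagating the polynomial dependencies through the two rescalings and checking that oracle counts, Lipschitz constants, and uniform bounds behave as stated.
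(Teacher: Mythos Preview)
Your proposal is correct and follows the paper's approach exactly (the paper itself states that the proof is implicit in that of \Cref{thm:qhd}): you reproduce the same construction of $f,g,\nu,\varphi$ from \Cref{eq:qhd-obj-defn}, \Cref{eq:defn-g}, \Cref{eq:nu-def}, the same application of \Cref{lem:tosdg} to the interpolating family $\Lambda(\lambda)^{-1}H_{\rm TFD}+\varphi(\lambda)D$, the same spatial and temporal rescalings, and you correctly pinpoint the symmetry-based error analysis of $\hatD$ (\Cref{eq:key-sym}, \Cref{eq:orthant-orth}) as the key technical ingredient beyond \cite{zheng2024computational}. Two harmless slips: the per-evaluation query overhead is $O(1)$ rather than $\poly(n)$ (one query to a diagonal entry of $D$ suffices to evaluate $\hatD$), so the lower bound is $q'\ge q$ directly; and the ground-state closeness term should read $O(\epsilon_\lambda/\delta_\lambda)$ with $\epsilon_\lambda = O(\sqrt n\,\Lambda(\lambda_0)^{-1/6}\|D\|)$ driven below $\epsilon$ via the choice of $\lambda_0$, not $O(\epsilon/\delta)$.
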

\begin{remark}
	The functions $f(x)$ and $g(x)$ can be made smooth (i.e., infinitely differentiable) by a more sophisticated construction of $\hatD$ in \Cref{def:hatD}, for example by convolving a mollifier with a more carefully designed base function.
	We chose not to present this approach formally, as the benefits are minimal and proofs of certain claims, e.g., \Crefitem{fct:hatD}{itm:fct:hatD_1}, would be more complicated.
\end{remark}
\end{document}